\documentclass[12pt,a4paper,oneside]{article}

\usepackage[T1]{fontenc}
\usepackage[utf8]{inputenc}
\usepackage{lmodern}
\usepackage{microtype}
\microtypesetup{expansion=false}
\usepackage{amsmath,amssymb,amsthm,mathtools}
\usepackage{booktabs}
\usepackage{enumitem}
\usepackage{needspace}
\usepackage{xspace}
\usepackage[dvipsnames]{xcolor}
\definecolor{linkmaroon}{HTML}{800000}
\usepackage[a4paper,margin=28mm,headheight=15pt]{geometry}
\usepackage[colorlinks=true,linkcolor=linkmaroon,
            citecolor=linkmaroon,urlcolor=linkmaroon]{hyperref}
\usepackage[nameinlink,capitalise]{cleveref}

\definecolor{ink}{HTML}{18212B}
\color{ink}

\newtheorem{theorem}{Theorem}[section]
\newtheorem{lemma}[theorem]{Lemma}
\newtheorem{proposition}[theorem]{Proposition}
\newtheorem{corollary}[theorem]{Corollary}
\newtheorem{fact}[theorem]{Fact}
\theoremstyle{definition}
\newtheorem{definition}[theorem]{Definition}

\theoremstyle{remark}
\newtheorem{remark}[theorem]{Remark}

\newcommand*{\dif}{\mathop{}\!\mathrm{d}}
\newcommand*{\GainGap}{\operatorname{GainGap}}

\setlist[itemize]{leftmargin=1.5em,itemsep=0.25em,topsep=0.4em}
\setlist[enumerate]{leftmargin=1.7em,itemsep=0.25em,topsep=0.4em}

\title{Sybil-Proofness of Generalized Adaptive Clinching and Implications for  
Blockchain Mechanism Design\thanks{Author order is randomized with the
\href{https://www.aeaweb.org/journals/policies/random-author-order/search?RandomAuthorsSearch\%5Bsearch\%5D=GeSW7Qj5fXbF}{archived entry}.
}}
\author{
Chengkai Zhang\thanks{Rutgers University. \texttt{zhangchengkai2002@gmail.com}.}
\and Yuhao Li\thanks{Columbia University and Carnegie Mellon University. \texttt{yuhaoli@cs.columbia.edu}.}
\and Elaine Shi\thanks{Carnegie Mellon University. \texttt{elaineshi@cmu.edu}.}
\and Mengqian Zhang\thanks{Carnegie Mellon University. \texttt{mengqianzhang@cmu.edu}.}
\and Lirong Xia\thanks{Rutgers University. \texttt{lirong.xia@rutgers.edu}.}
\and Qishen Han\thanks{Rutgers University. \texttt{Hnickc2017@gmail.com}.}
}
\date{}

\begin{document}

\begin{titlepage}

\maketitle

\begin{abstract}
The adaptive clinching auction of Dobzinski, Lavi, and Nisan (FOCS'08) is a cornerstone of mechanism design with budget-constrained bidders, and it has since inspired a broad range of extensions to substantially richer market settings. Adaptive clinching enjoys compelling guarantees: it is Pareto optimal, and truthful bidding is a dominant strategy whenever budgets cannot be overreported. In many real-world environments, however, identities are cheap to create, making resilience to Sybil attacks an additional requirement. Whether the guarantees of adaptive clinching survive when a bidder can split its budget across fake identities has remained open. We introduce a generalized framework for adaptive clinching and prove that it satisfies Sybil-robust incentive compatibility: a bidder cannot benefit from splitting into any finite number of identities with arbitrary value reports, provided their aggregate reported budget does not exceed the bidder's enforceable true budget. Sybil resilience thus joins the list of ideal properties of adaptive clinching.

Our work is motivated by practical mechanism-design problems in blockchain-based decentralized finance, where two-asset automated market makers (AMMs) are among the most widely used applications. We design a new \emph{AMM adaptive clinching mechanism} that extends classical adaptive clinching in two ways: it supports 
a two-sided market and protects against two-sided strategies, and it allows the supply function to depend on the auction state. Its one-sided special case is captured by our generalized adaptive clinching framework, and we further extend our proof techniques to show that the full two-sided mechanism simultaneously satisfies Sybil-robust incentive compatibility and a relaxed notion of Pareto optimality. In light of known impossibility results, together with a new impossibility proved in this paper, we argue that these guarantees are essentially the best possible. More broadly, our work reveals new connections between classical algorithmic game theory and modern decentralized blockchains.
\end{abstract}
\thispagestyle{empty}
\end{titlepage}

\tableofcontents

\newpage

\section{Introduction}
\label{sec:intro}

Automated Market Makers (AMMs) are arguably the most widely used application in
decentralized finance (DeFi). An AMM is a smart contract that stands ready to
trade with anyone: rather than matching buyers against sellers through an order
book, it holds an inventory of assets and quotes an exchange rate determined by
a publicly known pricing curve. By far the most common deployment is the
two-asset AMM, exemplified by Uniswap~\cite{uniswap}.

In a typical two-asset AMM, users trade one of two assets, denoted $X$ and $Y$,
against a smart contract, which we henceforth also refer to as the {\it pool}.
Initially, the state of the pool is denoted by $(x_0,y_0)$, meaning that the
pool holds $x_0$ units of $X$ and $y_0$ units of $Y$. The exchange rate for
each trade is determined by a predefined pricing curve. For example, the widely
adopted constant-product rule requires
\[
    (x_0-x)(y_0+y)=C
\]
for some positive constant $C$. Thus, starting from state $(x_0,y_0)$, a user
who contributes $y$ units of $Y$ to the pool receives $x$ units of $X$.
Following the trade, the pool transitions to state $(x_0-x,y_0+y)$, preserving
the invariant that the product of the reserves remains equal to $C$.

\paragraph{Widespread incentive attacks.}
Because the price a user receives depends on the pool state when its trade
executes, and that state is moved by every preceding trade in the same block,
an AMM is acutely sensitive to who controls the contents and the ordering of a
block. A block builder can exploit this directly. Suppose a user submits an
order to buy $X$ with $Y$. Immediately {\it before} that order, the builder
buys $X$ cheaply for itself, driving the price up to the user's declared
{\it slippage tolerance}---the worst price at which it is still willing to
trade. The user's order then executes at that price and pushes the price
higher still, and the builder sells immediately {\it after} it. The builder
has thus bought low and sold high within a single block, at essentially no
risk, since it sees the user's order before committing to either leg. The two
legs are known respectively as {\it front-running} and {\it back-running}, and
the pair as a {\it sandwich attack}; a substantial body of
work~\cite{highfreq-dex,park2023conceptual,theorymev} documents how prevalent
and how profitable such attacks are in practice.

Such incentive attacks are commonly referred to as Miner (or Maximum) Extractable Value
(MEV), and the harm they do extends well beyond the users who are sandwiched.
Because the profit accrues to whoever controls the contents and ordering of a
block, MEV fosters opaque off-chain markets in which block builders, traders,
and specialized searchers compete to identify and exploit these opportunities,
and it rewards most those participants with the best access to order flow and
the deepest capital. These forces push the ecosystem steadily toward
centralization. Indeed, a recent measurement study by Yang et
al.~\cite{buildercentral} found that the top two block builders produce more
than 85\% of Ethereum blocks, severely undermining the decentralized vision of
blockchain systems.

\paragraph{A theory on AMM mechanism design.}
Motivated by these attacks, a line of
work~\cite{ammmechdesign,ammtrilemma,otter} has formulated AMM as a
mechanism-design problem. In this formulation, the AMM mechanism receives a
{\it batch} of bids (typically all trade requests contained in one
block). It then decides, for every bid, how much of each asset that bid sends and
receives. Each bid comes in one of two directions. A {\it sell-$Y$} bid
specifies a value $v$ and a budget $Q$, meaning that it wants to buy $X$ using
at most $Q$ units of $Y$, and that it values each unit of $X$ it acquires at
$v$ units of $Y$. 
A {\it sell-$X$} bid is
defined symmetrically. 

Two structural features distinguish the problem from traditional
auctions. First, the market is {\it two-sided}: a batch may contain both
sell-$X$ and sell-$Y$ bids, which can trade with each other as well as with
the pool. Second, the mechanism must {\it exactly settle}: every token paid
out is funded either by an opposite-side bid or by the pool, no outside
subsidy is used, no token is discarded, and the end state of the pool lands
back on the prescribed pricing curve. 

Within this formulation, one may ask for a {\it dream} AMM mechanism, namely
one that simultaneously achieves the following two desiderata.
\begin{enumerate}[leftmargin=6mm,itemsep=1pt]
\item {\bf Sybil-robust incentive compatibility.} The mechanism should encourage truthful
bidding by an individual trader, and it should moreover be strategy-proof
against a block builder who may itself be a trader with intrinsic value and
demand. Concretely, a trader or builder can misreport both its value and its
budget, and can inject any finite number of fake bids under pseudonymous
identities, carrying arbitrary values and budgets. 
Further, a builder can arbitrarily sequence the bids contained in the batch. 
We refer to the resulting incentive notion as {Sybil-robust incentive compatibility}. 
We stress that Sybil-robustness is critical in blockchain environments
since creating arbitrary pseudonyms is essentially free. 

\item {\bf Pareto optimality.} The outcome should be Pareto optimal: no incremental
trade among a subset of traders can make some of them better off without making
any of them worse off, and no trader can improve its own utility by trading
further with the pool.
\end{enumerate}

Throughout the paper, we assume that a strategic builder cannot censor other
traders' bids, meaning that every submitted bid must be included in the
mechanism's input. This matches the usual assumption in classical mechanism
design: a strategic participant may manipulate its own reports, or even inject fake bids (Sybil attacks), but cannot
suppress the reports submitted by other participants. In practice, this is ensured in blockchain systems when the underlying consensus
protocol provides censorship resilience~\cite{latencyCR,focil,mcp}, a
guarantee that major blockchain projects such as Ethereum are committed to
offering at the consensus layer~\cite{focil}.
Such an assumption is also unavoidable: it is known that once censorship is
admissible, 
Sybil-robust incentive compatibility is unattainable in AMM mechanism design
even without imposing any efficiency requirement~\cite{otter}.

\paragraph{Impossibility of a dream AMM mechanism.}
Unfortunately, such a dream AMM mechanism is unattainable. Shi et
al.~\cite{otter} and Li et al.~\cite{ammtrilemma} show that Pareto optimality
and Sybil-robust incentive compatibility are mutually incompatible for
exact-settlement AMM mechanisms.
Further, this impossibility holds in a very strong sense, 
even when the strategy space consists only 
of single-identity valuation misreports, and even 
if the mechanism additionally has access to unforgeable auxiliary information
attached to each bid (e.g., trusted timestamps 
on transactions determined by a decentralized
sequencer~\cite{decentral-seq,orderfair00,orderfair01,orderfair02,orderfair03,orderfair04}).

\subsection{Our Results and Contributions}

In this paper we identify two natural relaxations of the dream specification
under which feasibility is restored, and we present an AMM adaptive clinching mechanism that attains it.
Both relaxations are needed: neither one on its own restores feasibility, as we
explain later in this section, after stating our main feasibility result.

\begin{enumerate}[leftmargin=7mm,itemsep=1pt]
\item {\it No aggregate budget overreporting.}
Our first relaxation restricts the strategy space. A trader may still split
itself into arbitrarily many pseudonymous identities and report arbitrary
values, but the budgets that its identities declare {\it in its own trading
direction} must sum to at most its true budget. No restriction whatsoever is
placed on the budgets that it declares in the {\it opposite} direction: a
trader who genuinely wishes to sell $Y$ may inject sell-$X$ identities carrying
arbitrary finite budgets.

This assumption is well motivated, because a declared budget must be backed by
escrowed funds, %
and that escrow cannot be replenished
out of a sibling identity's proceeds during the batch. A trader whose
identities collectively declare more than it owns would therefore have to fund
the excess up front, which by definition it cannot do. In practice a trader is
of course escrow-limited on the opposite side as well, but our guarantees do
not need this: they hold even against a deviator that
can escrow arbitrarily large opposite-direction budgets.

\item {\it Weak Pareto optimality.}
Our second relaxation weakens the efficiency requirement to {\it weak Pareto
optimality for eligible bids}, a notion suggested by Li et
al.~\cite{ammtrilemma} and Shi et al.~\cite{otter} for overcoming impossibilities
pertaining to full Pareto optimality. A bid is {\it eligible} if it is willing to trade at
the pool's initial spot price. We require that, under truthful reporting, (a)
no subset of eligible traders can re-trade among themselves, subject to their
post-batch holdings, in a way that Pareto improves their joint utilities; and
(b) no eligible trader can strictly improve its utility by making an additional trade
with the pool {\it in the direction of its own type}---a trader who came to
sell $Y$ may only sell more $Y$.

Intuitively, the impossibility results of Li et al.~\cite{ammtrilemma}
and Shi et al.~\cite{otter} suggest that incentive compatibility is easier to 
achieve if the mechanism can raise the price as competition for the pool's output intensifies, and close the batch at a marginal price exceeding the values of the traders it has served --- 
provided that the average price each trader actually pays remains below its value. Weak Pareto optimality opens up our design space in this direction, 
by not counting such reverse-direction trades with the pool as Pareto improvements.
\end{enumerate}

\paragraph{Our main result.}
Under these two relaxations, we obtain the following.

\begin{theorem}[Sybil-robust and weakly Pareto optimal AMM]
\label{thm:intro-main}
Assume that a strategic trader cannot overreport its budget in aggregate
across the Sybil identities it submits in its own trading direction. Then
there exists an AMM mechanism that is individually rational, exactly settles
against the pool, and simultaneously satisfies Sybil-robust incentive
compatibility and weak Pareto optimality for eligible bids.
\end{theorem}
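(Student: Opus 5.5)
The plan is to prove the theorem constructively: exhibit the AMM adaptive clinching mechanism and verify each property in turn. The mechanism is an ascending-price process run on a continuous marginal price $p$, which starts at the pool's initial spot price $p_0$ and moves outward on both sides. Bids are priced in units of $Y$ per $X$. Sell-$Y$ bids are active at price $p$ if $v>p$; their demand at $p$ is their remaining budget divided by $p$. Sell-$X$ bids are treated symmetrically, with their own price moving in the opposite direction. At each moment the mechanism first nets the two sides against each other at the current price. The residual one-sided excess is served by a supply function that depends on the auction state, namely the pool's current reserves on the pricing curve. Each bidder clinches at rate equal to the residual supply left over once the demands of all other active bidders on its side are exhausted. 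The batch closes when the residual supply meets the residual demand.

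\textbf{Individual rationality and exact settlement.} Individual rationality follows because each bidder clinches only while $p<v$. Exact settlement follows because the pool's contribution is defined by moving along the curve. Each unit clinched by a bid is therefore funded either by an opposite-side bid at the same marginal price or by the pool's movement, and the final state lies on the curve by construction.

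\textbf{Incentive compatibility.} I would first prove a one-sided theorem. Consider a generalized adaptive clinching framework with an arbitrary state-dependent, monotone supply function. In this framework, a deviator who splits into finitely many identities with aggregate budget at most its true budget and arbitrary values cannot raise its utility. The core lemma is a coupling or monotonicity argument: for a fixed set of others, the deviator's total clinched quantity and total payment are determined by the aggregate budget trajectory of its identities. Merging the identities weakly increases the others' residual supply to the deviator, whether measured pointwise in price or by integrating the clinch rate. Truthful reporting of the value then maximizes utility along this path, since dropping out at $v$ is the optimal stopping rule for the clinch stream. Budget underreporting is handled by showing that utility is monotone in budget. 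The two-sided case reduces to this one-sided argument. Opposite-direction Sybils can only inject counter-liquidity at prices the deviator itself pays, so the marginal prices at which the deviator receives tokens never improve in its favor. The step I would try first here is a potential argument showing that any trade against one's own sell-$X$ Sybil exactly cancels, at the same marginal price, against the extra clinch that identity causes.

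\textbf{Weak Pareto optimality.} I would show it from the termination condition. At closing, every eligible bid either has exhausted its budget or has $v\le$ the final marginal price. The second case rules out any profitable further trade with the pool in the bid's own direction. Among eligible traders, all remaining marginal valuations lie on the same side of the final prices. Hence no bilateral re-trade is mutually beneficial, because sell-$Y$ survivors value $X$ below the final price and sell-$X$ survivors value it above.

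\textbf{Main obstacle.} I expect the hardest step to be the Sybil-splitting lemma, specifically handling the deviator's identities dropping out at different prices while the supply depends on state. The route I would try first is an induction on the number of identities, merging two identities at a time. Each merge would be justified by a comparison lemma on the clinching ODE, showing that the merged bidder's clinched-quantity curve dominates the sum of the separate curves at every price.
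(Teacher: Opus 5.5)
The mechanism you sketch lacks the two design choices that make the paper's proof work, and several of the claims you rely on fail.

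\emph{Exact settlement.} Clinchers pay the clock price. If the supply offered at clock price $p$ is the pool's marginal supply (units whose ask is at most $p$), then receipts $Y=\int p\,\dif X$ generically exceed $C(X)$, and the pool ends above its curve. Saying each unit is ``funded by the pool's movement'' does not address this. The paper's key device is the chord target $T^{\rm amm}(p,X,Y)$, the rightmost $z$ with $C(z)=Y+p(z-X)$. It is invariant under purchases at price $p$, so clearing $S=T^{\rm amm}-X$ at the clock price lands exactly on the curve.

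\emph{Two sides.} With two clocks moving outward from $p_0$, nothing prevents a sell-$Y$ identity with unspent $Y$ and a sell-$X$ identity with unsold $X$ from both surviving. Eligibility gives $V\ge p_0\ge U$, so such a pair has a Pareto-improving internal trade whenever $V>U$. Comparing survivors with two different final prices cannot rule this out. The paper instead fills the minority side completely at $p_0$ and folds it into the extended cost $\widetilde C_M$, so only one side clinches. It then gets the cutoff $\lambda$ from the within-side priority property: if $x_j>0$ and $V_i>V_j$, then $i$ has exhausted its budget.

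\emph{Incentives.} The premise of your core lemma is false. Each Sybil's clinching condition $S=D_{-i}$ counts its siblings' demand, so the group's outcome depends on how the budget is split, not only on its aggregate. For example, with supply $1$ and an outsider of budget $1$, a merged budget $1$ starts clinching at $p=1$, while the split $(\tfrac12,\tfrac12)$ starts only at $p=\sqrt e$. You give no argument for pointwise dominance of allocation curves. A direct ODE comparison is problematic because the target depends on the aggregate state, so the two runs face different supplies.

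The paper instead compares mark-to-market gains $\Gamma=pX-Y=\int X\,\dif p$, which equal true utility at $p=V$ once value overreports are capped. It couples $\GainGap_G$ with the outsiders' $\GainGap_N$ through three ingredients: the target's monotonicity in aggregate mark-to-market gain, the water-level formula for outsider payments, and a first-crossing argument for $\min\{\GainGap_G,\GainGap_N\}$. Early Sybil exits are absorbed by the cancelled budget $d_G^{\mathcal S}$ in $\beta^{\mathcal M}-\beta^{\mathcal S}=Q-q_G+d_G^{\mathcal S}\ge0$, not by an optimal-stopping argument.

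\emph{Opposite-direction Sybils.} Your claim that they never improve the deviator's prices is also false. A sell-$X$ Sybil that stays in the minority lengthens the $p_0$-segment and can lower the prices at which the deviator buys. A deviation, even one that merely withholds sell-$Y$ budget, can also flip which side is dominant and get the deviator's order filled entirely at $p_0$. There is no exact cancellation. What is needed is the fill-shift inequality $\mathcal U(M+E,Q+p_0E)\le\mathcal U(M,Q)+(V-p_0)E$, proved by a second coupled gain-gap argument using $\widetilde C_{M+E}(z+E)=\widetilde C_M(z)+p_0E$. For a flip, the paper combines the spot-price floor $\mathcal U(M,Q)\ge(V-p_0)(Q/p_0-\eta)_+$ with the conservation bound on the $X$ the deviator's sell-$X$ identities must deliver. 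For the same reason, one-sided monotonicity in budget does not cover underreporting here: underreporting can flip dominance.
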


As argued below, both relaxations are necessary to  
obtain \Cref{thm:intro-main}, 
thus the combination of properties we achieve is essentially {\it best possible}. 
Moreover, as discussed later in this section,  
the techniques used to achieve \Cref{thm:intro-main}
reveal exciting connections to classical 
algorithmic game theory.

\paragraph{Necessity of both relaxations.}
Both relaxations are necessary.  If one insists on full Pareto optimality,
impossibility persists even when aggregate same-direction budget overreporting
is prohibited~\cite{otter}.\footnote{More precisely, the proof of Theorem~26
in Shi et al.~\cite{otter} uses no deviation that overreports a budget.  The
only required adjustment is to replace their implication from incentive
compatibility to no free lunch with
\Cref{fact:sybil-ic-implies-no-free-lunch}, which establishes the corresponding
implication for our restricted strategy space.}
Conversely, 
we prove that 
if aggregate same-direction budget overreporting is permitted,
impossibility persists 
even after relaxing full Pareto optimality to our weak notion, 
as stated in 
the following theorem:

\begin{theorem}[Impossibility when overreporting budget is permitted]
Suppose that 
traders can arbitrarily misreport their values and budgets. 
Then, 
no deterministic AMM mechanism can simultaneously satisfy
Sybil-robust incentive compatibility 
and weak Pareto optimality.
\label{thm:intro-imp}
\end{theorem}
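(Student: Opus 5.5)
We argue by contradiction. Suppose a deterministic, exactly settling AMM mechanism $M$ satisfies Sybil-robust incentive compatibility against deviations that may misreport budgets arbitrarily, and also weak Pareto optimality for eligible bids. We build a small family of one-sided instances, all sell-$Y$ bids, against a pool in initial state $(x_0,y_0)$ with a strictly convex curve such as constant product. Since all bids are eligible and one-sided, weak Pareto optimality pins down a lot. Clause (b) says no eligible bid can profit by buying more $X$ from the pool. So if a bid has leftover budget at the end of the batch, the closing marginal price must be at least its value. Clause (a) says that among the served bids, no re-trade helps all of them. Budget-unconstrained high-value bids therefore absorb the pool's output until the marginal price reaches the next competing value.

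The first step takes a single truthful sell-$Y$ bid $(v,Q)$ with $Q$ very large. Weak Pareto optimality forces the pool to sell until the marginal price reaches $v$. We then ask what this bidder pays. Individual rationality, together with the requirement that the bidder cannot gain by reporting a slightly lower value, shows that the payment must be strictly below the constant-product cost of reaching price $v$ whenever it is charged along the curve. Here we would use \Cref{fact:sybil-ic-implies-no-free-lunch} (no free lunch) to say that fake bids cannot be used to gain tokens for free.

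The second and main step exploits budget overreporting. Consider a true bidder with a small budget $Q$ and value $v$. It reports its true value but inflates its own budget to a huge $Q'$, or equivalently adds a Sybil in the same direction with a huge budget and a value just below $v$. The goal is to change the clinching schedule in its favor, for instance by keeping competitors' remaining demand low so that the price path at which it clinches is cheaper. The complementary attack instead reduces competition that others face.

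A cleaner route is the classical argument that adaptive clinching is not truthful when budgets can be overreported; see Dobzinski--Lavi--Nisan. Take two bidders, $A=(v_A,Q_A)$ and $B=(v_B,Q_B)$ with $v_A>v_B$. Weak Pareto optimality forces the batch to close at marginal price $\ge v_B$ if $B$ has leftover budget, and forces $A$'s allocation to be exhausted. We then compute, using only Pareto constraints plus incentive compatibility constraints for smaller reports (via a monotonicity and payment-identity argument from single-parameter IC in value), lower bounds on $A$'s payment. Next we show that a report $(v_A,Q_A')$ with $Q_A'>Q_A$ yields more $X$ at no higher total payment. The payment cannot exceed $Q_A$, since the true budget is enforced by escrow and the mechanism would otherwise violate individual rationality for the reported type. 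This is a contradiction.

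The main obstacle is pinning down payments for an arbitrary deterministic mechanism rather than for clinching specifically. I would first derive, from IC with respect to value reports in each fixed-budget slice, a Myerson-style characterization: allocation is monotone in $v$ and the payment equals the integral formula. I would then plug the Pareto-forced allocations at the extreme values into that formula, getting explicit payments on a two-point or three-point instance. Finally I would compare the utility of truthful reporting with that of budget overreporting, and check that one inequality strictly favors the deviation.
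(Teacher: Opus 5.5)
\textbf{There is a genuine gap.}

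\textbf{The decisive step fails.} You assert that an overreport $(v_A,Q_A')$ with $Q_A'>Q_A$ ``yields more $X$ at no higher total payment.'' You then argue its payment cannot exceed $Q_A$ because the true budget is escrowed and individual rationality holds for the reported type. Neither reason is available. In the strategy space of this theorem the mechanism sees and enforces only the declared budget. Individual rationality for the reported type gives $y\le Q_A'$, not $y\le Q_A$, and escrow of the true budget is exactly what the theorem drops. Whether an overreport is a finite-utility deviation depends on its realized payment, and that has to be proved on a concrete instance.

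\textbf{The truthful outcome is not determined.} For an arbitrary deterministic mechanism, which has no ``clinching schedule'' to manipulate, weak Pareto optimality plus value incentive compatibility does not fix the truthful outcome. So ``computing explicit payments'' and comparing is not possible. The Dobzinski--Lavi--Nisan argument also does not transfer: it is for fixed supply under full Pareto optimality. Weak Pareto optimality here permits a mechanism-chosen minimum purchase $y_{\rm lim}$ for a large-budget high-value bidder; \cref{lem:overreport-higher-value} yields only $y_1(v)=\max\{y_{\rm lim},y^\star(v)\}$. It also does not by itself force a spot-value bidder to receive nothing, so even the bottom normalization of your payment identity needs a separate IC argument (\cref{lem:overreport-spot-zero}).

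\textbf{Your preliminary claims are also off.} With a single large-budget bidder, exact settlement gives $x=F(y)$ and the pool clause gives $y\ge y^\star(v)$. So its payment cannot lie below the curve cost of reaching marginal price $v$. Likewise, a bidder with leftover budget forces the closing marginal ask up to at least its \emph{own} value, not to the next competing value.

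\textbf{The missing idea is a surplus-accounting contradiction.} It uses the outcomes of two affordable overreports in the same truthful profile, one by each user. In the paper's instance neither deviation yields a contradiction in isolation. Take the normalized profile $(V_1,Q_1)=(4,1)$, $(V_2,Q_2)=(100/49,2/49)$.
\begin{itemize}
\item User~1's report $(4,4)$ makes it pay exactly its true budget $1$ and collect the entire available surplus $\operatorname{Util}^{\rm pool}(V_1,y^\star(V_1))=1$. This is consistent on its own, since the truthful outcome can simply give user~1 exactly $1$. But combined with $\operatorname{Util}_1^{\rm tr}+\operatorname{Util}_2^{\rm tr}\le 1$, it forces $\operatorname{Util}_2^{\rm tr}=0$.
\item The strictly profitable deviation is then by the \emph{lower}-valued user. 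With a large declared budget it sets the unit price $V_2$ for the budget-constrained higher-valued user. It obtains strictly positive utility and pays at most $\bar y(V_2)-Q_1=2/49=Q_2$ (\cref{lem:overreport-lower-value}).
\end{itemize}
Pinning down these two deviation outcomes is where the work lies. It needs the priority lemma, the pool condition, and the fixed-budget payment identities. It also needs the spot-price bound $x_i\le y_i$ of \cref{cor:overreport-spot-payment-bound}. Finally, the auxiliary deviation $(v_{\rm aux},y^\star(V_2))$ bounds $y_{\rm lim}$ tightly enough to give $y_{\rm lim}<1$. None of this appears in your plan.
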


In fact, this impossibility holds in a very strong sense.
Its proof needs only deviations that replace a single sell-\(Y\) report with
another, and does not rely on Sybil identities or opposite-direction
orders.  Moreover, this impossibility holds even
when the mechanism has access to unforgeable auxiliary information
attached to each bid.

Zhang et al.~\cite{ammchar} prove a similar impossibility. In comparison with our
\Cref{thm:intro-imp}, their impossibility holds for a much weaker efficiency notion,
but it critically relies on the builder's capabilities to reorder
bids within a batch, and it ceases to hold 
if the underlying consensus can provide 
trusted timestamps 
on the submitted bids, e.g., generated by 
a decentralized 
sequencer~\cite{decentral-seq,orderfair00,orderfair01,orderfair02,orderfair03,orderfair04} ---
indeed, a feasibility result~\cite{ammmechdesign}
is known in this setting satisfying their weaker efficiency notion.

\paragraph{Connections to classical algorithmic game theory.}
To obtain \Cref{thm:intro-main}, we generalize the adaptive clinching auction
of Dobzinski, Lavi, and Nisan~\cite{adaptiveclinch}, the classical mechanism
for multi-unit auctions with hard budget limits. In the classical setting, a
{\it fixed} divisible supply is offered to budget-constrained bidders by an ascending
price clock, and a bidder {\it clinches} a unit precisely when the residual
demand of the other bidders no longer covers the remaining supply.

The challenge in our setting is that an AMM has no fixed, exogenous supply:
its pricing curve makes the supply available for trade depend on the
mechanism's current state.  Our central observation is that adaptive clinching
extends to this setting by replacing fixed supply with a {\it carefully chosen
state-dependent target}.

Our work reveals exciting connections between classical algorithmic game
theory (AGT) and modern blockchain mechanism design.  On one hand,
our results give a compelling demonstration of how  
classical AGT ideas contribute to the formation of a much needed scientific foundation for
incentive-compatible blockchain markets.  
On the other hand, the proof techniques
developed for our AMM setting advance the state-of-the-art
understanding of classical AGT --- in particular, 
we show that the fixed-supply adaptive clinching auction of Dobzinski, Lavi, and
Nisan~\cite{adaptiveclinch} satisfies Sybil proofness,
as stated in the following theorem.

\begin{theorem}[Sybil-proofness of fixed-supply adaptive clinching]
\label{thm:intro-ord-adaptive-clinch}
In the adaptive clinching auction for a fixed divisible supply, truthful
participation as a single identity weakly dominates every strategy that
replaces the bidder with finitely many identities having arbitrary reported
values and aggregate declared budget at most the bidder's true budget.
\end{theorem}

\Cref{thm:intro-ord-adaptive-clinch} strengthens the incentive guarantees
previously established for adaptive clinching with fixed divisible
supply~\cite{adaptiveclinch,budgetelicitation}.  The previous analyses allow value
misreports and, in the case of Bhattacharya et al.~\cite{budgetelicitation}, budget under-reporting, but
do not prove any guarantees in the presence 
of pseudonymous splitting.

The fixed-supply auction is not obtained directly as a special case of our AMM
mechanism.  Instead, our proofs introduce an abstract target-based clinching
framework that embodies both mechanisms as special cases.  We establish
Sybil-robust incentive compatibility at this abstract level, thus extending the
guarantees to any auction that falls within this target-based clinching framework.  
This new framework, which can be viewed as a generalization
of the ordinary adaptive clinching, 
may be of independent interest. For example, to demonstrate its usefulness, 
we show that 
the framework also captures 
the adaptive clinching mechanism with exogenous
online supply introduced by Goel, Mirrokni, and Paes Leme~\cite{clinchonline}, yielding
the following consequence: 

\begin{theorem}[Sybil-proofness with exogenous online supply]
\label{thm:intro-online-adaptive-clinch}
At every positive-supply step of the adaptive clinching mechanism with an
exogenous online supply path, truthful participation as a single identity
weakly dominates, in cumulative utility, every strategy that replaces the
bidder with finitely many identities having arbitrary reported values and
aggregate declared budget at most the bidder's true budget.
\end{theorem}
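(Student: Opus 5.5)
The plan is to obtain the result as a corollary of the abstract target-based clinching framework, rather than by analysing the online auction directly. First, fix the online mechanism of Goel, Mirrokni, and Paes Leme: supply $s_1,s_2,\dots$ arrives exogenously, and within each step an ascending clock runs while bidders clinch. A bidder clinches exactly when the residual demand of the others falls short of the supply still unsold. I would recast this process as one run of the target-based clinching framework. The state-dependent target is the cumulative supply $S_t=\sum_{\tau\le t}s_\tau$ revealed so far, minus what has already been allocated. Price and remaining budgets carry over between steps, and the supply path is fixed in advance, independent of the bids. So the target is a legitimate framework target: it depends only on the auction state and the exogenous path, never on the identity labels of the bidders. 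The positive-supply hypothesis is needed here. At a zero-supply step the clock does not run, the framework's invariants, such as the price being the point where residual demand meets the target, are not refreshed, and the step is not a framework step.

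Second, I would check that the remaining hypotheses of the framework's Sybil-robustness theorem hold at the end of every positive-supply step $t$. This requires that the truncated run, through the $t$-th step, is itself a complete execution of the framework with target $S_t$. Concretely, I would verify three things. The demand of each identity is $\min\{\text{remaining budget}/p,\ \cdot\}$, which vanishes above its reported value, so demand aggregates additively across identities. Clinching is the framework's clinching rule relative to the current target. Payments equal the integral of price against clinched quantity. Additivity is the key ingredient: a Sybil split with reported values $v_1,\dots,v_k$ and budgets $B_1,\dots,B_k$ with $\sum B_i\le B$ behaves, from the others' perspective, like an aggregate demand curve. This is exactly the deviation class the framework theorem already handles.

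Third, I would apply the framework theorem at step $t$. Its conclusion is that the deviator's truthful cumulative utility, $v\cdot(\text{total clinched})-\text{total paid}$, weakly exceeds its total utility across identities under any such deviation. This is exactly the claimed cumulative-utility dominance.

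The main obstacle I expect is the intermediate-step issue. The framework guarantees presumably concern the final outcome of an execution. Here I need the claim for every prefix, and a deviator might sacrifice utility at early steps to shape the price trajectory for later ones. I would handle this by showing that the state after step $t$ is a deterministic function only of the reported bids and $s_1,\dots,s_t$. The later supplies $s_{t+1},\dots$ have no influence on it, so the prefix run is literally the framework execution with the shorter supply path. If clinching within a step is defined via continuous price increase, I would also confirm that the tie-breaking convention at step boundaries, where the clock pauses and resumes, matches the framework's convention, which is presumably left-continuous in the price.
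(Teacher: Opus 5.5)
There is a genuine gap: your recasting models a different mechanism from the one in the statement, and that recasting is not a legitimate instance of the framework either. In online adaptive clinching (Goel, Mirrokni, and Paes Leme, as formalized in the paper), the clock is not paused and resumed with prices and budgets carried over. At step $t$ the mechanism announces the outcome of the ordinary fixed-supply clinching auction, run afresh from its starting price to completion with the cumulative supply $s_t$. Supply monotonicity is what lets this be announced online, but it plays no role in incentives.

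Your single-clock model fails the framework's requirements in two ways. First, the clock price at which step $t+1$'s supply appears depends on where the step-$t$ clock stopped, and hence on the bids. So the target is not of the form $T_\sigma(p,X,Y)$ with $\sigma$ fixed independently of the reports and common to the merged and Sybil runs, which is exactly what the coupling in \cref{thm:abstract-one-sided-sybil} needs. Second, a stopping allocation ends a run and cancels unspent budgets, so ``resuming'' is not even defined by the framework's rules. A coherent single-clock variant would index supply by the clock price itself, but the paper notes that this is a different mechanism.

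Your fix for the intermediate-step obstacle also fails inside your own model. The end of step $t$ would be an intermediate clock price $p<V$, generally different in the merged and Sybil runs, and the framework controls only completed executions. Its proof does give $\GainGap_G(p)\ge0$ at intermediate prices, but that compares mark-to-market gains $pX_G-Y_G$, not true utilities $VX_G-Y_G$. Your claim that the prefix ``is literally the framework execution with the shorter supply path'' holds only if the step-$t$ outcome is a completed fresh run with supply $s_t$. That contradicts your carry-over assumption.

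With the correct model the argument is short and your obstacle disappears. Fix $t$ with $s_t>0$. This is all ``positive-supply step'' means, since $s_t$ is cumulative: the constant target is defined only for $s>0$, and $s_t=0$ gives the zero outcome.
\begin{itemize}
\item Use the constant target $T_{\sigma_t}\equiv s_t$ with capacity $s_t$. Admissibility is immediate: $\lambda\equiv0$, and eventual release is vacuous because $T=X$ only at capacity, where the zero-trade completion convention applies.
\item Because $s_t$ is exogenous, both profiles face the same target, and \cref{thm:abstract-one-sided-sybil} applies step by step.
\item The step-$t$ outcome depends on no other supply value, so there is nothing intertemporal to manipulate.
\end{itemize}

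You also omit two boundary issues the paper has to handle. The classical clock starts at $0$, whereas the framework starts at $p_\star>0$. The two executions coincide for $p_\star$ below the first-event price whenever at least two identities have positive value and budget, and the singleton cases must be checked directly. In addition, the result is for the version that splits tied stopping exits proportionally, not sequentially.

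Finally, demand additivity is not what drives Sybil-proofness. Clinching is decided identity by identity through $S=D_{-i}$, so a split genuinely changes the dynamics. What must be verified is admissibility of the target, not that the split looks like one aggregate demand curve from outside.
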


\subsection{Additional Related Work}

\paragraph{AMM mechanism design.}
Our study builds directly on the AMM mechanism-design framework introduced by
Chan et al.~\cite{ammmechdesign}.  They also construct a Sybil-robust
incentive-compatible mechanism, but their result has two limitations.  First,
their mechanism does not achieve weak Pareto optimality (wPO).  It instead
satisfies weak local efficiency (wLE) for eligible users, which corresponds to
only the pool-trade clause of our wPO definition.  \Cref{thm:intro-imp} shows
that this loss of efficiency is unavoidable in their strategy space, which
permits a trader to overreport its aggregate budget.
Second, they require an overly stringent assumption, that the consensus layer assigns a
trustworthy arrival time to every bid.  Although a line of
work~\cite{decentral-seq,orderfair00,orderfair01,orderfair02,orderfair03,orderfair04}
has explored how to provide such a trusted timestamp, 
existing schemes are known to be manipulatable, e.g., by a client who deliberately
creates Condorcet cycles~\cite{condorcetattack}.

Subsequent work~\cite{ammtrilemma,otter} further delineates the feasibility
frontier in this model.  Li et al.~\cite{ammtrilemma} prove a trilemma among
incentive compatibility, weak local efficiency, and uniform pricing: any two
of these properties can be achieved, but no mechanism can satisfy all three.
Thus, contrary to a common intuition, clearing a batch at a uniform price does
not by itself yield strong MEV resilience while preserving even weak local
efficiency.  

Li et al.~\cite{ammtrilemma} and Shi et al.~\cite{otter} also show
that full Pareto optimality is fundamentally incompatible with Sybil-robust incentive
compatibility. 
Shi et al.~\cite{otter} circumvent this barrier by relaxing exact settlement
to permit ``burning''.
Their model
allows the pool to release more units than are allocated to traders, with the
surplus diverted for the benefit of the broader community.  For example, the
surplus can subsidize transaction fees, reward liquidity providers, or be
returned to the pool to reduce the price impact of future trades.  Under this
relaxation, they construct a simple and computationally efficient ``dream''
AMM that achieves both full Pareto optimality and Sybil-robust incentive
compatibility.

\paragraph{Sybil proofness in mechanism design.}
Sybil-proofness, also known as false-name-proofness, has been studied in a
variety of settings, including voting~\cite{anonproofvoting,falsenameweightedvoting},
resource exchange~\cite{sybilresourceexchange}, and combinatorial
auctions~\cite{falsename00,falsename01,falsename02,falsenameworstcase}.
A recurring theme is that auctions that are truthful when identities cannot
be faked may cease to be truthful once Sybil attacks are possible; most
notably, the VCG mechanism is not false-name-proof in general~\cite{falsename00,falsename02,sybilvcg}.
To the best of our knowledge, the Sybil resilience of adaptive-clinching-style
auctions for divisible goods under budget constraints has not been analyzed
before.  The closest related works are those of Iwasaki
et al.~\cite{falsenameascending} and Sakurai et al.~\cite{falsenamebudget}.
Both works 
aim to design Sybil-resilient auctions for indivisible goods. 
 Iwasaki
et al.~\cite{falsenameascending}
focuses ona setting without budget constraints, and 
Sakurai et al.~\cite{falsenamebudget} considers
non-quasi-linear utilities which include budget constraints.
Since their settings and auction mechanics 
are fundamentally different from ours, their Sybil-proofness
arguments do not carry over to our setting. 
Moreover, both Iwasaki
et al.~\cite{falsenameascending} and Sakurai et al.~\cite{falsenamebudget}
sacrifice efficiency  
to obtain Sybil resilience.  
In fact, 
in the setting of
Iwasaki et al., Pareto efficiency is known to be incompatible with
Sybil-proofness~\cite{falsename02}. 
Thus Iwasaki et al.\ evaluate efficiency only empirically, through
simulations, and Sakurai et al.\ leave the analysis of efficiency loss to
future work.  In contrast, we identify meaningful relaxations under which we
provide provable efficiency guarantees.

\paragraph{Adaptive clinching and extensions.}
The original work by Dobzinsi et al.~\cite{adaptiveclinch} 
that proposed the adaptive clinching auction assumed that the budgets are public (i.e., unforgeable).
The subsequent work of Bhattacharya et al.~\cite{budgetelicitation} show
that, for a divisible good, underreporting a budget is never profitable, so
adaptive clinching is incentive compatible
as long as budgets cannot be overreported.
It is also known that with private budgets, 
simultaneously achieving incentive compatibility 
and Pareto optimality is impossible~\cite{adaptiveclinch}.
Our feasibility and infeasibility pair (\Cref{thm:intro-main} and \Cref{thm:intro-imp})
can thus be viewed as  
AMM analogue of classical budget-constrained auctions of divisible goods.

Adaptive clinching~\cite{adaptiveclinch} has been extended to polymatroidal environments, including
the AdWords polytope for sponsored search with multiple keywords and
slots~\cite{polyhedralclinch}, as well as to other sponsored-search and
combinatorial settings~\cite{clinchkeywords,clinchcombinatorial},
heterogeneous items~\cite{clinchheterogeneous}, two-sided
markets~\cite{clinchtwosided}, online supply~\cite{clinchonline}, and payment
constraints more general than hard budgets~\cite{clinchbeyondbudgets}.  Most
of these works assume public budgets, and none considers Sybil identities.
More fundamentally, in all of them the feasible allocations are fixed
exogenously or, with online supply, revealed independently of the bids.  In an
AMM, the supply is instead created by the very payments the auction collects,
along the pool's cost curve.  Our target-based framework captures this
endogenous supply, and it recovers fixed and online supply as constant targets
(\Cref{app:constant-targets}).  Our two-sided setting also differs from that
of Hirai and Sato~\cite{clinchtwosided}, whose sellers are assumed to be
truthful and are paid at least their reserve prices.  In contrast, our pool
must receive exactly the amount that returns it to its curve, and traders in
both directions are strategic.

Even with public budgets, no deterministic mechanism attains the guarantees of
adaptive clinching once identical items may be
complements~\cite{publicbudgetimpossibility},
marginal values decrease~\cite{polyhedralclinch,clinchkeywords}, or items are
heterogeneous with multi-dimensional valuations~\cite{clinchheterogeneous}.
Another line of work replaces Pareto optimality with approximations of liquid
welfare: Dobzinski and Paes Leme~\cite{liquidwelfare} show that adaptive
clinching 2-approximates it when budgets are public, and Lu and
Xiao~\cite{liquidwelfareimproved} improve the ratio to $(1+\sqrt{5})/2$ with a
different auction.  Devanur et al.~\cite{priorfreebudget} use clinching as a
building block for prior-free revenue and welfare guarantees when agents share
a common budget.  We instead retain an exact, albeit weaker, Pareto guarantee.

\paragraph{Other related work.}
Ferreira and Parkes~\cite{credible-ex} suggest having the consensus
layer enforce verifiable, application-specific sequencing rules that limit how
a block builder may order transactions, and they give such rules for AMMs
with weak but provable guarantees against MEV extraction.
Their approach has several limitations.  First, it places
application-specific logic in the consensus layer, a shared infrastructure
layer that ought to remain application agnostic.  Second, their framework is overly
restrictive and requires
transactions to be executed one by one in the order in which they appear in
the block, thus leading to strong impossibilities. In particular, risk-free
profit cannot be eliminated in their model.

Among other work on AMMs, Milionis, Moallemi, and
Roughgarden~\cite{myersonian} characterize profit-maximizing demand curves for
a market maker among those that elicit truthful reports from traders.  Since
traders in their model trade directly with the market maker, strategic
ordering and attacks such as front-running and back-running fall outside its
scope.  Bartoletti et al.~\cite{maximizemev} compute a block builder's optimal
MEV-extraction strategy --- their goal is to characterize MEV extraction rather than
to prevent it.

Batch trading has also been studied as an approach to mitigating MEV.
Canidio and Fritsch~\cite{batchamm00,batchamm01} propose a function-maximizing AMM and analyze the elimination of arbitrage profits under idealized perfect competition among arbitrageurs sharing a common external reference price. CoW Protocol~\cite{cow} deploys batch trading through competition among solvers. Canidio and Henneke~\cite{fairbatchauction} propose fair combinatorial auctions for trade intents, studying the trade-off between execution fairness and total value returned to traders under strategic solver bidding. Ramseyer et al.~\cite{ramseyer2023augmenting} study equilibrium computation and design trade-offs for batch exchanges incorporating CFMMs. These works do not establish incentive compatibility for finite populations of strategic traders, which is our focus. Within the AMM mechanism-design framework discussed above, Li et al.~\cite{ammtrilemma} show that uniform pricing cannot be combined with both incentive compatibility and weak local efficiency.

In a smart-contract implementation, the escrow that prevents budget
overreporting must be protected against flash loans~\cite{attackdefi}, which
would let a trader 
borrow funds and repay the loan within the same transaction. 
 The attack can be avoided by splitting
execution into two phases: traders escrow their budgets in one block, and the
batch is settled in a later block, so that no flash loan can span both.

\section{Definitions}
\label{sec:model}

\subsection{AMM Reserve and Cost Curves}
\label{subsec:pricing-curve}

There are two divisible assets, \(X\) and \(Y\).  The AMM pool has initial
reserves \((x_0,y_0)\in\mathbb R_{++}^2\).  Its reserve curve is
\begin{equation*}
 \mathcal{R}_{\varphi}:=\{(x,\varphi(x)):x>0\}.
\end{equation*}
During a batch, every pool trade settles on this reserve curve.

\begin{definition}[Admissible AMM]
\label{def:two-sided-amm}
The AMM is \emph{admissible} if
\(\varphi:(0,\infty)\to(0,\infty)\) is continuous, strictly decreasing, and
convex, satisfies
\begin{equation*}
 \lim_{x\downarrow0}\varphi(x)=\infty,
 \qquad
 \lim_{x\uparrow\infty}\varphi(x)=0,
\end{equation*}
and is differentiable at \(x_0\), where \(y_0=\varphi(x_0)\).  The {\it spot price}
of one unit of \(X\), measured in units of \(Y\), is
\begin{equation*}
 p_0:=-\varphi'(x_0)>0.
\end{equation*}
\end{definition}

For a net release \(z<x_0\) of asset \(X\), define the signed pool cost
\begin{equation*}
 C(z):=\varphi(x_0-z)-y_0.
\end{equation*}
Here \(z>0\) means that the pool releases \(z\) units of \(X\), whereas
\(z<0\) means that it absorbs \(-z\) units.  Correspondingly, \(C(z)\) is
the net amount of \(Y\) received by the pool and may be negative.  The
function \(C\) is continuous, strictly increasing, and convex, with
\begin{equation*}
 C(0)=0,
 \qquad
 \lim_{z\uparrow x_0}C(z)=+\infty,
 \qquad
 \lim_{z\to-\infty}C(z)=-y_0,
 \qquad
 C'(0)=p_0.
\end{equation*}
Thus sending \(y\ge0\) units of \(Y\) into the pool releases
\(C^{-1}(y)\in[0,x_0)\) units of \(X\), while sending \(x\ge0\) units of
\(X\) into the pool releases \(-C(-x)\in[0,y_0)\) units of \(Y\).  These two
compensation maps are continuous, strictly increasing, and concave.  Their
initial marginal exchange rates are \(1/p_0\) and \(p_0\), respectively.

\paragraph{Constant product as a special case.}
If \(\varphi(x)=x_0y_0/x\), then
\begin{equation*}
 C(z)=\frac{y_0z}{x_0-z},
 \qquad
 C^{-1}(y)=\frac{x_0y}{y_0+y},
 \qquad
 -C(-x)=\frac{y_0x}{x_0+x}.
\end{equation*}
Consequently \(p_0=y_0/x_0\),
the first compensation map has initial derivative \(x_0/y_0=1/p_0\), and
the second has initial derivative \(y_0/x_0=p_0\).
Thus constant product satisfies \cref{def:two-sided-amm}.

\subsection{AMM Mechanism}
\label{subsec:amm-mechanism-model}

A physical trader has one of two types, according to the asset it wishes to
sell.

\paragraph{Sell-\(Y\) orders.}
A sell-\(Y\) trader \(k\) has true type
\begin{equation*}
 (V_k,Q_k)\in\mathbb R_{\ge0}^2,
\end{equation*}
where \(V_k\) is its value for one unit of \(X\), measured in units of \(Y\),
and \(Q_k\) is its true \(Y\)-budget.  A sell-\(Y\) identity \(i\) reports
\begin{equation*}
 (v_i,q_i)\in\mathbb R_{\ge0}^2,
\end{equation*}
where \(v_i\) has the same units as \(V_k\) and \(q_i\) is its declared
\(Y\)-budget.  An outcome for this identity consists of \(S_{Y,i}\) units of
\(Y\) sold and \(R_{X,i}\) units of \(X\) received, with
\begin{equation}
 0\le S_{Y,i}\le q_i,\qquad R_{X,i}\ge0.
 \label{eq:sell-y-outcome}
\end{equation}
Its surplus evaluated at the report is \(v_iR_{X,i}-S_{Y,i}\).  A single
sell-\(Y\) report is truthful when \((v_i,q_i)=(V_k,Q_k)\).

All values in this paper are denominated in units of \(Y\) per unit of \(X\).
Thus \(V_k\) and \(v_i\) are reciprocals of a positive reservation value for
one unit of \(Y\) expressed in units of \(X\).  This convention puts both
types of values on the same price scale.

\paragraph{Sell-\(X\) orders.}
A sell-\(X\) trader \(k\) has true type
\begin{equation*}
 (U_k,R_k)\in\mathbb R_{\ge0}^2,
\end{equation*}
where \(U_k\) is its reservation value for one unit of \(X\), measured in
units of \(Y\), and \(R_k\) is its true \(X\)-budget.  A sell-\(X\) identity
\(i\) reports
\begin{equation*}
 (u_i,r_i)\in\mathbb R_{\ge0}^2.
\end{equation*}
It sells \(S_{X,i}\) units of \(X\) and receives \(R_{Y,i}\) units of \(Y\),
where
\begin{equation}
 0\le S_{X,i}\le r_i,\qquad R_{Y,i}\ge0.
 \label{eq:sell-x-outcome}
\end{equation}
Its surplus evaluated at the report is \(R_{Y,i}-u_iS_{X,i}\).  A single
sell-\(X\) report is truthful when \((u_i,r_i)=(U_k,R_k)\).

A sell-\(Y\) order is \emph{eligible} if \(v_i\ge p_0\), and a sell-\(X\)
order is eligible if \(u_i\le p_0\).  
Unless stated otherwise, the truthful types under consideration
satisfy \(V_k\ge p_0\) for sell-\(Y\) and \(U_k\le p_0\) for sell-\(X\).
At equality with \(p_0\), either direction may be designated as the trader's
type because it is indifferent at the spot price.

\paragraph{AMM mechanism.}
Let \(\mathcal I_{\rm Y}\) and \(\mathcal I_{\rm X}\) be the finite sets of
reported sell-\(Y\) and sell-\(X\) identities.  An \emph{AMM mechanism}
maps the reported orders to an outcome
\begin{equation*}
 \bigl((S_{Y,i},R_{X,i})\bigr)_{i\in\mathcal I_{\rm Y}},
 \qquad
 \bigl((S_{X,i},R_{Y,i})\bigr)_{i\in\mathcal I_{\rm X}},
\end{equation*}
satisfying \eqref{eq:sell-y-outcome} and \eqref{eq:sell-x-outcome}.
The mechanism may match the two sides internally and trade the residual
imbalance with the AMM pool.

For an outcome, define the net flows into the pool by
\begin{equation}
 X_{\mathrm{pool}}
 :=\sum_{i\in\mathcal I_{\rm Y}}R_{X,i}
   -\sum_{i\in\mathcal I_{\rm X}}S_{X,i},
 \qquad
 Y_{\mathrm{pool}}
 :=\sum_{i\in\mathcal I_{\rm Y}}S_{Y,i}
   -\sum_{i\in\mathcal I_{\rm X}}R_{Y,i}.
 \label{eq:net-pool-flows}
\end{equation}
Positive values mean that the pool releases \(X\) and receives \(Y\);
negative values describe the reverse trade.  Trades matched internally
cancel from both expressions.

\begin{definition}[Exact-settlement feasibility]
\label{def:two-sided-exact-settlement}
An outcome is \emph{exact-settlement feasible} if it satisfies the
per-identity constraints \eqref{eq:sell-y-outcome} and
\eqref{eq:sell-x-outcome}, and the terminal pool reserves satisfy
\begin{equation*}
 (x_0-X_{\mathrm{pool}},y_0+Y_{\mathrm{pool}})
 \in\mathcal{R}_{\varphi}.
\end{equation*}
Equivalently, its net pool flows satisfy
\begin{equation*}
 X_{\mathrm{pool}}<x_0,\qquad
 Y_{\mathrm{pool}}=C(X_{\mathrm{pool}}).
\end{equation*}
Thus every unit paid to an identity is supplied by an opposite-side identity
or by the pool, and every unit delivered by an identity is received by an
opposite-side identity or by the pool.  The aggregate user flows are exactly
the negatives of the pool flows: no outside subsidy is used and no token is
discarded.  A mechanism is \emph{exact-settlement feasible} if it produces
such an outcome for every reported profile.
\end{definition}

\begin{definition}[Individual rationality]
\label{def:two-sided-ir}
A mechanism is \emph{individually rational} if, whenever every physical
trader submits its single truthful order, every sell-\(Y\) trader \(k\)
satisfies
\begin{equation*}
 0\le S_{Y,k}\le Q_k,\qquad
 V_kR_{X,k}-S_{Y,k}\ge0,
\end{equation*}
and every sell-\(X\) trader \(k\) satisfies
\begin{equation*}
 0\le S_{X,k}\le R_k,\qquad
 R_{Y,k}-U_kS_{X,k}\ge0.
\end{equation*}
Equivalently, these inequalities say that truthful participation gives every
trader finite, nonnegative true utility.  A trader that does not trade has
utility zero.
\end{definition}

Every declared budget is a hard per-identity trading limit.  It is escrowed
before the batch, cannot be transferred between identities during the batch,
and cannot be replenished from another identity's batch proceeds.  Transfers
and deliveries occur at final settlement.

\subsection{Strategy Space and Utility}
\label{subsec:strategy-utility}

\paragraph{Strategy space.}
We consider a strategy space where not only misreporting
value and budget is possible,
but also the creation of Sybil identities --- as long as 
reports in the same-direction of the trader's true type 
do not exceed the user's actual budget in aggregate. 
No cap is imposed on opposite-direction Sybil bids, i.e., 
the trader's opposite direction budget
may be arbitrarily large. 
The mechanism observes only the reported
orders; it does not know which identities are controlled by the same physical
trader.

More formally, 
fix a physical trader \(k\), and let
\(\mathcal I_{\rm Y}(k)\subseteq\mathcal I_{\rm Y}\) and
\(\mathcal I_{\rm X}(k)\subseteq\mathcal I_{\rm X}\) denote its sell-\(Y\)
and sell-\(X\) identities.

\begin{definition}[Admissible two-sided Sybil strategy]
\label{def:two-sided-strategy}
An admissible strategy of trader \(k\) is a pair of finite collections
\begin{equation*}
 \bigl((v_i,q_i)\bigr)_{i\in\mathcal I_{\rm Y}(k)},
 \qquad
 \bigl((u_i,r_i)\bigr)_{i\in\mathcal I_{\rm X}(k)},
\end{equation*}
with nonnegative reports, subject to the following direction-dependent
constraint.
\begin{enumerate}
\item If \(k\) has true sell-\(Y\) type \((V_k,Q_k)\), then
\begin{equation*}
 \sum_{i\in\mathcal I_{\rm Y}(k)}q_i\le Q_k.
\end{equation*}
The opposite-direction budgets \(r_i\) may be arbitrary finite nonnegative
numbers.
\item If \(k\) has true sell-\(X\) type \((U_k,R_k)\), then
\begin{equation*}
 \sum_{i\in\mathcal I_{\rm X}(k)}r_i\le R_k.
\end{equation*}
The opposite-direction budgets \(q_i\) may be arbitrary finite nonnegative
numbers.
\end{enumerate}
The reported values \(v_i\) and \(u_i\) may also be arbitrary finite
nonnegative numbers.
\end{definition}

\begin{remark}[On builder strategies]
\label{rmk:sequence}
We may view the builder as a special trader enpowered with one additional strategy:  
a builder can choose how to sequence the bids
within the batch.  We omit this sequencing choice from the formal strategy
space because our mechanism is \emph{order-insensitive}: it treats the batch
as an unordered set of reports, and its rules depend on reported values and
budgets but not on positions within the block.  Consequently, every ordering
of a fixed collection of reports produces the same allocations and payments.
Thus, our incentive guarantees (\cref{def:incentive-compatibility}), based on 
\cref{def:two-sided-strategy}, 
 already cover such sequencing strategies. 

A builder may have intrinsic value and demand, in which case we may model it as a
trader (with sequencing capability); 
or it may participate solely to seek risk-free profit.  
\Cref{fact:sybil-ic-implies-no-free-lunch} shows that our  
incentive notion (\cref{def:incentive-compatibility}) 
in fact protects against both types of
builders, without requiring a separate builder-specific condition 
to capture the latter category.
\end{remark}

\paragraph{Aggregate net flow and utility.}
Define trader \(k\)'s aggregate gross token flows by
\begin{align*}
 S_Y(k)&:=\sum_{i\in\mathcal I_{\rm Y}(k)}S_{Y,i},
 &R_X(k)&:=\sum_{i\in\mathcal I_{\rm Y}(k)}R_{X,i},
\\
 S_X(k)&:=\sum_{i\in\mathcal I_{\rm X}(k)}S_{X,i},
 &R_Y(k)&:=\sum_{i\in\mathcal I_{\rm X}(k)}R_{Y,i}.
\end{align*}
Its net changes in token holdings are
\begin{equation}
 \Delta_{X,k}:=R_X(k)-S_X(k),
 \qquad
 \Delta_{Y,k}:=R_Y(k)-S_Y(k).
 \label{eq:physical-net-flows}
\end{equation}
Thus \(-\Delta_{X,k}\) and \(-\Delta_{Y,k}\) are its net sales of \(X\) and
\(Y\), respectively.  Utility is evaluated from these aggregate net flows,
not by adding the identities' reported surpluses.

For a true sell-\(Y\) type \((V_k,Q_k)\), define
\begin{equation}
 \operatorname{Util}^{\rm Y}_k
 :=
 \begin{cases}
  -\infty,
  &\text{if }-\Delta_{Y,k}>Q_k
    \text{ or }
    \bigl(-\Delta_{Y,k}<0\text{ and }\Delta_{X,k}<0\bigr),\\
  V_k\Delta_{X,k}+\Delta_{Y,k},
  &\text{otherwise}.
 \end{cases}
 \label{eq:two-sided-physical-utility-y}
\end{equation}
The first catastrophic event means that the trader's net sale of \(Y\)
exceeds its true budget.  The second means that its aggregate trade is
strictly opposite to its type: it gains \(Y\) while losing \(X\).

For a true sell-\(X\) type \((U_k,R_k)\), define
\begin{equation}
 \operatorname{Util}^{\rm X}_k
 :=
 \begin{cases}
  -\infty,
  &\text{if }-\Delta_{X,k}>R_k
    \text{ or }
    \bigl(-\Delta_{X,k}<0\text{ and }\Delta_{Y,k}<0\bigr),\\
  \Delta_{Y,k}+U_k\Delta_{X,k},
  &\text{otherwise}.
 \end{cases}
 \label{eq:two-sided-physical-utility-x}
\end{equation}
Here the catastrophic events mean that the trader net-sells more \(X\) than
its true budget, or that it gains \(X\) while losing \(Y\).

These utility functions do not penalize a trader merely for receiving a net
gain in its budget token when it does not lose the other token.  Moreover,
the catastrophic budget tests concern net rather than gross sales: receipts
through opposite-side identities offset deliveries of the same token.

The truthful merged strategy of a sell-\(Y\) type \((V_k,Q_k)\) is the single
sell-\(Y\) order \((V_k,Q_k)\); its utility reduces to
\(V_kR_X(k)-S_Y(k)\).  The truthful merged strategy of a sell-\(X\) type
\((U_k,R_k)\) is the single sell-\(X\) order \((U_k,R_k)\); its utility
reduces to \(R_Y(k)-U_kS_X(k)\).

\subsection{Desired Properties}

\paragraph{Sybil-robust incentive compatibility.}
Sybil-robust incentive compatibility says that the truthful strategy
is profit-maximizing in light of
the strategy space defined earlier.

\begin{definition}[Sybil-robust incentive compatibility]
\label{def:incentive-compatibility}
A mechanism is \emph{Sybil-robust incentive compatible} if, for every
physical trader, every true type (including ineligible types), every fixed
strategy profile of the other traders, and every admissible strategy in
\cref{def:two-sided-strategy}, the true utility from the truthful merged
strategy is at least the true utility from the deviation.  The applicable
utility is \eqref{eq:two-sided-physical-utility-y} for a sell-\(Y\) type and
\eqref{eq:two-sided-physical-utility-x} for a sell-\(X\) type.
\end{definition}

Since the strategy space for a block builder 
 --- possibly having intrinsic value and demand --- is the same
as for an ordinary trader, 
\Cref{def:incentive-compatibility}
simultaneously captures
dominant strategy incentive compatibility (DSIC)
for individual traders, as well as strategy-proofness
for a block builder.

\paragraph{Weak Pareto optimality for eligible bids.}
Fix a completed, exact-settlement-feasible outcome under truthful reporting.
For each eligible user \(i\), define its post-batch holdings and value
coefficient by
\begin{equation}
 (H_i^X,H_i^Y,t_i):=
 \begin{cases}
  (R_{X,i},Q_i-S_{Y,i},V_i),&i\text{ is a sell-\(Y\) user},\\
  (R_i-S_{X,i},R_{Y,i},U_i),&i\text{ is a sell-\(X\) user}.
 \end{cases}
 \label{eq:post-batch-holdings}
\end{equation}
Thus a change \((\delta x_i,\delta y_i)\) in user \(i\)'s holdings changes
its utility by \(t_i\delta x_i+\delta y_i\).

\begin{definition}[Weak Pareto optimality]
\label{def:weak-pareto}
An exact-settlement-feasible two-sided outcome is \emph{weakly Pareto optimal
for eligible bids} if both of the following hold.
\begin{enumerate}
\item There is no subset \(\mathcal J\) of eligible users and no additional
internal trade satisfying
\begin{equation*}
 \sum_{i\in\mathcal J}\delta x_i=0,\qquad
 \sum_{i\in\mathcal J}\delta y_i=0,\qquad
 \delta x_i\ge-H_i^X,\qquad
 \delta y_i\ge-H_i^Y\quad(i\in\mathcal J)
\end{equation*}
that weakly increases \(t_i\delta x_i+\delta y_i\) for every
\(i\in\mathcal J\) and strictly increases it for at least one.
\item No eligible user can make a strictly utility-improving bilateral trade
with the pool in the same direction as that user's type, using its post-batch
holding of the token it sells.  A sell-\(Y\) user may tender additional \(Y\)
to the pool for \(X\), whereas a sell-\(X\) user may tender additional \(X\)
to the pool for \(Y\).
\end{enumerate}
The second condition deliberately excludes a reverse-direction pool trade.
\end{definition}

\paragraph{Sybil-robust incentive compatibility implies ``no free lunch''.}
In practice, a strategic builder may have no intrinsic value or demand and
participate solely to seek a risk-free profit.  By risk-free profit, we mean
a net gain in one asset without a net loss in the other---that is, a free
lunch.  We show below that Sybil-robust incentive compatibility
(\cref{def:incentive-compatibility}) rules out such profits, even when the
builder may submit identities with arbitrarily large finite budgets in either
asset.  We begin by formalizing this guarantee as the \emph{no-free-lunch}
property.

\begin{definition}[No free lunch]
\label{def:no-free-lunch}
For any reported profile and any subset of identities
\(\mathcal J\subseteq\mathcal I_{\rm Y}\cup\mathcal I_{\rm X}\), define
their aggregate net token gains by
\begin{align*}
 \Delta_X(\mathcal J)
 &:=\sum_{i\in\mathcal J\cap\mathcal I_{\rm Y}}R_{X,i}
   -\sum_{i\in\mathcal J\cap\mathcal I_{\rm X}}S_{X,i},\\
 \Delta_Y(\mathcal J)
 &:=\sum_{i\in\mathcal J\cap\mathcal I_{\rm X}}R_{Y,i}
   -\sum_{i\in\mathcal J\cap\mathcal I_{\rm Y}}S_{Y,i}.
\end{align*}
A mechanism satisfies \emph{no free lunch} if, for every such profile and
subset,
\begin{equation*}
 \Delta_X(\mathcal J)>0\ \Longrightarrow\ \Delta_Y(\mathcal J)<0,
 \qquad
 \Delta_Y(\mathcal J)>0\ \Longrightarrow\ \Delta_X(\mathcal J)<0.
\end{equation*}
Thus no subset of identities can gain strictly positively in one token
without losing the other token in aggregate.
\end{definition}

\begin{fact}[Sybil-robust IC implies no free lunch]
\label{fact:sybil-ic-implies-no-free-lunch}
Every mechanism satisfying Sybil-robust incentive compatibility
(\cref{def:incentive-compatibility}) satisfies no free lunch.
\end{fact}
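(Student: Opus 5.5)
We argue by contraposition. Suppose a mechanism violates no free lunch: there is a reported profile and a subset \(\mathcal J\) of identities with, say, \(\Delta_X(\mathcal J)>0\) and \(\Delta_Y(\mathcal J)\ge0\); the case \(\Delta_Y(\mathcal J)>0\), \(\Delta_X(\mathcal J)\ge0\) is symmetric. We will build a physical trader \(k\), a true type, and an admissible strategy in the sense of \cref{def:two-sided-strategy} under which the deviation yields strictly more true utility than truthful participation. This contradicts \cref{def:incentive-compatibility}.

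\textbf{The construction.} Let the identities in \(\mathcal J\) be controlled by a single new physical trader \(k\). The identities outside \(\mathcal J\) are the fixed reports of the other traders. Give \(k\) the true type of a sell-\(Y\) trader with budget \(Q_k=0\) and some value \(V_k\ge0\). The budget constraint in \cref{def:two-sided-strategy} must then hold for the sell-\(Y\) identities in \(\mathcal J\), which requires \(\sum_{i\in\mathcal J\cap\mathcal I_{\rm Y}}q_i\le 0\). To arrange this, we use the direction-dependent freedom as follows.
\begin{itemize}
\item If \(\mathcal J\) contains no sell-\(Y\) identities, declare \(k\) a sell-\(Y\) type. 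All of \(k\)'s identities are opposite-direction sell-\(X\) orders, whose budgets are unrestricted, so the strategy is admissible.
\item If \(\mathcal J\) contains no sell-\(X\) identities, declare \(k\) a sell-\(X\) type with \(R_k=0\), by the same reasoning.
\item If \(\mathcal J\) is mixed, use the freedom in the true budget. Choose \(Q_k:=\sum_{i\in\mathcal J\cap\mathcal I_{\rm Y}}q_i\), which is finite. The constraint then holds with equality.
\end{itemize}
The last case is the cleanest uniform choice, so we adopt it throughout: \(k\) is a sell-\(Y\) type with \(Q_k\) equal to the aggregate declared sell-\(Y\) budget in \(\mathcal J\).

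\textbf{Comparing utilities.} Under the deviation, the net flows satisfy \(\Delta_{X,k}=\Delta_X(\mathcal J)>0\) and \(\Delta_{Y,k}=\Delta_Y(\mathcal J)\ge0\). Neither catastrophic event in \eqref{eq:two-sided-physical-utility-y} occurs: the net \(Y\)-sale \(-\Delta_{Y,k}\le0\le Q_k\), and \(\Delta_{X,k}\) is not negative. The deviation utility is therefore \(V_k\Delta_X(\mathcal J)+\Delta_Y(\mathcal J)\).

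We must also check that the truthful single order \((V_k,Q_k)\), against the same other reports, gives \(k\) strictly less. Here the free parameter \(V_k\) does the work. If we take \(V_k=0\), the truthful utility is \(-S_{Y}(k)\le 0\), using only \(0\le S_{Y,k}\le Q_k\), which every mechanism satisfies by \eqref{eq:sell-y-outcome}. The deviation utility is \(\Delta_Y(\mathcal J)\ge0\). This is only weakly larger, so \(V_k=0\) does not by itself give a strict violation.

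Instead, take \(V_k>0\). The deviation utility is then \(V_k\Delta_X(\mathcal J)+\Delta_Y(\mathcal J)>0\). The truthful order yields \(V_kR_{X,k}-S_{Y,k}\). Note that the outcome of the truthful order depends on \(V_k\), while the deviation's outcome does not, since the deviation reports are fixed.

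\textbf{The main obstacle.} The key step is bounding the truthful outcome by something that does not depend on \(V_k\). My plan is to take \(Q_k=0\) whenever possible. With \(Q_k=0\), the constraint \(S_{Y,k}\le 0\) forces \(S_{Y,k}=0\). Exact settlement and the remaining reports bound \(R_{X,k}\), but they do not force it to zero. Handling this requires care.

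To resolve it, I would let \(V_k\downarrow0\). The deviation utility tends to \(\Delta_Y(\mathcal J)\), so a strict violation follows directly when \(\Delta_Y(\mathcal J)>0\). When \(\Delta_Y(\mathcal J)=0\), I would instead pick \(V_k\) small and use the fact that \(R_{X,k}<x_0\) together with \(S_{Y,k}\ge0\). This gives truthful utility at most \(V_kR_{X,k}\), so it suffices that \(R_{X,k}<\Delta_X(\mathcal J)\).

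If this fails for all \(V_k\), the fallback is a direct free-lunch argument. A truthful order with zero budget receiving positive \(X\) is itself a free lunch for the singleton \(\{k\}\). We then iterate, or argue with that simpler witness, for which the same comparison applies with a budget-zero ineligible type.
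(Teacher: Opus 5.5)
Your handling of $\Delta_Y(\mathcal J)>0$ is fine and matches the paper. A sell-$Y$ type $(0,Q)$ with $Q=\sum_{i\in\mathcal J\cap\mathcal I_{\rm Y}}q_i$ may submit $\mathcal J$ and earn $\Delta_Y(\mathcal J)>0$, while its truthful utility is $-S_Y\le 0$; this works for every $\Delta_X(\mathcal J)\ge0$.

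The gap is the remaining case $\Delta_X(\mathcal J)>0=\Delta_Y(\mathcal J)$.
\begin{itemize}
\item With a sell-$Y$ type you need some $V_k$ with $V_kR_{X,k}-S_{Y,k}<V_k\Delta_X(\mathcal J)$, where $(R_{X,k},S_{Y,k})$ is the outcome of the truthful order $(V_k,Q_k)$. You reduce this to $R_{X,k}<\Delta_X(\mathcal J)$ but give no reason why it should hold, and nothing forces it. The truthful sell-$Y$ order can itself receive $X$, so its utility may grow in $V_k$ as fast as the deviation's, and tuning $V_k$ gives no leverage.
\item The bound $R_{X,k}<x_0$ does not help. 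It is not even available, since the Fact assumes only incentive compatibility, not exact settlement.
\item Switching to $Q_k=0$ is inadmissible whenever $\mathcal J$ contains a sell-$Y$ identity with positive budget.
\item The fallback is circular. A zero-payment order receiving positive $X$ is again a witness with $\Delta_X>0=\Delta_Y$, so iterating returns you to the same obstacle with nothing decreasing.
\item The two cases are also not symmetric. Values are quoted in $Y$ per $X$, so the mirror image of the value-$0$ sell-$Y$ type is a sell-$X$ type with reservation value $U=+\infty$, which is not an admissible type.
\end{itemize}

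The missing idea is to take exactly that mirror type with a finite $U$, and to bound its truthful outcome uniformly in $U$. Let $k$ be a sell-$X$ type $(U,R)$ with $R:=\sum_{i\in\mathcal J\cap\mathcal I_{\rm X}}r_i$. Then $\mathcal J$ is admissible, because its sell-$Y$ identities are opposite-direction and unrestricted, and it yields utility $\Delta_Y(\mathcal J)+U\Delta_X(\mathcal J)$.

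Write $s(u),r(u)$ for the $X$ sold and $Y$ received by the single report $(u,R)$ against the fixed outside reports. For the type $(0,R)$, the report $(u,R)$ is an admissible deviation with utility $r(u)$. Incentive compatibility therefore gives $r(u)\le r(0)=:K$ for all $u$.

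A sell-$X$ order never receives $X$, so the truthful utility of $(U,R)$ is $r(U)-Us(U)\le K$ for every $U$. Any finite $U>K/\Delta_X(\mathcal J)$ then gives a strict violation. This uniform-in-$U$ bound on the truthful outcome is exactly what your sell-$Y$ construction cannot supply.
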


\begin{proof}
Fix a reported profile and a subset \(\mathcal J\), and abbreviate its net
gains as \(\Delta_X,\Delta_Y\).  Keep all reports outside \(\mathcal J\)
fixed.  Since the mechanism does not observe ownership, we may interpret
\(\mathcal J\) as the identities of one physical trader.  Set
\begin{equation*}
 Q:=\sum_{i\in\mathcal J\cap\mathcal I_{\rm Y}}q_i,
 \qquad
 R:=\sum_{i\in\mathcal J\cap\mathcal I_{\rm X}}r_i.
\end{equation*}
By \cref{def:two-sided-strategy}, submitting these identities is admissible
for a sell-\(Y\) trader with budget \(Q\), or a sell-\(X\) trader with
budget \(R\), at any nonnegative true value.  Suppose towards a
contradiction that \(\Delta_X,\Delta_Y\ge0\), with at least one inequality
strict.  Neither catastrophic utility condition applies to these net gains.

If \(\Delta_Y>0\), consider a sell-\(Y\) trader of true type \((0,Q)\).
Its truthful merged utility is \(-S_Y^{\rm tr}\le0\), whereas submitting
\(\mathcal J\) gives utility \(\Delta_Y>0\).  This contradicts
\cref{def:incentive-compatibility}.

If \(\Delta_X>0\), for each \(u\ge0\) let \(s(u)\) and \(r(u)\) denote
the amounts of \(X\) sold and \(Y\) received by the single sell-\(X\)
report \((u,R)\), against the same fixed outside reports.  This report is
an admissible deviation for a true type \((0,R)\), and its utility for
that type is \(r(u)\), since \(0\le s(u)\le R\).  Incentive compatibility
for type \((0,R)\) therefore gives
\begin{equation*}
 r(u)\le r(0)=:K<\infty\qquad\text{for every }u\ge0.
\end{equation*}
Choose a finite \(U>K/\Delta_X\).  The truthful utility of type \((U,R)\)
is at most
\begin{equation*}
 r(U)-Us(U)\le r(U)\le K,
\end{equation*}
whereas its admissible strategy \(\mathcal J\) gives
\(\Delta_Y+U\Delta_X>K\).  This again contradicts
\cref{def:incentive-compatibility}, proving the claim.
\end{proof}

Shi et al.~\cite{otter} also show that their notion of incentive compatibility
implies no free lunch, paralleling \Cref{fact:sybil-ic-implies-no-free-lunch}.
Their result does not apply directly here, however, because their strategy
space permits budget overreporting, whereas ours caps a trader's aggregate
same-direction reports by its true budget.

\section{Warmup: The One-Sided AMM Clinching Mechanism}
\label{sec:mechanism}

As a warmup, we first present an AMM clinching mechanism
assuming a one-sided market.  
Specifically, 
throughout this section, every submitted identity is in the sell-\(Y\)
direction: it tenders \(Y\) and demands \(X\).  
The full, two-sided mechanism will be 
introduced later in \Cref{sec:two-sided-mechanism}.

\subsection{The One-Sided Model and Feasibility Requirements}

More specficially,
we specialize the model of \cref{subsec:amm-mechanism-model,subsec:strategy-utility}
to a finite set of sell-\(Y\) identities \(\mathcal I_{\rm Y}\), with
\(\mathcal I_{\rm X}=\varnothing\).  Thus identity \(i\) reports
\((v_i,q_i)\in\mathbb R_{\ge0}^2\), where \(v_i\) is its value for one unit
of \(X\), measured in units of \(Y\), and \(q_i\) is its \(Y\)-budget.  It
is eligible exactly when \(v_i\ge p_0\); an ineligible identity receives
zero.  As before, the mechanism observes neither true types nor which
identities have a common owner.

Write
\[
 x_i:=R_{X,i},\qquad y_i:=S_{Y,i}.
\]
The mechanism chooses \(x_i\ge0\) and \(0\le y_i\le q_i\): identity \(i\)
receives \(x_i\) units of \(X\), sells \(y_i\) units of \(Y\), and has
reported surplus \(v_ix_i-y_i\).

On a truthful profile, identify each physical trader with its unique identity
\(i\), write its eligible true type as
\((V_i,Q_i)\in[p_0,\infty)\times\mathbb R_{\ge0}\), and set
\((v_i,q_i)=(V_i,Q_i)\).  Its true utility is then \(V_ix_i-y_i\).
Indeed, \(y_i\le q_i=Q_i\), and the opposite-direction catastrophe in
\eqref{eq:two-sided-physical-utility-y} cannot occur because
\(\Delta_{X,i}=x_i\ge0\).  Same-direction Sybil strategies and their
utilities retain the definitions of
\cref{def:two-sided-strategy,subsec:strategy-utility}, specialized to
\(\mathcal I_{\rm X}=\varnothing\).

On the one-sided domain \(0\le z<x_0\), \(C(z)\) is the amount of \(Y\) the
pool must receive to release \(z\) units of \(X\).  Define its right marginal
ask by
\begin{equation*}
 a(z):=C'_+(z),
\end{equation*}
so, in particular, \(a(0)=p_0\).

For any reported profile, define the aggregate allocation and payment by
\begin{equation}
 X:=\sum_{i\in\mathcal I_{\rm Y}}x_i,
 \qquad
 Y:=\sum_{i\in\mathcal I_{\rm Y}}y_i.
 \label{eq:one-sided-aggregate-outcome}
\end{equation}
Because \(\mathcal I_{\rm X}=\varnothing\), the two-sided net pool flows
\eqref{eq:net-pool-flows} specialize to
\(X_{\mathrm{pool}}=X\) and \(Y_{\mathrm{pool}}=Y\).  Consequently,
\cref{def:two-sided-exact-settlement} requires precisely
\begin{equation}
 0\le X<x_0,
 \qquad
 Y=C(X).
 \label{eq:one-sided-exact-settlement}
\end{equation}
The terminal pool reserves are then
\begin{equation*}
 (x_0-X,y_0+Y)\in\mathcal{R}_{\varphi}.
\end{equation*}

Likewise, for a truthful one-sided bidder,
\(S_Y(i)=y_i\) and \(R_X(i)=x_i\).  The two-sided individual-rationality
requirement in \cref{def:two-sided-ir} therefore specializes to
\begin{equation*}
 0\le y_i\le Q_i,
 \qquad
 V_ix_i-y_i\ge0.
\end{equation*}
The mechanism below is designed to satisfy these specializations of the
two-sided requirements.  Allocations and payments are generated by an
ascending price clock \(p\), starting at the spot price \(p_0\).

\subsection{State, Chord Target, and Demand}

At clock price \(p\), identity \(i\) has cumulative allocation \(x_i\),
cumulative payment \(y_i\), and effective remaining budget \(b_i\).  While the
identity remains in the auction, \(b_i=q_i-y_i\).  When it exits without
spending all of its budget, the unspent amount is cancelled and \(b_i\) is set
to zero.  Together with the aggregate allocation \(X\) and payment \(Y\)
defined in \eqref{eq:one-sided-aggregate-outcome}, let
\begin{equation*}
 B:=\sum_{i\in\mathcal I_{\rm Y}}b_i
\end{equation*}
denote the aggregate effective remaining budget.

The mechanism starts from
\begin{equation}
 p=p_0,\qquad x_i=y_i=0,\qquad X=Y=0.
 \label{eq:initial-state}
\end{equation}
Ineligible reports are removed as specified above.  Every remaining identity
initially has \(b_i=q_i\).

After the target below is computed, all events at \(p_0\) are processed
before the clock rises.  In particular, if \(T^{\rm amm}(p_0,0,0)>0\), that
initial target is released as a supply lump, and identities with \(v_i=p_0\)
are treated as exiting identities.  For the AMM, a positive initial target
occurs precisely when \(C\) has a nontrivial initial linear piece of slope
\(p_0\).  Thus the general event order below also covers the initial state.

For any state satisfying \(Y\ge C(X)\), define
\begin{equation*}
 g_p(z):=C(z)-Y-p(z-X),
 \qquad z\in[X,x_0).
\end{equation*}
The \emph{AMM chord target} is
\begin{equation}
 T^{\rm amm}(p,X,Y)
 :=\max\{z\in[X,x_0):g_p(z)\le0\}.
 \label{eq:amm-chord-target}
\end{equation}
The maximum exists: the set is nonempty because \(g_p(X)\le0\), it is an
interval by convexity, and it is bounded away from \(x_0\) because
\(C(z)\to\infty\).  At its right endpoint,
\begin{equation*}
 C(T^{\rm amm})=Y+p(T^{\rm amm}-X),
 \qquad a(T^{\rm amm})\ge p.
\end{equation*}
The residual supply offered to the auction at the current price is
\begin{equation*}
 S:=T^{\rm amm}-X.
\end{equation*}
Geometrically, \(T^{\rm amm}\) is the rightmost intersection of \(C\) with the line of
slope \(p\) through \((X,Y)\).  Notice that a purchase of \(\delta\le S\) at
price \(p\) changes the state to
\begin{equation*}
 X^+=X+\delta,\qquad Y^+=Y+p\delta,
\end{equation*}
but leaves \(Y-pX\), and hence \(T^{\rm amm}\), unchanged.  The purchase
reduces residual supply from \(S\) to \(S-\delta\).

At the current price \(p\), define
\begin{equation}
 {\sf Active}:=\{i:v_i>p,\ b_i>0\},
 \qquad
 {\sf Exit}:=\{i:v_i=p,\ b_i>0\}.
 \label{eq:active-exit-sets}
\end{equation}
Thus \({\sf Active}\) is the set of active identities and \({\sf Exit}\) is
the set of exiting identities that have not yet been processed.  Until an
exiting identity is processed, its budget remains part of \(B\).  Write
\begin{equation*}
 D_i:=\frac{b_i}{p},
 \qquad
 D_{-i}:=\sum_{j\ne i:\,b_j>0}\frac{b_j}{p}
\end{equation*}
for its demand and the demand of the other unprocessed identities.  The
mechanism maintains the supply invariant
\begin{equation}
 S\le D_{-i}
 \qquad\text{for every }i\in{\sf Active}.
 \label{eq:supply-invariant}
\end{equation}
The set of clinching identities is
\begin{equation}
 {\sf Clinch}:=\{i\in{\sf Active}:S=D_{-i}\}.
 \label{eq:clinch-set}
\end{equation}

It is convenient to introduce the water level
\begin{equation*}
 h:=B-pS.
\end{equation*}
Since \(pD_{-i}=B-b_i\), the invariant and clinching condition are
equivalently
\begin{equation}
 b_i\le h
 \quad\text{for every }i\in{\sf Active},
 \qquad
 {\sf Clinch}=\{i\in{\sf Active}:b_i=h\}.
 \label{eq:water-invariant}
\end{equation}

\subsection{Continuous Evolution}
\label{subsec:continuous-dynamics}

We use the clock price itself as the time variable, so \(\dif p/\dif p=1\).
Consider an open interval containing no valuation exit, no discontinuity of
the target, no change in the clinching set, and no terminal event.  The clock
and state paths are absolutely continuous on such an interval, and the target
is always reevaluated from the algebraic identity
\begin{equation}
 T^{\rm amm}=T^{\rm amm}(p,X,Y),
 \qquad S=T^{\rm amm}-X,
 \qquad h=B-pS.
 \label{eq:continuous-state-identities}
\end{equation}

The continuous evolution is defined by the following primitive conditions.
An identity outside \({\sf Clinch}\) does not trade:
\begin{equation}
 \frac{\dif x_i}{\dif p}
 =\frac{\dif y_i}{\dif p}
 =\frac{\dif b_i}{\dif p}=0,
 \qquad i\notin{\sf Clinch}.
 \label{eq:nonclincher-primitive-dynamics}
\end{equation}
Every identity in \({\sf Clinch}\) receives a nonnegative allocation at the
current price, spends exactly that payment from its effective budget, and
remains tight at the water level:
\begin{equation}
 \frac{\dif x_i}{\dif p}\ge0,
 \qquad
 \frac{\dif y_i}{\dif p}
   =p\frac{\dif x_i}{\dif p},
 \qquad
 \frac{\dif b_i}{\dif p}
   =-\frac{\dif y_i}{\dif p},
 \qquad
 \frac{\dif}{\dif p}(b_i-h)=0,
 \quad i\in{\sf Clinch}.
 \label{eq:clincher-primitive-dynamics}
\end{equation}
Equations \eqref{eq:continuous-state-identities}--
\eqref{eq:clincher-primitive-dynamics} form a differential-algebraic
definition of the continuous auction path.  They say directly that trade is
charged at the clock price and that an identity which is just able to clinch
remains just able to clinch.  They apply without modification when \(S=0\):
zero residual supply is a boundary state of the same dynamics, not a separate
dynamic regime.  No trading-rate formula is assumed.  In
\cref{subsec:derived-dynamics} we prove that these conditions determine a
unique rate almost everywhere and derive its explicit form.  Upward target
jumps are handled by the event rules below.

\subsection{Fixed-Price Operations and Events}
\label{subsec:events}

For use in every event rule, define
\[
\operatorname{Buy}(i,\delta;p),
\qquad 0\le\delta\le\min\{S,b_i/p\},
\]
by
\begin{equation*}
 x_i\leftarrow x_i+\delta,
 \qquad y_i\leftarrow y_i+p\delta,
 \qquad b_i\leftarrow b_i-p\delta.
\end{equation*}
The aggregate variables \(X,Y,B\) are updated accordingly.  A fixed-price
purchase leaves both \(T^{\rm amm}\) and
\(h=B-p(T^{\rm amm}-X)\) unchanged.

The continuous dynamics stop whenever one of the following events is reached.
\begin{enumerate}
\item \emph{A bidder joins the clinching set.}  There is no state jump.  Add
the identity to \({\sf Clinch}\) and continue with
\eqref{eq:continuous-state-identities}--
\eqref{eq:clincher-primitive-dynamics}.

\item \emph{A valuation exit.}  The identities in \({\sf Exit}\) are marked
as exiting.  If the stopping allocation is executed, simultaneous exits are
served by the proportional rule specified below; otherwise, their remaining
budgets are cancelled simultaneously.

\item \emph{A supply lump.}  A lump is an upward jump
\(\Delta T^{\rm amm}>0\) of the target at a fixed price.  Incorporate the
entire jump by setting
\begin{equation}
 T^{\rm amm}\leftarrow T^{\rm amm}+\Delta T^{\rm amm},
 \qquad S\leftarrow S+\Delta T^{\rm amm}.
 \label{eq:lump}
\end{equation}
For the AMM, such a jump can occur only from a settled state \((X,C(X))\)
when \(C\) has a linear piece of slope \(p\) beginning at \(X\).  If
\([X,X^*]\) is the remainder of the maximal such piece, then
\(\Delta T^{\rm amm}=X^*-X\) and the updated target is
\(T^{\rm amm}=X^*\).
At a price that is both a lump and an exit, incorporate the entire lump before
processing any exiting identity.
\end{enumerate}

After incorporating a possible lump and identifying every exit at \(p\),
update \({\sf Active}\) and \({\sf Exit}\).  Before cancelling any budget or
making any restoration purchase, test
\begin{equation}
 \sum_{i\in{\sf Active}}\frac{b_i}{p}\le S.
 \label{eq:stopping-test}
\end{equation}

If \eqref{eq:stopping-test} holds, perform the \emph{stopping allocation}.
First give every active identity its full demand by applying
\(\operatorname{Buy}(i,b_i/p;p)\) for every \(i\in{\sf Active}\).  Let
\begin{equation}
 L:=S-\sum_{i\in{\sf Active}}\frac{b_i}{p},
 \qquad
 B_E:=\sum_{i\in{\sf Exit}}b_i
 \label{eq:exit-remnant}
\end{equation}
be the residual quantity and the total exiting budget, both evaluated before
these purchases.  If \(B_E>0\), give each exiting identity the proportional
quantity
\begin{equation}
 \delta_i
 :=\frac{b_i}{B_E}\min\left\{L,\frac{B_E}{p}\right\},
 \qquad i\in{\sf Exit},
 \label{eq:proportional-exit-allocation}
\end{equation}
at price \(p\).  Cancel every unspent exiting budget and end the run.  At an
ordinary exit the current chord is cleared.  At a lump, the run may instead
end partway through the linear piece: the unallocated suffix is not released
by the pool.  The proportional rule makes a simultaneous exit independent of
how a physical order is partitioned among equal-valued identities.

Suppose instead that \eqref{eq:stopping-test} fails.  With the entire lump
already incorporated, cancel all exiting budgets simultaneously:
\begin{equation}
 b_j\leftarrow0
 \qquad\text{for every }j\in{\sf Exit},
 \label{eq:cancel-exiting-budgets}
\end{equation}
and update \(B\).  Define the post-event water level
\begin{equation*}
 h^\star:=B-pS,
\end{equation*}
where \(B\) is evaluated after all cancellations.  For each
\(i\in{\sf Active}\) with \(b_i>h^\star\), apply exactly once
\begin{equation}
 \operatorname{Buy}\left(i,\frac{b_i-h^\star}{p};p\right).
 \label{eq:one-shot-restoration}
\end{equation}
These purchases reduce every affected active budget directly to \(h^\star\).
Each purchase leaves \(B-pS\) unchanged, so all purchases use the same
post-event water level and their order is immaterial.

Set \(h\leftarrow h^\star\), recompute \(T^{\rm amm},S,B\), update
\({\sf Active}\), \({\sf Exit}\), and \({\sf Clinch}\) using
\eqref{eq:active-exit-sets} and \eqref{eq:water-invariant}, and resume the
continuous dynamics.  The stopping test need not be repeated: cancelling
exiting budgets changes neither active demand nor residual supply, while each
fixed-price purchase reduces them by the same amount.  This one-shot
restoration is equivalent to restoring after the lump and again after each
exit, because every intermediate restoration purchase leaves the water level
unchanged.

\subsection{Continuation and Termination}

Unless one of the terminal conditions below holds, raise the clock and follow
the continuous evolution of \cref{subsec:continuous-dynamics} until the next
event.  This continuation rule applies for every \(S\ge0\).  In particular,
clearing the current chord does not by itself terminate the mechanism: if
\(S=0\) and an active identity remains, the same primitive equations continue
to govern the auction while the target is reevaluated as the clock rises.

The mechanism ends in any of the following circumstances:
\begin{enumerate}
\item the stopping allocation \eqref{eq:stopping-test} is executed;
\item after all fixed-price operations at the current price have been
processed, no active identity remains;
\item all effective budgets are exhausted.
\end{enumerate}
Upon termination, cancel every unspent effective budget and make no further
allocation or payment.  The no-active-identity test is made only after a
target lump at the same price and the valuation-exit procedure have been
processed, so it does not preempt either operation.

There is no additional terminal condition associated with the AMM capacity
\(x_0\).  At every finite clock price and finite auction state, the chord
target satisfies \(T^{\rm amm}<x_0\), and hence the cumulative allocation
satisfies \(X\le T^{\rm amm}<x_0\).  Thus neither the target nor the
allocation reaches the capacity boundary at a finite state.  The equality
\(S=0\) only says that the current chord has been cleared; if an active
identity remains, raising the clock eventually produces a target strictly
above \(X\).  %
Exact settlement, individual rationality, and the incentive
properties of this outcome are proved in \cref{sec:one-sided-proofs}.

Observe that the rules above refer to the AMM only through the target
\(T^{\rm amm}\), its evolution as the clock rises, and the capacity
\(x_0\); the cost curve \(C\) and its marginal ask \(a\) appear only to
explain what those rules mean for the AMM.  \Cref{sec:one-sided-proofs}
exploits this by proving the
guarantees for a general target, and then verifying that \(T^{\rm amm}\)
qualifies.

\section{\texorpdfstring{Proofs for the One-Sided AMM Clinching\\Mechanism}
{Proofs for the One-Sided AMM Clinching Mechanism}}
\label{sec:one-sided-proofs}

We now prove that the AMM adaptive clinching mechanism
satisfies Sybil-robust incentive compatibility
as well as weak Pareto optimality for eligible bids. 
Before we begin our proof, we first
define an abstract
target-based clinching mechanism
 in \Cref{subsec:abstract-target-mechanism}.
This more generalized abstract mechanism allows us to 
isolate the proof arguments from the particular choice of supply function. 
Specifically, later in 
\Cref{app:constant-targets}, 
we point out 
that the classical ordinary 
adaptive clinching auction~\cite{adaptiveclinch,budgetelicitation},
and
its version with exogenous online supply~\cite{clinchonline}, are the special cases in which the
target is constant.
Thus, our proofs directly imply that these classical adaptive clinching auctions
are sybil-proof too. 

After defining the abstract target-based mechanism, 
we derive the explicit continuous rates implied by its
primitive conditions.  We then verify the abstract requirements directly for
the AMM chord target, and establish the structural lemmas for a single run.
We next 
compare a merged run with a Sybil run to prove the desired incentive compatibility properties.

\subsection{Abstract Target-Based Clinching Mechanism}
\label{subsec:abstract-target-mechanism}

The AMM clinching mechanism is driven by its chord target
\(T^{\rm amm}(p,X,Y)\), which determines how much cumulative supply is
available at each auction state.  We now abstract this role from the geometry
of the AMM.  The resulting target-based mechanism uses the same clinching
rules, but allows a general admissible target function.
Choosing \(T^{\rm amm}\) as its target recovers the AMM mechanism.

\begin{definition}[Abstract target-based mechanism]
\label{def:abstract-target-based-mechanism}
An \emph{abstract target-based mechanism} is a five-tuple
\begin{equation*}
 \mathfrak T
 :=(p_\star,\sigma,T_\sigma,
    \mathcal D_\sigma,\overline X_\sigma),
\end{equation*}
whose components have the following meanings.
\begin{enumerate}
\item \(p_\star>0\) is the initial clock price.
\item \(\sigma\) is an environment description fixed independently of the
submitted bids and common to every bid profile being compared.  It contains
any non-strategic information that the target rule may consult as the clock
advances.  If no such information is needed like in the AMM clinching case, \(\sigma = \emptyset\) and its
subscript may be omitted.
\item \(T_\sigma\) is the cumulative target rule.  At a state
\((p,X,Y)\), it specifies the cumulative quantity that may be allocated.
\item \(\mathcal D_\sigma\) is the domain of auction states on which the
target rule is defined.
\item \(\overline X_\sigma\in(0,+\infty]\) is the capacity bound.
\end{enumerate}
Thus \(T_\sigma\) is a map from \(\mathcal D_\sigma\) into
\([0,\overline X_\sigma]\); its explicit price argument selects any
clock-indexed information contained in \(\sigma\).
\end{definition}

At price \(p\), aggregate allocation \(X\), and aggregate payment \(Y\), set
\begin{equation*}
 S:=T_\sigma(p,X,Y)-X.
\end{equation*}
For an admissible abstract target-based mechanism, \(T_\sigma(p,X,Y)\ge X\), so
\(S\) is the part of the cumulative target not yet allocated.  For the AMM
application, \(p_\star=p_0\), \(\sigma\) is trivial, and
\(T_\sigma=T^{\rm amm}\); the full specification is verified in
\cref{subsec:amm-target-admissible}.

\begin{definition}[Admissible abstract target-based mechanism]
\label{def:admissible-target}
An abstract target-based mechanism \(\mathfrak T\) is \emph{admissible} if the
following properties hold.
\begin{enumerate}
\item \emph{Feasibility and partition independence.}  For every
\((p,X,Y)\in\mathcal D_\sigma\),
\begin{equation*}
 X\le T_\sigma(p,X,Y)\le\overline X_\sigma,
\end{equation*}
where the specified capacity \(\overline X_\sigma\) is common to all bid
profiles.  The target depends on bidders only through \((X,Y)\), and not on
their names or on the partition of an aggregate budget among identities.

\item \emph{Fixed-price invariance.}  If \(0\le\delta\le S\), then the
updated state belongs to the domain and
\begin{equation}
 T_\sigma(p,X+\delta,Y+p\delta)
 =T_\sigma(p,X,Y).
 \label{eq:abstract-fixed-price-invariance}
\end{equation}

\item \emph{Target decreases with mark-to-market gain.}  At a common
\((p,\sigma)\), for any two
states in the domain,
\begin{equation}
 pX-Y\le pX'-Y'
 \quad\Longrightarrow\quad
 T_\sigma(p,X,Y)\ge T_\sigma(p,X',Y').
 \label{eq:target-mark-to-market-order}
\end{equation}
Thus a state with the smaller aggregate mark-to-market gain has the weakly
larger target.  Equality of the gains forces equality of the targets.

\item \emph{Clock regularity.}  If \((X,Y)\) is held fixed, the map
\(p\mapsto T_\sigma(p,X,Y)\) is nondecreasing and right-continuous, is
locally of bounded variation, and has only upward jumps.  Away from those
jumps it is locally absolutely continuous; write
\begin{equation}
 \lambda_\sigma(p,X,Y)
 :=\left.\frac{\dif}{\dif r}
 T_\sigma(r,X,Y)\right|_{r=p}\ge0
 \label{eq:target-clock-rate}
\end{equation}
for its almost-everywhere frozen-state clock rate.

\item \emph{Eventual supply release.}  For every
\((p,X,Y)\in\mathcal D_\sigma\) such that
\(T_\sigma(p,X,Y)=X<\overline X_\sigma\), there is a price \(p'>p\) for
which
\begin{equation}
 (p',X,Y)\in\mathcal D_\sigma
 \qquad\text{and}\qquad
 T_\sigma(p',X,Y)>X.
 \label{eq:no-subcapacity-absorption}
\end{equation}
Thus the target may remain equal to \(X\) over a waiting interval, but it
cannot remain there forever while \(X\) is strictly below the specified
capacity.
\end{enumerate}
The domain is also required to contain \((p_\star,0,0)\) and to be upward
closed in its price coordinate: if \((p,X,Y)\in\mathcal D_\sigma\), then
\((p',X,Y)\in\mathcal D_\sigma\) for every \(p'\ge p\).  These are requirements
on one abstract target-based mechanism; none refers to an execution, a merged run,
or a Sybil run.
A state with \(X=\overline X_\sigma<+\infty\) is called
\emph{capacity-saturated}.  Feasibility then forces
\(T_\sigma(p,X,Y)=X\) at every such state.
\end{definition}

The \emph{target-based clinching mechanism} for an admissible
\(\mathfrak T\) is obtained from the AMM mechanism of \cref{sec:mechanism}
by replacing its AMM-specific variables as follows:
\begin{equation}
 (p_0,T^{\rm amm},x_0)
 \longmapsto
 (p_\star,T_\sigma,\overline X_\sigma).
 \label{eq:abstract-target-substitution}
\end{equation}
Thus reports are eligible when \(v_i\ge p_\star\), residual supply is
\(S=T_\sigma-X\), and the target is bounded by
\(\overline X_\sigma\).  An upward jump of \(T_\sigma\), whether inherent in
its price dependence or encoded by \(\sigma\), is the supply lump in
\eqref{eq:lump}.  Every state identity, primitive differential condition,
event order, stopping allocation, restoration rule, and continuation rule is
otherwise unchanged.  The analytical quantity \(\lambda_\sigma\) in
\eqref{eq:target-clock-rate} is derived from \(T_\sigma\) and is not an
additional component of \(\mathfrak T\).  The substitution therefore changes
the source of target supply, not the clinching mechanics.

\paragraph{Convention at capacity saturation.}
Earlier in our AMM clinching mechanism, we never saturate the capacity, that is, 
\(T^{\rm amm}<x_0\) at every finite state.  
However, our generalized target-based mechanism  
may reach capacity saturation at a finite state, so 
we must explicitly 
specify what happens afterward.  We adopt the following completion
convention: attaining capacity does not itself terminate the execution.  The
clock and ordinary valuation exits continue under the usual rules.
\Cref{lem:capacity-zero-trade-tail} shows that this continuation involves no
further trade, so we call it the \emph{zero-trade completion convention}.
Allocations and payments are therefore already economically final; the
continuation only retains effective budgets until their ordinary valuation
exits.  This prevents unused budgets from being treated as exhausted merely
because the available capacity has been fully allocated.

\subsection{Derived Form of the Continuous Dynamics}
\label{subsec:derived-dynamics}

The preceding construction defines the auction through primitive
differential conditions, without postulating a trading-rate formula.  We now
solve those conditions once for an arbitrary admissible abstract target-based
mechanism.  The resulting identities are consequences of the mechanism and will be
used throughout the proofs below.

\begin{lemma}[Fixed-price accounting]
\label{lem:fixed-price-accounting}
Suppose \(\delta\) units are allocated at price \(p\) and charged
\(p\delta\).  Then \(T_\sigma\) and \(h=B-pS\) are unchanged, while
\(S\) and the buyer's effective budget decrease by \(\delta\) and
\(p\delta\), respectively.
\end{lemma}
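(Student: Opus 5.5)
The plan is a direct computation from the definitions, with the fixed-price invariance property of \cref{def:admissible-target} as the only substantive input. Write the state before the purchase as \((p,X,Y)\) with target \(T=T_\sigma(p,X,Y)\), residual supply \(S=T-X\), aggregate budget \(B\), and water level \(h=B-pS\). Because the allocation is a \(\operatorname{Buy}\) operation, \(\delta\) is admissible, that is \(0\le\delta\le\min\{S,b_i/p\}\). The state after the purchase is \((p,X+\delta,Y+p\delta)\), and only the buyer's coordinates change: \(x_i\) increases by \(\delta\), \(y_i\) by \(p\delta\), and \(b_i\) decreases by \(p\delta\).

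First I show the target is unchanged. Since \(\delta\le S\), the updated state lies in \(\mathcal D_\sigma\), and \eqref{eq:abstract-fixed-price-invariance} gives \(T_\sigma(p,X+\delta,Y+p\delta)=T\). As a cross-check, the updated state has the same mark-to-market gain \(pX-Y\), so \eqref{eq:target-mark-to-market-order} yields the same conclusion. For the AMM chord target, this is the earlier observation that the purchase preserves \(Y-pX\) and hence preserves the line through \((X,Y)\).

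With the target fixed, the remaining changes follow by bookkeeping. The new residual supply is \(S^+=T-(X+\delta)=S-\delta\). The buyer's effective budget is \(b_i-p\delta\), and every other \(b_j\) is untouched, so \(B^+=B-p\delta\). Therefore
\[
h^+=B^+-pS^+=(B-p\delta)-p(S-\delta)=B-pS=h .
\]

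The one point that needs care, and the only real obstacle, is checking that the buyer's effective budget actually equals \(q_i-y_i\) both before and after the purchase, so that \(b_i\) drops by exactly \(p\delta\). Two things could go wrong. The buyer might be an exiting identity whose budget has been cancelled, but \(\operatorname{Buy}\) is applied only to identities with \(b_i>0\) before any cancellation. Alternatively, \(b_i\) might be clipped at zero, but the constraint \(\delta\le b_i/p\) rules this out. With both cases excluded, the lemma follows.
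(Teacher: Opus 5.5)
Your proof is correct and matches the paper's argument. In both, fixed-price invariance keeps the target fixed, and the bookkeeping \(X^+=X+\delta\), \(B^+=B-p\delta\), \(S^+=S-\delta\) gives \(h^+=h\).

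Your final paragraph is unnecessary, because the \(p\delta\) decrease of the buyer's budget is built into the definition of \(\operatorname{Buy}\). Its phrase ``before any cancellation'' is also slightly off: restoration purchases occur after the exiting budgets are cancelled, although the buyer's own budget is never cancelled.
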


\begin{proof}
Target invariance is \eqref{eq:abstract-fixed-price-invariance}.  The updates
are \(X^+=X+\delta\), \(B^+=B-p\delta\), and
\(S^+=T_\sigma-X^+=S-\delta\).  Therefore
\[
 B^+-pS^+=B-p\delta-p(S-\delta)=B-pS=h.
\]
\end{proof}

\begin{lemma}[Explicit continuous dynamics]
\label{lem:derived-continuous-rate}
Consider an event-free interval on which the frozen-state target rate
\(\lambda_\sigma\) exists, and write \(c=|{\sf Clinch}|\).  The primitive
conditions \eqref{eq:continuous-state-identities}--
\eqref{eq:clincher-primitive-dynamics}, after the target substitution
\eqref{eq:abstract-target-substitution}, uniquely determine the state
derivatives almost everywhere.  Define the derived rate
\begin{equation}
 \gamma:=S+p\lambda_\sigma.
 \label{eq:derived-clinching-rate}
\end{equation}
Then every identity follows
\begin{align}
 \frac{\dif x_i}{\dif p}&=
 \begin{cases}
   \gamma/p,&i\in{\sf Clinch},\\
   0,&i\notin{\sf Clinch},
 \end{cases}
 &
 \frac{\dif y_i}{\dif p}&=
 \begin{cases}
   \gamma,&i\in{\sf Clinch},\\
   0,&i\notin{\sf Clinch},
 \end{cases}
 \label{eq:derived-allocation-payment-ode}\\
 \frac{\dif b_i}{\dif p}&=
 \begin{cases}
   -\gamma,&i\in{\sf Clinch},\\
   0,&i\notin{\sf Clinch}.
 \end{cases}
 \label{eq:derived-budget-ode}
\end{align}
The aggregate variables satisfy
\begin{equation}
 \frac{\dif h}{\dif p}=-\gamma,
 \qquad
 \frac{\dif T_\sigma}{\dif p}=\lambda_\sigma,
 \qquad
 \frac{\dif S}{\dif p}
 =\lambda_\sigma-\frac{c\gamma}{p}.
 \label{eq:abstract-aggregate-dynamics}
\end{equation}
In particular, \(h\) is nonincreasing.  If, after all same-price events have
been processed, \(S=0\) and there is exactly one active identity \(i\), then
\(i\) is the sole clincher and
\begin{equation}
 \frac{\dif x_i}{\dif p}=\lambda_\sigma,
 \qquad
 \frac{\dif y_i}{\dif p}=p\lambda_\sigma,
 \qquad
 \frac{\dif b_i}{\dif p}=-p\lambda_\sigma,
 \qquad
 \frac{\dif S}{\dif p}=0.
 \label{eq:single-active-boundary-dynamics}
\end{equation}
Thus the sole active identity purchases every continuously appearing target
increment at its appearance price.
\end{lemma}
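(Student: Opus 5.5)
The plan is to solve the primitive conditions directly, taking one aggregate identity at a time. On an event-free interval the target is absolutely continuous in the clock. Its total derivative along the path comes from the chain rule: \(\dif T_\sigma/\dif p\) combines the frozen-state clock rate \(\lambda_\sigma\) with the state dependence through \((X,Y)\).

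The key observation is that, by \eqref{eq:continuous-state-identities} and \eqref{eq:clincher-primitive-dynamics}, all trade is charged at the clock price, so \(\dif Y=p\,\dif X\) along the path. By fixed-price invariance \eqref{eq:abstract-fixed-price-invariance}, together with the mark-to-market order \eqref{eq:target-mark-to-market-order}, which makes \(T_\sigma\) a function of \(pX-Y\) at fixed \(p\), the state-direction derivative of \(T_\sigma\) along \((1,p)\) vanishes. To avoid assuming differentiability in \((X,Y)\), I would argue by increments. Over \([p,p+\epsilon]\), first move the state at frozen price \(p\) by \((\Delta X,\Delta Y)\). Then compare with the fixed-price purchase \(\Delta X\), which has payment \(p\Delta X\). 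The two differ in payment by \(\int (r-p)\,\dif X(r)=o(\epsilon)\), and monotonicity in \(pX-Y\) sandwiches the resulting target change. This would give \(\dif T_\sigma/\dif p=\lambda_\sigma\) almost everywhere. I expect this rigor step, justifying the chain rule with only monotonicity and a.e.\ clock differentiability, to be the main obstacle. I would handle it by bounding \(T_\sigma(p+\epsilon,X',Y')\) between frozen-state values at two prices and using that \(pX-Y\) changes by \(O(\epsilon\,\Delta X)\).

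Next compute \(\dif h/\dif p\) with \(h=B-pS\) and \(S=T_\sigma-X\). Non-clinchers do not trade. Each clincher has \(\dif b_i=-p\,\dif x_i\), so \(\dif B/\dif p=-p\,\dif X/\dif p\). Hence
\[
\frac{\dif h}{\dif p}=-p\frac{\dif X}{\dif p}-S-p\Bigl(\lambda_\sigma-\frac{\dif X}{\dif p}\Bigr)=-(S+p\lambda_\sigma)=-\gamma .
\]
The condition \(\dif(b_i-h)/\dif p=0\) then forces \(\dif b_i/\dif p=-\gamma\) for every clincher, and therefore \(\dif y_i/\dif p=\gamma\) and \(\dif x_i/\dif p=\gamma/p\). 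This gives uniqueness and \eqref{eq:derived-allocation-payment-ode}–\eqref{eq:derived-budget-ode}. Summing over the \(c\) clinchers gives \(\dif X/\dif p=c\gamma/p\), hence \(\dif S/\dif p=\lambda_\sigma-c\gamma/p\). Since \(S\ge0\), \(p>0\), and \(\lambda_\sigma\ge0\) by clock regularity, we have \(\gamma\ge0\), so \(h\) is nonincreasing. This is consistent with the sign requirement \(\dif x_i/\dif p\ge0\).

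For the boundary claim, take \(S=0\) with a single active identity \(i\) and \(b_i>0\). The invariant \eqref{eq:water-invariant} gives \(h=B-0=b_i\), where \(B\) counts only \(b_i\) once exits have been processed. So \(b_i=h\) and \(i\) clinches. With \(c=1\) and \(\gamma=p\lambda_\sigma\), substituting yields \eqref{eq:single-active-boundary-dynamics}, including \(\dif S/\dif p=\lambda_\sigma-\lambda_\sigma=0\).
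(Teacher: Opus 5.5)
Your proposal is correct and follows the paper's proof. The paper also gets $\dif T_\sigma/\dif p=\lambda_\sigma$ because current-price trades move the state in the fixed-price direction, then differentiates $h=B-pS$, and gets the boundary case by setting $c=1$, $S=0$. The only difference is order: the paper first names the common clincher rate $\rho$ via tightness and solves $-\rho=-c\rho-S-p(\lambda_\sigma-c\rho/p)$, treating $c=0$ separately, whereas you cancel the $\dif X/\dif p$ terms first and then apply tightness. The paper simply asserts the chain-rule step from fixed-price invariance and does not attempt your increment argument. If you do pursue that sandwich, monotonicity in the gain alone only traps the path increment of $T_\sigma$ between the frozen-clock increments from the step's initial and terminal states. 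Closing it needs some continuity of $\lambda_\sigma$ in the state, or a modulus of continuity of $T_\sigma$ in the gain, a regularity the paper uses only tacitly.
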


\begin{proof}
First suppose \(c>0\).  The tightness condition in
\eqref{eq:clincher-primitive-dynamics} requires every clincher's budget to
have the same derivative as \(h\).  Hence their common payment rate can be
written
\[
 \rho:=-\frac{\dif h}{\dif p}
       =\frac{\dif y_i}{\dif p}
       =-\frac{\dif b_i}{\dif p},
 \qquad i\in{\sf Clinch}.
\]
Current-price charging then gives
\begin{equation}
 \frac{\dif B}{\dif p}=-c\rho,
 \qquad
 \frac{\dif X}{\dif p}=\frac{c\rho}{p},
 \qquad
 \frac{\dif Y}{\dif p}=c\rho.
 \label{eq:aggregate-rho-rates}
\end{equation}
The last two derivatives point in the fixed-price direction.  By
\eqref{eq:abstract-fixed-price-invariance}, they do not move the target; only
the frozen-state clock motion does.  Therefore
\[
 \frac{\dif T_\sigma}{\dif p}=\lambda_\sigma,
 \qquad
 \frac{\dif S}{\dif p}
 =\lambda_\sigma-\frac{c\rho}{p}.
\]
Differentiating \(h=B-pS\) and using \eqref{eq:aggregate-rho-rates} yields
\[
 -\rho
 =-c\rho-S-p\left(\lambda_\sigma-\frac{c\rho}{p}\right)
 =-S-p\lambda_\sigma.
\]
Consequently \(\rho=S+p\lambda_\sigma=\gamma\).  Since \(S\ge0\) and
\(\lambda_\sigma\ge0\), the resulting allocation rates are nonnegative, as
required by the primitive conditions.  This proves existence and uniqueness
of \eqref{eq:derived-allocation-payment-ode}--
\eqref{eq:abstract-aggregate-dynamics} when \(c>0\).

If \(c=0\), \eqref{eq:nonclincher-primitive-dynamics} fixes all individual
states.  Thus \(B'=X'=Y'=0\), while target evolution gives
\(T_\sigma'=\lambda_\sigma\).  Differentiating \(S=T_\sigma-X\) and
\(h=B-pS\) gives the same aggregate identities with \(c=0\).  Finally, when
\(S=0\) and exactly one active identity remains, its outside demand is zero,
so it is the sole clincher.  Setting \(c=1\) and \(S=0\) in the formulas just
proved gives \eqref{eq:single-active-boundary-dynamics}.
\end{proof}

\begin{lemma}[Capacity saturation is a zero-trade tail]
\label{lem:capacity-zero-trade-tail}
Suppose a run reaches a capacity-saturated state.  Thereafter its aggregate
and individual allocations and payments remain constant.  The clock and the
ordinary valuation-exit accounting may continue, but no identity makes a
further purchase.
\end{lemma}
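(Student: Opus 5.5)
The plan is to show that once \(X=\overline X_\sigma\), every mechanism operation that could move an allocation is forced to move zero quantity. The argument is a case analysis over the four ways trade can occur: continuous clinching, supply lumps, the stopping allocation, and one-shot restoration purchases. The starting point is the remark after \cref{def:admissible-target}. Feasibility gives \(X\le T_\sigma\le\overline X_\sigma\), so at a capacity-saturated state \(T_\sigma(p,X,Y)=X\) and \(S=0\) for every \(p\) at which the state lies in \(\mathcal D_\sigma\). Upward closure of the domain in price keeps the frozen state in \(\mathcal D_\sigma\) as the clock rises. Throughout, I argue by induction along the run, or equivalently via a first-exit-time argument, that \(X\) stays equal to \(\overline X_\sigma\). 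Under that hypothesis \(S\equiv0\).

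\textbf{Continuous evolution.} Since \(p\mapsto T_\sigma(p,X,Y)\) is bounded above by \(\overline X_\sigma=X\) and nondecreasing, it is constant and equal to \(X\). Hence \(\lambda_\sigma=0\) almost everywhere on the tail, and there are no upward jumps, so no supply lumps occur. By \cref{lem:derived-continuous-rate}, the derived rate is \(\gamma=S+p\lambda_\sigma=0\). All rates in \eqref{eq:derived-allocation-payment-ode}--\eqref{eq:derived-budget-ode} therefore vanish for clinchers and non-clinchers alike. Because every allocation rate is zero, \(X\) cannot leave \(\overline X_\sigma\) during continuous evolution, which closes the induction.

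\textbf{Fixed-price events.} Since \(S=0\), the test \eqref{eq:stopping-test} holds only if every active identity has \(b_i=0\). In that case the stopping allocation buys zero for each active identity. It then sets \(L=0\), so each \(\delta_i\) in \eqref{eq:proportional-exit-allocation} is zero. If the test fails, all exiting budgets are cancelled, and \(h^\star=B-pS=B\). Every remaining active \(b_i\) is at most \(B=h^\star\), so no identity satisfies \(b_i>h^\star\) and no restoration purchase \eqref{eq:one-shot-restoration} is made. The one delicate point is a state with several active identities, where \(b_i\le h\) holds with strict slack. Cancelling budgets only lowers \(B\) toward \(\sum_{\text{active}}b_i\), which still dominates each individual \(b_i\), so nothing is bought. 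Each Buy call is also capped by \(\delta\le S=0\). That cap already rules out any positive purchase directly, and the cap is the cleanest argument to state first.

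\textbf{Main obstacle.} The hardest step is rigor about the continuous phase. I need to justify that \(X\) cannot drift off capacity through a measure-zero set or an accumulation of events. I would handle this with the first-exit-time argument: let \(p_1\) be the infimum of prices after saturation at which \(X\) exceeds its saturated value. On \([p_{\rm sat},p_1)\) we have \(S=0\) and \(\gamma=0\). Absolute continuity of the state paths, together with the event analysis above, then shows that \(X\) is constant up to and including \(p_1\), which contradicts the definition of \(p_1\). Aggregate and individual allocations and payments are therefore constant. Only the clock, budget cancellations at ordinary valuation exits, and the resulting changes to \(b_i\) remain, which is exactly the claimed zero-trade tail.
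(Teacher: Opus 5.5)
Your proof is correct and follows the paper's own argument. Feasibility pins \(T_\sigma=X\), so \(S=0\), \(\lambda_\sigma=0\), \(\gamma=0\) and there are no lumps; at an exit, either no active identity remains (so \(L=0\)), or the stopping test fails and \(h^\star=B\) dominates every remaining budget, so nothing is bought.

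The first-exit-time argument you identify as the main obstacle is unnecessary. Cumulative allocation \(X\) is nondecreasing, and every reachable state satisfies \(X\le T_\sigma\le\overline X_\sigma\), so \(X\) stays at \(\overline X_\sigma\) automatically once it reaches it.
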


\begin{proof}
At saturation, \(X=\overline X_\sigma\).  At every allowed future state,
target feasibility gives \(X\le T_\sigma\le\overline X_\sigma=X\).  Hence
\(T_\sigma=X\), \(S=0\), and \(\lambda_\sigma=0\).  The derived rate is therefore
\(\gamma=S+p\lambda_\sigma=0\), so the continuous dynamics make no trade.
There can be no future target lump.  At a valuation exit with another active
identity remaining, the stopping test fails, the exiting budgets are
cancelled, and \(h^\star=B\); since every remaining budget is at most \(B\),
the restoration rule makes no purchase.  When no active identity remains,
the run terminates with zero residual supply.  Thus no event in the completed
tail changes an allocation or payment.
\end{proof}

\subsection{The AMM Target Specification Is Admissible}
\label{subsec:amm-target-admissible}

For use in the proof, define the marginal-supply function
\begin{equation}
 Z(p):=\sup\{z\in[0,x_0):a(z)\le p\},
 \qquad p\ge p_0.
 \label{eq:marginal-supply}
\end{equation}
Thus \(Z(p)\) is the amount of pool supply available at a right marginal
ask of at most \(p\).  If \(a\) is strictly increasing and continuous, then
\(Z(p)=a^{-1}(p)\).  The generalized-inverse definition in
\eqref{eq:marginal-supply} continues to apply when \(a\) has flat segments or
jumps, because \(C\) is convex but need not be strictly convex or
differentiable everywhere.

For the AMM, use the state domain
\begin{equation*}
 \mathcal D^{\rm amm}
 :=\{(p,X,Y):p\ge p_0,\ 0\le X<x_0,\ Y\ge C(X)\}.
\end{equation*}
The inequality \(Y\ge C(X)\) says that the current auction state lies on or
above the pool cost curve.

\begin{proposition}[Admissibility of the AMM target specification]
\label{prop:amm-target-admissible}
Let \(\sigma^{\rm amm}\) be the trivial environment description.  If the
reserve curve satisfies \cref{def:two-sided-amm}, then
\begin{equation*}
 \mathfrak T^{\rm amm}
 :=(p_0,\sigma^{\rm amm},T^{\rm amm},
    \mathcal D^{\rm amm},x_0)
\end{equation*}
is an admissible abstract target-based mechanism with unattained capacity.  Its
target-based clinching mechanism is the mechanism of \cref{sec:mechanism}.
\end{proposition}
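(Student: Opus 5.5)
We verify the five admissibility conditions of \cref{def:admissible-target} for \(T^{\rm amm}\) on \(\mathcal D^{\rm amm}\), together with the domain requirements. The organizing observation is that \(T^{\rm amm}(p,X,Y)\) depends on the state only through the intercept \(m:=Y-pX\). Indeed, writing \(g_p(z)=C(z)-pz-m\), the target is \(\max\{z\in[X,x_0):C(z)-pz\le m\}\). On \(\mathcal D^{\rm amm}\) we have \(C(X)-pX\le m\), and the sublevel set of the convex function \(k_p(z):=C(z)-pz\) at level \(m\) is an interval containing \(X\). Hence the lower constraint \(z\ge X\) only cuts this interval from the left, and \(T^{\rm amm}=\max\{z<x_0:k_p(z)\le m\}\), with maximum attained because \(C\to\infty\) at \(x_0\).

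\textbf{Domain, feasibility, and invariance.} The domain contains \((p_0,0,0)\) since \(C(0)=0\), and it is upward closed in \(p\) because its constraints on \((X,Y)\) do not involve \(p\). Feasibility \(X\le T^{\rm amm}<x_0\) is immediate. Partition independence holds because only \((X,Y)\) enter. For fixed-price invariance, a purchase \(\delta\le S\) keeps \(m\) fixed, so the target is unchanged. The new state stays in the domain since \(X+\delta\le T^{\rm amm}\) lies in the sublevel set, giving \(C(X+\delta)\le Y+p\delta\), and \(X+\delta<x_0\).

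\textbf{Mark-to-market monotonicity.} The condition \(pX-Y\le pX'-Y'\) means \(m\ge m'\). Sublevel sets grow with the level, so the target is weakly larger, and equal gains give equal \(m\) and hence equal targets.

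\textbf{Clock regularity.} This is the main obstacle. Fix \((X,Y)\), so that \(m(p)=Y-pX\). The target is the largest \(z\) with \(C(z)-C(X)-Y+C(X)\le p(z-X)\). Equivalently, it is the largest \(z\ge X\) where the chord slope \(\phi(z):=(C(z)-Y)/(z-X)\) satisfies \(\phi(z)\le p\); for \(z=X\) the case is trivial. I would show that \(\phi\) is eventually nondecreasing on the relevant region. By convexity of \(C\) and \(C(X)\le Y\), for \(z\ge T(p)\) we have \(a(z)\ge p\), and past the last crossing \(\phi\) stays above \(p\). Raising \(p\) enlarges \(\{\phi\le p\}\), so \(T\) is nondecreasing. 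For right-continuity, if \(p_n\downarrow p\) then \(T(p_n)\downarrow t\), and continuity of \(C\) gives \(g_p(t)\le0\), so \(t\le T(p)\). Monotonicity then yields bounded variation and that all jumps are upward.

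For local absolute continuity away from jumps, I would write \(T\) as a generalized inverse of the continuous function \(z\mapsto\phi(z)\) on the region beyond its last crossing. On intervals where \(\phi\) is strictly increasing, \(T=\phi^{-1}\) is continuous. Monotonicity plus continuity gives differentiability a.e., and I expect absolute continuity to follow from local Lipschitz bounds. Writing \(\phi'=(C'(z)-\phi(z))/(z-X)\), on the continuous part we get \(T'(p)=(T-X)/(a(T)-p)\), which is finite when \(a(T)>p\). The delicate case is \(a(T)=p\), corresponding to a flat ask segment, which I expect produces a lump. If needed, I would instead verify absolute continuity by checking that the continuous part of the monotone function \(T\) has no singular Cantor component. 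For this, the inequality \(a(T(p))\ge p\) together with convexity bounds increments \(T(p')-T(p)\) by \((p'-p)\) times a locally bounded factor, except across jump points coming from linear pieces of \(C\).

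\textbf{Eventual release and identification.} Suppose \(T(p,X,Y)=X\), which forces \(Y=C(X)\) up to the sublevel condition. Take \(p'>a(X)\). Because \(a\) is a right derivative, \(C(X+\epsilon)<C(X)+p'\epsilon\le Y+p'\epsilon\) for small \(\epsilon>0\), so \(T(p')>X\). Capacity is unattained since \(T<x_0\) always. Finally, substituting \((p_0,T^{\rm amm},x_0)\) into \eqref{eq:abstract-target-substitution} reproduces verbatim the rules of \cref{sec:mechanism}, so the induced target-based clinching mechanism is the mechanism of \cref{sec:mechanism}.
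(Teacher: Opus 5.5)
Your checks of feasibility, fixed-price invariance, the mark-to-market clause, monotonicity and right-continuity in \(p\), bounded variation, and eventual release are correct and follow the paper's route. Writing the target at fixed \(p\) as \(\max\{z<x_0:C(z)-pz\le Y-pX\}\), a nondecreasing function of the single intercept \(Y-pX\), proves the mark-to-market clause more cleanly than the paper's argument through the strictly decreasing branch of \(pz-C(z)\) on \([Z(p),x_0)\). Your sequential right-continuity argument and your right-derivative release argument are also more explicit than the paper's.

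The gap is the last clause of clock regularity: local absolute continuity away from jumps. The increment bound you gesture at comes from \(C(T')-C(T)\ge a(T)(T'-T)\) and reads \((a(T)-p')(T'-T)\le(p'-p)(T-X)\). It gives a Lipschitz estimate only when \(a(T)-p\) is bounded away from zero, and \(a(T)\ge p\) does not provide that. Your claim that the factor is locally bounded except across jumps is in fact false. Suppose \(C(z)=C(X)+a(X)(z-X)+(z-X)^3\) just to the right of a settled state \((X,C(X))\). Then the target is continuous at \(p=a(X)\), yet equals \(X+\sqrt{p-a(X)}\) just above it. Likewise \(a(T)=p\) does not by itself signal a lump, so your proposal leaves the location of the jumps as an expectation rather than proving it.

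The missing ingredient is a quantitative margin. Put \(N(z):=a(z)(z-X)-(C(z)-Y)\), the numerator of the right derivative of the chord slope \(\phi(z)=(C(z)-Y)/(z-X)\). Convexity gives \(N(z)\ge Y-C(X)\) and \(N\) nondecreasing. When \(Y=C(X)\), it also gives \(N(z)>0\) unless \(C\) is affine on \([X,z]\). The argument then splits into two cases.

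\emph{Case \(Y>C(X)\).} Here \(\phi'_+\ge(Y-C(X))/(x_0-X)^2\), so \(T=\phi^{-1}\) is Lipschitz and never jumps.

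\emph{Case \(Y=C(X)\).} Let \([X,X^*]\) be the maximal linear piece of \(C\) starting at \(X\) (possibly \(X^*=X\)). Then \(T=X\) for \(p<a(X)\) and \(T=X^*\) at \(p=a(X)\). For every \(z_0>X^*\), \(N\ge N(z_0)>0\) on \([z_0,x_0)\), which makes \(T\) locally Lipschitz on \((a(X),\infty)\). Right-continuity at \(a(X)\) and monotone convergence of \(\int T'\) then give absolute continuity on each \([a(X),b]\).

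This case split is exactly the fact the paper records: the only upward jump occurs from a settled state when the clock reaches \(a(X)\) and \(C\) is linear to the right of \(X\). It is also what matches the abstract lumps to the lump rule of \cref{sec:mechanism}. The paper leaves absolute continuity itself implicit, but your sketch as written would not supply it.
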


\begin{proof}
Fix \((p,X,Y)\in\mathcal D^{\rm amm}\) and write
\(g_p(z)=C(z)-Y-p(z-X)\).  The sublevel set
\begin{equation*}
 \{z\in[X,x_0):g_p(z)\le0\}
\end{equation*}
contains \(X\), is an interval by convexity, and is bounded away from
\(x_0\) because \(C(z)\to+\infty\).  It therefore has a right endpoint
\(T^{\rm amm}\in[X,x_0)\).  Continuity gives
\begin{equation}
 C(T^{\rm amm})=Y+p(T^{\rm amm}-X).
 \label{eq:amm-target-equality-proof}
\end{equation}
The equality also holds when the target equals \(X\), because
\(g_p(X)<0\) would make a right neighborhood of \(X\) feasible.  The right
derivative of \(g_p\) at its right endpoint is nonnegative, hence
\(a(T^{\rm amm})\ge p\).  This proves target feasibility and the capacity
bound.  The formula uses no individual reports, proving partition
independence.

For a fixed-price purchase \(0\le\delta\le T^{\rm amm}-X\), every candidate
\(z\ge X+\delta\) satisfies
\begin{align*}
 &C(z)-(Y+p\delta)-p\bigl(z-(X+\delta)\bigr)\notag\\
 &\hspace{35mm}=C(z)-Y-p(z-X).
\end{align*}
Thus the right endpoint is unchanged.  Moreover, convexity and the endpoint
inequalities imply \(g_p(z)\le0\) for every
\(z\in[X,T^{\rm amm}]\); consequently the updated state still satisfies
\(Y+p\delta\ge C(X+\delta)\).  This proves fixed-price invariance and
preservation of the domain.

To prove that the target decreases with mark-to-market gain, put
\begin{equation*}
 \psi_p(z):=pz-C(z).
\end{equation*}
The chord condition is
\(\psi_p(z)\ge pX-Y\), where \(pX-Y\) is the aggregate mark-to-market gain.
The function \(\psi_p\) is concave, reaches its rightmost maximum at
\(Z(p)\), and is strictly decreasing on
\([Z(p),x_0)\).  The right endpoint of every nonempty superlevel set lies in
that interval.  Raising the required mark-to-market gain therefore weakly
moves the right
endpoint to the left, which is exactly
\eqref{eq:target-mark-to-market-order}.

If the state is held fixed and \(p\) rises, then for every \(z>X\) the
quantity \(g_p(z)\) weakly decreases.  Hence the chord target is
nondecreasing.  Convexity gives right-continuity and local bounded variation.
On a continuous nondegenerate portion with \(S>0\) and
\(a(T^{\rm amm})>p\), differentiating
\eqref{eq:amm-target-equality-proof} with \((X,Y)\) fixed gives
\begin{equation}
 \lambda^{\rm amm}
 =\frac{T^{\rm amm}-X}{a(T^{\rm amm})-p}
 =\frac{S}{a(T^{\rm amm})-p}.
 \label{eq:amm-target-clock-rate}
\end{equation}
The only upward discontinuity can occur from a settled state
\((X,C(X))\) when the clock reaches \(a(X)\) and \(C\) is linear from
\(X\) to the right.  The target then jumps to the right endpoint \(Z(p)\)
of that linear piece.  Indeed, the tangent line of slope \(p=a(X)\) agrees
with \(C\) precisely on the maximal such piece and lies below \(C\)
afterward.  These are exactly the supply lumps specified in
\eqref{eq:lump}.

Finally, no state with \(X<x_0\) exhausts all future AMM supply: because
\(C\) is finite at \(X\) and diverges only as \(z\uparrow x_0\), sufficiently
large prices produce a target strictly above \(X\).  The capacity \(x_0\)
is never attained.  For the AMM, substituting
\eqref{eq:amm-target-clock-rate} into the derived rate
\eqref{eq:derived-clinching-rate} gives
\begin{equation*}
 \gamma=S+p\frac{S}{a(T^{\rm amm})-p}
 =S\frac{a(T^{\rm amm})}{a(T^{\rm amm})-p}.
\end{equation*}
At a settled state with one active identity, the target-increment identity
\eqref{eq:single-active-boundary-dynamics} traces the generalized
marginal-supply path \(Z\), including its linear-piece jumps.  Hence every
AMM interpretation stated in \cref{sec:mechanism} follows from the uniform
substitution \eqref{eq:abstract-target-substitution}.
\end{proof}

\subsection{Single-Run Structure}
\label{subsec:single-run-structure}

The next lemmas concern one execution of the target-based clinching mechanism
associated with an admissible abstract target-based mechanism.  They are proved
before any two executions are coupled.

\begin{lemma}[Event restoration and stopping]
\label{lem:event-restoration}
At an event price, include the entire target jump before evaluating the
stopping test.  Let \(B_A\) and \(B_E\) be the total active and exiting
budgets at that moment.  Then:
\begin{enumerate}
\item the stopping test is equivalent to \(B_E\ge h\);
\item if it fails, cancelling all exiting budgets gives
\(h^\star=h-B_E>0\), and the one-shot purchases
\((b_i-h^\star)_+/p\) are feasible and restore \(b_i\le h^\star\) for every
active identity;
\item the truth value of the stopping test does not change during the
fixed-price purchases; and
\item the exit of a clinching identity is a stopping event.
\end{enumerate}
\end{lemma}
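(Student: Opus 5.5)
The plan is to translate every item into the water-level language of \eqref{eq:water-invariant}. The key fact is that at any moment, after the lump has been incorporated, $B=B_A+B_E$, so $h=B-pS=B_A+B_E-pS$. We then read off each clause algebraically. Before doing so, we record that the supply invariant holds just before the event. A lump only raises $S$, and hence lowers $h$, so the invariant $b_i\le h$ may be violated afterwards. This is exactly why restoration is needed.

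For item 1, the stopping test \eqref{eq:stopping-test} reads $B_A/p\le S$, that is, $B_A\le pS$. Substituting $pS=B_A+B_E-h$, this becomes $0\le B_E-h$, i.e.\ $B_E\ge h$.

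For item 2, failure of the test means $B_E<h$. Cancelling the exiting budgets lowers $B$ by $B_E$ and leaves $S$ unchanged, so $h^\star=h-B_E>0$. For feasibility of $\operatorname{Buy}(i,(b_i-h^\star)_+/p;p)$ we need two things:
\begin{itemize}
\item The quantity is at most $b_i/p$. This is immediate since $h^\star>0$.
\item The total quantity purchased is at most $S$, and each purchase stays within the remaining supply.
\end{itemize}
For the supply bound, note that $pS=B_A-h^\star$ after cancellation, and
\[
\sum_{i\in{\sf Active}}(b_i-h^\star)_+\le B_A-h^\star
\]
whenever some active identity has $b_i\le h^\star$; if all active identities have $b_i>h^\star$, the left side equals $B_A-|{\sf Active}|\,h^\star\le B_A-h^\star$ because ${\sf Active}\neq\varnothing$. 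The case ${\sf Active}=\varnothing$ cannot arise, since then $B_A=0\le pS$ and the test would hold. Purchases are sequential, but by \cref{lem:fixed-price-accounting} each one leaves $h^\star$ unchanged and reduces $S$ by exactly the quantity bought, so the same bound applies at each step. Each purchase lowers $b_i$ exactly to $\min\{b_i,h^\star\}$, which gives $b_i\le h^\star$ for every active identity.

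Item 3 is the observation already made in the text: $\operatorname{Buy}$ by an active identity lowers both $\sum_{\sf Active} b_i/p$ and $S$ by $\delta$, and cancellation touches neither. I would phrase it via item 1 instead. $B_E$ is untouched by active purchases, and $h$ is invariant by \cref{lem:fixed-price-accounting}, so $B_E\ge h$ keeps its truth value.

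Item 4 is the only point needing care. A clinching identity $j$ satisfies $b_j=h$ just before it exits. At an exit price with no lump we have $j\in{\sf Exit}$, so $B_E\ge b_j=h$ and item 1 gives stopping. The obstacle is a coincident lump, which lowers $h$ further; this only helps, since $B_E\ge b_j\ge h_{\text{after lump}}$. The argument therefore relies on two facts: that clinch-tightness $b_j=h$ persists up to the exit price by the continuous dynamics of \cref{lem:derived-continuous-rate}, and that the lump is incorporated first.
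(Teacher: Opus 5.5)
Your proof is correct and follows the paper's argument essentially step for step. Item~1 comes from $B=B_A+B_E$ and $h=B-pS$; item~2 from cancellation lowering $h$ by $B_E$ together with the bound $\sum_i(b_i-h^\star)_+\le B_A-h^\star=pS$; item~3 from \cref{lem:fixed-price-accounting}; and item~4 from $b_j=h$ just before the exit, plus the observation that a coincident lump can only lower $h$.

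The one small gap is in the supply bound, where you assert the case ``some active identity has $b_i\le h^\star$'' without an argument. The paper's case split supplies it: if no budget exceeds $h^\star$, the left side is $0\le pS$; otherwise, subtract at least one copy of $h^\star$ from a subsum of $B_A$.
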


\begin{proof}
At the test, \(B=B_A+B_E\), so
\[
 B_A\le pS=B-h
 \quad\Longleftrightarrow\quad B_E\ge h.
\]
If the test fails, \(B_E<h\).  Cancellation changes \(B\), and hence \(h\),
by \(-B_E\), giving \(h^\star=h-B_E>0\).  Reducing every budget above
\(h^\star\) to \(h^\star\) restores the water-level inequalities.  The total
quantity required is feasible because
\[
 \sum_i(b_i-h^\star)_+
 \le B_A-h^\star=pS.
\]
The inequality is immediate if no budget exceeds \(h^\star\); otherwise it
follows by subtracting at least one copy of \(h^\star\) from a subsum of
\(B_A\).  By \cref{lem:fixed-price-accounting}, every such purchase leaves
\(h^\star\) fixed and reduces active demand and residual supply by the same
amount, proving the second and third statements.  Finally, immediately
before exiting, a clincher has \(b_i=h\).  A simultaneous target jump can only
lower the newly computed \(h\), so the total exiting budget is at least the
post-jump water level.  The first statement then implies stopping.
\end{proof}

\begin{lemma}[Allocation structure and clincher persistence]
\label{lem:allocation-and-persistence}
In a continuing run, the following statements hold.
\begin{enumerate}
\item The supply invariant is equivalent to
\begin{equation}
 b_i\le h\quad(i\in{\sf Active}),
 \qquad
 {\sf Clinch}=\{i\in{\sf Active}:b_i=h\}.
 \label{eq:abstract-water-characterization}
\end{equation}
\item An identity that has never clinched has received no allocation and made
no payment.
\item Once an identity clinches, it remains a clincher until the run ends.
\item No positive effective budget is exhausted in a continuing run.  Hence
every identity with \(v_i>p\) and positive reported budget is active, and
\begin{equation}
 b_i=\min\{q_i,h\}
 \qquad(i\in{\sf Active}).
 \label{eq:active-budget-water-formula}
\end{equation}
\end{enumerate}
\end{lemma}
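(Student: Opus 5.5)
We argue by induction along the execution: the state changes only through continuous evolution on event-free intervals and through fixed-price operations at event prices. We check that each operation preserves the four statements, starting from the initial state \eqref{eq:initial-state}, where \(x_i=y_i=0\) and \(b_i=q_i\). Throughout, the run is continuing, i.e.\ no stopping allocation has been executed.

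\emph{Statement 1.} Since \(pD_{-i}=B-b_i\) and \(h=B-pS\), the inequality \(S\le D_{-i}\) is equivalent to \(b_i\le h\), and \(S=D_{-i}\) is equivalent to \(b_i=h\). This gives \eqref{eq:abstract-water-characterization}.

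\emph{Statement 3.} On an event-free interval, \eqref{eq:clincher-primitive-dynamics} keeps \(b_i-h\) constant for every clincher, so a clincher stays tight. At an event price that is not a stop, a clincher cannot be exiting, by \cref{lem:event-restoration}(4). It therefore lies in \({\sf Active}\). After cancellation, \(h^\star=h-B_E\le h=b_i\), so either \(b_i=h^\star\) already or the one-shot purchase \eqref{eq:one-shot-restoration} lowers \(b_i\) to exactly \(h^\star\). Every purchase leaves \(h^\star\) unchanged by \cref{lem:fixed-price-accounting}, so the identity is again in \({\sf Clinch}\) after recomputation. A target lump only raises \(S\) and hence lowers \(h\); this lowering is absorbed by the same restoration step.

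\emph{Statement 2.} In the continuous dynamics, only clinchers trade, by \eqref{eq:nonclincher-primitive-dynamics}. At an event, the only purchases in a continuing run are the restorations. These are applied only to identities with \(b_i>h^\star\), and such an identity becomes a clincher at \(h^\star\). So any identity that trades is, from that moment on, a clincher, and by Statement 3 it remains one.

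\emph{Statement 4.} This is the main obstacle. I would first show that \(h>0\) throughout a continuing run. At events, \cref{lem:event-restoration}(2) gives \(h^\star>0\) directly. In the continuous dynamics, \(h\) decreases at rate \(\gamma\), and every clincher's budget tracks \(h\). Suppose \(h\) reached \(0\) at some price while an active clincher remained. Then either all clinchers exhaust their budgets simultaneously, which is the terminal condition ``all effective budgets exhausted'' only if nobody else holds budget, or \(h=0\) would force \(b_i\le0\) for every active \(i\), contradicting activity.

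The delicate subcase is showing that \(h\downarrow0\) cannot happen continuously while a non-clincher with positive budget remains. Non-clinchers satisfy \(b_j<h\), so as \(h\) falls they join \({\sf Clinch}\) before \(h\) reaches \(b_j\). Hence all active budgets equal \(h\) when it approaches \(0\). In that case \(h=B-pS\to0\) means \(pS\to B=B_A\), i.e.\ the stopping test \eqref{eq:stopping-test} holds. I would argue that this is detected as a stopping event, or as termination, rather than as part of a continuing run.

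Given \(h>0\), no active budget is exhausted, so every identity with \(v_i>p\) and \(q_i>0\) is still in \({\sf Active}\); exit cancellation affects only identities with \(v_i=p\). For the formula \eqref{eq:active-budget-water-formula}: a never-clinched active identity has \(b_i=q_i\) by Statement 2, and \(q_i\le h\) by Statement 1, so \(b_i=\min\{q_i,h\}\). A clincher has \(b_i=h\), and \(h\le q_i\) because \(b_i\le q_i\) always. So again \(b_i=\min\{q_i,h\}\).
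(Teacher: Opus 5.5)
Your proposal is correct and follows the paper's proof essentially step for step. Statement~1 comes from \(pD_{-i}=B-b_i\); Statements~2 and~3 come from three facts: only clinchers trade continuously, nonstopping restorations return every affected active budget (in particular every prior clincher) to \(h^\star\), and a clincher's exit is a stopping event; and Statement~4 comes from \(h>0\) in a continuing run together with Statements~2 and~3.

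The one thing to tidy is Statement~4, where there is no delicate subcase. The invariant \(b_j\le h\) already forces every active budget to vanish when \(h\) does, and between events every positive effective budget belongs to an active identity. So reaching \(h=0\) triggers the termination rule ``all effective budgets are exhausted'', not a stopping event; the stopping test is evaluated only at event prices.
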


\begin{proof}
Because \(pD_{-i}=B-b_i\), the inequality \(S\le D_{-i}\) is exactly
\(b_i\le B-pS=h\), with equality precisely for a clincher.  This proves the
first statement.

The mechanism allocates only to a current clincher, to an active identity in
a restoration purchase, or as part of a stopping allocation.  A restoration
buyer finishes with budget equal to the water level and therefore becomes a
clincher; a stopping allocation ends the run.  When \(S=0\) and one active
identity remains, that identity is a clincher because the demand of all other
identities is zero.  Thus a bidder that has never clinched has not traded in
any continuing run, proving the second statement.

Between events, \cref{lem:derived-continuous-rate} decreases a clincher's
budget and \(h\) at the same rate.  At a nonstopping event, the new level is
positive and every active budget above it is reduced to it; in particular,
every prior clincher is returned to equality.  At an event where a clincher
exits, \cref{lem:event-restoration} says that the run stops.  When \(S=0\)
and one active identity remains, the primitive tightness condition likewise
preserves equality because that identity holds the entire effective budget.
This proves persistence.

On a continuous interval a clincher can reach budget zero only when \(h=0\),
at which point every active budget is zero.  At a nonstopping event the
post-event level is strictly positive.  Therefore exhaustion of a positive
effective budget ends the run.  If an active identity has never clinched,
the second statement gives \(b_i=q_i\le h\); if it has clinched, persistence
gives \(b_i=h\le q_i\).  This proves
\eqref{eq:active-budget-water-formula} and the fourth statement.
\end{proof}

\begin{lemma}[Exited identities in a continuing run]
\label{lem:exited-zero-payment}
If a run continues after price \(p\), every identity \(j\) with \(v_j\le p\)
has \(x_j=y_j=0\), and its entire unspent reported budget was cancelled when
it exited.
\end{lemma}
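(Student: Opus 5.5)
The plan is to combine the allocation-structure lemma (\cref{lem:allocation-and-persistence}) with the event-restoration lemma (\cref{lem:event-restoration}). Fix an identity \(j\) with \(v_j\le p\) in a run that continues past \(p\). Its exit occurs at clock price \(v_j\le p\); identities with \(v_j<p_\star\) are removed as ineligible before the clock starts and trivially have \(x_j=y_j=0\). So the task is to show that an eligible \(j\) never traded before its exit event, receives nothing at that event, and has its budget cancelled there.

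The first step argues that \(j\) never clinched at any price strictly below \(v_j\). By the third statement of \cref{lem:allocation-and-persistence}, an identity that clinches stays a clincher until the run ends. It would therefore still be a clincher when the clock reaches \(v_j\). The fourth statement of \cref{lem:event-restoration} says the exit of a clinching identity is a stopping event, so the run would end at \(v_j\le p\). That contradicts the assumption that the run continues after \(p\). Hence \(j\) never clinched, and the second statement of \cref{lem:allocation-and-persistence} gives \(x_j=y_j=0\) and \(b_j=q_j\) just before its exit event. A special case is a positive-budget identity that joins the exit set at the same price as a lump. Lumps are incorporated before exits are processed, and restoration purchases go only to \({\sf Active}\) identities, which excludes the exiting \(j\). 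A zero-budget \(j\) is never in \({\sf Active}\) and trades nothing.

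The second step handles the exit event itself. Since the run continues after \(p\ge v_j\), the event at \(v_j\) was not a stopping allocation, because a stopping allocation ends the run. So the stopping test failed there. The rule \eqref{eq:cancel-exiting-budgets} then cancels every exiting budget, including \(b_j=q_j\), before any restoration purchase, and the restoration purchases \eqref{eq:one-shot-restoration} apply only to active identities. After cancellation \(b_j=0\), so \(j\) leaves both \({\sf Active}\) and \({\sf Exit}\) and never trades again, under either the continuous dynamics \eqref{eq:nonclincher-primitive-dynamics} or later events.

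The main delicate point is the first step: exit happens only when the clock reaches \(v_j\), while clinching could have happened earlier. I rely on persistence to carry clincher status forward to the exit price. I must also check that the hypothesis ``continues after \(p\)'' rules out the run terminating exactly at \(v_j\), including the case \(v_j=p\), where the test-failure branch is the only way to continue.
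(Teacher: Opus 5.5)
Your proof is correct and follows the paper's argument. Clincher persistence, together with the fact that a clincher's exit is a stopping event, rules out any trade by \(j\) before its exit; because the run continues, the event at \(v_j\) must be nonstopping, and there the cancellation rule removes \(j\)'s full budget before any restoration purchase. Your additional checks (ineligible and zero-budget identities, the case \(v_j=p\), and a lump at the exit price) agree with the paper's event rules and spell out what its shorter proof leaves implicit.
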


\begin{proof}
If \(j\) had clinched before its exit, persistence would make it a clincher
at that exit, which would stop the run by
\cref{lem:event-restoration}.  If it first received allocation in the
stopping allocation at its exit, the run would again have ended.  The
allocation-structure statement in \cref{lem:allocation-and-persistence}
therefore gives \(x_j=y_j=0\), and the nonstopping exit rule cancels its full
budget.
\end{proof}

\begin{lemma}[Sole survivor for the AMM target]
\label{lem:amm-sole-survivor}
If a reachable AMM run continues at a state with \(S=0\) and \(p>p_0\), then
exactly one identity has positive effective budget.  In particular, at most
one active identity remains.
\end{lemma}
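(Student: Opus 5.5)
The plan is to control the quantity \(pS=B-h\) along the run. At any state, \(h=B-pS\) by definition, and after exits are processed the budgets contributing to \(B\) are exactly the positive effective budgets. So \(S=0\) says that the aggregate remaining budget equals the water level, \(B=h\). By \eqref{eq:abstract-water-characterization}, every active budget is at most \(h\). Suppose, for contradiction, that at least two identities \(i\neq j\) have positive effective budget at the continuing state with \(S=0\) and \(p>p_0\).

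The argument splits according to the clinching set. If some identity \(i\) clinches, so that \(b_i=h\), then \(pS=B-h\ge b_j>0\), which contradicts \(S=0\) outright. Hence any counterexample must have \({\sf Clinch}=\varnothing\). In that case the water-level condition \(b_i\le h\) is strict for every active identity, so \(pS=B-h\) has no positive lower bound from budgets alone, and the AMM geometry is needed.

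To exclude the no-clincher case, I would look at the last moment \(p_1\le p\) at which \({\sf Clinch}\) was nonempty, or at \(p_1=p_0\) if it never was. Throughout \((p_1,p]\) there is no clincher. Then \cref{lem:derived-continuous-rate} with \(c=0\) gives \(\dif S/\dif p=\lambda^{\rm amm}\ge0\) and no trade. At events in this interval:
\begin{enumerate}
\item a lump only increases \(S\);
\item a nonstopping exit cancels budgets, which changes \(B\) and \(h\) by the same amount and leaves \(pS\) unchanged;
\item restoration purchases leave \(h\) fixed and reduce \(S\) only for an active identity with \(b_i>h^\star\), and that identity becomes a clincher, contrary to the choice of \(p_1\).
\end{enumerate}
So \(S\) is nondecreasing on \((p_1,p]\), and it suffices to show \(S>0\) just after \(p_1\).

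If \(p_1=p_0\), then \(X=Y=0\). Because \(C\) is differentiable at \(0\) with \(C'(0)=p_0\), every \(p>p_0\) gives \(C(z)<pz\) for small \(z>0\). Hence \(T^{\rm amm}(p,0,0)>0\) and \(S>0\) immediately after \(p_0\).

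If \(p_1>p_0\), then at \(p_1\) (or as \(p\downarrow p_1\)) some clincher existed while \(j\) had positive budget. By the clinching-case bound, \(pS\ge b_j\) there, and \(b_j\) is constant because \(j\) never clinched. Between events \(S\) is continuous, lumps only raise it, and cancellations do not move \(pS\). Therefore \(S>0\) is inherited at \(p_1^+\). I also need the clincher to have disappeared without ending the run. By \cref{lem:allocation-and-persistence}, clinchers persist, and by \cref{lem:event-restoration}, a clincher's exit stops the run. So a clincher cannot disappear in a continuing run, and \(p_1=p\) in this case; the clinching-case bound then applies at \(p\) itself.

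The main obstacle is the bookkeeping at events: checking that \(pS=B-h\) is preserved by cancellations and restorations, so the lower bound \(pS\ge b_j\) survives every event, and handling \(S=0\) at the initial state \(p_0\) without a linear piece. For the latter I would rely on differentiability of \(\varphi\) at \(x_0\).

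The concluding statement then follows. Each active identity has positive budget by \eqref{eq:active-budget-water-formula}, so "exactly one positive budget" implies "at most one active identity". At least one positive budget exists, since otherwise the run would have terminated by budget exhaustion.
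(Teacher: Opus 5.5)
Your proof is correct and takes essentially the paper's route. In the clincher case, \(b_i=h=B\) forces every other effective budget to zero. In the no-clincher case, persistence shows that nobody ever clinched, so \(X=Y=0\), and differentiability of \(C\) at \(0\) gives \(T^{\rm amm}(p,0,0)>0\), contradicting \(S=0\). Your detour through the last clinching price \(p_1\) and the monotonicity of \(S\) is unnecessary: once \(X=Y=0\), the bound \(T^{\rm amm}(p,0,0)>0\) applies directly at the current price \(p\).
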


\begin{proof}
Since \(S=0\), the definition of the water level gives \(h=B\).  If some
identity clinches, its budget is \(b_i=h=B\) by
\eqref{eq:abstract-water-characterization}, so every other effective budget
is zero.

Suppose instead that nobody clinches.  Clincher persistence implies that
nobody has previously clinched, and the allocation-structure statement then
gives \(X=Y=0\).  But at this state and any \(p>p_0\), the function
\(C(z)-pz\) has right derivative \(p_0-p<0\) at zero.  It is therefore
negative for all sufficiently small \(z>0\), so
\(T^{\rm amm}(p,0,0)>0\).  This contradicts \(S=0\).  Thus a clincher exists
and is the unique identity with positive effective budget.  A continuing run
has at least one such identity, completing the proof.
\end{proof}

\subsection{Feasibility and Efficiency}
\label{subsec:one-sided-feasibility-efficiency}

We now specialize the single-run structure to the AMM target.  All statements
in this subsection concern truthful one-sided sell-\(Y\) participation.

\begin{theorem}[Exact settlement and individual rationality]
\label{thm:one-sided-settlement-ir}
Every completed one-sided AMM clinching run is exact-settlement feasible.  On
a truthful profile it is individually rational: for every identity \(i\),
\begin{equation}
 0\le y_i\le Q_i,
 \qquad
 V_ix_i-y_i\ge0.
 \label{eq:proved-one-sided-ir}
\end{equation}
\end{theorem}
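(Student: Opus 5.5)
The proof splits into two independent parts: exact settlement, which is an aggregate statement about \((X,Y)\), and individual rationality, which is a per-identity statement about \((x_i,y_i)\).

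\emph{Exact settlement.} The plan is to show that every reachable state satisfies the chord inequality \(C(X)\le Y\le C(X)+\) slack, and that the run can only terminate at a state with \(Y=C(X)\). The key invariant is the chord identity \(C(T^{\rm amm})=Y+p(T^{\rm amm}-X)\) from \eqref{eq:amm-target-equality-proof}, which holds at every reachable state. By \cref{lem:fixed-price-accounting} and \eqref{eq:abstract-fixed-price-invariance}, purchases at the clock price move \((X,Y)\) along the line of slope \(p\) through \((T^{\rm amm},C(T^{\rm amm}))\). The proof of \cref{prop:amm-target-admissible} keeps the state in \(\mathcal D^{\rm amm}\), so \(Y\ge C(X)\) throughout.

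Each termination case is then handled separately. In a stopping allocation with \(B_E\ge L p\), the proportional rule \eqref{eq:proportional-exit-allocation} allocates exactly \(L\), so the residual supply \(S\) is cleared, \(X=T^{\rm amm}\), and \(Y=C(X)\). The remaining cases are termination with \(S=0\) (no active identity left, or budgets exhausted, which forces \(h=0\) and hence \(pS=B=0\)), where \(X=T^{\rm amm}\) already and \(Y=C(X)\) follows. The awkward case is a stopping allocation in which \(L>B_E/p\), or a lump run ended mid-piece, so that \(X<T^{\rm amm}\). Here I would use that a positive stopping residual occurs only at a lump, where \([X,T^{\rm amm}]\) lies on a linear piece of \(C\) of slope \(p\). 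This is exactly the specification in \eqref{eq:lump}: after an ordinary exit the chord is cleared, so any leftover lies inside the linear piece. On that piece the chord through \((X,Y)\) coincides with \(C\), giving \(Y=C(X)\) for every partial allocation. Finally, \(X<x_0\) is immediate from \(X\le T^{\rm amm}<x_0\).

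\emph{Individual rationality.} The budget bound \(0\le y_i\le Q_i\) is direct: every payment is drawn from \(b_i=q_i-y_i\), which the rules never make negative (\(\operatorname{Buy}\) requires \(\delta\le b_i/p\), and the derived rates stop when \(b_i\) hits zero). For the surplus bound \(V_ix_i-y_i\ge0\), observe that every unit is charged at the current clock price \(p\); this holds both in \eqref{eq:derived-allocation-payment-ode} and in every fixed-price event. It therefore suffices to show that identity \(i\) never buys at a price above \(v_i=V_i\). Before \(p=v_i\), identity \(i\) is active, so all its purchases occur at prices \(p<v_i\). At \(p=v_i\) it is in \({\sf Exit}\), and it can buy only in the stopping allocation, at price exactly \(v_i\); otherwise its budget is cancelled and it trades no more (\cref{lem:exited-zero-payment}). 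Hence \(y_i=\int p\,\dif x_i\le V_ix_i\), which is the claim.

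The main obstacle is the partial-lump termination case. It must be verified, from the event order in \cref{subsec:events}, that whenever the run ends with \(S>0\) the unallocated interval \([X,T^{\rm amm}]\) lies on a linear piece of \(C\) of slope \(p\), so that the final state is on the curve and not strictly above it.
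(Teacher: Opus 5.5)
\paragraph{Verdict.} Your individual-rationality argument, and your treatment of the cleared-chord and $S=0$ terminal cases, agree with the paper's proof. The gap is exactly the step you call the main obstacle, and your proposal does not close it.

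You support that step by citing the sentence ``at an ordinary exit the current chord is cleared'' from \cref{subsec:events}. That sentence is not a rule of the mechanism. The rule gives the exiting identities $\min\{L,B_E/p\}$ in total, and whether this equals $L$ is precisely what the theorem must establish, so relying on it is circular. Nothing in your argument excludes a stopping event at an ordinary exit, with no lump, where $L>B_E/p$. The run would then end partway along a chord on which $C$ need not be linear, and the final state would satisfy $Y_f>C(X_f)$, violating exact settlement.

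\paragraph{The missing idea: the sign of the water level $h=B-pS$.}
\begin{enumerate}
\item At the stopping test, every identity with positive budget is in ${\sf Active}$ or ${\sf Exit}$, because earlier exits had their budgets cancelled. Hence $B=B_A+B_E$ and $pL-B_E=pS-B_A-B_E=-h$. Your two stopping cases are therefore exactly $h\ge 0$ (chord cleared) and $h<0$.
\item You must then show that $h\ge 0$ at every stopping test not preceded, at the same price, by a lump. This follows from these facts:
\begin{itemize}
\item $h=B\ge0$ initially when there is no initial lump.
\item $h$ is continuous and nonincreasing along the dynamics, since $\dif h/\dif p=-\gamma$ by \cref{lem:derived-continuous-rate}.
\item $h$ is unchanged by fixed-price purchases, by \cref{lem:fixed-price-accounting}.
\item $h$ dominates every active budget. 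So if $h$ reaches $0$, every active budget is exhausted and the run ends with $S=(B-h)/p=0$.
\item Every nonstopping event resets the level to $h^\star=h-B_E>0$, by \cref{lem:event-restoration}.
\end{itemize}
\item Consequently, only the jump $S\leftarrow S+\Delta T^{\rm amm}$ of a lump can make $h$ negative. A lump starts from a settled state, and $[X,T^{\rm amm}]$ is then a linear piece of $C$ of slope $p$. Every purchased prefix therefore satisfies $Y_f=Y+p(X_f-X)=C(X_f)$.
\end{enumerate}
This is how the paper closes the partial-lump case. With this argument inserted, your proof coincides with the paper's.
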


\begin{proof}
We first prove exact settlement.  Consider a stopping allocation and use the
notation \(B_A,B_E,h\) immediately after a possible lump has been included.
If \(h\ge0\), the stopping condition \(B_E\ge h\) and the proportional exit
rule allocate
\[
 \frac{B_A}{p}+\frac{B_E-h}{p}
 =\frac{B-h}{p}=S
\]
units, so the final chord is cleared in full.  Fixed-price invariance and the
chord equality give
\begin{equation*}
 X_f=T^{\rm amm},
 \qquad
 Y_f=Y+p(T^{\rm amm}-X)=C(T^{\rm amm})=C(X_f).
\end{equation*}
If \(h<0\), the stopping event must follow a target lump: away from an upward
target jump, \(h\) starts nonnegative, is continuous and nonincreasing by
\cref{lem:derived-continuous-rate}, and reaching zero exhausts all active
budgets; a nonstopping exit leaves the positive level described in
\cref{lem:event-restoration}.  At a lump, the selected chord is a linear
piece of \(C\) of slope \(p\).  Every active and exiting identity exhausts
its budget, leaving the suffix \(-h/p\) unallocated, but every purchased
prefix still satisfies
\[
 Y_f=Y+p(X_f-X)=C(X_f).
\]

Every other terminal case has \(S=0\).  The target equality
\eqref{eq:amm-target-equality-proof} then gives \(Y_f=C(X_f)\).  In
particular, \eqref{eq:single-active-boundary-dynamics} shows that the
one-active boundary path moves directly along the pool curve.  Thus every
terminal state satisfies \eqref{eq:one-sided-exact-settlement}; all pool
output is assigned and all payments enter the pool.

For individual rationality, the mechanism never charges more than the
reported budget, so truthful reporting gives \(y_i\le q_i=Q_i\).  Every
increment received while active is priced below \(v_i=V_i\), and every
increment received at its exit is priced exactly \(V_i\).  Therefore
\[
 y_i=\int p\,\dif x_i\le V_i\int\dif x_i=V_ix_i.
\]
This includes fixed-price event purchases and Stieltjes jumps.  Hence
\eqref{eq:proved-one-sided-ir} holds.
\end{proof}

For a truthful bidder, let
\begin{equation*}
 r_i:=Q_i-y_i
\end{equation*}
denote its terminal unspent \(Y\)-budget, and let
\(a(X_f)=C'_+(X_f)\) be the terminal right marginal pool ask.

\begin{lemma}[Terminal certificate]
\label{lem:one-sided-terminal-certificate}
Every completed truthful one-sided run satisfies:
\begin{enumerate}
\item if \(x_j>0\) and \(V_i>V_j\), then \(r_i=0\); and
\item if \(r_i>0\), then \(V_i\le a(X_f)\).
\end{enumerate}
\end{lemma}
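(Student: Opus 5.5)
The plan is to read both statements off the terminal configuration of the run, using the single-run structure lemmas. Let \(p_f\) be the terminal clock price. I will split into cases by how the run ends: (a) a stopping allocation at \(p_f\), possibly at a lump; (b) no active identity remains after the fixed-price operations at \(p_f\); (c) all effective budgets are exhausted.

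In every case, \cref{lem:allocation-and-persistence} and \cref{lem:exited-zero-payment} give the key facts.
\begin{itemize}
\item An identity with \(V_j<p_f\) exited at a nonstopping event and has \(x_j=0\).
\item An identity with \(V_i>p_f\) and \(r_i>0\) is still active at termination.
\end{itemize}

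\textbf{Statement 1.} Suppose \(x_j>0\) and \(V_i>V_j\). Then \(V_j\ge p_f\), so \(V_i>p_f\), and \(i\) is active at the end unless its budget is exhausted. It remains to show that every terminal case exhausts active budgets.
\begin{itemize}
\item In case (a), the stopping allocation gives every active identity its full demand \(b_i/p\), so \(r_i=0\).
\item Case (b) contradicts \(i\) being active.
\item In case (c), \(r_i=0\) trivially.
\end{itemize}
There is a subtlety: an identity can be cancelled at an exit only if \(V_i=p\) at that exit. That event happened at a price \(\le p_f\le V_j<V_i\), so it cannot apply to \(i\).

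\textbf{Statement 2.} Suppose \(r_i>0\). By the argument above, \(V_i\le p_f\); in case (a) this also uses the fact that active identities exhaust their budgets. It then suffices to show \(p_f\le a(X_f)\). In case (a) at an ordinary exit, the final chord is cleared, so \(X_f=T^{\rm amm}\), and \cref{prop:amm-target-admissible} gives \(a(T^{\rm amm})\ge p_f\).

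At a lump that ends partway through the linear piece, \(X_f\) lies inside a linear piece of slope \(p_f\), so \(a(X_f)=p_f\). In cases (b) and (c), \(S=0\) and \(X_f=T^{\rm amm}(p_f,X_f,Y_f)\), so again \(a(X_f)\ge p_f\). One exception needs care: case (c) with \(S>0\) would have all active budgets zero. Then \(V_i\) must be \(\le p_f\) for \(r_i>0\), and I expect to show \(h=0\) forces the chord to be cleared, or to bound via \(Z\) directly.

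\textbf{Main obstacle.} I expect the hardest step to be the precise identification of \(a(X_f)\ge p_f\) in each terminal configuration, including the lump suffix and the degenerate \(h\le 0\) cases. There I would use the chord equality and the convexity of \(C\), exactly as in the exact-settlement proof of \cref{thm:one-sided-settlement-ir}. The right derivative at the right endpoint of the sublevel set of \(g_{p_f}\) is nonnegative, which gives \(a(X_f)\ge p_f\ge V_i\).
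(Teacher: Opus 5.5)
Your proof is correct, and it is organized differently from the paper's. The paper proves item~1 by tracing how $j$ first receives allocation: clinching or restoration followed by persistence, the stopping event, or the one-active boundary regime. It proves item~2 by splitting on whether $i$ exited, and it disposes of a non-exited $i$ with $r_i>0$ through a contradiction with the continuation rule.

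You instead extract a single separating cutoff. First, $x_j>0\Rightarrow V_j\ge p_f$, by \cref{lem:exited-zero-payment}. Second, $r_i>0\Rightarrow V_i\le p_f$, because each of the three terminal rules leaves every identity with $V_i>p_f$ with zero residual budget. Item~1 is then $V_i\le p_f\le V_j$. Item~2 is $V_i\le p_f\le a(X_f)$, where the last inequality uses the same cleared-chord / partial-lump / settled-state split as the paper.

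Your route has three advantages:
\begin{itemize}
\item It moves the allocation case analysis into an already proved lemma.
\item It shows that the paper's non-exited case in item~2 is vacuous, so the informal ``the continuation rule would not have terminated'' step is unnecessary.
\item It gives a slightly stronger fact: $p_f$ itself separates holders of $X$ from holders of residual $Y$.
\end{itemize}
The paper's route stays closer to the allocation mechanics, at the cost of more cases.

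Two loose ends in your proposal are easy to close.

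First, case~(b) needs no separate argument. An event that leaves no active identity passes the stopping test trivially, so it is case~(a). Between events, the active set can empty only by budget exhaustion, which is case~(c).

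Second, the exception you flag in case~(c) is harmless. In the continuous phase a clincher's budget reaches zero only when $h=0$. At that point every active budget, being at most $h$, vanishes together, so $p_fS=B-h\to0$. Hence $S>0$ at $p_f$ is possible only if the target jumps at that same price. In that case $X_f$ is the left end of a linear piece of $C$ of slope $p_f$, so $a(X_f)=p_f$. Otherwise $S=0$, and the chord-endpoint inequality from \cref{prop:amm-target-admissible} gives $a(X_f)\ge p_f$.
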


\begin{proof}
For the first statement, consider how identity \(j\) first receives
allocation.  If it receives allocation while clinching or in a restoration
purchase, it remains a clincher until the run ends.  If the run reaches its
exit at \(V_j\), that exit stops the run; if the run ends earlier, both
\(i\) and \(j\) are still active.  In either case the stopping rule gives
identity \(i\) its full remaining demand and exhausts its budget.  If \(j\) first
receives allocation in that stopping event, the same conclusion follows
directly from the stopping rule.  If \(j\) receives allocation in the
one-active boundary regime, every other effective budget, including that of
the higher-valued identity \(i\), is already zero.  These cases exhaust the
allocation rules and prove item~1.

Let \(p_f\) be the terminal clock price.  We first note that
\begin{equation}
 p_f\le a(X_f).
 \label{eq:terminal-price-below-ask}
\end{equation}
At a fully cleared chord this follows from
\(a(T^{\rm amm})\ge p_f\); at a partial lump it follows because the final
point lies on a linear piece of slope \(p_f\).  In the one-active boundary
regime, the bidder either waits below the next ask or trades at the marginal
ask, so the same inequality holds.

If identity \(i\) exited, then \(V_i\le p_f\), and
\eqref{eq:terminal-price-below-ask} proves item~2.  If it did not exit and
\(r_i>0\), suppose for contradiction that \(V_i>a(X_f)\).  Exact settlement
places the state at \((X_f,C(X_f))\).  By right continuity of the convex
cost, a sufficiently small additional amount of \(X\) has average incremental
cost strictly below \(V_i\), and \(i\) has enough value and positive residual
budget to purchase some such amount.  The continuation rule would not have
terminated at \(X_f\), a contradiction.
\end{proof}

\begin{theorem}[Weak Pareto optimality in the sell-\(Y\) direction]
\label{thm:one-sided-weak-pareto}
The truthful one-sided AMM clinching outcome is weakly Pareto optimal for
eligible bids in the sense of \cref{def:weak-pareto}.  
\end{theorem}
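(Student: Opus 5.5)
We verify the two clauses of \cref{def:weak-pareto} separately, using only \cref{thm:one-sided-settlement-ir} and the terminal certificate \cref{lem:one-sided-terminal-certificate}. In the one-sided setting every eligible user is a sell-\(Y\) user with holdings \((H_i^X,H_i^Y,t_i)=(x_i,r_i,V_i)\), where \(r_i=Q_i-y_i\ge0\).

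\emph{Clause 2 (pool trades).} Suppose user \(i\) tenders \(\eta>0\) additional \(Y\), with \(\eta\le r_i\), to the pool, which sits at \((X_f,C(X_f))\) by exact settlement. It receives \(\xi=C^{-1}(C(X_f)+\eta)-X_f\) units of \(X\). Convexity of \(C\) gives \(\eta=C(X_f+\xi)-C(X_f)\ge a(X_f)\xi\), where \(a(X_f)=C'_+(X_f)\). The trade requires \(r_i>0\), so item~2 of \cref{lem:one-sided-terminal-certificate} yields \(V_i\le a(X_f)\). Hence the utility change satisfies \(V_i\xi-\eta\le (V_i-a(X_f))\xi\le0\), so the trade is not strictly improving.

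\emph{Clause 1 (internal trades).} Take a subset \(\mathcal J\) and a balanced trade \(\sum\delta x_i=\sum\delta y_i=0\) with \(V_i\delta x_i+\delta y_i\ge0\) for all \(i\in\mathcal J\), strictly for some. Summing gives \(\sum_i V_i\delta x_i>0\). Order the users by value and use the constraints \(\delta x_i\ge -x_i\) and \(\delta y_i\ge -r_i\).

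A user with \(\delta x_i>0\) needs \(\delta y_i\ge -V_i\delta x_i\). If \(\delta x_i>0\) with \(\delta y_i<0\), then \(r_i>0\). A user with \(\delta x_i<0\) must have \(x_i>0\). Since \(\sum\delta x_i=0\) and \(\sum V_i\delta x_i>0\), some \(i\) with \(\delta x_i>0\) must have a strictly larger value than some \(j\) with \(\delta x_j<0\), that is, \(V_i>V_j\).

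\textbf{Main obstacle.} The goal is to show that such an \(i\) has \(r_i>0\); then item~1 of \cref{lem:one-sided-terminal-certificate}, applied to \(x_j>0\) and \(V_i>V_j\), gives \(r_i=0\), a contradiction. The difficulty is that \(i\) could instead pay for \(X\) with \(Y\) obtained from another user inside the trade, so \(\delta y_i<0\) is not automatic. The cleaner route is a weighted-sum argument over the set \(P=\{i:\delta x_i>0\}\). On \(P\), \(\delta y_i\ge -V_i\delta x_i\). The \(Y\) that \(P\) spends in net must come from holders of \(Y\), and we argue those holders have low value.

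Let \(\bar V=\min\{V_j:x_j>0\}\). Every user with \(r>0\) has value at most \(\bar V\) by item~1, unless every allocated user has the same value, which is the degenerate case and is handled directly. Now split by this threshold:
\begin{itemize}
\item Users with \(V>\bar V\) have \(r=0\), hence \(\delta y\ge0\).
\item Users with \(V\le\bar V\) satisfy \(\delta y\ge -V\delta x\) when \(\delta x>0\).
\end{itemize}
For users selling \(X\) (\(\delta x<0\)), their values are at least \(\bar V\) and \(\delta y\ge V|\delta x|\). Combining these with the two balance equations shows that total \(Y\) received by sellers, at least \(\bar V\sum_{\delta x<0}|\delta x|\), is at most the total \(Y\) paid by buyers. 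Buyers with \(V>\bar V\) pay nothing (\(\delta y\ge0\)), and low-value buyers pay at most \(\bar V\delta x\). This forces every inequality to be tight, which contradicts strictness.

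I expect this bookkeeping, particularly the tie case where values equal \(\bar V\), to be the delicate step. A fallback is an LP-duality certificate: take the price \(\bar V\) and verify that \(t_i\delta x_i+\delta y_i\le \bar V\delta x_i+\delta y_i\) holds term by term under the holding constraints, then sum to obtain \(0\).
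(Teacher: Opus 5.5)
Your proof is correct and follows the paper's route: clause~2 is argued identically, and for clause~1 the paper also uses item~1 of \cref{lem:one-sided-terminal-certificate} to place a single price cutoff between every $Y$-holder and every $X$-holder. In both versions, buyers pay at most the cutoff times $Q$, sellers require at least that much, and conservation forces every inequality to be tight, so no utility change can be strict.

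The only differences are cosmetic. The paper's cutoff is $\lambda=\max\{V_i:r_i>0\}$, with the case of no residual $Y$ handled separately; your $\bar V=\min\{V_j:x_j>0\}$ works equally well, but you should likewise dispose of the trivial case in which no user holds $X$, where $\bar V$ is undefined. Your ``degenerate'' exception is unnecessary, since item~1 gives $V_i\le\bar V$ whenever $r_i>0$ without qualification. The term-by-term fallback would fail for buyers with $V_i>\bar V$, so rely on the main bookkeeping.
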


\begin{proof}
First consider an additional internal trade among eligible users.  If every
user has \(r_i=0\), nobody holds \(Y\) that can be tendered, while every
holder values its \(X\) positively; a Pareto improvement is impossible.
Otherwise set
\begin{equation*}
 \lambda:=\max\{V_i:r_i>0\}.
\end{equation*}
Item~1 of \cref{lem:one-sided-terminal-certificate} implies that every user
holding positive \(X\) has value at least \(\lambda\), while every user able
to tender \(Y\) has value at most \(\lambda\).

Represent a proposed internal trade by changes
\((\delta x_i,\delta y_i)_i\) in user holdings.  Conservation gives
\(\sum_i\delta x_i=\sum_i\delta y_i=0\), and the utility change is
\(V_i\delta x_i+\delta y_i\).  Let
\(Q=\sum_{\delta x_i>0}\delta x_i\).  A user paying \(Y\) must have
\(r_i>0\), and weak improvement requires
\begin{equation*}
 -\delta y_i\le V_i\delta x_i\le\lambda\delta x_i.
\end{equation*}
Thus buyers can pay at most \(\lambda Q\).  Every supplier of \(X\) has
positive initial \(X\)-holdings after the mechanism and hence value at least
\(\lambda\); weak improvement requires
\begin{equation*}
 \delta y_j\ge V_j(-\delta x_j)
                 \ge\lambda(-\delta x_j).
\end{equation*}
Suppliers therefore require at least \(\lambda Q\).  Conservation forces
equality throughout, so every participant's utility change is zero.  No
strict Pareto improvement exists.

It remains to consider an additional same-direction pool trade.  Exact
settlement starts it from \((X_f,C(X_f))\).  Buying \(\delta>0\) more units
of \(X\) costs at least
\begin{equation*}
 C(X_f+\delta)-C(X_f)\ge a(X_f)\delta.
\end{equation*}
Any bidder able to fund the trade has \(r_i>0\), and therefore
\(V_i\le a(X_f)\) by item~2 of the terminal certificate.  Its utility change
is nonpositive.  This proves both clauses of the one-sided restriction of
\cref{def:weak-pareto}.
\end{proof}

\subsection{Sybil-Robust Incentive Compatibility}
\label{subsec:one-sided-sybil-proof}

Fix a physical sell-\(Y\) bidder with true type \((V,Q)\), where
\(V\ge p_\star\).  In the \emph{merged run} \(\mathcal M\), it submits the
single report \((V,Q)\).  In the \emph{Sybil run} \(\mathcal S\), it submits a
finite set \(G\) of sell-\(Y\) identities
\begin{equation}
 (v_i,q_i)_{i\in G},
 \qquad
 q_G:=\sum_{i\in G}q_i\le Q.
 \label{eq:one-sided-sybil-deviation}
\end{equation}
The set \(N\) of outside identities, their reports, and the exogenous target
state are identical in the two runs.  Every unit allocated to an identity in
\(G\) has the physical bidder's true value \(V\).  Thus the two utilities to
be compared are
\begin{equation*}
 U_G^{\mathcal M}=VX_G^{\mathcal M}-Y_G^{\mathcal M},
 \qquad
 U_G^{\mathcal S}=VX_G^{\mathcal S}-Y_G^{\mathcal S}.
\end{equation*}

\begin{lemma}[Capping value overreports]
\label{lem:cap-one-sided-overreports}
Replacing every Sybil value report \(v_i\) by \(\min\{v_i,V\}\) weakly
increases the physical bidder's true utility.
\end{lemma}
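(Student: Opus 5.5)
The plan is to couple the original Sybil run \(\mathcal S\) with the run \(\widehat{\mathcal S}\) in which every report \(v_i\) with \(i\in G\) is replaced by \(\min\{v_i,V\}\). I will show that the two runs coincide at every clock price strictly below \(V\), and then compare what \(G\) receives at prices \(\ge V\). First note that capping changes no budget. So \(q_G\le Q\) still holds, the capped profile is admissible under \eqref{eq:one-sided-sybil-deviation}, and the outside identities \(N\) and the target are unchanged. Because all identities in \(G\) are sell-\(Y\), the physical utility is
\[
U_G=\int (V-p)\,\dif X_G(p),
\]
where the Stieltjes integral includes fixed-price event purchases. This holds because every purchase is charged at the current clock price, so \(Y_G=\int p\,\dif X_G\).

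The first step is path agreement below \(V\). The mechanism consults a reported value only through membership in \({\sf Active}\) and \({\sf Exit}\) at the current price, that is, through the comparisons \(v_i>p\) and \(v_i=p\). Eligibility is the same comparison at \(p_\star\le V\). For \(p<V\), an identity satisfies \(v_i>p\) if and only if \(\min\{v_i,V\}>p\), and the same holds for \(v_i=p\). Hence the two runs have the same initial processing, the same events, and the same solutions of the differential-algebraic system up to every price \(p<V\). The last point uses the uniqueness of the derived rates in \cref{lem:derived-continuous-rate}, together with the deterministic event rules. In particular, both runs arrive at price \(V\) with identical states, taken as left limits before the same-price events at \(V\) are processed. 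If either run terminates before \(V\), both terminate identically and the utilities are equal.

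The second step compares the two runs at and above \(V\). In \(\widehat{\mathcal S}\), every identity of \(G\) has value at most \(V\), so each one is exiting or already exited at price \(V\). Any purchase that \(G\) makes at price exactly \(V\), whether in a restoration or in a proportional stopping allocation, contributes \((V-V)\,\dif X_G=0\). After \(V\), either the run has stopped, or all exiting budgets of \(G\) were cancelled by \eqref{eq:cancel-exiting-budgets}; in both cases \(G\) makes no further trade. Therefore
\[
U_G^{\widehat{\mathcal S}}=\int_{p<V}(V-p)\,\dif X_G.
\]
In \(\mathcal S\), the contribution from prices \(p<V\) is the same integral, and every increment at a price \(p\ge V\) contributes \((V-p)\,\dif X_G\le0\). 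It follows that \(U_G^{\mathcal S}\le U_G^{\widehat{\mathcal S}}\).

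The main obstacle is the bookkeeping at price \(V\) itself. A lump, exits of outside identities, the stopping test, and restoration purchases may all occur at \(V\), and in the two runs they are applied to different active sets. I will handle this by not comparing the two runs' behaviour at \(V\) at all. It suffices that every trade at price \(V\) carries zero marginal utility in either run. This requires that the event rules charge exactly the clock price \(p=V\), which holds by \eqref{eq:one-shot-restoration} and \eqref{eq:proportional-exit-allocation}. It also requires that state agreement is established only for the left limit at \(V\), as argued above.
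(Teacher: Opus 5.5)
Your proposal is correct and follows essentially the same argument as the paper. Both show that the two runs coincide below \(V\), because value reports enter only through the comparisons \(v_i>p\) and \(v_i=p\). Trades at \(V\) then carry zero true surplus and trades above \(V\) carry negative surplus. Your version adds the explicit Stieltjes bookkeeping \(U_G=\int(V-p)\,\dif X_G\), the check that the capped group makes no trade above \(V\), and the boundary case \(p_\star=V\).
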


\begin{proof}
The original and capped executions coincide below price \(V\), because value
reports affect the mechanism only when identities exit.  Allocations made at
price \(V\) have zero true marginal surplus, whereas allocations made above
\(V\) have negative true marginal surplus \(V-p\).  Capping all overreports
simultaneously removes only zero- or negative-surplus trades and cannot remove
any positive-surplus trade.
\end{proof}

For any fixed set of identities \(\mathcal I\), let \(X_{\mathcal I}(p)\) and
\(Y_{\mathcal I}(p)\) denote its
cumulative allocation and payment through clock price \(p\).  If a run ends,
extend both quantities constantly.  Define its mark-to-market gain by
\begin{equation}
 \Gamma_{\mathcal I}(p):=pX_{\mathcal I}(p)-Y_{\mathcal I}(p).
 \label{eq:mark-to-market-gain}
\end{equation}

\begin{lemma}[Mark-to-market identity]
\label{lem:mark-to-market-identity}
Under current-price charging,
\begin{equation*}
 \Gamma_{\mathcal I}(p)=\int_{p_\star}^{p}X_{\mathcal I}(s)\,\dif s.
\end{equation*}
In particular, \(\Gamma_{\mathcal I}\) is continuous across fixed-price
purchases and is locally absolutely continuous, with derivative
\(X_{\mathcal I}\) almost everywhere.  If every identity in \(\mathcal I\)
reports at most \(V\), then its aggregate true utility is
\(\Gamma_{\mathcal I}(V)\).
\end{lemma}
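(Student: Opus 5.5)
The plan is to prove the identity by integration by parts for Stieltjes integrals, using only that every purchase is charged at the current clock price. Write \(X_{\mathcal I}\) and \(Y_{\mathcal I}\) as nondecreasing, right-continuous functions of the clock. Current-price charging means \(\dif Y_{\mathcal I}(s)=s\,\dif X_{\mathcal I}(s)\) as measures on \([p_\star,p]\). This covers three kinds of purchase:
\begin{itemize}
\item continuous trade, via \eqref{eq:derived-allocation-payment-ode}, where \(\dif y_i/\dif p=p\,\dif x_i/\dif p\);
\item fixed-price jumps, namely restoration purchases, stopping allocations and the initial lump at \(p_\star\), each of which moves \(Y\) by exactly \(p\) times the jump in \(X\) (\cref{lem:fixed-price-accounting});
\item the one-active boundary regime, via \eqref{eq:single-active-boundary-dynamics}.
\end{itemize}
Trades made at \(p_\star\) itself are included by treating \(X_{\mathcal I}(p_\star^-)=Y_{\mathcal I}(p_\star^-)=0\) and integrating over the closed interval.

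The key step is the product rule \(\dif(sX_{\mathcal I}(s))=X_{\mathcal I}(s^-)\,\dif s+s\,\dif X_{\mathcal I}(s)\). Since \(s\mapsto s\) is continuous, no cross-jump term appears. Integrating from \(p_\star^-\) to \(p\) and subtracting \(Y_{\mathcal I}(p)=\int s\,\dif X_{\mathcal I}\) gives \(\Gamma_{\mathcal I}(p)=\int_{p_\star}^p X_{\mathcal I}(s^-)\,\dif s\). The monotone function \(X_{\mathcal I}\) has at most countably many jumps, so \(X_{\mathcal I}(s^-)\) may be replaced by \(X_{\mathcal I}(s)\) inside a Lebesgue integral. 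After termination, both quantities are extended as constants, so \(\dif X=\dif Y=0\) there and the identity continues to hold.

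Continuity and local absolute continuity then follow because \(\Gamma_{\mathcal I}\) is the integral of a locally bounded function; \(X_{\mathcal I}\le\overline X_\sigma\) or is finite at each price. The derivative equals \(X_{\mathcal I}\) almost everywhere by the Lebesgue differentiation theorem, or more simply at every continuity point of \(X_{\mathcal I}\). In particular, a fixed-price purchase moves \(pX\) and \(Y\) by the same amount \(p\delta\), so \(\Gamma_{\mathcal I}\) has no jump there.

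For the utility statement, suppose every identity in \(\mathcal I\) reports at most \(V\). By \cref{lem:exited-zero-payment} and the event rules, no identity of \(\mathcal I\) trades at a price above its reported value. An exiting identity trades at its exit price only in the stopping allocation, and otherwise nothing after it. Hence \(X_{\mathcal I}\) and \(Y_{\mathcal I}\) are constant on \((V,\infty)\), and the final aggregate outcome equals \((X_{\mathcal I}(V),Y_{\mathcal I}(V))\). The true utility is therefore \(VX_{\mathcal I}(V)-Y_{\mathcal I}(V)=\Gamma_{\mathcal I}(V)\).

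The main obstacle I expect is justifying \(\dif Y=s\,\dif X\) rigorously across every event type. This includes the proportional exit allocation and partial lumps, where the whole stopping allocation is charged at the single price \(p\). It also requires confirming that the state paths are of bounded variation, so that Stieltjes calculus applies. I would handle this by listing the allocation rules exhaustively, as in the proof of \cref{thm:one-sided-settlement-ir}, and checking each one.
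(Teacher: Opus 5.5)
Your proposal is correct and follows essentially the same route as the paper: current-price charging gives \(Y_{\mathcal I}(p)=\int_{[p_\star,p]}s\,\dif X_{\mathcal I}(s)\), integration by parts produces the identity with the boundary terms at \(p_\star\) cancelling, each jump \(\Delta X_{\mathcal I}\) is matched by a payment jump \(p\,\Delta X_{\mathcal I}\) so \(\Gamma_{\mathcal I}\) is continuous, and evaluating at \(V\) gives the true utility because no member of \(\mathcal I\) buys above its report. Your treatment of \(X_{\mathcal I}(s^-)\) versus \(X_{\mathcal I}(s)\) and of the zero extension at \(p_\star^-\) only makes explicit what the paper's one-line integration by parts leaves implicit.
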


\begin{proof}
Current-price charging gives the Stieltjes identity
\(Y_{\mathcal I}(p)=\int_{[p_\star,p]}s\,\dif X_{\mathcal I}(s)\).
Integration by parts yields
\[
 pX_{\mathcal I}(p)-Y_{\mathcal I}(p)
 =p_\star X_{\mathcal I}(p_\star)-Y_{\mathcal I}(p_\star)
   +\int_{p_\star}^{p}X_{\mathcal I}(s)\,\dif s.
\]
Every allocation at the initial price is charged \(p_\star\), so the first
two terms cancel.  A jump of \(X_{\mathcal I}\) at price \(p\) is accompanied
by the payment jump \(p\,\Delta X_{\mathcal I}\), making
\(\Gamma_{\mathcal I}\) continuous.  If all reports are at most \(V\), no
member of \(\mathcal I\) buys above \(V\), and evaluating
\eqref{eq:mark-to-market-gain} at \(V\) gives its true utility.
\end{proof}

For a run \(\mathcal R\in\{\mathcal M,\mathcal S\}\), write
\(B_N^{\mathcal R}\) for
the outsiders' total effective remaining budget and define
\begin{equation}
 W_N^{\mathcal R}:=X_N^{\mathcal R}+\frac{B_N^{\mathcal R}}p.
 \label{eq:outsider-quantity-wealth}
\end{equation}
Thus \(W_N^{\mathcal R}\) is the outsiders' quantity-denominated wealth in
run \(\mathcal R\): the \(X\) they have already received plus the amount of
\(X\) their remaining effective budget could purchase at the current price.
At and after a terminal state, all effective budgets are set to zero for this
accounting convention.  Put
\begin{align*}
 \GainGap_G(p)&:=\Gamma_G^{\mathcal M}(p)-\Gamma_G^{\mathcal S}(p),
 &\GainGap_N(p)&:=\Gamma_N^{\mathcal S}(p)-\Gamma_N^{\mathcal M}(p).
\end{align*}
Thus the group gain gap is oriented as merged minus Sybil, whereas the
outsider gain gap is oriented as Sybil minus merged.  By
\cref{lem:mark-to-market-identity}, \(\GainGap_G\) and \(\GainGap_N\) are
locally absolutely continuous and
\begin{equation*}
 \GainGap_G'=X_G^{\mathcal M}-X_G^{\mathcal S},
 \qquad
 \GainGap_N'=X_N^{\mathcal S}-X_N^{\mathcal M}
 \quad\text{almost everywhere}.
\end{equation*}

\begin{lemma}[Group allocation gap]
\label{lem:abstract-allocation-gap}
At every price below \(V\) at which the merged run continues,
\begin{equation}
 X_G^{\mathcal M}-X_G^{\mathcal S}
 \ge\min\{0,\ W_N^{\mathcal S}-W_N^{\mathcal M}
                   +T^{\mathcal M}-T^{\mathcal S}\}.
 \label{eq:abstract-allocation-gap}
\end{equation}
Here a target of a run that has already ended is evaluated at its fixed
terminal state and the current common price and environment.
\end{lemma}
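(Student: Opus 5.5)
The plan is to rewrite the allocation of the group $G$ in each run as target minus outsider wealth plus a budget correction, and then control the sign of that correction in each run separately. At a state of either run, the residual supply is $S=T-X_G-X_N$ and the water level is $h=B_G+B_N-pS$. Solving for $S=(B-h)/p$ and substituting gives the exact accounting identity
\begin{equation*}
 X_G^{\mathcal R}
 =T^{\mathcal R}-W_N^{\mathcal R}+\frac{h^{\mathcal R}-B_G^{\mathcal R}}{p},
 \qquad \mathcal R\in\{\mathcal M,\mathcal S\},
\end{equation*}
valid in a continuing run by \eqref{eq:outsider-quantity-wealth}. For a run that has already ended, all effective budgets are zero under the accounting convention, so $W_N=X_N$. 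Target feasibility at the terminal state, evaluated at the current price, gives $T\ge X_G+X_N$. Hence an ended run satisfies $X_G\le T-W_N$, which is the inequality we will need on the Sybil side.

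\emph{Merged run.} Since $p<V$ and the merged run continues, the single merged identity (if $Q>0$; the case $Q=0$ is trivial) is active. By \eqref{eq:active-budget-water-formula} in \cref{lem:allocation-and-persistence}, $B_G^{\mathcal M}=b=\min\{Q,h^{\mathcal M}\}$. I split into two cases.

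\emph{Case 1: the merged bidder is clinching} ($b=h^{\mathcal M}\le Q$). The identity gives $X_G^{\mathcal M}=T^{\mathcal M}-W_N^{\mathcal M}$, so it suffices to show $X_G^{\mathcal S}\le T^{\mathcal S}-W_N^{\mathcal S}$, or else $X_G^{\mathcal S}=0$. If the Sybil run has ended, this is the terminal bound above. If it continues, either some identity of $G$ is a current clincher, or none is. In the first case that clincher has budget $h^{\mathcal S}$, so $B_G^{\mathcal S}\ge h^{\mathcal S}$ and the identity gives $X_G^{\mathcal S}\le T^{\mathcal S}-W_N^{\mathcal S}$. In the second case, clincher persistence and the allocation-structure statement of \cref{lem:allocation-and-persistence} imply no member of $G$ ever clinched. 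Exited members have zero allocation by \cref{lem:exited-zero-payment}, so $X_G^{\mathcal S}=0\le X_G^{\mathcal M}$ and the $\min\{0,\cdot\}$ branch applies.

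\emph{Case 2: the merged bidder is not clinching} ($b=Q<h^{\mathcal M}$). Then $X_G^{\mathcal M}=0$ by \cref{lem:allocation-and-persistence}, and the identity gives
\begin{equation*}
 T^{\mathcal M}-W_N^{\mathcal M}=\frac{Q-h^{\mathcal M}}{p}<0 .
\end{equation*}
We need $X_G^{\mathcal S}\le\max\{0,\,W_N^{\mathcal M}-W_N^{\mathcal S}+T^{\mathcal S}-T^{\mathcal M}\}$. The same trichotomy on the Sybil run either gives $X_G^{\mathcal S}=0$, or gives $X_G^{\mathcal S}\le T^{\mathcal S}-W_N^{\mathcal S}$. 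In the latter event we add the positive quantity $W_N^{\mathcal M}-T^{\mathcal M}$ to the right-hand side, and the required bound follows.

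I expect the main obstacle to be the bookkeeping at boundaries rather than the algebra. One issue is justifying the identity at event prices, where restoration purchases and exits change $B$ and $h$. Here I would argue that the statement concerns post-event states and that fixed-price purchases preserve $h$ by \cref{lem:fixed-price-accounting}. The other issue is the convention for an ended run, where the target is evaluated at the frozen terminal state and current price. I would check carefully that feasibility still yields $T\ge X$ there, including after a partial lump in which $S$ remains positive at termination. In that situation the inequality $X_G\le T-W_N$ only becomes slacker, so it should still hold.
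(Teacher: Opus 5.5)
Your proof is correct and follows essentially the paper's route. It has three components: the merged lower bound $X_G^{\mathcal M}\ge T^{\mathcal M}-W_N^{\mathcal M}$, the Sybil upper bound $X_G^{\mathcal S}\le T^{\mathcal S}-W_N^{\mathcal S}$ (from a current group clincher, or from terminal feasibility with zeroed budgets), and the zero branch when the Sybil group has never clinched. Your Case~1/Case~2 split on the merged side is unnecessary: your identity together with $B_G^{\mathcal M}=b\le h^{\mathcal M}$, which is the merged identity's supply invariant, gives the merged lower bound in both cases at once, and that is exactly how the paper argues.
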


\begin{proof}
The merged identity is active with positive budget.  Its supply invariant is
\[
 T^{\mathcal M}-X^{\mathcal M}\le\frac{B_N^{\mathcal M}}p,
\]
and therefore
\begin{equation}
 X_G^{\mathcal M}\ge T^{\mathcal M}-W_N^{\mathcal M}.
 \label{eq:merged-group-lower-bound}
\end{equation}
If the Sybil group has never clinched and its run continues, then
\(X_G^{\mathcal S}=0\) by
\cref{lem:allocation-and-persistence}, so
\(X_G^{\mathcal M}-X_G^{\mathcal S}\ge0\).  Otherwise, if the Sybil run
continues, some group identity \(i\) is currently a clincher.  Its
clinching equality gives
\[
 T^{\mathcal S}-X^{\mathcal S}=D_{-i}^{\mathcal S}
 \ge\frac{B_N^{\mathcal S}}p,
\]
so \(X_G^{\mathcal S}\le T^{\mathcal S}-W_N^{\mathcal S}\).  If the Sybil
run has ended, then \(B_N^{\mathcal S}=0\), and target feasibility at its
terminal state gives \(T^{\mathcal S}\ge X^{\mathcal S}\), hence the same
inequality.  Subtracting this
upper bound from \eqref{eq:merged-group-lower-bound} proves
\eqref{eq:abstract-allocation-gap} in the latter cases, while the first case
gives its zero branch.
\end{proof}

\begin{lemma}[Outsider accounting]
\label{lem:abstract-outsider-accounting}
Fix a price \(p<V\) at which the merged run continues, and let
\begin{equation*}
 \Phi_p(t):=\sum_{j\in N:\,v_j>p}\min\{q_j,t\}.
\end{equation*}
If the Sybil run also continues, then
\begin{align}
 Y_N^{\mathcal S}-Y_N^{\mathcal M}
   &=\Phi_p(h^{\mathcal M})-\Phi_p(h^{\mathcal S}),
 \label{eq:outsider-payment-water}\\
 p\bigl(W_N^{\mathcal S}-W_N^{\mathcal M}\bigr)&=\GainGap_N,
 \label{eq:outsider-wealth-gain}
\end{align}
and
\begin{equation}
 \beta^{\mathcal M}-\beta^{\mathcal S}
 =Q-q_G+d_G^{\mathcal S}\ge0,
 \qquad \beta:=Y+B,
 \label{eq:committed-wealth-gap}
\end{equation}
where \(d_G^{\mathcal S}\) is the total Sybil budget cancelled so far, including
budgets removed as ineligible at the initial price.  If the Sybil run has
ended, then
\begin{equation}
 Y_N^{\mathcal S}\ge Y_N^{\mathcal M},
 \qquad
 p\bigl(W_N^{\mathcal S}-W_N^{\mathcal M}\bigr)\ge\GainGap_N.
 \label{eq:ended-sybil-outsider-order}
\end{equation}
\end{lemma}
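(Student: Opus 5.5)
The plan is to prove the three displayed relations one at a time, using the closed-form description of active budgets from \cref{lem:allocation-and-persistence} and the exit bookkeeping from \cref{lem:exited-zero-payment}. Both runs are continuing at $p<V$, so each satisfies \eqref{eq:active-budget-water-formula}.

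\textbf{Payment identity \eqref{eq:outsider-payment-water}.} Take an outsider $j$.
\begin{itemize}
\item If $v_j\le p$, then \cref{lem:exited-zero-payment} gives $y_j=0$ in both runs.
\item If $v_j>p$ and $q_j>0$, then $j$ is active. In that case $b_j=\min\{q_j,h\}$, and since budgets are never cancelled for active identities, $y_j=q_j-b_j=q_j-\min\{q_j,h\}$.
\end{itemize}
Summing over outsiders, the $q_j$ terms cancel between the two runs because the outsider reports are common. This leaves $Y_N^{\mathcal S}-Y_N^{\mathcal M}=\Phi_p(h^{\mathcal M})-\Phi_p(h^{\mathcal S})$.

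\textbf{Wealth identity \eqref{eq:outsider-wealth-gain}.} By definition,
\[
pW_N=pX_N+B_N=\Gamma_N+Y_N+B_N .
\]
The quantity $Y_N+B_N$ equals the total reported outsider budget minus the outsider budget cancelled so far. In a continuing run, the cancelled budget is exactly $\sum_{j\in N:\,v_j\le p}q_j$, again by \cref{lem:exited-zero-payment}, together with the removal of ineligible reports. This sum is the same in both runs. So $Y_N+B_N$ agrees across runs, and subtracting gives $p(W_N^{\mathcal S}-W_N^{\mathcal M})=\GainGap_N$.

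\textbf{Committed-wealth gap \eqref{eq:committed-wealth-gap}.} Split $\beta=Y+B$ into its outsider part and its group part; the outsider parts coincide, as just shown. In the merged run the single identity has $v=V>p$ and is active, so its $y+b=Q$ with nothing cancelled. In the Sybil run, the group's $y+b$ equals $q_G-d_G^{\mathcal S}$. Hence $\beta^{\mathcal M}-\beta^{\mathcal S}=Q-q_G+d_G^{\mathcal S}$, which is $\ge0$ by \eqref{eq:one-sided-sybil-deviation}.

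\textbf{Ended Sybil run \eqref{eq:ended-sybil-outsider-order}.} Here the accounting convention sets every effective budget to zero. The main obstacle is that the formula $b_j=\min\{q_j,h\}$ is no longer available in $\mathcal S$. The plan has three steps.
\begin{enumerate}
\item In $\mathcal M$, every outsider still has $y_j=q_j-\min\{q_j,h^{\mathcal M}\}\le q_j$ if active, and $y_j=0$ if exited.
\item In $\mathcal S$, I will show that termination at a price $p_f\le p$ gave every outsider active at $p_f$ at least its merged payment. Two terminal cases need checking:
 \begin{itemize}
 \item \emph{Stopping allocation.} Every active identity exhausts its budget, so $y_j^{\mathcal S}=q_j\ge y_j^{\mathcal M}$ for every outsider with $v_j>p_f$.
 \item \emph{No active identity, or all budgets exhausted.} Each outsider with $v_j>p$ has spent its full budget, while outsiders with $v_j\le p$ pay $0$ in $\mathcal M$.
 \end{itemize}
 The delicate subcase is an outsider with $p_f<v_j\le p$. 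Such a $j$ is active at $p_f$, so the stopping or exhaustion rule makes it spend everything in $\mathcal S$, while $y_j^{\mathcal M}=0$. Either way $Y_N^{\mathcal S}\ge Y_N^{\mathcal M}$.
\item For the wealth inequality, use
\[
pW_N^{\mathcal S}=\Gamma_N^{\mathcal S}+Y_N^{\mathcal S}\qquad\text{and}\qquad pW_N^{\mathcal M}=\Gamma_N^{\mathcal M}+Y_N^{\mathcal M}+B_N^{\mathcal M}.
\]
 It then suffices to show $Y_N^{\mathcal S}\ge Y_N^{\mathcal M}+B_N^{\mathcal M}$. This holds by the same case analysis, since $y_j^{\mathcal M}+b_j^{\mathcal M}=q_j$ for active merged outsiders and $0$ for exited ones, and $y_j^{\mathcal S}=q_j$ for every outsider with $v_j>p$.
\end{enumerate}

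The point to verify carefully is that in the terminal rule of $\mathcal S$, every outsider with $v_j>p\ge p_f$ indeed ends with $y_j=q_j$. A run ends only by a stopping allocation, by exhaustion of all budgets, or with no active identity. The last case cannot occur while such an outsider is still active. In the first two cases the outsider spends its full budget: by the stopping rule in the first, and by exhaustion together with the no-cancellation-of-active-budgets property in the second.
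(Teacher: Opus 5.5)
Your proposal is correct and follows essentially the same route as the paper: the water-level formula $b_j=\min\{q_j,h\}$ together with \cref{lem:exited-zero-payment} gives the payment identity, and the relation $pW_N=\Gamma_N+Y_N+B_N$, with $Y_N+B_N$ equal to initial minus cancelled outsider budget, gives the wealth and committed-wealth relations; for an ended Sybil run, an identity-by-identity comparison shows $Y_N^{\mathcal S}\ge Y_N^{\mathcal M}+B_N^{\mathcal M}$. The one point the paper states explicitly that you leave implicit is that capacity saturation is not a terminal event under the zero-trade completion convention, and this is what justifies your list of the only three ways a run can end.
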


\begin{proof}
Suppose both runs continue.  By \cref{lem:exited-zero-payment}, every
outsider with \(v_j\le p\) has paid zero and had its full budget cancelled in
both runs.  Every outsider with \(v_j>p\) is active and, by
\eqref{eq:active-budget-water-formula}, has budget
\(\min\{q_j,h\}\).  Summing its payment
\(q_j-\min\{q_j,h\}\) proves
\eqref{eq:outsider-payment-water}.  Since
\(\beta=Y+B\) equals the sum of initial budgets minus cancelled budgets, the
outsider cancellations agree in the two runs, while the merged identity has
no cancellation below \(V\).  This proves
\eqref{eq:committed-wealth-gap}.  Finally,
\[
 pW_N=pX_N+B_N
      =\Gamma_N+Y_N+B_N
      =\Gamma_N+\sum_{j\in N:\,v_j>p}q_j
\]
in both continuing runs, proving \eqref{eq:outsider-wealth-gain}.

Now suppose the Sybil run ended at \(p_1\le p\).  Capacity saturation is not
a terminal event under the completion convention.
Every outsider with \(v_j>p_1\) was active at its terminal event and exhausted
its budget.  An outsider with \(v_j\le p_1\) pays zero in the continuing
merged run and a nonnegative amount in the Sybil run.  Thus
\(Y_N^{\mathcal S}\ge Y_N^{\mathcal M}\).  Because terminal effective budgets are set
to zero,
\[
 p\bigl(W_N^{\mathcal S}-W_N^{\mathcal M}\bigr)
 =\GainGap_N+Y_N^{\mathcal S}-Y_N^{\mathcal M}-B_N^{\mathcal M}.
\]
The last two terms are nonnegative identity by identity: an outsider active
in the merged run has paid its full budget in the ended run, and an outsider
already exited from the merged run has zero effective budget and zero
payment there.  Hence
\(p(W_N^{\mathcal S}-W_N^{\mathcal M})\ge\GainGap_N\).
\end{proof}

\begin{theorem}[Sybil-proofness for an admissible target-based mechanism]
\label{thm:abstract-one-sided-sybil}
In any 
admissible abstract target-based mechanism \(\mathfrak T\), 
truthful
participation as the merged identity \((V,Q)\) weakly dominates every
same-direction Sybil strategy \eqref{eq:one-sided-sybil-deviation}:
\begin{equation}
 VX_G^{\mathcal M}-Y_G^{\mathcal M}
 \ge VX_G^{\mathcal S}-Y_G^{\mathcal S}.
 \label{eq:abstract-sybil-conclusion}
\end{equation}
\end{theorem}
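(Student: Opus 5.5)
We would first apply \cref{lem:cap-one-sided-overreports} and assume without loss of generality that \(v_i\le V\) for every \(i\in G\). Then \cref{lem:mark-to-market-identity} applies to both runs and gives \(U_G^{\mathcal M}=\Gamma_G^{\mathcal M}(V)\) and \(U_G^{\mathcal S}=\Gamma_G^{\mathcal S}(V)\). So \eqref{eq:abstract-sybil-conclusion} is equivalent to \(\GainGap_G(V)\ge0\). Both gaps vanish at \(p_\star\), since at that price every trade is charged \(p_\star\). They are locally absolutely continuous, with a.e.\ derivatives \(X_G^{\mathcal M}-X_G^{\mathcal S}\) and \(X_N^{\mathcal S}-X_N^{\mathcal M}\). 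The plan is therefore a differential-inequality (first-violation) argument on \([p_\star,V]\) for a suitable invariant involving \(\GainGap_G\) and \(\GainGap_N\).

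The key link is the target monotonicity \eqref{eq:target-mark-to-market-order}. The aggregate gain in run \(\mathcal R\) is \(\Gamma_G^{\mathcal R}+\Gamma_N^{\mathcal R}=pX^{\mathcal R}-Y^{\mathcal R}\), so
\[
 (pX-Y)^{\mathcal S}-(pX-Y)^{\mathcal M}=\GainGap_N-\GainGap_G .
\]
Hence \(\GainGap_G\le\GainGap_N\) implies \(T^{\mathcal M}\ge T^{\mathcal S}\). By \eqref{eq:outsider-wealth-gain}, and by \eqref{eq:ended-sybil-outsider-order} if the Sybil run has ended, we have \(W_N^{\mathcal S}-W_N^{\mathcal M}\ge\GainGap_N/p\). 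Substituting both facts into \cref{lem:abstract-allocation-gap} gives, while the merged run continues,
\[
 \GainGap_G'\ \ge\ \min\{0,\ \GainGap_N/p\}\quad\text{whenever }\GainGap_G\le\GainGap_N .
\]
This suggests the joint invariant \(\mathcal J(p):\ 0\le\GainGap_G\le\GainGap_N\). As long as \(\mathcal J\) holds, \(\GainGap_G\) is nondecreasing, and hence nonnegative.

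To close the induction we must show that \(\GainGap_N-\GainGap_G\) cannot become negative. Its derivative is \(X^{\mathcal S}-X^{\mathcal M}\), the difference of total allocations. We would bound this using the water-level accounting of \cref{lem:abstract-outsider-accounting}. From \eqref{eq:committed-wealth-gap} the merged run has weakly more committed wealth \(\beta\). Combined with \(h=B-pS\) and the formula \eqref{eq:active-budget-water-formula} for active budgets, this should give \(h^{\mathcal M}\ge h^{\mathcal S}\) whenever the targets are ordered as above. Then \eqref{eq:outsider-payment-water} gives \(Y_N^{\mathcal S}\ge Y_N^{\mathcal M}\), so outsiders pay more in the Sybil run. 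Together with the clinching equalities, this yields the needed sign at a first-violation point \(\GainGap_N=\GainGap_G\), where the targets coincide. Supply lumps require separate treatment, but they do not break continuity of \(\Gamma\), so the invariant passes through them.

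The remaining cases are the following. If the Sybil run ends first, \eqref{eq:ended-sybil-outsider-order} already supplies the needed inequalities. If the merged run stops before \(V\), then at that stopping allocation the merged identity receives its full demand \(b/p\) at prices not exceeding \(V\). Comparing with \(q_G\le Q\), the Sybil group can then gain at most \(\int X_G^{\mathcal S}\), which is dominated, so \(\GainGap_G\) stays nonnegative up to \(V\). We expect the main obstacle to be this third step: showing that the total allocation comparison preserves \(\GainGap_G\le\GainGap_N\), with the water levels aligned correctly, at the instant the invariant would fail. If that direct ordering fails, we would instead try the weaker invariant \(\GainGap_G\ge\min\{0,\GainGap_N\}\) together with \(\GainGap_N\ge0\).
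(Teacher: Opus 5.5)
\textbf{There is a genuine gap: your main invariant $\mathcal J$ is false.} Your first differential inequality ($\GainGap_G\le\GainGap_N\Rightarrow\GainGap_G'\ge\min\{0,\GainGap_N/p\}$) is correct, and it matches the paper. But the invariant you build on it, $\GainGap_G\le\GainGap_N$, fails in general, so no first-violation argument can close it.

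Since $(\GainGap_G-\GainGap_N)'=X^{\mathcal M}-X^{\mathcal S}$, you would need the merged run never to allocate faster in aggregate, and it can. Take the admissible constant target $s=1$ with $p_\star=1/10$. Let the outsider have budget $2$, your merged identity budget $1$, and the Sybil split be two identities of budget $1/2$, all with large values.

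\begin{itemize}
\item Both runs coincide until the outsider starts clinching at $p=1$. After that, $S=1/p$ and $h=2-\ln p$.
\item The merged identity joins the clinching set at $p=e$, where $h=1$. The Sybil identities join only at $p=e^{3/2}$.
\item On $(e,e^{3/2})$ the merged run has two clinchers, so $S^{\mathcal M}=e/p^2<1/p=S^{\mathcal S}$.
\item Hence $X^{\mathcal M}>X^{\mathcal S}$ and $\GainGap_G>\GainGap_N$ strictly.
\end{itemize}

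This happens even though targets, water levels and outsider payments all coincide at the branch point $p=e$. So your claim that ``clinching equalities'' give the needed sign at a first-violation point cannot be right.

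Relatedly, your step ``$h^{\mathcal M}\ge h^{\mathcal S}$ whenever the targets are ordered as above'' has the direction reversed. From $h=\beta+(pX-Y)-pT$,
\[
 h^{\mathcal M}-h^{\mathcal S}=(\beta^{\mathcal M}-\beta^{\mathcal S})+(\GainGap_G-\GainGap_N)-p\,(T^{\mathcal M}-T^{\mathcal S}).
\]
Under $\GainGap_G\le\GainGap_N$ and $T^{\mathcal M}\ge T^{\mathcal S}$, the last two terms are nonpositive, so nothing follows.

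\textbf{Your fallback is the right target, but it needs a second inequality you did not supply.} Your fallback is equivalent to $\mu:=\min\{\GainGap_G,\GainGap_N\}\ge0$, which is what the paper proves. The missing ingredient is a differential inequality in exactly the regime you discarded.

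\begin{itemize}
\item If $\GainGap_G\ge\GainGap_N$, target monotonicity gives $T^{\mathcal M}\le T^{\mathcal S}$. Then all three terms in the display above are nonnegative, so $h^{\mathcal M}\ge h^{\mathcal S}$.
\item Monotonicity of $\Phi_p$ in \eqref{eq:outsider-payment-water} then gives $Y_N^{\mathcal S}\ge Y_N^{\mathcal M}$. If the Sybil run has already ended, this follows directly from \eqref{eq:ended-sybil-outsider-order}.
\item Since $p\,\GainGap_N'=\GainGap_N+Y_N^{\mathcal S}-Y_N^{\mathcal M}$, this yields $\GainGap_N'\ge\GainGap_N/p$.
\item Combined with your inequality for the regime $\GainGap_G\le\GainGap_N$, you get $\mu'\ge\mu/p$ whenever $\mu<0$. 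So $\mu/p$ is nondecreasing on any negative stretch beginning at a zero of $\mu$, which is a contradiction.
\end{itemize}

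Your tail step after an early merged stop at $p_1<V$ also needs to be made precise. Use $\GainGap_G(p_1)\ge0$, $Y_G^{\mathcal M}=Q$ and $q_G\le Q$ to get $b_G^{\mathcal S}/p_1\le X_G^{\mathcal M}(p_1)-X_G^{\mathcal S}(p_1)$. Then $X_G^{\mathcal M}\ge X_G^{\mathcal S}$ on $[p_1,V]$, so $\GainGap_G$ is nondecreasing there.
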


\begin{proof}
By \cref{lem:cap-one-sided-overreports}, it is enough to consider
\(v_i\le V\) for every \(i\in G\).  Then
\cref{lem:mark-to-market-identity} identifies the two utilities with
\(\Gamma_G^{\mathcal M}(V)\) and \(\Gamma_G^{\mathcal S}(V)\), so it remains to prove
\(\GainGap_G(V)\ge0\).

Let \(p_1\le V\) be the price at which the merged run ends, taking
\(p_1=V\) if it continues up to the merged identity's exit.  We first show
that, for almost every \(p\in[p_\star,p_1)\),
\begin{align}
 \GainGap_G\le\GainGap_N&\quad\Longrightarrow\quad
 \GainGap_G'\ge\min\{0,\GainGap_N/p\},
 \label{eq:group-gain-gap-differential-inequality}\\
 \GainGap_G\ge\GainGap_N&\quad\Longrightarrow\quad
 \GainGap_N'\ge\GainGap_N/p.
 \label{eq:outsider-gain-gap-differential-inequality}
\end{align}

The total mark-to-market gain difference is
\begin{equation}
 \GainGap_G-\GainGap_N
 =(pX^{\mathcal M}-Y^{\mathcal M})
  -(pX^{\mathcal S}-Y^{\mathcal S}).
 \label{eq:total-mark-to-market-gap}
\end{equation}
If \(\GainGap_G\le\GainGap_N\), the fact that the target decreases with
mark-to-market gain
gives
\(T^{\mathcal M}\ge T^{\mathcal S}\).  By
\cref{lem:abstract-outsider-accounting},
\(W_N^{\mathcal S}-W_N^{\mathcal M}\ge\GainGap_N/p\), whether the Sybil run
continues or has already ended.  The group allocation bound then gives
\[
 \GainGap_G'=X_G^{\mathcal M}-X_G^{\mathcal S}
 \ge\min\{0,W_N^{\mathcal S}-W_N^{\mathcal M}
              +T^{\mathcal M}-T^{\mathcal S}\}
 \ge\min\{0,\GainGap_N/p\},
\]
proving \eqref{eq:group-gain-gap-differential-inequality}.

Suppose \(\GainGap_G\ge\GainGap_N\).  Then the fact that the target decreases with
mark-to-market gain gives
\(T^{\mathcal M}\le T^{\mathcal S}\).  If the Sybil run has ended,
\eqref{eq:ended-sybil-outsider-order} and
\begin{equation}
 p \cdot \GainGap_N'=p(X_N^{\mathcal S}-X_N^{\mathcal M})
     =\GainGap_N+Y_N^{\mathcal S}-Y_N^{\mathcal M}
 \label{eq:outsider-gain-derivative}
\end{equation}
gives \eqref{eq:outsider-gain-gap-differential-inequality}.  If both runs
continue, use
\begin{equation*}
 h=\beta+(pX-Y)-pT_\sigma.
\end{equation*}
Together with \eqref{eq:committed-wealth-gap},
\eqref{eq:total-mark-to-market-gap}, and
\(T^{\mathcal M}\le T^{\mathcal S}\), this gives
\begin{align*}
 h^{\mathcal M}-h^{\mathcal S}
 &=(\beta^{\mathcal M}-\beta^{\mathcal S})
   +(\GainGap_G-\GainGap_N)
   -p(T^{\mathcal M}-T^{\mathcal S})\ge0.
\end{align*}
The function \(\Phi_p\) is nondecreasing, so
\eqref{eq:outsider-payment-water} gives
\(Y_N^{\mathcal S}\ge Y_N^{\mathcal M}\).  Equation
\eqref{eq:outsider-gain-derivative} again proves
\eqref{eq:outsider-gain-gap-differential-inequality}.

We next show that both gain gaps are nonnegative on \([p_\star,p_1]\):
\begin{equation*}
 \GainGap_G\ge0,
 \qquad
 \GainGap_N\ge0.
\end{equation*}
Let
\begin{equation*}
 \mu:=\min\{\GainGap_G,\GainGap_N\}.
\end{equation*}
It is locally absolutely continuous and \(\mu(p_\star)=0\).  At almost every
price where the relevant functions are differentiable, if \(\mu<0\), then
\begin{equation*}
 \mu'\ge\frac{\mu}{p}.
\end{equation*}
Indeed, if \(\GainGap_G\le\GainGap_N\), then
\eqref{eq:group-gain-gap-differential-inequality} gives
\[
 \GainGap_G'
 \ge\min\{0,\GainGap_N/p\}
 \ge\mu/p.
\]
If instead \(\GainGap_N\le\GainGap_G\), then
\eqref{eq:outsider-gain-gap-differential-inequality} gives
\[
 \GainGap_N'\ge\GainGap_N/p=\mu/p.
\]
At equality points the same conclusion follows from the almost-everywhere
derivative of the minimum.  If \(\mu\) ever became negative, take the last
preceding price \(p_2\) at which \(\mu(p_2)=0\).  On the subsequent negative
interval,
\[
 \left(\frac{\mu(p)}p\right)'
 =\frac{p\mu'(p)-\mu(p)}{p^2}\ge0,
\]
which contradicts \(\mu(p_2)/p_2=0\).  Thus
\(\GainGap_G,\GainGap_N\ge0\) through \(p_1\).

If \(p_1=V\), this already gives the result.  Suppose \(p_1<V\).  Capacity
saturation is followed by zero-trade completion rather than termination, so
the merged run has ended through one of the ordinary terminal rules.  Its
identity is still below its value and hence has not exited: a stopping
allocation gives it its full remaining demand, while either of the other
terminal rules is compatible with it remaining active only if its effective
budget is zero.  Because none of its budget is cancelled below \(V\), it has
therefore exhausted its budget:
\(Y_G^{\mathcal M}=Q\).  At \(p_1\), let \(b_G^{\mathcal S}\) and
\(d_G^{\mathcal S}\) be the Sybil group's remaining and cancelled budgets.
Since \(Y_G^{\mathcal S}=q_G-d_G^{\mathcal S}-b_G^{\mathcal S}\),
\begin{equation*}
 p_1\bigl(X_G^{\mathcal M}(p_1)-X_G^{\mathcal S}(p_1)\bigr)
   -b_G^{\mathcal S}
 =\GainGap_G(p_1)+Q-q_G+d_G^{\mathcal S}\ge0.
\end{equation*}
Every unit subsequently obtained by the Sybil group costs at least \(p_1\),
so its additional allocation is at most
\[
 \frac{b_G^{\mathcal S}}{p_1}
 \le X_G^{\mathcal M}(p_1)-X_G^{\mathcal S}(p_1).
\]
Consequently
\(X_G^{\mathcal M}(p)-X_G^{\mathcal S}(p)\ge0\) for every
\(p\in[p_1,V]\).  Because
\(\GainGap_G'=X_G^{\mathcal M}-X_G^{\mathcal S}\) almost everywhere,
\(\GainGap_G\) is nondecreasing on that interval.  Therefore
\(\GainGap_G(V)\ge\GainGap_G(p_1)\ge0\), proving
\eqref{eq:abstract-sybil-conclusion}.
\end{proof}

\begin{theorem}[One-sided AMM incentive compatibility]
\label{thm:one-sided-amm-incentives}
Fix all reports of the other sell-\(Y\) traders.  A physical sell-\(Y\)
trader of true type \((V,Q)\) cannot improve its utility by replacing its
truthful merged identity \((V,Q)\) with finitely many sell-\(Y\) identities
having arbitrary value reports and aggregate reported budget at most \(Q\).
Thus the one-sided AMM clinching mechanism is %
Sybil-robust 
incentive compatible.
\end{theorem}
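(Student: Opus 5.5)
The plan is to derive this theorem as a direct corollary of \cref{thm:abstract-one-sided-sybil}, applied to the AMM target specification. By \cref{prop:amm-target-admissible}, the tuple \(\mathfrak T^{\rm amm}=(p_0,\sigma^{\rm amm},T^{\rm amm},\mathcal D^{\rm amm},x_0)\) is an admissible abstract target-based mechanism. Its target-based clinching mechanism is exactly the one-sided mechanism of \cref{sec:mechanism}. Hence, for a true type \((V,Q)\) with \(V\ge p_0\), any finite Sybil group \(G\) with \(q_G\le Q\) satisfies
\[
 VX_G^{\mathcal M}-Y_G^{\mathcal M}\ge VX_G^{\mathcal S}-Y_G^{\mathcal S}.
\]

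Next I would check that these quantities coincide with the utility \(\operatorname{Util}^{\rm Y}_k\) of \eqref{eq:two-sided-physical-utility-y}. In the one-sided setting \(\mathcal I_{\rm X}(k)=\varnothing\), so \(\Delta_{X,k}=X_G\ge0\) and \(\Delta_{Y,k}=-Y_G\le0\), which rules out the opposite-direction catastrophe. Also \(Y_G\le q_G\le Q\), since no identity is charged beyond its declared budget, so the budget catastrophe is ruled out as well. The Sybil utility is therefore \(VX_G^{\mathcal S}-Y_G^{\mathcal S}\). The merged run is truthful, so its utility is \(VX_G^{\mathcal M}-Y_G^{\mathcal M}\).

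The one remaining gap is the case of ineligible types, since \cref{def:incentive-compatibility} quantifies over every true type and \cref{thm:abstract-one-sided-sybil} assumes \(V\ge p_\star\). I expect this to be the main, though small, obstacle. For \(V<p_0\), truthful reporting gives utility \(0\) because the report is removed. In any deviation, every unit is bought at a clock price \(p\ge p_0>V\), and current-price charging gives \(Y_G=\int p\,\dif X_G\ge p_0X_G\). The deviation utility is then \(VX_G-Y_G\le (V-p_0)X_G\le0\). The boundary case \(V=p_0\) is covered by the theorem directly.

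Finally, the result holds for every fixed profile of the other traders' reports, since the comparison in \cref{thm:abstract-one-sided-sybil} fixes the outsiders \(N\). Together with the order-insensitivity noted in \cref{rmk:sequence}, this yields Sybil-robust incentive compatibility in the one-sided model.
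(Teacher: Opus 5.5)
Your proposal is correct and follows the paper's proof. Both establish admissibility of \(\mathfrak T^{\rm amm}\) via \cref{prop:amm-target-admissible}, apply \cref{thm:abstract-one-sided-sybil}, and observe that aggregate Sybil payments are at most \(q_G\le Q\) with \(X_G\ge0\), so neither catastrophic case of \eqref{eq:two-sided-physical-utility-y} occurs. Your separate treatment of ineligible types \(V<p_0\), using \(Y_G\ge p_0X_G\), is a valid completion that the paper's one-sided proof leaves to its standing convention \(V\ge p_\star\); the paper gives essentially this argument only in the two-sided incentive proof.
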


\begin{proof}
By \cref{prop:amm-target-admissible}, \(\mathfrak T^{\rm amm}\) is an
admissible abstract target-based mechanism.  Apply
\cref{thm:abstract-one-sided-sybil}.  Since the aggregate Sybil payment is at
most the aggregate reported budget, which is at most \(Q\), both sides of
\eqref{eq:abstract-sybil-conclusion} are finite utilities under
\eqref{eq:two-sided-physical-utility-y} restricted to sell-\(Y\) identities.
\end{proof}

Combining \cref{thm:one-sided-settlement-ir,thm:one-sided-weak-pareto,%
thm:one-sided-amm-incentives}, the one-sided AMM clinching mechanism is exact-settlement feasible,
individually rational, weakly Pareto optimal for eligible bids, and
Sybil-robust 
incentive compatible.

\section{Full Two-Sided AMM Clinching Mechanism}
\label{sec:two-sided-mechanism}

We now extend our one-sided mechanism in \Cref{sec:mechanism}
to allow both sell-$Y$ and sell-$X$ bids. 
Types, eligibility, outcomes, and
settlement retain their definitions from \cref{sec:model}.  

The idea is to
fill the minority side in full at the spot price.  Its orders then supply an
initial quantity to the other side, which competes for this quantity and
additional pool output through the one-sided clinching mechanism.
The main additional proof challenge is the treatment of  
cross-directional strategies, that is,  
a sell-$Y$-type user can potentially post 
sell-$X$-type bids 
under its Sybil identities.

\subsection{Dominant and Minority Sides}
\label{subsec:two-sided-extended-cost}

Let \(\mathcal E_{\rm Y}\subseteq\mathcal I_{\rm Y}\) and
\(\mathcal E_{\rm X}\subseteq\mathcal I_{\rm X}\) be the eligible identities.
Measure both sides at the spot price in units of \(X\):
\begin{equation*}
 D_Y:=\sum_{i\in\mathcal E_{\rm Y}}\frac{q_i}{p_0},
 \qquad
 D_X:=\sum_{i\in\mathcal E_{\rm X}}r_i,
 \qquad
 M:=\min\{D_Y,D_X\}.
\end{equation*}
The side with the smaller quantity is the \emph{minority}; the other is the
\emph{dominant side}.  Here \(M\) is the minority fill in units of \(X\).
If the quantities are equal, designate
sell-\(X\) as the minority.  As shown below, either designation gives full
internal matching at \(p_0\) and leaves the pool unchanged.

\subsection{Allocation and Settlement}
\label{subsec:two-sided-allocation}

No matter whether 
sell-$Y$ or sell-$X$ is dominant, 
ineligible identities receive zero outcomes.  
The eligible identities are processed as follows --- note that 
minority-fill
proceeds are transferred only at final settlement and cannot finance another
order in the same batch.

\paragraph{Sell-\(Y\) is dominant.}
Suppose \(D_Y\ge D_X\), so sell-\(X\) is the minority and \(M=D_X\).  Fold
the minority supply directly into the cost faced by the dominant sell-\(Y\)
auction:
\begin{equation}
 \widetilde C_M(z):=
 \begin{cases}
  p_0z,&0\le z\le M,\\
  p_0M+C(z-M),&M\le z<M+x_0.
 \end{cases}
 \label{eq:two-sided-extended-cost}
\end{equation}
The first \(M\) units cost \(p_0\) each and represent the sell-\(X\) minority
orders, whose total compensation is \(p_0M\).  After those units are
allocated, the cost of further output is the original pool cost.  Thus the
dominant auction treats internally matched supply and pool supply as a single
cost curve.  This curve records aggregate supply costs; individual dominant
trades are still charged at the auction clock price.

For a state with \(Y\ge\widetilde C_M(X)\), define its chord target by the
same construction as \eqref{eq:amm-chord-target}:
\begin{equation*}
 \widetilde T_M(p,X,Y)
 :=\max\{z\in[X,M+x_0):
              \widetilde C_M(z)\le Y+p(z-X)\}.
\end{equation*}
In particular, even if the original pool cost is strictly convex, the
extended cost has a linear segment whenever \(M>0\).  This is why the
one-sided construction includes supply lumps and does not assume strict
convexity.
Execute the following rules.
\begin{enumerate}
\item Assign each eligible minority identity its full spot-price trade:
\begin{equation*}
 S_{X,i}=r_i,\qquad R_{Y,i}=p_0r_i
 \quad(i\in\mathcal E_{\rm X}).
\end{equation*}
\item Run the one-sided mechanism of \cref{sec:mechanism} on the reports
\((v_i,q_i)_{i\in\mathcal E_{\rm Y}}\), replacing
\((C,T^{\rm amm},x_0)\) by
\((\widetilde C_M,\widetilde T_M,M+x_0)\).  The clock still starts at
\(p_0\).  Use the same continuous dynamics, initial lump, event order,
proportional exit rule, and continuation rule.  If its output is
\((x_i,y_i)\), set
\begin{equation*}
 R_{X,i}=x_i,\qquad S_{Y,i}=y_i,
 \qquad X_f:=\sum_{i\in\mathcal E_{\rm Y}}x_i,
 \qquad Y_f:=\sum_{i\in\mathcal E_{\rm Y}}y_i.
\end{equation*}
\item Settle all transfers simultaneously.  The pool releases
\(X_f-M\) units of \(X\) and receives \(Y_f-p_0M\) units of \(Y\).
\end{enumerate}
In \cref{thm:two-sided-settlement-ir} we prove that
\(X_f\ge M\) and \(Y_f-p_0M=C(X_f-M)\), so all these transfers are funded
and the net pool trade lies on the original reserve curve.

\paragraph{Sell-\(X\) is dominant.}
Suppose \(D_Y<D_X\), so sell-\(Y\) is the minority and \(M=D_Y\).  Exchange
the roles of the assets.  In the exchanged coordinates the reserve curve is
\(\varphi^{-1}\), the initial spot price is \(1/p_0\), and the one-sided
pool cost is
\begin{equation*}
 C^{\rm rev}(w):=-C^{-1}(-w),\qquad 0\le w<y_0.
\end{equation*}
The extended cost, now measured as units of \(X\) needed to release \(w\)
units of \(Y\), is
\begin{equation*}
 \widetilde C^{\rm rev}_{M}(w):=
 \begin{cases}
  w/p_0,&0\le w\le p_0M,\\
  M+C^{\rm rev}(w-p_0M),&p_0M\le w<p_0M+y_0.
 \end{cases}
\end{equation*}
At a reverse-auction state, let \(p^{\rm rev}\) be the price in units of \(X\) per
unit of \(Y\), let \(X^{\rm rev}\) be the cumulative amount of \(Y\)
allocated, and let \(Y^{\rm rev}\) be the cumulative amount of \(X\) paid.
Define
\begin{equation*}
 \widetilde T^{\rm rev}_M(p^{\rm rev},X^{\rm rev},Y^{\rm rev})
 :=\max\{w\in[X^{\rm rev},p_0M+y_0):
       \widetilde C^{\rm rev}_M(w)
       \le Y^{\rm rev}+p^{\rm rev}(w-X^{\rm rev})\}.
\end{equation*}
Assign each eligible minority sell-\(Y\) identity its full spot-price trade:
it pays \(q_i\) and receives \(q_i/p_0\).  Run the one-sided mechanism on
the dominant sell-\(X\) identities with reported values\footnote{
For a report \(u_i=0\), interpret \(1/u_i=+\infty\): that identity never
exits because of its value.  
Equivalently, replace \(+\infty\) by any finite value larger than the maximum price the 
reverse clock can reach; Lemma \cref{lem:two-sided-finite-clock} shows that this does not change the outcome.
} \(1/u_i\), budgets
\(r_i\), and clock \(p^{\rm rev}\) starting at \(1/p_0\), replacing the cost, target,
and capacity by
\((\widetilde C^{\rm rev}_M,\widetilde T^{\rm rev}_M,p_0M+y_0)\).  An allocation
\(x_i^{\rm rev}\) and payment \(y_i^{\rm rev}\) in this auction mean
\(R_{Y,i}=x_i^{\rm rev}\) and \(S_{X,i}=y_i^{\rm rev}\).  

\section{\texorpdfstring{Proofs for the Two-Sided AMM Clinching\\Mechanism}
{Proofs for the Two-Sided AMM Clinching Mechanism}}
\label{sec:two-sided-proofs}

We first justify the extended auction and settlement.  For incentives, the
one-sided Sybil theorem already handles splitting and value reports while
the fill is fixed.  The additional issue is that opposite-direction orders
can enlarge that fill or change which side is dominant.  A bound on the
value of extra spot-price supply handles the former; the initial water level
handles the latter.  We then use the one-sided terminal certificate to prove
weak Pareto optimality for eligible bids from both sides.

\subsection{Extended Targets, Termination, and Exact Settlement}

Once the minority fill is folded into the dominant side's cost curve, that
curve and its domain fully determine the resulting one-sided AMM clinching
auction.  The curve's right derivative at zero gives the initial clock
price, the endpoint of its domain gives the capacity, the condition that an
auction state lies on or above the curve gives the state domain, and the
chord construction gives the target.  Thus the one-sided clinching auction when sell-\(Y\) is
dominant is determined by \(\widetilde C_M\), while the reverse auction when
sell-\(X\) is dominant is determined by \(\widetilde C^{\rm rev}_M\).

\begin{lemma}[The extended auction inherits the one-sided results]
\label{lem:two-sided-extended-admissible}
For every fixed \(M\ge0\), the one-sided AMM clinching auction generated by
\(\widetilde C_M\) is an admissible target-based mechanism with unattained
capacity \(M+x_0\).  Its cost curve is continuous, strictly increasing, and
convex on \([0,M+x_0)\), starts at zero with right derivative \(p_0\), and
diverges at \(M+x_0\).  The reverse one-sided AMM clinching auction generated
by \(\widetilde C^{\rm rev}_M\) is likewise admissible; its cost curve has
the corresponding properties on \([0,p_0M+y_0)\), with initial right
derivative \(1/p_0\) and unattained capacity \(p_0M+y_0\).

Consequently, the single-run structural lemmas, exact-settlement theorem,
and terminal certificate from \cref{sec:one-sided-proofs} apply to the
one-sided auction generated by \(\widetilde C_M\) and, separately, to the
reverse one-sided auction generated by \(\widetilde C^{\rm rev}_M\).  
Same-direction Sybil-robust incentive compatibility (\cref{thm:abstract-one-sided-sybil}) also
applies separately to these two auctions.
\end{lemma}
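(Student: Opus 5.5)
The plan is to reduce everything to \cref{prop:amm-target-admissible}. Reading its proof, it uses only the following properties of \(C\) and its domain \([0,x_0)\): continuity, strict monotonicity, convexity, \(C(0)=0\), right derivative \(p_0\) at zero, and divergence at the right endpoint. Differentiability at \(x_0\) in the reserve-curve sense is never needed beyond the initial right derivative. Linear pieces and kinks are already handled, through the generalized inverse \(Z\) in \eqref{eq:marginal-supply} and through supply lumps. So I will first check these curve properties for \(\widetilde C_M\) and for \(\widetilde C^{\rm rev}_M\). Then I will rerun the proof of \cref{prop:amm-target-admissible} verbatim, with \((C,x_0,p_0)\) replaced by the extended data, and read off the ``consequently'' clause.

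\textbf{Sell-\(Y\)-dominant curve.} Continuity at \(M\) holds because \(C(0)=0\), so both branches equal \(p_0M\). Each branch is strictly increasing. Divergence as \(z\uparrow M+x_0\) follows from \(C(z-M)\to\infty\). The right derivative at zero is \(p_0\) when \(M>0\), and equals \(C'_+(0)=p_0\) when \(M=0\). For convexity, both pieces are convex, and at the junction the left slope is \(p_0\) while the right slope is \(C'_+(0)=p_0\). A continuous function that is convex on two adjacent intervals with nondecreasing one-sided slopes at the junction is convex. Here \(C'_+(0)=p_0\) uses the differentiability of \(\varphi\) at \(x_0\) from \cref{def:two-sided-amm}.

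\textbf{Reverse curve.} Let \(C^{\rm rev}(w)=-C^{-1}(-w)\) on \([0,y_0)\). Since \(C\) is strictly increasing, continuous and convex on \((-\infty,0]\) with range \((-y_0,0]\), its inverse is increasing and concave on \((-y_0,0]\). The map \(w\mapsto -C^{-1}(-w)\) is then increasing and convex on \([0,y_0)\), with \(C^{\rm rev}(0)=0\). As \(w\uparrow y_0\), \(-w\downarrow -y_0\), so \(C^{-1}(-w)\to-\infty\) because \(\lim_{z\to-\infty}C(z)=-y_0\); hence \(C^{\rm rev}\to+\infty\). By the inverse-function rule with \(C'(0)=p_0>0\), the right derivative of \(C^{\rm rev}\) at zero is \(1/p_0\). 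Gluing with the linear segment of slope \(1/p_0\) on \([0,p_0M]\) then goes exactly as in the previous paragraph, with capacity \(p_0M+y_0\).

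\textbf{Transferring the results.} With these properties in hand, I define the state domain by \(Y\ge\widetilde C_M(X)\) and the target by the chord construction. The argument of \cref{prop:amm-target-admissible} then gives all five admissibility conditions of \cref{def:admissible-target}, with unattained capacity.
\begin{itemize}
\item The chord endpoint is well defined because the curve is convex and divergent.
\item Fixed-price invariance and the mark-to-market ordering follow from the concavity of \(pz-\widetilde C_M(z)\).
\item Clock regularity holds, with lumps only at linear pieces; the initial segment \([0,M]\) produces the initial lump at \(p_0\).
\item Eventual release holds because the curve is finite below its capacity.
\end{itemize}
\cref{lem:derived-continuous-rate,lem:amm-sole-survivor}, \cref{thm:one-sided-settlement-ir}, \cref{lem:one-sided-terminal-certificate} and \cref{thm:abstract-one-sided-sybil} are stated for an admissible target, or for an AMM-type convex cost with these properties. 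They therefore apply directly; their proofs invoked only the chord equality, \(a(T)\ge p\), and \(T<\)capacity.

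\textbf{Main obstacle.} The hardest step is confirming that nothing in \cref{sec:one-sided-proofs} silently used a property of the original \(C\) that the extended curve lacks. Two uses need checking:
\begin{itemize}
\item \cref{lem:amm-sole-survivor} needs \(T(p,0,0)>0\) for \(p>p_\star\). This holds because the right derivative at zero equals \(p_\star\).
\item The admissibility proof needs the linear-piece lump characterization, which covers the new segment \([0,M]\).
\end{itemize}
I would audit these two points explicitly and note that everything else is curve-agnostic.
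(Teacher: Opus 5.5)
Your proposal is correct and follows essentially the paper's route. Like the paper, you check that \(\widetilde C_M\) is continuous, strictly increasing, convex (the two branches meet at \(M\) with matching slope \(p_0=C'_+(0)\)), has initial right derivative \(p_0\), and diverges at \(M+x_0\). You then observe that the proof of \cref{prop:amm-target-admissible} uses only these properties, and inherit the structural lemmas, settlement, terminal certificate and Sybil theorem from there.

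The only cosmetic difference is in the reverse case. You verify convexity, divergence and the initial slope \(1/p_0\) of \(C^{\rm rev}\) directly by inverting \(C\). The paper instead notes that \(\varphi^{-1}\) is itself an admissible reserve curve, with derivative \(-1/p_0\) at \(y_0\), and reruns the forward argument.
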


\begin{proof}
The two branches of \eqref{eq:two-sided-extended-cost} agree at \(M\), and
their slopes meet at \(p_0=C'_+(0)\).  The right derivative is \(p_0\) on
\([0,M)\) and \(a(z-M)\) on \([M,M+x_0)\); it is positive and
nondecreasing.  The other asserted cost properties follow directly from
those of \(C\).  The proof of \cref{prop:amm-target-admissible} uses precisely
these properties of the cost, so it applies with \(C\) replaced by
\(\widetilde C_M\).  The structural lemmas and
\cref{thm:abstract-one-sided-sybil} then apply to the resulting admissible
target.  Likewise, the proofs of
\cref{thm:one-sided-settlement-ir,lem:one-sided-terminal-certificate} use
only this chord construction and the same cost properties.

The inverse of a strictly decreasing convex reserve curve is again strictly
decreasing and convex.  It has the same reserve-limit properties and is
differentiable at \(y_0\), with derivative \(-1/p_0\).  Hence exchanging
the assets gives an admissible AMM, and the argument applies to
\(\widetilde C^{\rm rev}_M\) as well.
\end{proof}

\Cref{lem:two-sided-finite-clock} below 
addresses a boundary case in the reverse auction.  When
sell-\(X\) is dominant, a sell-\(X\) report with reservation value \(u_i=0\)
has transformed one-sided value \(1/u_i=+\infty\).  
Since our earlier analysis of the one-sided case 
allows only finite-value reports, it is convenient to replace this
infinite transformed value by a sufficiently large finite one
without affecting the outcome.

\begin{lemma}[Infinite bids admit finite truncation]
\label{lem:two-sided-finite-clock}
Let \(\widehat C\) be either extended cost, let \(p_\star\) be its initial
clock price, and let \(B_0\) be the total initial budget in that auction.
Set
\begin{equation*}
 z_B:=\widehat C^{-1}(B_0),\qquad
 P_B:=\widehat C'_+(z_B)<\infty.
\end{equation*}
The run terminates at a finite clock price.  Any value report above
\(P_B\) can be replaced by any other value above \(P_B\), or by
\(+\infty\), without changing the outcome.  In particular, an identity
with infinite clock value spends its entire budget.
\end{lemma}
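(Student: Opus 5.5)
The argument rests on one observation: while the run continues, the aggregate state never rises above the cost curve by more than the total budget, so the clock cannot climb past the marginal ask at budget exhaustion. Here \(\widehat C\) is continuous, strictly increasing and convex, with \(\widehat C\to\infty\) at the capacity, so \(z_B=\widehat C^{-1}(B_0)\) is well defined and strictly below capacity. Convexity then makes \(P_B=\widehat C'_+(z_B)\) finite. I will show the clock never exceeds \(P_B\) while the run continues, and that any identity still active at \(P_B\) is forced to spend its whole budget there. The replacement claim follows at once: a report above \(P_B\) matters only through an exit event at that reported price, and no exit above \(P_B\) is ever reached.

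\textbf{Step 1 (where the state lives).} At every reachable state, \(Y\le B_0\) because payments never exceed reported budgets. Also \(Y\ge\widehat C(X)\), since the state lies in the domain by the admissibility from \cref{lem:two-sided-extended-admissible}. Hence \(X\le z_B\). Moreover \(Y+B\le B_0\), since \(\beta=Y+B\) only decreases through cancellations.

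\textbf{Step 2 (target at prices \(p\ge P_B\)).} The chord target satisfies \(\widehat C(T)=Y+p(T-X)\) and \(a(T)\ge p\). I will argue that when \(p\ge P_B\), all remaining budget is purchasable within the current chord, that is, \(pS\ge B\), so \(h=B-pS\le 0\). Write \(\psi_p(z)=pz-\widehat C(z)\). This function is concave and nondecreasing up to \(z_B\) when \(p\ge P_B\), because there \(\widehat C'_+\le P_B\le p\). Therefore the target satisfies \(T\ge z_B\) whenever \(\psi_p(z_B)\ge pX-Y\), which holds since \(\psi_p(z_B)\ge\psi_p(X)\ge pX-Y\). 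Consequently
\begin{equation*}
 pS=p(T-X)=\widehat C(T)-Y\ge\widehat C(z_B)-Y=B_0-Y\ge B.
\end{equation*}
This gives \(h\le0\) at price \(p\).

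\textbf{Step 3 (termination at a finite price).} By \cref{lem:derived-continuous-rate}, \(h\) is nonincreasing along the continuous dynamics, and a continuing run needs \(h>0\) whenever some active identity has positive budget, by \eqref{eq:abstract-water-characterization} and \cref{lem:allocation-and-persistence}. So the run cannot continue with active budget at any price \(\ge P_B\). Concretely, either \(h\) reaches \(0\) first, which exhausts every budget and triggers terminal rule 3, or at \(P_B\) the stopping test \eqref{eq:stopping-test} holds, because \(B_A\le B\le pS\). In both cases the run terminates at a price \(\le P_B\).

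\textbf{Step 4 (infinite or large bids).} If the run stops at a price \(p_f<P_B\), then identities with values above \(P_B\) never reached their exit, so only their active behaviour mattered, and that behaviour is value-independent. If it stops at exactly \(P_B\), those identities are active there, and the stopping allocation gives them their full demand \(b_i/p\). The same holds if a target lump occurs at \(P_B\), because the lump is incorporated before the test. In every case they spend all remaining budget, so they spend their entire budget. The same conclusion holds if termination happened through budget exhaustion.

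The main obstacle is the boundary case where \(p=P_B\) coincides with a lump or an exit. One must verify that the event ordering (lump first, then the stopping test) really yields \(B_A\le pS\) with the post-lump target; the monotonicity argument in Step 2 should cover this because it uses the right-continuous target. A second point to check is that an identity with value \(+\infty\) never leaves \({\sf Active}\), so Step 4 applies to it unchanged.
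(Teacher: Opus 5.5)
Your proof is correct and follows essentially the paper's route. Both arguments rest on the chord identity \(\widehat C(T)=Y+pS=\beta-h\) with \(\beta\le B_0\): the paper argues that \(h>0\) forces \(T<z_B\) and hence \(p\le\widehat C'_+(T)\le P_B\), while you argue the contrapositive (for \(p\ge P_B\) the point \(z_B\) is chord-feasible, so \(h\le0\)), which also shows that no continuing state exists at \(P_B\) itself. The only other difference is that the paper defines the infinite-report run through a finite surrogate \(H>P_B\), whereas you work with \(+\infty\) directly; that is fine, since such an identity never enters \({\sf Exit}\) and every value above \(P_B\ge p_\star\) is eligible.
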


\begin{proof}
At a continuing state after all events at the current price, the structural
lemmas give \(h>0\).  With \(\beta:=Y+B\le B_0\), the chord identity gives
\begin{equation*}
 \widehat C(T)=Y+p(T-X)=\beta-h\le B_0,
 \qquad
 p\le\widehat C'_+(T)\le\widehat C'_+(z_B)=P_B.
\end{equation*}
Here \(z_B\) lies below the capacity and its right derivative is finite,
including when \(B_0=z_B=0\).  Choose a finite
\(H>P_B\) for every infinite report.  With these finite reports the usual
termination rule stops the run no later than its largest value report; the
bound above forces termination before \(H\).  Indeed, a continuing clock
could not pass through the interval \((P_B,H)\).  Reports above \(P_B\)
therefore never exit before termination, and their exact values do not
affect any allocation rule.  This both defines and proves termination for
the infinite-report convention.  Such an identity is still active when the
run ends, unless its budget was already zero; the stopping and exhaustion
rules spend its entire budget.
\end{proof}

\begin{theorem}[Two-sided settlement and individual rationality]
\label{thm:two-sided-settlement-ir}
The two-sided mechanism terminates, is exact-settlement feasible, and is
individually rational.  Specifically, when sell-\(Y\) is dominant,
\begin{equation}
 X_f\ge M,\qquad
 Y_f=\widetilde C_M(X_f)=p_0M+C(X_f-M),
 \qquad 0\le X_f-M<x_0.
 \label{eq:two-sided-proved-settlement}
\end{equation}
For every reported profile, executed prices also satisfy
\begin{equation}
 S_{Y,i}\ge p_0R_{X,i}\quad(i\in\mathcal I_{\rm Y}),
 \qquad
 R_{Y,i}\le p_0S_{X,i}\quad(i\in\mathcal I_{\rm X}).
 \label{eq:two-sided-executed-prices}
\end{equation}
Thus a sell-\(Y\) identity buys \(X\) at prices at least \(p_0\), and a
sell-\(X\) identity sells \(X\) at prices at most \(p_0\).
\end{theorem}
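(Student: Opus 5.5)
The proof reduces everything to the one-sided results via \cref{lem:two-sided-extended-admissible}. By symmetry I will treat the case where sell-\(Y\) is dominant in detail. The sell-\(X\)-dominant case follows by exchanging the assets, using the reverse cost \(\widetilde C^{\rm rev}_M\). Termination comes from \cref{lem:two-sided-finite-clock}: the clock stays bounded by \(P_B\), and any infinite transformed value reports can be replaced by finite ones above \(P_B\) without changing the outcome. The dominant auction is then an ordinary one-sided AMM clinching run on an admissible target, and it ends under one of the usual terminal rules.

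\textbf{Exact settlement.} Applying \cref{thm:one-sided-settlement-ir} to the auction generated by \(\widetilde C_M\) gives \(Y_f=\widetilde C_M(X_f)\) with \(0\le X_f<M+x_0\). The main task is to show \(X_f\ge M\); this is the step I expect to be the real obstacle. I will use the initial water level. At \(p_0\), every eligible sell-\(Y\) identity is either active or exiting, and their total budget is \(B=p_0D_Y\ge p_0D_X=p_0M\). Since \(\widetilde C_M\) is linear with slope \(p_0\) on \([0,M]\), the initial target at \(p_0\) is at least \(M\), released as a lump. So \(S\ge M\) and \(h=B-p_0S\).

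I then consider two cases.
\begin{itemize}
\item If the stopping test holds at \(p_0\), the stopping allocation distributes \(\min\{S,B/p_0\}\ge M\) units in total, because \(B/p_0=D_Y\ge M\). Hence \(X_f\ge M\).
\item If the stopping test fails, the run continues. Every terminal state either clears a chord or satisfies \(S=0\), and the target is nondecreasing in the clock with \(X\le T\). Each subsequent terminal case (full chord clearance, a partial lump, or the \(S=0\) boundary) leaves \(X_f\ge M\), provided the partial-lump case is checked carefully: budgets exhausted on the linear piece at \(p_0\) are at least \(p_0M\).
\end{itemize}
Given \(X_f\ge M\), the formula \eqref{eq:two-sided-extended-cost} yields \(Y_f-p_0M=C(X_f-M)\). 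The minority side sends \(M\) units of \(X\) and receives \(p_0M\) units of \(Y\). Computing the net pool flows \eqref{eq:net-pool-flows} gives \(X_{\rm pool}=X_f-M<x_0\) and \(Y_{\rm pool}=Y_f-p_0M=C(X_{\rm pool})\), which is exactly \cref{def:two-sided-exact-settlement}.

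\textbf{Executed prices and individual rationality.} The dominant clock starts at \(p_0\) and only rises, and every trade is charged at the current clock price. Hence \(S_{Y,i}=\int p\,\dif x_i\ge p_0R_{X,i}\). Minority sell-\(X\) identities trade at exactly \(p_0\), and ineligible identities have zero outcomes, so \eqref{eq:two-sided-executed-prices} holds. The same argument applies in the reverse auction: a reverse clock price \(p^{\rm rev}\ge1/p_0\) means the sell-\(X\) identity receives at most \(p_0\) units of \(Y\) per unit of \(X\), and minority sell-\(Y\) identities pay exactly \(p_0\).

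Individual rationality then splits by role. Dominant truthful identities are covered by \cref{thm:one-sided-settlement-ir}, applied via \cref{lem:two-sided-extended-admissible}. Minority truthful identities trade their full budget at \(p_0\), and eligibility gives \(V_k\ge p_0\) (respectively \(U_k\le p_0\)), so their utility is nonnegative and their budget limit is respected.
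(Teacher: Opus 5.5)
Your proposal is correct and follows the paper's proof essentially step for step. Termination and \(Y_f=\widetilde C_M(X_f)\) come from the extended-admissibility and finite-clock lemmas. The bound \(X_f\ge M\) comes from splitting on whether the run stops at \(p_0\), where it allocates \(\min\{M+Z(p_0),D_Y\}\ge M\), or continues. The executed-price and individual-rationality claims come from the ascending clock started at \(p_0\) (respectively \(1/p_0\)) together with the spot-price minority fill.

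The one loose spot is the partial-lump case in the continuing branch. There a terminal partial lump occurs at some \(p_f>p_0\), so your check about budgets exhausted on the linear piece at \(p_0\) does not apply. What you need instead is either of the following:
\begin{itemize}
\item The lump starts from a settled state whose \(X\) equals the target just before the jump. By your own monotone-target observation, that target is at least the initial target \(M+Z(p_0)\).
\item As the paper argues, a linear piece of slope \(p_f>p_0\) of the convex cost \(\widetilde C_M\) lies beyond its initial slope-\(p_0\) segment.
\end{itemize}
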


\begin{proof}
When sell-\(Y\) is dominant, \(D_Y\ge D_X\).  The initial target is
\begin{equation}
 \widetilde T_M(p_0,0,0)=M+Z(p_0)=:\Delta_0,
 \label{eq:two-sided-initial-target}
\end{equation}
the length of the extended cost's initial linear segment.  The total
dominant budget is \(B_0=p_0D_Y\).  If the run stops at \(p_0\), the
stopping rule allocates
\(X_f=\min\{\Delta_0,B_0/p_0\}\): it clears the whole initial target
when the budget suffices, and otherwise spends every budget on a prefix of
that linear segment.  Both \(\Delta_0\) and \(B_0/p_0=D_Y\) are at least
\(M\), so \(X_f\ge M\).

If the run continues beyond \(p_0\), its terminal allocation is at least
\(\Delta_0\).  To see this, recall the two settlement cases in the proof
of \cref{thm:one-sided-settlement-ir}.  A fully cleared terminal chord has
allocation at the rightmost chord endpoint, which is at or beyond the
initial linear segment.  A terminal point in a partially allocated lump
lies on a linear segment of slope \(p_f>p_0\), which is also beyond that
initial segment.  The boundary case \(S=0\) is a fully cleared chord.
Thus the minority fill is absorbed in every case.

By \cref{lem:two-sided-extended-admissible,lem:two-sided-finite-clock}, the
dominant run terminates and satisfies
\(Y_f=\widetilde C_M(X_f)\).  Substituting \(X_f\ge M\) gives
\eqref{eq:two-sided-proved-settlement}.  The minority supplies \(M\) units
of \(X\) and receives \(p_0M\) units of \(Y\); the remaining flows are
\(X_{\mathrm{pool}}=X_f-M\) and
\(Y_{\mathrm{pool}}=C(X_f-M)\).  This is exactly
\cref{def:two-sided-exact-settlement}, with no deficit in either asset.

Dominant identities obey their reported budgets and pay at most their
reported values per unit, by the one-sided rules.  Minority identities
receive \(p_0r_i\ge u_ir_i\).  These facts give individual rationality on
truthful profiles.  They also show that buyers pay clock prices at least
\(p_0\), while sellers receive exactly \(p_0\), proving
\eqref{eq:two-sided-executed-prices} in this case.

Exchanging the assets proves the corresponding settlement and budget claims
when sell-\(X\) is dominant.  Sellers then pay at most their reciprocal
value in the exchanged auction, so their original surplus
\(R_{Y,i}-u_iS_{X,i}\) is nonnegative.  For \(u_i=0\) this is immediate.
The ascending clock \(p^{\rm rev}\ge1/p_0\) pays sellers original prices
\(1/p^{\rm rev}\le p_0\), while minority buyers pay exactly \(p_0\).

Finally, if \(D_Y=D_X=M\), the entire dominant budget buys exactly \(M\)
units at \(p_0\).  The initial stopping rule spends every identity's
budget, including those exiting at \(p_0\), and the pool is unchanged.
This also proves that the tie designation is immaterial.
\end{proof}

\subsection{The Initial Fill and the Imbalance}

Recall that \(Z(p_0)\) is the value at \(p_0\) of the marginal-supply function defined
in \eqref{eq:marginal-supply}.  

\begin{lemma}[Guaranteed allocation at the spot price]
\label{lem:two-sided-spot-fill}
Suppose sell-\(Y\) is dominant, and let \(\eta=D_Y-M\).  
At the initial price \(p_0\), first set the
residual supply to the initial target
\(S=\widetilde T_M(p_0,0,0)=M+Z(p_0)\).  Before processing any identities
that exit at \(p_0\), the water level is
\begin{equation}
 h_0=p_0D_Y-p_0\bigl(M+Z(p_0)\bigr)
     =p_0\bigl(\eta-Z(p_0)\bigr)\le p_0\eta.
 \label{eq:two-sided-initial-water}
\end{equation}
Every eligible dominant identity receives at least
\begin{equation*}
 x_i(p_0)\ge\left(\frac{q_i}{p_0}-\eta\right)_+.
\end{equation*}
Consequently a truthful identity \((V,Q)\), \(V\ge p_0\), obtains utility
at least \((V-p_0)(Q/p_0-\eta)_+\).
\end{lemma}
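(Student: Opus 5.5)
The plan is to first compute \(h_0\) directly, then read the allocation bound off the stopping test and the restoration rule of \cref{lem:event-restoration}.

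\emph{Computing \(h_0\).} At \(p_0\), before any exit is processed, every eligible dominant identity has \(b_i=q_i\). Hence \(B=\sum_{i\in\mathcal E_{\rm Y}}q_i=p_0D_Y\). The extended cost \(\widetilde C_M\) is linear with slope \(p_0\) on \([0,M]\), and \(C\) is linear with slope \(p_0\) on \([0,Z(p_0)]\). The tangent of slope \(p_0\) through the origin lies below \(\widetilde C_M\) beyond \(M+Z(p_0)\), by convexity and \(a>p_0\) there. So the initial target is \(M+Z(p_0)\), as in \eqref{eq:two-sided-initial-target}. Substituting into \(h=B-pS\) gives \eqref{eq:two-sided-initial-water}, and \(Z(p_0)\ge0\) gives \(h_0\le p_0\eta\).

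\emph{Allocation bound.} Split on the stopping test \eqref{eq:stopping-test} evaluated at \(p_0\) after the lump is incorporated.

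\begin{itemize}
\item[(a)] \emph{The test holds.} Every active identity (\(v_i>p_0\)) buys its full demand \(q_i/p_0\), which is at least \((q_i/p_0-\eta)_+\). Every exiting identity (\(v_i=p_0\)) receives \(\frac{q_i}{B_E}\min\{L,B_E/p_0\}\). Here \(L=S-B_A/p_0\), and \(B_E\ge h_0\) by \cref{lem:event-restoration}. Then \(p_0L=p_0S-B_A=B_E-h_0\). Hence the quantity is \(\frac{q_i}{B_E}\cdot\frac{B_E-h_0}{p_0}\) (the min being \(L\) since \(h_0\ge0\); if \(h_0<0\) the min is \(B_E/p_0\) and the identity gets its full demand). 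This expression is \(\ge \frac{q_i}{p_0}-\frac{h_0}{p_0}\) because \(q_i\le B_E\) and \(h_0\ge0\). It is therefore \(\ge q_i/p_0-\eta\), and it is also nonnegative.
\item[(b)] \emph{The test fails.} The exiting identities receive nothing. Note that \(\eta\) only helps, since \((q_i/p_0-\eta)_+\) for them must then be \(0\). To see this, failure means \(B_E<h_0\le p_0\eta\), so each exiting \(q_i\le B_E<p_0\eta\). For active identities, the one-shot restoration \eqref{eq:one-shot-restoration} with \(h^\star=h_0-B_E\le h_0\) leaves budget \(\min\{q_i,h^\star\}\). Thus identity \(i\) bought \((q_i-h^\star)_+/p_0\ge(q_i-p_0\eta)_+/p_0=(q_i/p_0-\eta)_+\) at price \(p_0\).
\end{itemize}

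\emph{Utility.} For the truthful identity, these \(x_i(p_0)\) units cost \(p_0\) each. Every later unit is bought at a clock price at most \(V\), so it adds nonnegative surplus. This is the per-increment argument in the proof of \cref{thm:one-sided-settlement-ir}, which applies by \cref{lem:two-sided-extended-admissible}. Total utility is therefore at least \((V-p_0)x_i(p_0)\ge(V-p_0)(Q/p_0-\eta)_+\).

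The main obstacle is the bookkeeping in case (a) for exiting identities, in particular the sign of \(h_0\). When \(Z(p_0)>\eta\) we have \(h_0<0\), and every identity is filled fully. That subcase must be separated, and it is the same \(h<0\) lump case as in \cref{thm:one-sided-settlement-ir}.
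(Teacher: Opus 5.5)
Your proposal is correct and follows essentially the same route as the paper. You compute \(h_0\) from the initial target and then split on the stopping test. When the test fails, you restore active budgets to \(h^\star=h_0-B_E\le p_0\eta\), and exiting budgets satisfy \(q_i<p_0\eta\), so their bound is zero. When it holds, you use the proportional exit rule, and your explicit identity \(p_0L=B_E-h_0\) just unpacks the paper's remark that the total unspent exiting budget is \(\max\{h_0,0\}\).
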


\begin{proof}
Equation \eqref{eq:two-sided-initial-water} follows from
\eqref{eq:two-sided-initial-target}.  Let \(B_E\) be the total budget of
identities exiting at \(p_0\).  If the run continues, restoration lowers
each active budget to at most
\(h^\star=h_0-B_E\le p_0\eta\), so its spot-price purchase is
\((q_i-h^\star)_+/p_0\).  An exiting identity receives nothing in this
case, but \(q_i\le B_E<h_0\le p_0\eta\), so its claimed lower bound is
zero.  If the run stops, every active budget is spent.  The total unspent
exiting budget is \(\max\{h_0,0\}\), so each exiting identity spends at
least \((q_i-\max\{h_0,0\})_+\).  This proves the allocation bound in all
cases.  All later truthful purchases have nonnegative surplus, proving the
utility bound.
\end{proof}

Thus the imbalance measures how much of a dominant bidder's spot-price
demand can remain unfilled.  The pool's own initial linear segment can only
improve this guarantee.  This observation will bound the gain from changing
which side is the minority.

\subsection{The Value of Additional Spot-Price Supply}
\label{subsec:two-sided-fill-shift}

The same-direction Sybil theorem, \cref{thm:abstract-one-sided-sybil},
already shows that, for a fixed extended cost curve, a trader cannot benefit
by splitting its sell-\(Y\) participation among multiple identities or
misreporting their values.  Thus the new issue in the two-sided incentive
proof is the effect of opposite-direction sell-\(X\) identities on the cost
curve faced by the trader's merged sell-\(Y\) identity.

The proof in \cref{subsec:two-sided-incentives} first considers a strategic
trader with true sell-\(Y\) type; the sell-\(X\) case follows by exchanging
the assets.  Suppose the truthful profile has sell-\(Y\) as the dominant
side.  If the trader submits opposite-direction sell-\(X\) identities and
sell-\(Y\) remains dominant, those identities enlarge the minority fill from
\(M\) to \(M+E\), thereby lengthening the initial \(p_0\)-price segment of
the extended cost curve.  This additional supply may benefit the trader's
merged sell-\(Y\) identity, but supplying the \(E\) units of \(X\) through
its opposite-direction identities costs the trader \((V-p_0)E\) in surplus.
\Cref{lem:two-sided-fill-shift} shows that the gain from the enlarged fill
cannot exceed this cost.  A deviation that changes which side is dominant is
handled separately in \cref{subsec:two-sided-incentives}.

Fix outside sell-\(Y\) reports and a true value \(V\ge p_0\).  Let
\(\mathcal U(M,Q)\) denote the utility of the single truthful identity
\((V,Q)\) in the one-sided run on \(\widetilde C_M\).  In this notation
the fill is a parameter of the cost; no minority orders are part of this
one-sided comparison.

\begin{lemma}[Fill-shift bound]
\label{lem:two-sided-fill-shift}
For all \(M,E,Q\ge0\),
\begin{equation*}
 \mathcal U(M+E,Q+p_0E)
 \le \mathcal U(M,Q)+(V-p_0)E.
\end{equation*}
\end{lemma}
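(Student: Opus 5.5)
The plan is to compare two one-sided runs directly, using the gain-gap machinery from the proof of \cref{thm:abstract-one-sided-sybil}. Let \(\mathcal R_0\) be the run on \(\widetilde C_M\) in which the bidder reports \((V,Q)\). Let \(\mathcal R_1\) be the run on \(\widetilde C_{M+E}\) in which it reports \((V,Q+p_0E)\). The outside sell-\(Y\) reports are the same in both runs.

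\textbf{Step 1: shifted coordinates.} The key structural observation is that \(\widetilde C_{M+E}\) is \(\widetilde C_M\) with an extra initial \(p_0\)-segment of length \(E\) prepended:
\begin{equation*}
 \widetilde C_{M+E}(z)=p_0E+\widetilde C_M(z-E)\qquad(z\ge E).
\end{equation*}
Hence the chord target satisfies
\begin{equation*}
 \widetilde T_{M+E}(p,X,Y)=E+\widetilde T_M(p,X-E,Y-p_0E)
\end{equation*}
whenever \(X\ge E\) and \(Y-p_0E\ge\widetilde C_M(X-E)\). In \(\mathcal R_1\) I would therefore work with shifted coordinates
\begin{equation*}
 X'=X-E,\qquad Y'=Y-p_0E.
\end{equation*}
Only the bidder's own allocation and payment are shifted. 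In these coordinates the committed wealth \(\beta=Y+B\) of \(\mathcal R_1\), minus \(p_0E\), equals that of \(\mathcal R_0\) identity by identity: the bidder has \(Q\) in both, and the outsiders are unchanged. The bidder's mark-to-market gain becomes
\begin{equation*}
 \Gamma^{1}(p)=pX_G^{1}-Y_G^{1}=\bigl(pX_G'-Y_G'\bigr)+(p-p_0)E.
\end{equation*}
By \cref{lem:mark-to-market-identity}, \(\mathcal U(M+E,Q+p_0E)=\Gamma^1(V)\). The claim is thus equivalent to
\begin{equation*}
 \Gamma'_G(V):=VX_G'-Y_G'\le\Gamma^0_G(V).
\end{equation*}
This has exactly the shape of the Sybil comparison. \(\mathcal R_0\) plays the role of the merged run. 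The shifted \(\mathcal R_1\) plays the role of a deviating run with the same aggregate budget \(Q\) and the same outside identities, on the same abstract target \(\widetilde T_M\).

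\textbf{Step 2: rerun the Sybil argument.} Define
\begin{equation*}
 \GainGap_G:=\Gamma^0_G-\Gamma'_G,\qquad \GainGap_N:=\Gamma^1_N-\Gamma^0_N.
\end{equation*}
Note that \(\Gamma^1_N\) needs no shift. I would then re-verify the three ingredients used in the proof of \cref{thm:abstract-one-sided-sybil}, one lemma at a time, in shifted coordinates.

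1. \emph{Target ordering.} The mark-to-market ordering \eqref{eq:target-mark-to-market-order} holds with the shifted total gain, because of the target identity in Step~1.
2. \emph{Group allocation gap.} \cref{lem:abstract-allocation-gap} holds with \(X_G'\) in place of \(X_G^{\mathcal S}\), using the clinching equality of the bidder in \(\mathcal R_1\). Outsiders' demands and \(W_N\) are unshifted.
3. \emph{Outsider accounting.} \cref{lem:abstract-outsider-accounting} holds, including \eqref{eq:committed-wealth-gap} with gap \(0\). This follows because outsider budgets are \(\min\{q_j,h\}\) in both runs, and the water levels satisfy \(h=\beta+(pX-Y)-pT\), which is shift-invariant in the sense above.

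With these, the differential inequalities \eqref{eq:group-gain-gap-differential-inequality} and \eqref{eq:outsider-gain-gap-differential-inequality} follow verbatim. The minimum argument gives \(\GainGap_G,\GainGap_N\ge0\) up to the end of \(\mathcal R_0\). The tail argument after an early termination of \(\mathcal R_0\) (budget exhausted, so \(Y^0_G=Q\)) then carries over unchanged.

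\textbf{Step 3: initial price (main obstacle).} The delicate point is \(p=p_0\), where the shift may be illegitimate: \(X_G^1<E\) is possible if \(\mathcal R_1\) stops at \(p_0\) with too little budget. Even when it continues, the starting value \(\GainGap_G(p_0)=0\) requires that the spot-price purchases in both runs be charged \(p_0\). The latter holds, since every purchase at \(p_0\) is charged \(p_0\), so all mark-to-market gains vanish at \(p_0\) regardless of the shift.

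For the stopping case I would argue directly. If \(\mathcal R_1\) stops at \(p_0\), every active identity is filled at \(p_0\), so \(\Gamma^1(V)=(V-p_0)X^1_G\) with \(X^1_G\le E+Q/p_0\). The bidder's allocation in \(\mathcal R_0\) is then at least \(X^1_G-E\). I expect this to follow from the spot-fill bound of \cref{lem:two-sided-spot-fill}, applied to the water levels \(h_0\), which are equal in the two runs because the \(+p_0E\) budget and the \(+E\) supply cancel in \(h=B-pS\). That gives \(\Gamma^0(V)\ge(V-p_0)(X^1_G-E)\). For allocations below the shift I would use the convention \(X_G'<0\) with \(\Gamma'\) still well defined, which does not break any inequality above since \(X_G'\) enters only linearly. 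Checking that this boundary case, together with the possibility that the target identity fails while \(X^1<E\), is harmless is where I expect most of the care to be needed.
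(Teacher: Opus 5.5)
Your plan has the same architecture as the paper's proof. Your \(\GainGap_G=\Gamma^0_G-\Gamma^1_G+(p-p_0)E\) is exactly its shift-adjusted gap \(\GainGap_k\). The outsider gap, the two differential inequalities, the first-crossing argument and the tail after an early end of run~\(0\) are also the same.

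The point you left open, however, is a real gap, and it is not where you located it. A stopping event at \(p_0\) is harmless. For \(V>p_0\) the bidder is active there and receives its full demand \(Q/p_0+E\ge E\). Since the initial water levels and exiting budgets coincide, both runs stop together and the inequality holds with equality.

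States with \(X^1<E\) arise instead in \emph{continuing} runs. For instance, suppose every budget, \(Q+p_0E\) included, is at most the post-event water level at \(p_0\). Then nothing is bought at \(p_0\), and \(X^1=0<E\) on a whole interval of prices above \(p_0\). There the shifted state \((X^1-E,Y^1-p_0E)\) lies outside the domain of \(\widetilde T_M\), so your Step~1 identity is unavailable. With it go item~1 of Step~2 and the water-level comparison in item~3, which needs \(T^0\le T^1-E\); both are unjustified on a set of prices of positive measure. Setting \(X'_G<0\) repairs the bookkeeping of \(\Gamma'_G\), but not this target comparison.

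The repair, which is how the paper argues, is to translate run~\(0\) forward rather than run~\(1\) backward. With \(C_0:=\widetilde C_M\) and \(C_1:=\widetilde C_{M+E}\), one has \(C_1(z+E)=C_0(z)+p_0E\). So \((X^0+E,Y^0+p_0E)\) always lies in run~\(1\)'s domain and has rightmost chord endpoint \(T^0+E\). Writing \(\psi^1_p(z):=pz-C_1(z)\), chord equality in both runs gives
\[
 \GainGap_G-\GainGap_N=\psi^1_p(T^0+E)-\psi^1_p(T^1).
\]
Both arguments are rightmost chord endpoints of \(C_1\), hence lie on the strictly decreasing branch of \(\psi^1_p\). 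This yields \(T^1\le T^0+E\) or \(T^1\ge T^0+E\) as required, at every state and whatever \(X^1\) is. Likewise, \(h^r=\beta^r-C_r(T^r)\) together with \(\beta^1-\beta^0=p_0E\) gives \(h^0-h^1=C_1(T^1)-C_1(T^0+E)\) directly.

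Equivalently, you could extend \(\widetilde C_M\) by the line of slope \(p_0\) on \([-E,0]\). That makes your shifted state always admissible, and the mark-to-market ordering holds by the same concavity proof.

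With this change the rest of your argument goes through. As in the paper, it is cleanest to dispose of \(V=p_0\) and \(Q=0\) separately, since the bound \(X^0_G\ge T^0-W^0_N\) is derived from run~\(0\)'s bidder being active with positive budget.
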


The comparison gives the bidder both \(E\) extra units of initial supply
and the \(p_0E\) needed to buy them.  The lemma says that its utility can
increase by no more than the surplus of buying those units at the spot
price.  Because the extra supply can change the whole auction path, the
proof compares mark-to-market gains, as in the one-sided incentive proof.

\begin{proof}
If \(V=p_0\), both utilities are zero.  If \(Q=0\), then
\(\mathcal U(M,0)=0\), and a budget of \(p_0E\) buying only at prices at
least \(p_0\) gives surplus at most \((V-p_0)E\).  Hence assume
\(V>p_0\) and \(Q>0\).

Call the smaller-fill run \(0\) and the larger-fill run \(1\).  Their
costs are \(C_0:=\widetilde C_M\) and
\(C_1:=\widetilde C_{M+E}\), and the focal identity \(k\) has budgets
\(Q\) and \(Q+p_0E\), respectively.  All outsider reports \(N\) are the
same.  Use \(T^r,h^r,\beta^r=Y^r+B^r\) for the state of run \(r\),
and use \(\Gamma_k^r,\Gamma_N^r,W_N^r\) as in
\eqref{eq:mark-to-market-gain} and \eqref{eq:outsider-quantity-wealth}.
After a run ends, allocations and payments stay fixed, effective budgets
are zero, and its target is evaluated at the current comparison price.

The costs satisfy the translation identity
\begin{equation}
 C_1(z+E)=C_0(z)+p_0E.
 \label{eq:two-sided-cost-translation}
\end{equation}
Consequently translating an auction state by \((E,p_0E)\) translates its
rightmost chord endpoint by \(E\).  Define the shift-adjusted gain gap of the
focal identity and the outsider gain gap by
\begin{equation*}
 \GainGap_k:=\Gamma_k^0-\Gamma_k^1+(p-p_0)E,
 \qquad
 \GainGap_N:=\Gamma_N^1-\Gamma_N^0.
\end{equation*}
Both functions are locally absolutely continuous and vanish at \(p_0\)
by \cref{lem:mark-to-market-identity}.  Moreover,
\begin{equation}
 \GainGap_k'=x_k^0-x_k^1+E,\qquad
 p\GainGap_N'=\GainGap_N+Y_N^1-Y_N^0,
 \label{eq:two-sided-shift-derivatives}
\end{equation}
almost everywhere.  At \(p=V\), the desired inequality is exactly
\(\GainGap_k(V)\ge0\).

Write \(\psi_p^r(z):=pz-C_r(z)\).  The chord equality and
\eqref{eq:two-sided-cost-translation} give
\begin{equation*}
 \GainGap_k-\GainGap_N
 =\psi_p^0(T^0)-\psi_p^1(T^1)+(p-p_0)E
 =\psi_p^1(T^0+E)-\psi_p^1(T^1).
\end{equation*}
Both arguments in the last expression are on the decreasing branch of
\(\psi_p^1\): they are rightmost chord endpoints, with the first one
translated as above.  The same strict monotonicity used in the proof of
\cref{prop:amm-target-admissible} therefore implies
\begin{equation}
 \begin{aligned}
  \GainGap_k\le \GainGap_N&\ \Longrightarrow\ T^1\le T^0+E,\\
  \GainGap_k\ge \GainGap_N&\ \Longrightarrow\ T^1\ge T^0+E.
 \end{aligned}
 \label{eq:two-sided-shift-target-implications}
\end{equation}

Let \(p_1\) be the terminal price of run \(0\), or \(V\) if that run
continues to the focal identity's exit.  Consider almost every
\(p\in(p_0,p_1)\).  The outsider accounting from
\cref{lem:abstract-outsider-accounting} uses only the common outsider
reports and the single-run structural lemmas, so it gives
\begin{equation}
 W_N^1-W_N^0\ge \GainGap_N/p.
 \label{eq:two-sided-shift-outsider-wealth}
\end{equation}
There is equality while both runs continue.  If run \(1\) has ended, the
same terminal accounting gives \(Y_N^1\ge Y_N^0\).

When both runs continue, identical outsiders have had identical budgets
cancelled, and the focal identity has not exited.  Thus
\(\beta^1-\beta^0=p_0E\).  Since \(h^r=\beta^r-C_r(T^r)\), the cost
translation gives
\begin{equation*}
 h^0-h^1=C_1(T^1)-C_1(T^0+E).
\end{equation*}
If \(\GainGap_k\ge \GainGap_N\), then \(T^1\ge T^0+E\), hence
\(h^1\le h^0\).
The outsider payment formula \eqref{eq:outsider-payment-water} gives
\(Y_N^1\ge Y_N^0\).  This also holds when run \(1\) has ended, so
\eqref{eq:two-sided-shift-derivatives} yields
\begin{equation}
 \GainGap_k\ge \GainGap_N
 \quad\Longrightarrow\quad
 \GainGap_N'\ge \GainGap_N/p.
 \label{eq:two-sided-shift-outsider-inequality}
\end{equation}

For the other inequality, the supply invariant of the active focal identity
in run \(0\) gives \(x_k^0\ge T^0-W_N^0\), exactly as in the proof of
\cref{lem:abstract-allocation-gap}.  If the focal identity in continuing
run \(1\) has never clinched, then \(x_k^1=0\) and \(\GainGap_k'\ge0\).
Otherwise clincher persistence, or target feasibility if that run has
ended, gives \(x_k^1\le T^1-W_N^1\).  Therefore, when
\(\GainGap_k\le \GainGap_N\),
\eqref{eq:two-sided-shift-target-implications} and
\eqref{eq:two-sided-shift-outsider-wealth} imply
\begin{equation}
 \GainGap_k'\ge\min\{0,\GainGap_N/p\}.
 \label{eq:two-sided-shift-focal-inequality}
\end{equation}
Indeed, in the latter case the sharper bound is
\(\GainGap_k'\ge T^0+E-T^1+W_N^1-W_N^0\ge \GainGap_N/p\).

Equations \eqref{eq:two-sided-shift-outsider-inequality} and
\eqref{eq:two-sided-shift-focal-inequality} are precisely the two
differential inequalities used in the first-crossing argument of
\cref{thm:abstract-one-sided-sybil}.  Applying that already proved argument
to \(\min\{\GainGap_k,\GainGap_N\}\), which starts at zero, gives
\(\GainGap_k,\GainGap_N\ge0\) through \(p_1\).

If \(p_1=V\), this proves the claim.  Otherwise run \(0\) ended below
\(V\), so its focal identity spent all of \(Q\).  No focal budget has
been cancelled in run \(1\) below \(V\); write \(b_k^1\) for its
remaining budget at \(p_1\), taking it to be zero if run \(1\) has ended.
The definition of \(\GainGap_k\) now gives
\begin{equation*}
 0\le \GainGap_k(p_1)
   =p_1\bigl(x_k^0(p_1)-x_k^1(p_1)+E\bigr)-b_k^1.
\end{equation*}
Every later unit in run \(1\) costs at least \(p_1\), so its focal
allocation can never exceed \(x_k^0(p_1)+E\).  Hence \(\GainGap_k'\ge0\)
from \(p_1\) to \(V\), and \(\GainGap_k(V)\ge0\) follows.
\end{proof}

\subsection{Sybil-Robust Incentive Compatibility Across Directions}
\label{subsec:two-sided-incentives}

\begin{theorem}[Two-sided Sybil-robust incentive compatibility]
\label{thm:two-sided-incentives}
The two-sided AMM clinching mechanism satisfies
Sybil-robust incentive compatibility
(\cref{def:incentive-compatibility}) for the full strategy space in
\cref{def:two-sided-strategy}.  
\end{theorem}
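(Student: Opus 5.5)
By the asset-exchange symmetry built into the mechanism (the sell-\(X\)-dominant branch is the sell-\(Y\)-dominant branch in exchanged coordinates, and \cref{lem:two-sided-extended-admissible} makes every one-sided result available there), it suffices to treat a physical trader \(k\) of true sell-\(Y\) type \((V,Q)\). Fix all outside reports and a deviation consisting of sell-\(Y\) identities \(G_Y\) with \(\sum q_i\le Q\) and sell-\(X\) identities \(G_X\) with arbitrary finite budgets.

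\textbf{Reductions.} First I will dispose of the easy cases.
\begin{enumerate}
\item If the deviation yields catastrophic utility, there is nothing to prove.
\item Ineligible identities receive zero outcomes, so I may drop them.
\item If \(V<p_0\), truthful utility is \(0\). By \eqref{eq:two-sided-executed-prices}, every unit of \(X\) the trader acquires costs at least \(p_0\), and every unit it sells yields at most \(p_0\). Its net utility \(V\Delta_X+\Delta_Y\) is therefore at most \((V-p_0)\Delta_X\) when \(\Delta_X\ge0\). When \(\Delta_X<0\), the non-catastrophic requirement forces \(\Delta_Y\le0\), so utility is nonpositive as well.
\end{enumerate}
So I assume \(V\ge p_0\). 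The sell-\(X\) identities can only sell \(X\) at price \(p_0\) (when they are the minority) or at prices \(1/p^{\rm rev}\le p_0\) (when dominant). Each such unit sold costs the trader at least \((V-p_0)\) in true surplus compared to not selling it.

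\textbf{Case A: sell-\(Y\) is dominant in the deviation profile.} The \(G_X\) identities are minority and are filled in full at \(p_0\), contributing \(E:=\sum_{i\in G_X}r_i\) to the fill. The sell-\(Y\) identities then face \(\widetilde C_{M+E}\), where \(M\) is the outsiders' sell-\(X\) fill. I will bound the deviation utility by a chain of three inequalities:
\[
 U^{\rm dev}\le\bigl[\text{utility of }G_Y\text{ on }\widetilde C_{M+E}\bigr]-(V-p_0)E+\bigl[p_0E \text{ received}\bigr]-p_0E
 \le\mathcal U(M+E,Q)-(V-p_0)E\le\mathcal U(M,Q).
\]
\begin{itemize}
\item The first step is accounting: the \(E\) units of \(X\) sold at \(p_0\) contribute \(p_0E-VE\).
\item The second step uses \cref{thm:abstract-one-sided-sybil} on \(\widetilde C_{M+E}\), via \cref{lem:two-sided-extended-admissible}, to replace \(G_Y\) by the merged report \((V,\sum q_i)\). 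Monotonicity in budget up to \(Q\) also holds, since a lower budget is an admissible Sybil deviation from \(Q\).
\item The third step uses \cref{lem:two-sided-fill-shift} with budget \(Q-p_0E\), when \(Q\ge p_0E\). I must check that \(\mathcal U(M+E,Q)\le\mathcal U(M+E,Q+p_0E)\) to apply the lemma in the right direction. Budget monotonicity again gives this, because \((V,Q)\) is an admissible deviation from \((V,Q+p_0E)\).
\end{itemize}
This yields \(\mathcal U(M+E,Q)\le\mathcal U(M,Q)+(V-p_0)E\), which is exactly what is needed.

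It remains to identify \(\mathcal U(M,Q)\) with the truthful utility. If the truthful profile has sell-\(Y\) dominant, this identification is immediate. If instead truthful has sell-\(X\) dominant while the deviation has sell-\(Y\) dominant, the trader's truthful order is a minority order filled fully at \(p_0\), giving \((V-p_0)Q/p_0\). This is the maximum possible, since all purchases cost at least \(p_0\) and at most \(Q/p_0\) units can be bought. That bound dominates everything, so this subcase is done.

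\textbf{Case B: sell-\(X\) is dominant in the deviation profile.} The \(G_Y\) identities are minority and are filled at \(p_0\), while \(G_X\) trades in the reverse auction at prices \(\le p_0\). The deviation utility is therefore at most \((V-p_0)\cdot(\text{net }X\text{ bought})\le(V-p_0)\sum_{G_Y}q_i/p_0\).

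The main obstacle is comparing this with the truthful utility when the truthful profile is sell-\(Y\) dominant, because then the truthful order is not filled in full. Here I will use \cref{lem:two-sided-spot-fill}: truthful utility is at least \((V-p_0)(Q/p_0-\eta)_+\), where \(\eta=D_Y-M\) is the truthful imbalance. For the deviation to flip dominance, the trader's sell-\(X\) identities must supply \(E\) with \(M+E>D_Y-Q/p_0+q_{G_Y}/p_0\). Each such unit sold at \(\le p_0\), beyond what its \(G_Y\) minority purchases absorb, costs \((V-p_0)\) per unit.

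I would therefore write the net \(X\) bought as \(q_{G_Y}/p_0-E'\), where \(E'\) is the amount sold through \(G_X\). I would then show \(q_{G_Y}/p_0-E'\le(Q/p_0-\eta)_+\) using the flipping condition \(E'\ge\)(what \(G_X\) actually sells) together with the fact that dominant sellers are filled up to the minority demand. If this accounting of how much \(G_X\) actually sells in the reverse auction proves delicate, I would first apply \cref{thm:abstract-one-sided-sybil} in the reverse auction to merge \(G_X\). I would then argue that the reverse auction sells at least the minority quantity from the dominant side in aggregate, with outsiders limited by their budgets \(D_X-E\). This gives the required lower bound on the amount \(G_X\) sells.
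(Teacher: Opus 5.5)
Your argument follows the paper's proof. The cases are split the same way, by which side is dominant, and each case uses the same tools:
\begin{itemize}
\item \emph{Sell-$Y$ stays dominant.} You use the one-sided Sybil theorem plus the fill-shift lemma, as the paper does. The paper applies the Sybil theorem directly with budget $Q+p_0E$, which absorbs your budget-monotonicity step; your aside about budget $Q-p_0E$ is unnecessary.
\item \emph{Sell-$X$ dominant in the truthful profile.} The full spot-price fill bounds every deviation.
\item \emph{The deviation flips dominance.} The spot-fill floor $(V-p_0)(Q/p_0-\eta)_+$ handles it.
\end{itemize}
Two points need tightening.

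\textbf{Case B.} The flipping condition $M+E>D_Y^{\rm dev}$ concerns reported budgets. It says nothing about what $G_X$ actually delivers. The needed bound $e\ge D_Y^{\rm dev}-M$ comes from two facts:
\begin{itemize}
\item by the exchanged version of \cref{thm:two-sided-settlement-ir}, the dominant sell-$X$ side delivers at least $D_Y^{\rm dev}$ units of $X$ in aggregate;
\item the outsiders can deliver at most their budgets $M$.
\end{itemize}
Your fallback states exactly this, and it is the paper's argument. The preliminary merge of $G_X$ via the Sybil theorem is not needed, and it would not by itself control deliveries.

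\textbf{The $U=0$ boundary.} Your asset-exchange reduction does not literally cover a sell-$X$ type with $U=0$. Its exchanged value $1/U$ is infinite, and the rescaling $(\Delta_Y+U\Delta_X)/U$ breaks down. Likewise, zero outside reservation reports become infinite values, which the one-sided theorems do not admit. The paper patches both issues:
\begin{itemize}
\item by \cref{lem:two-sided-finite-clock}, infinite values can be replaced by a common finite value above the clock bound without changing the outcome;
\item for $U=0$, it takes true values $\varepsilon\downarrow0$. For small $\varepsilon$ the truthful report $(\varepsilon,R)$ yields the same outcome as $(0,R)$, so one can pass to the limit in the linear-surplus inequality already proved for positive reservation values.
\end{itemize}
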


\begin{proof}
Fix a physical trader \(k\) and all reports of the other traders.  Truthful
utility is finite and nonnegative by
\cref{thm:two-sided-settlement-ir}, so a deviation giving utility
\(-\infty\) cannot improve it.  For all other deviations we compare the
linear surplus of the aggregate net flows, as required by
\eqref{eq:two-sided-physical-utility-y} and
\eqref{eq:two-sided-physical-utility-x}.

\paragraph{Eligible true sell-\(Y\) type.}
Suppose that trader \(k\) has type \((V,Q)\), with
\(V\ge p_0\).  Remove
ineligible identities from the deviation, since they have no effect.  Let
\[
 Q^{\rm dev}
 :=\sum_{i\in\mathcal I_{\rm Y}(k)\cap\mathcal E_{\rm Y}}q_i
 \le Q
\]
be trader \(k\)'s aggregate eligible sell-\(Y\) budget in the deviating
strategy, and let \(e:=S_X(k)\) be
the amount of \(X\) actually delivered by its sell-\(X\) identities.
Recall from \eqref{eq:physical-net-flows} that \(\Delta_{X,k}\) and
\(\Delta_{Y,k}\) are trader \(k\)'s aggregate net changes in its holdings of
\(X\) and \(Y\).  The executed-price inequalities
\eqref{eq:two-sided-executed-prices} imply
\begin{equation}
 V\Delta_{X,k}+\Delta_{Y,k}
 \le (V-p_0)\left(\frac{Q^{\rm dev}}{p_0}-e\right).
 \label{eq:two-sided-deviation-surplus-bound}
\end{equation}
For clarity, if its sell-\(Y\) identities spend \(s\le Q^{\rm dev}\), their
allocation is at most \(s/p_0\), so their surplus is at most
\((V/p_0-1)s\le(V-p_0)Q^{\rm dev}/p_0\).  Its opposite-direction identities
receive at most \(p_0e\) for an asset worth \(Ve\) to this same physical
trader.  Combining these two contributions proves the deviation-surplus
bound in \eqref{eq:two-sided-deviation-surplus-bound}.  In particular, the case
\(V=p_0\) is immediate: every deviation has nonpositive linear surplus
and truthfulness gives zero.

Suppose \(V>p_0\).  If sell-\(X\) is dominant in the truthful profile,
the focal identity spends all of \(Q\) at \(p_0\) and obtains
\((V-p_0)Q/p_0\).  Equation
\eqref{eq:two-sided-deviation-surplus-bound}, with \(Q^{\rm dev}\le Q\) and
\(e\ge0\), shows that no deviation can exceed this utility.

It remains to consider a truthful profile in which sell-\(Y\) is dominant.
Let \(M\) be the total eligible outside sell-\(X\) budget and let
\(\eta=D_Y-M\).  The truthful utility is
\(\mathcal U(M,Q)\), and \cref{lem:two-sided-spot-fill} gives
\begin{equation}
 \mathcal U(M,Q)\ge(V-p_0)(Q/p_0-\eta)_+.
 \label{eq:two-sided-truthful-utility-floor}
\end{equation}
There are two cases.

\begin{itemize}
\item \emph{The deviation leaves sell-\(Y\) dominant.}
All eligible opposite-direction identities are then filled in full at
\(p_0\).  Write \(E=e\) for their total \(X\)-budget.  The dominant
cost becomes \(\widetilde C_{M+E}\).  On this fixed cost,
\cref{thm:abstract-one-sided-sybil,lem:two-sided-extended-admissible}
bound the group's sell-\(Y\) surplus by that of a single truthful identity
with budget \(Q+p_0E\), since \(Q^{\rm dev}\le Q\le Q+p_0E\).  After accounting
for the opposite-direction deliveries, the full linear surplus is at most
\begin{equation*}
 \mathcal U(M+E,Q+p_0E)-(V-p_0)E
 \le\mathcal U(M,Q),
\end{equation*}
where the last step is \cref{lem:two-sided-fill-shift}.  The larger budget
here is only an upper bound used in the proof; the admissible deviation
still has the original cap \(Q^{\rm dev}\le Q\).

\item \emph{The deviation makes sell-\(X\) dominant.}
Its new total demand is
\(D_Y^{\rm dev}=D_Y-Q/p_0+Q^{\rm dev}/p_0\).  By the exchanged version of
\cref{thm:two-sided-settlement-ir}, the dominant sell-\(X\) identities
deliver at least \(D_Y^{\rm dev}\) units of \(X\) to cover that minority fill.
The outside sell-\(X\) identities can deliver at most \(M\), hence
\begin{equation*}
 e\ge D_Y^{\rm dev}-M=Q^{\rm dev}/p_0-Q/p_0+\eta.
\end{equation*}
Combining this with \eqref{eq:two-sided-deviation-surplus-bound} gives
linear surplus at most
\((V-p_0)(Q/p_0-\eta)\), which is bounded above by the truthful utility
in \eqref{eq:two-sided-truthful-utility-floor}.  This case also includes a
change of sides caused solely by withholding same-direction budget.
\end{itemize}

\paragraph{Eligible true sell-\(X\) type.}
First suppose \(0<U\le p_0\), and exchange the assets.  The trader's
truthful value and budget in the exchanged auction are \((1/U,R)\).
Admissibility becomes exactly the sell-\(Y\) constraint already considered:
the same-direction budgets sum to at most \(R\), while the opposite
budgets are unrestricted.  Its linear surplus in the exchanged numeraire is
\begin{equation*}
 \frac{\Delta_{Y,k}+U\Delta_{X,k}}{U},
\end{equation*}
a positive multiple of its original surplus.  Thus the preceding proof
applies.  Any zero outside reservation reports can be represented by a
common finite clock value above the bounds for the runs being compared,
using \cref{lem:two-sided-finite-clock}.

For \(U=0\), take true values \(\varepsilon>0\) tending to zero, with
the same budget \(R\) and fixed outside reports and deviation.  For all
sufficiently small \(\varepsilon\), the truthful report
\((\varepsilon,R)\) produces exactly the outcome of \((0,R)\): if it is
minority its fill is unchanged, and if it is dominant its reciprocal value
exceeds the finite clock bound.  The inequality just proved for linear
surpluses is
\(R_Y^{\rm tr}-\varepsilon S_X^{\rm tr}
\ge\Delta_{Y,k}^{\rm dev}+\varepsilon\Delta_{X,k}^{\rm dev}\).
Letting \(\varepsilon\downarrow0\) proves the required bound for every
finite-utility deviation of a zero-reservation type as well.

\paragraph{Ineligible true types.}
Finally, although our standing convention considers eligible true types,
the definition also quantifies over ineligible ones.  Summing
\eqref{eq:two-sided-executed-prices} over a physical trader's identities
gives, under every strategy,
\begin{equation}
 \Delta_{Y,k}\le-p_0\Delta_{X,k}.
 \label{eq:two-sided-net-price-bound}
\end{equation}
A sell-\(Y\) type with \(V<p_0\) has nonpositive linear surplus when
\(\Delta_{X,k}\ge0\).  When \(\Delta_{X,k}<0\), finite utility requires
\(\Delta_{Y,k}\le0\), so its surplus is again nonpositive.  A sell-\(X\)
type with \(U>p_0\) has surplus at most
\((U-p_0)\Delta_{X,k}\le0\) when \(\Delta_{X,k}\le0\); if
\(\Delta_{X,k}>0\), \eqref{eq:two-sided-net-price-bound} makes the trade
strictly opposite to its type and its utility is \(-\infty\).
Truthful ineligible reports receive zero.  This completes the proof for
every true type and for the stated net-flow utility functions.
\end{proof}

The two cases explain why opposite-direction identities do not help.  If
they remain on the minority side, their full delivery costs the trader at
least the extra utility made possible by the enlarged fill.  If they change
which side is dominant, conservation forces enough actual delivery to
offset the improvement in the trader's buying price.  Both comparisons
value every identity's trades at the physical trader's true value.

\subsection{Weak Pareto Optimality for Eligible Bids}
\label{subsec:two-sided-weak-pareto}

\begin{theorem}[Two-sided weak Pareto optimality]
\label{thm:two-sided-weak-pareto}
Under truthful reporting, the two-sided AMM clinching outcome is weakly
Pareto optimal for eligible bids in the sense of \cref{def:weak-pareto}.
\end{theorem}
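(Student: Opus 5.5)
By symmetry (exchanging the assets, as in \cref{lem:two-sided-extended-admissible}) it suffices to treat the case where sell-\(Y\) is dominant; the case \(D_Y=D_X\) is covered because every eligible user is then fully filled at \(p_0\), which we handle as a degenerate instance below. I would fix the truthful outcome, with dominant run on \(\widetilde C_M\) terminating at \((X_f,\widetilde C_M(X_f))\). The plan is to reuse the one-sided certificate (\cref{lem:one-sided-terminal-certificate}, valid for \(\widetilde C_M\) by \cref{lem:two-sided-extended-admissible}) for the dominant side. For the minority side, I would use the fact that every eligible sell-\(X\) user has \(S_{X,i}=R_i\), so its post-batch \(X\)-holding is \(H_i^X=0\). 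Its value also satisfies \(U_i\le p_0\le V_j\) for every dominant user.

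For clause 1, I would first set \(\lambda:=\max\{V_i:r_i>0\}\) over dominant users with unspent budget. If no dominant user has unspent \(Y\), I would set \(\lambda:=+\infty\) for the buyer bound. Item~1 of the certificate gives that every dominant holder of positive \(X\) has value at least \(\lambda\). The sellers of \(X\) in an internal trade can only be dominant users with \(R_{X,i}>0\), since minority users hold no \(X\); each has value \(\ge\lambda\) and requires at least \(\lambda\) per unit. The buyers of \(X\) pay with \(Y\). A dominant buyer with \(r_i>0\) has value \(\le\lambda\). A minority buyer has value \(U_i\le p_0\le\lambda\), where \(\lambda\ge p_0\) holds because eligible dominant users have \(V_i\ge p_0\). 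If no dominant user has \(r_i>0\), every buyer is a minority user with value \(\le p_0\), while every seller has value \(\ge p_0\). The same conservation argument as in \cref{thm:one-sided-weak-pareto} then forces all utility changes to be zero. I also need to check that a dominant user receiving \(X\) but paying no \(Y\) cannot gain without someone losing; the aggregate inequality \(\sum(\text{willingness to pay})\le\lambda Q\le\sum(\text{required receipts})\) handles this uniformly.

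For clause 2, dominant users are handled as in the one-sided proof, but with the original pool cost. The pool sits at \(X_f-M\ge0\) on \(C\). Buying \(\delta\) more \(X\) costs at least \(C'_+(X_f-M)\delta=\widetilde C'_{M,+}(X_f)\delta\) when \(X_f>M\). When \(X_f=M\), the pool's marginal ask is \(a(0)=p_0\le\widetilde C'_{M,+}(M)\). In both cases item~2 of the certificate, \(V_i\le\widetilde C'_{M,+}(X_f)\) for users with \(r_i>0\), bounds the gain by zero; in the case \(X_f=M\) I would use \(V_i\le\widetilde C'_{M,+}(X_f)\) directly rather than \(a(0)\). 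The minority sell-\(X\) users have \(H_i^X=0\), so they cannot tender any additional \(X\). In the tie case \(D_Y=D_X\), every dominant user has spent its full budget, so no \(Y\) remains to tender, and both clauses are immediate.

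The step I expect to be delicate is the boundary bookkeeping of clause 2. Specifically, I need the terminal right derivative of \(\widetilde C_M\) at \(X_f\) to coincide with, or be dominated by, the pool's actual marginal ask \(C'_+(X_f-M)\). At \(X_f=M\) this is a genuine concern, because \(\widetilde C_M\) has slope \(p_0\) to the left while the pool's ask there is \(a(0)\). Translation invariance \(\widetilde C_M(z)=p_0M+C(z-M)\) for \(z\ge M\) should resolve it, since the right derivatives agree exactly for \(z\ge M\), and \(X_f\ge M\) by \cref{thm:two-sided-settlement-ir}.
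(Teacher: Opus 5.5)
Your proposal is correct and is essentially the paper's argument: the same cutoff $\lambda$ from item~1 of \cref{lem:one-sided-terminal-certificate} (take $\lambda=p_0$, not $+\infty$, when no dominant user keeps $Y$, which is what you end up using), the same conservation argument applied to eligible users of both directions, and item~2 combined with the exact identity $\widetilde C'_{M,+}(X_f)=a(X_f-M)$ for every $X_f\ge M$. That identity also holds at $X_f=M$, because the right derivative sees only the second branch, as your last paragraph notes; so the inequality $a(0)\le\widetilde C'_{M,+}(M)$ in your clause-2 paragraph is unnecessary and in any case runs in the wrong direction. The one point your ``by symmetry'' skips is a dominant sell-$X$ user with $U_i=0$, whose exchanged value is infinite; the paper treats the sell-$X$-dominant case directly and uses \cref{lem:two-sided-finite-clock} to show that such a user exhausts its $X$-budget, which keeps the cutoff strictly positive as the conservation argument requires.
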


\begin{proof}
Use the holdings and coefficients \((H_i^X,H_i^Y,t_i)\) already defined in
\eqref{eq:post-batch-holdings}.  We first show that there is a positive
cutoff \(\lambda\) such that
\begin{equation}
 H_i^Y>0\ \Longrightarrow\ t_i\le\lambda,
 \qquad
 H_i^X>0\ \Longrightarrow\ t_i\ge\lambda.
 \label{eq:two-sided-pareto-cutoff}
\end{equation}
Thus everyone who can pay \(Y\) values \(X\) no more highly than anyone
who can supply it.

When sell-\(Y\) is dominant, let
\begin{equation*}
 \lambda:=
 \begin{cases}
  \max\{V_i:Q_i-S_{Y,i}>0,\ i\in\mathcal E_{\rm Y}\},
       &\text{if this set is nonempty},\\
  p_0,&\text{otherwise}.
 \end{cases}
\end{equation*}
Minority sell-\(X\) users have holdings \((0,p_0R_i)\) and values
\(U_i\le p_0\le\lambda\).  Dominant users with remaining \(Y\) have
values at most \(\lambda\) by definition.  Every holder of \(X\) is a
dominant user; its value is at least \(\lambda\) by the first part of
\cref{lem:one-sided-terminal-certificate} applied to
\(\widetilde C_M\), or by eligibility if no dominant user has residual
budget.  This proves \eqref{eq:two-sided-pareto-cutoff} in this case.

When sell-\(X\) is dominant, let
\begin{equation*}
 \lambda:=
 \begin{cases}
  \min\{U_i:R_i-S_{X,i}>0,\ i\in\mathcal E_{\rm X}\},
       &\text{if this set is nonempty},\\
  p_0,&\text{otherwise}.
 \end{cases}
\end{equation*}
A zero-reservation dominant user spends all of its \(X\)-budget by
\cref{lem:two-sided-finite-clock}, so \(0<\lambda\le p_0\).
Minority sell-\(Y\) users have holdings \((Q_i/p_0,0)\) and values
\(V_i\ge p_0\ge\lambda\).  Every other holder of \(X\) has value at
least \(\lambda\) by definition.  A dominant user \(j\) holding positive
\(Y\) must have \(U_j\le U_i\) for every dominant user \(i\) with
residual \(X\): for positive reservation values this is the first
terminal-certificate statement with reciprocal values, and for \(U_j=0\)
it is immediate.  If no such residual user exists, eligibility gives
\(U_j\le p_0=\lambda\).  Thus
\eqref{eq:two-sided-pareto-cutoff} holds in this case too.

The internal-trade argument in \cref{thm:one-sided-weak-pareto} now applies
to all eligible users together.  To spell out its use of the cutoff, a
user paying \(Y\) can pay at most \(\lambda\) times the \(X\) it
receives while weakly improving its utility.  A user supplying \(X\)
requires at least \(\lambda\) times that quantity in \(Y\).  Conservation
of both tokens forces equality in all these bounds and leaves every utility
unchanged.  Hence there is no strict Pareto improvement through internal
trade, even when the trading subset includes both original directions.

It remains to check each user's permitted same-direction pool trade.  When
sell-\(Y\) is dominant, set \(z:=X_f-M\ge0\).  A dominant user with
unspent \(Y\) satisfies
\begin{equation*}
 V_i\le\widetilde C'_{M,+}(X_f)=a(z)
\end{equation*}
by the second terminal-certificate statement.  From the settled pool state,
receiving another \(\delta>0\) units of \(X\) costs
\(C(z+\delta)-C(z)\ge a(z)\delta\), so that user cannot gain.
Minority sell-\(X\) users have no remaining \(X\) to tender in their own
direction, because their endowments were filled in full.

When sell-\(X\) is dominant, write
\(w:=\sum_{i\in\mathcal E_{\rm X}}R_{Y,i}
       -\sum_{i\in\mathcal E_{\rm Y}}Q_i\ge0\)
for the net \(Y\) released by the pool, and put
\(a^{\rm rev}:=(C^{\rm rev})'_+\).  A dominant user with residual \(X\) has
\(U_i>0\), and the exchanged terminal certificate gives
\begin{equation*}
 1/U_i\le a^{\rm rev}(w).
\end{equation*}
Receiving an additional \(\delta>0\) units of \(Y\) costs at least
\(C^{\rm rev}(w+\delta)-C^{\rm rev}(w)\ge a^{\rm rev}(w)\delta\) units of \(X\).
Its utility change is therefore at most
\((1-U_i a^{\rm rev}(w))\delta\le0\).  Minority sell-\(Y\) users have spent
all of their \(Y\)-budgets, so they cannot make an additional trade in
their own direction.  This proves the second clause of
\cref{def:weak-pareto} and completes the proof.
\end{proof}

\section{Impossibility with Budget Overreporting}
\label{sec:budget-overreporting}

The positive incentive results above restrict a trader's aggregate declared
budget in its true direction to its actual budget.  We now show that this
restriction is essential.  Suppose instead that a trader may declare a
larger budget, with the deviation regarded as feasible whenever the amount
actually paid does not exceed its true budget.  Even without Sybil identities
or opposite-direction orders, this enlarged strategy space is incompatible
with the other desired properties.

Under the enlarged strategy space considered here, a true eligible type
\((V,Q)\) may replace its report by any single finite eligible report
\((v,q)\).  If that report produces
\((x,y)\), the deviation has finite utility precisely when \(y\le Q\).
Incentive compatibility with budget overreporting therefore requires
\[
 y\le Q
 \quad\Longrightarrow\quad
 Vx^{\rm tr}-y^{\rm tr}\ge Vx-y,
\]
where \((x^{\rm tr},y^{\rm tr})\) is the truthful outcome.  
In particular, 
when the mechanism does not require 
advance escrow of the entire declared budget, budget overreporting  
becomes possible as long as its realized payment remains affordable.

\begin{theorem}[Budget-overreporting impossibility]
\label{thm:budget-overreporting-impossibility}
For a constant-product AMM, there is no deterministic mechanism that, on
every profile of two eligible sell-\(Y\) users, simultaneously satisfies:
\begin{enumerate}
\item individual rationality (\cref{def:two-sided-ir});
\item exact settlement (\cref{def:two-sided-exact-settlement});
\item both clauses of weak Pareto optimality in \cref{def:weak-pareto}; and
\item incentive compatibility against single-identity eligible value and
budget misreports, including budget overreports whose realized payment does
not exceed the true budget.
\end{enumerate}
All true types and reports used in the proof are eligible and have finite
values and budgets.
\end{theorem}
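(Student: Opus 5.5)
The plan is to adapt the classical argument of Dobzinski, Lavi, and Nisan that private budgets are incompatible with Pareto optimality. We fix a constant-product pool with spot price \(p_0\) and work with profiles of exactly two eligible sell-\(Y\) users, \(1\) and \(2\). Throughout, the true values satisfy \(V_1>V_2>p_0\). User~\(2\) keeps a fixed report \((V_2,Q_2)\) with \(Q_2\) very large, and user~\(1\)'s report is varied. The argument has three steps: derive structural consequences of weak Pareto optimality (wPO), evaluate a budget-overreporting deviation, and show that truthful reporting by user~\(1\) does strictly worse.

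\textbf{Step 1: consequences of wPO.} Let \(X_f\) be the terminal pool release. Exact settlement places the pool at \((x_0-X_f,\,y_0+C(X_f))\). The two clauses of \cref{def:weak-pareto} then give two facts.
\begin{enumerate}
\item Clause~1, applied to the pair, shows that if user~\(1\) has unspent \(Y\) then user~\(2\) holds no \(X\). Otherwise user~\(1\) could buy a sliver of \(X\) from user~\(2\) at any price strictly between \(V_2\) and \(V_1\), which is a strict Pareto improvement.
\item Clause~2 shows that any user with unspent \(Y\) has value at most the terminal ask \(a(X_f)\).
\end{enumerate}
Since \(Q_2\) is huge, user~\(2\) never exhausts its budget. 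Hence \(a(X_f)\ge V_2\) in every outcome, which forces \(X_f\ge Z(V_2)\) on every profile considered.

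\textbf{Step 2: the overreport \((V_1,q)\) with \(q\) huge.} Take \(q\) so large that exhausting it is impossible, i.e.\ \(q>C(x)\) for every feasible release \(x\) at which user~\(1\)'s IR still holds. Then user~\(1\) retains unspent \(Y\), so by Step~1 user~\(2\) receives nothing and \(a(X_f)\ge V_1\). Every unit therefore goes to user~\(1\), and exact settlement charges the two users together \(C(X_f)\). IR of user~\(2\) gives \(S_{Y,2}\le V_2\cdot 0=0\), so user~\(1\) pays the entire \(P:=C(X_f)\) for \(X_f\ge Z(V_1)\) units. The relevant release is \(X_f=Z(V_1)\): a larger release requires user~\(1\) to pay marginal prices above \(V_1\), and the same deviation argument disciplines that. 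This pins down a deviation outcome \((Z(V_1),P^\ast)\) with \(P^\ast=C(Z(V_1))\) and utility \(U^\ast=V_1Z(V_1)-C(Z(V_1))\). This is the maximal surplus user~\(1\) could extract from the pool alone.

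\textbf{Step 3: the truthful report with \(Q_1=P^\ast\) (or slightly above).} The overreport of Step~2 has realized payment at most \(Q_1\), so IC demands that the truthful outcome give user~\(1\) utility at least \(U^\ast\). Split on whether user~\(1\) spends its whole budget.
\begin{itemize}
\item If user~\(1\) keeps unspent \(Y\), Step~1 puts us back in the Step~2 regime, and the truthful budget is then too small to fund \(Z(V_1)\) plus the tie-breaking requirements. We will take \(Q_1\) strictly below \(P^\ast\) but close to it, adjusting the overreport so its payment still fits.
\item If user~\(1\) exhausts its budget, it holds \(X_1\) units with \(V_1X_1-Q_1\ge U^\ast\). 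User~\(2\) must meanwhile satisfy \(X_f\ge Z(V_2)\) together with IR and its own IC against value misreports.
\end{itemize}
The key estimate to obtain is
\[
X_1\le Z(V_1)-\kappa
\]
for some explicit \(\kappa>0\), using constant-product convexity. Budget exhaustion with user~\(2\) present means user~\(1\) cannot obtain the chord quantity \(Z(V_1)\) for payment \(P^\ast\) while user~\(2\) also pays nonnegatively along the curve. Granting this bound, the truthful utility falls strictly below \(U^\ast\), contradicting IC.

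The main obstacle is Step~3. wPO and exact settlement fix only aggregate quantities, so the split of \(X_f\) and of the payment \(C(X_f)\) between the two users is a priori free. I will first try to pin user~\(2\)'s per-unit price using user~\(2\)'s own IC: user~\(2\) could shade its value report down toward \(p_0\). Combined with clause~1, this should yield a uniform-ish price lower bound of \(V_2\) on user~\(1\)'s marginal units. Choosing \(V_2\) close to \(V_1\) then makes \(\kappa\) quantitatively large enough for the contradiction. If that fails, the fallback is a chain of profiles in which user~\(1\) overreports incrementally, which yields a monotonicity (payment-identity) argument.
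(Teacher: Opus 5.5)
There is a genuine gap. It is structural rather than a missing estimate.

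\textbf{Step~2 does not follow.} You claim the huge-budget overreport yields the sole-user optimum \((Z(V_1),C(Z(V_1)))\). Clause~2 of weak Pareto optimality only gives \(X_f\ge Z(V_1)\), and nothing in IR, exact settlement, or weak Pareto optimality forbids a forced minimum purchase. The paper's own AMM clinching mechanism satisfies all three properties. With user~2 at \((V_2,Q_2)\), \(Q_2\) huge, it sells user~1 the whole chord at unit price \(V_2\) when user~2 exits, and only then lets user~1 trade alone. In the normalization \(x_0=y_0=p_0=1\), user~1 pays \(\max\{\bar y(V_2),y^\star(V_1)\}=\max\{V_2-1,\sqrt{V_1}-1\}\). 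This strictly exceeds \(y^\star(V_1)\) exactly when \(V_2>\sqrt{V_1}\), which is the regime you want when you take \(V_2\) close to \(V_1\). The paper controls this minimum purchase \(y_{\rm lim}\) only through \(\operatorname{Util}^{\rm pool}(V_2,y_{\rm lim})\ge\operatorname{Util}^{\rm pool}(V_2,y^\star(V_2))-V_2Q_2\). That bound needs user~2's budget to be small and is vacuous for huge \(Q_2\).

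\textbf{Step~3 cannot be completed on your profiles.} When user~2's true budget is huge, the clinching mechanism restricted to these profiles satisfies every constraint you invoke.
\begin{itemize}
\item For any report \((v,q)\) with \(v>V_2\), user~1's outcome depends on the report only through the realized payment \(\pi=\min\{q,\max\{V_2-1,y^\star(v)\}\}\). It receives \(\pi/V_2\) units if \(\pi\le V_2-1\) and \(F(\pi)\) otherwise.
\item Its true surplus is unimodal in \(\pi\), peaking at \(\max\{V_2-1,y^\star(V_1)\}\). So the truthful report is optimal among all reports whose realized payment is at most \(Q_1\).
\item User~2 gains nothing from value misreports or budget underreports, by the paper's one-sided incentive theorem. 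Since its true budget already exceeds any payment it could make, overreporting leaves its outcome unchanged.
\end{itemize}
So no \(\kappa>0\) of the kind you seek can exist, and your fallback chain of user-1 overreports fails for the same reason. The first bullet of Step~3 is also inconsistent: with \(Q_1<P^\ast\), the overreport paying \(P^\ast\) is not affordable.

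\textbf{The missing idea.} The lower-valued user must have a \emph{small} true budget and must itself have an affordable, profitable overreport. The paper takes \((V_1,Q_1)=(4,1)\) and \((V_2,Q_2)=(100/49,2/49)\).
\begin{itemize}
\item Small \(Q_2\) forces \(y_{\rm lim}<1\). So user~1's overreport \((4,4)\) pays exactly \(Q_1=1\) and captures the entire batch surplus \(1\).
\item By IC, user~1's truthful utility is then at least \(1\), which forces user~2's truthful utility to zero.
\item User~2's overreport \((100/49,4)\) makes user~1 pay \(Q_1\) at unit price \(V_2\). User~2 gets strictly positive utility at payment at most \(\bar y(V_2)-Q_1=2/49=Q_2\), a contradiction.
\end{itemize}
The contradiction therefore comes from user~2's deviation, not from user~1's truthful utility being too low. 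Establishing the two large-budget lemmas also requires the fixed-budget payment identity, the spot-value-receives-zero lemma, and the spot-price payment bound, none of which your plan develops.
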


We stress that this impossibility holds in a strong sense.  It assumes
neither symmetry nor resistance to Sybil or opposite-direction deviations,
and it already applies to the much smaller class of deviations that replace
one sell-\(Y\) report by another.
As a practical implication, the impossibility holds
even when the underlying consensus protocol provides fair sequencing~\cite{decentral-seq,orderfair00,orderfair01,orderfair02,orderfair03,orderfair04}
and censorship resilience~\cite{latencyCR,mcp,focil}.

\subsection{Normalization and Basic Consequences}

By rescaling, we may assume without loss of generality that
\(x_0=y_0=1\), and hence \(p_0=1\), throughout the proof.  The pool then releases
\[
 F(y):=C^{-1}(y)=\frac{y}{1+y},
 \qquad
 F'(y)=\frac{1}{(1+y)^2},
\]
and a sell-\(Y\) value is eligible exactly when \(V\ge1\).

For \(V>1\), define
\[
 \operatorname{Util}^{\rm pool}(V,y):=VF(y)-y,
 \qquad
 y^\star(V):=\sqrt V-1,
 \qquad
 \bar y(V):=V-1.
\]
Here \(\operatorname{Util}^{\rm pool}(V,y)\) is the utility of a
value-\(V\) user that receives all pool output generated by a payment
\(y\).  It increases up to its unique
maximizer \(y^\star(V)\), then decreases; its two zeros are \(0\) and
\(\bar y(V)\).  In particular,
\[
 0<y^\star(V)<\bar y(V),
 \qquad
 \operatorname{Util}^{\rm pool}\bigl(V,y^\star(V)\bigr)=(\sqrt V-1)^2.
\]
Set \(y^\star(1)=\bar y(1)=0\).

The first lemma is a strict priority property.  A higher-valued user that
still has some \(Y\) available could profitably purchase any allocation held
by a lower-valued user.  Weak Pareto optimality therefore prevents the lower
value from receiving anything.

\begin{lemma}[Internal-trade priority]
\label{lem:overreport-priority}
Consider a truthful profile of two eligible sell-\(Y\) users.  Denote their
values and budgets by \((V_1,q_1)\) and \((V_2,q_2)\), and their respective
allocations and payments by \((x_1,y_1)\) and \((x_2,y_2)\).  If
\(V_1>V_2\) and \(y_1<q_1\), then
\(x_2=y_2=0\).
\end{lemma}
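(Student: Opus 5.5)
The plan is a contradiction argument that uses only clause~1 of \cref{def:weak-pareto}, the internal-trade clause. Suppose \(V_1>V_2\), \(y_1<q_1\), and \((x_2,y_2)\neq(0,0)\).

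We split into two cases according to whether user~2 holds positive \(X\).

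\textbf{Case \(x_2>0\).} Here user~2 can sell some \(X\) to user~1 at a price strictly between the two values. By individual rationality (\cref{def:two-sided-ir}) both users are eligible and truthful, and user~1 has post-batch holding \(H_1^Y=q_1-y_1>0\). Pick any price \(\pi\in(V_2,V_1)\) and an amount \(\delta>0\) with
\[
\delta\le\min\{x_2,\ (q_1-y_1)/\pi\}.
\]
The internal trade is \(\delta x_1=\delta\), \(\delta y_1=-\pi\delta\) for user~1, and \(\delta x_2=-\delta\), \(\delta y_2=\pi\delta\) for user~2. It satisfies conservation and the holding constraints \(\delta x_2\ge-H_2^X=-x_2\) and \(\delta y_1\ge-H_1^Y\). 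The utility changes are
\[
(V_1-\pi)\delta>0 \quad\text{for user 1}, \qquad (\pi-V_2)\delta>0 \quad\text{for user 2}.
\]
This is a strict Pareto improvement, which contradicts clause~1.

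\textbf{Case \(x_2=0\).} Individual rationality gives \(V_2x_2-y_2\ge0\), so \(y_2\le0\). Since outcomes have \(y_2\ge0\), we get \(y_2=0\). This also contradicts the assumption \((x_2,y_2)\neq(0,0)\).

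The only point to be careful about is that clause~1 uses post-batch holdings, so \(H_2^X=x_2\) and \(H_1^Y=q_1-y_1\) with true budgets. This matches the constraints used above. The step I would double-check is the holding bookkeeping, but I expect no real obstacle in this lemma.
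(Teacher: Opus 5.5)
Your proposal is correct and matches the paper's proof: both use the same small trade at a price strictly between \(V_2\) and \(V_1\) to contradict the internal-trade clause of weak Pareto optimality when \(x_2>0\), and then use individual rationality to force \(y_2=0\) once \(x_2=0\). One wording slip: eligibility and truthfulness come from the lemma's hypotheses, not from individual rationality, but the argument does not depend on that phrasing.
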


\begin{proof}
Suppose \(x_2>0\).  Choose a price \(P\in(V_2,V_1)\) and a sufficiently
small \(\delta>0\) such that
\(P\delta\le q_1-y_1\) and \(\delta<x_2\).  Let user~1 give
\(P\delta\) units of \(Y\) to user~2 in exchange for \(\delta\) units of
\(X\).  The utility gains are
\((V_1-P)\delta>0\) and \((P-V_2)\delta>0\), and both users retain
nonnegative holdings.  This contradicts the internal-trade clause of weak
Pareto optimality.  Hence \(x_2=0\), and individual rationality with
\(V_2>0\) forces \(y_2=0\).
\end{proof}

We will repeatedly use two other consequences of individual rationality,
settlement, and weak Pareto optimality.  Write \(Y=y_1+y_2\).  First, if
either user has unspent budget, a small additional payment to the terminal
pool cannot be profitable, so
\begin{equation}
 \begin{aligned}
 y_1<q_1&\quad\Longrightarrow\quad V_1F'(Y)\le1,\\
 y_2<q_2&\quad\Longrightarrow\quad V_2F'(Y)\le1.
 \end{aligned}
 \label{eq:overreport-pool-condition}
\end{equation}
Second, if \(V_{\max}:=\max\{V_1,V_2\}>1\), individual rationality and
exact settlement give
\begin{equation}
 F(Y)=x_1+x_2
 \ge \frac{y_1}{V_1}+\frac{y_2}{V_2}
 \ge\frac{Y}{V_{\max}},
 \qquad\text{and hence}\qquad
 Y\le\bar y(V_{\max}).
 \label{eq:overreport-total-payment-bound}
\end{equation}

We next derive several standard consequences of value incentive compatibility
when the reported budget is held fixed.  In particular, incentive
compatibility implies the continuity properties used later; we do not
separately assume that the mechanism or its allocation and payment rules are
continuous.

\begin{lemma}[Fixed-budget Myerson lemma~\cite{myerson}]
\label{lem:overreport-value-comparisons}
Fix one user's declared budget \(q\) and the other user's report.  Let
\((x(V),y(V))\) be its outcome when its reported value is \(V\), and put
\(\operatorname{Util}(V):=Vx(V)-y(V)\).  On the eligible interval
\([1,\infty)\), the functions \(x\), \(y\), and \(\operatorname{Util}\)
are nondecreasing, and \(\operatorname{Util}\) is
nonnegative, convex, and continuous (with one-sided continuity at \(1\)).
Moreover, the standard single-parameter payment identity
and its normalized-utility form are, for
\(1\le V<W\),
\begin{equation}
 \begin{aligned}
 \operatorname{Util}(W)-\operatorname{Util}(V)
   &=\int_V^W x(t)\,\dif t, %
\qquad
 \frac{\operatorname{Util}(W)}{W}
   -\frac{\operatorname{Util}(V)}{V}
   &=\int_V^W\frac{y(t)}{t^2}\,\dif t.
 \end{aligned}
 \label{eq:overreport-value-envelope}
\end{equation}
\end{lemma}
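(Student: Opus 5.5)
The plan is the standard Myerson argument. Throughout, the declared budget \(q\) and the other user's report stay fixed, and only the reported value varies over \([1,\infty)\). The key observation is that a true type \((V,q)\) may deviate to the report \((W,q)\): this is a single-identity eligible value misreport with the same budget, so its payment never exceeds \(q\). The deviation therefore has finite utility, and incentive compatibility applies to it. No budget overreport is needed in this lemma.

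\textbf{Step 1: two-point inequality and monotonicity.} Apply incentive compatibility in both directions, for \(1\le V<W\):
\[
 Vx(V)-y(V)\ge Vx(W)-y(W),
 \qquad
 Wx(W)-y(W)\ge Wx(V)-y(V).
\]
Adding the two inequalities gives \((W-V)(x(W)-x(V))\ge0\), so \(x\) is nondecreasing. The same pair gives the sandwich
\[
 (W-V)x(V)\le\operatorname{Util}(W)-\operatorname{Util}(V)\le(W-V)x(W).
\]

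\textbf{Step 2: properties of \(\operatorname{Util}\).} By individual rationality (\cref{def:two-sided-ir}), \(\operatorname{Util}\) is nonnegative. It is also the pointwise supremum over \(W\) of the affine functions \(V\mapsto Vx(W)-y(W)\), so it is convex. Since \(x\) is locally bounded on compacts (it is monotone and finite), the sandwich makes \(\operatorname{Util}\) locally Lipschitz, hence continuous, with one-sided continuity at \(1\). The sandwich also makes \(\operatorname{Util}\) nondecreasing, because \(x\ge0\). Dividing by \(W-V\) and integrating gives \(\operatorname{Util}(W)-\operatorname{Util}(V)=\int_V^W x\). Concretely, the left and right Riemann sums over any partition bracket the increment, and the monotone function \(x\) is Riemann integrable, so both sums converge to the integral.

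\textbf{Step 3: monotonicity of \(y\).} Write \(y(V)=Vx(V)-\operatorname{Util}(V)\). For \(V<W\),
\[
 y(W)-y(V)=Wx(W)-Vx(V)-\int_V^W x\ge Wx(W)-Vx(V)-(W-V)x(W)=V\bigl(x(W)-x(V)\bigr)\ge0.
\]

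\textbf{Step 4: normalized identity.} \(\operatorname{Util}\) is absolutely continuous with derivative \(x\) almost everywhere. Hence \(\operatorname{Util}(t)/t\) is absolutely continuous on \([V,W]\subset[1,\infty)\), with derivative
\[
 \frac{t\,x(t)-\operatorname{Util}(t)}{t^2}=\frac{y(t)}{t^2}
\]
almost everywhere. Integrating yields the second identity in \eqref{eq:overreport-value-envelope}.

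\textbf{Main obstacle.} The main point needing care is the endpoint \(V=1\) and measure-theoretic regularity, since no continuity of the mechanism is assumed. Everything is derived from the two-point inequalities and from the monotonicity of \(x\), which gives Riemann integrability and local boundedness on \([1,W]\), so no extra assumption should be needed.
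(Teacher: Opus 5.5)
Your proposal is correct and takes essentially the same route as the paper. Both arguments use the two-sided incentive inequalities for monotonicity and the sandwich, the supremum-of-affine-functions representation for convexity, upper and lower Riemann sums for the payment identity, and the almost-everywhere derivative of \(\operatorname{Util}(t)/t\) for the normalized form. The differences are cosmetic: you obtain monotonicity of \(y\) from the integral identity, whereas the paper reads it directly off \(V\bigl(x(W)-x(V)\bigr)\le y(W)-y(V)\), and you obtain continuity from local Lipschitzness rather than from convexity combined with monotonicity.
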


\begin{proof}
For eligible values \(1\le V<W\), incentive compatibility in both directions gives
\[
 x(W)\ge x(V),
 \qquad
 V\bigl(x(W)-x(V)\bigr)
 \le y(W)-y(V)
 \le W\bigl(x(W)-x(V)\bigr).
\]
Thus allocation and payment are nondecreasing.  Moreover,
\[
 \operatorname{Util}(V)=\sup_{v\ge1}\{Vx(v)-y(v)\},
\]
with the supremum attained at \(v=V\).  It is therefore a finite convex
nondecreasing function, and hence continuous; individual rationality makes
it nonnegative.

The same incentive inequalities bound
\(\operatorname{Util}(W)-\operatorname{Util}(V)\) between
\((W-V)x(V)\) and \((W-V)x(W)\).  Summing over partitions and taking
upper and lower sums gives the first identity in
\eqref{eq:overreport-value-envelope}.  Thus \(\operatorname{Util}\) is
locally absolutely continuous and
\(\operatorname{Util}'(V)=x(V)\) almost everywhere.  Since
\(\operatorname{Util}(V)=Vx(V)-y(V)\), it follows almost everywhere that
\[
 \frac{\dif}{\dif V}\left(\frac{\operatorname{Util}(V)}{V}\right)
 =\frac{Vx(V)-\operatorname{Util}(V)}{V^2}
 =\frac{y(V)}{V^2}.
\]
Integrating proves the second identity.
\end{proof}

The next lemma establishes the spot-price normalization needed below using
only eligible reports: when one user's value equals the initial spot price,
that user receives nothing.
Intuitively, such a user gains nothing from the pool, whose average price
exceeds the spot price, so it could gain only by being subsidized by the
other user; weak Pareto optimality alone does not exclude this when the
other user has exhausted its budget.  The proof instead squeezes the other
user's utility: incentive compatibility, through the normalized-utility
identity of \cref{lem:overreport-value-comparisons}, guarantees it at least
the utility it would obtain by trading alone against the pool, whereas
feasibility caps it at that amount, with a strict loss whenever the
spot-value user receives any allocation.

\begin{lemma}[A spot-value user receives zero]
\label{lem:overreport-spot-zero}
For any budgets \(q,Q\ge0\) and any eligible value \(V\ge1\), consider the
profile
\[
 ((V,q),(1,Q)).
\]
The spot-value user receives \((x_2,y_2)=(0,0)\), while the first user
receives the sole-user optimum
\[
 y_1=\min\{q,y^\star(V)\},
 \qquad
 x_1=F(y_1).
\]
The same conclusion holds with the two user positions interchanged.
\end{lemma}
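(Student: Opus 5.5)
The plan is to squeeze user~1's utility between a feasibility upper bound, which is strict whenever the spot-value user receives anything, and an incentive-compatibility lower bound equal to the sole-user optimum
\[
 u^\star(V,q):=\max_{0\le y\le q}\operatorname{Util}^{\rm pool}(V,y)=\operatorname{Util}^{\rm pool}\bigl(V,\min\{q,y^\star(V)\}\bigr).
\]
The degenerate case \(V=1\) is handled first and separately. There both users have value \(1\), so total surplus is \(F(Y)-Y\le0\), since \(F(y)\le y\) with equality only at \(0\). Individual rationality makes both utilities nonnegative, which forces \(Y=0\), hence \(F(Y)=0\) and all allocations and payments vanish. This matches the claim, since \(y^\star(1)=0\). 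From now on assume \(V>1\).

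\emph{Upper bound.} Individual rationality for the value-\(1\) user gives \(y_2\le x_2\). Exact settlement gives \(x_1=F(y_1+y_2)-x_2\). Because \(F\) is strictly concave with \(F'\le1\),
\[
 F(y_1+y_2)\le F(y_1)+y_2\le F(y_1)+x_2 .
\]
The first inequality is strict when \(y_2>0\), and the second is strict when \(y_2<x_2\). Hence \(x_2>0\) implies \(x_1<F(y_1)\). Since \(y_1\le q\), user~1's utility satisfies
\[
 Vx_1-y_1\le VF(y_1)-y_1\le u^\star(V,q),
\]
with the first inequality strict whenever \(x_2>0\).

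\emph{Lower bound (the main step).} Fix user~1's declared budget \(q\) and user~2's report \((1,Q)\), and vary user~1's reported value \(t\in[1,V]\). By the normalized-utility identity in \cref{lem:overreport-value-comparisons}, together with \(\operatorname{Util}(1)\ge0\) from individual rationality,
\[
 \frac{\operatorname{Util}(V)}{V}\ge\int_1^V\frac{y_1(t)}{t^2}\,\dif t .
\]
I then claim \(y_1(t)\ge\min\{q,y^\star(t)\}\) for every \(t>1\). If \(y_1(t)<q\), \cref{lem:overreport-priority} (with \(t>1\)) gives \(x_2=y_2=0\), so \(Y=y_1(t)\). The pool condition \eqref{eq:overreport-pool-condition} then gives \(tF'(y_1(t))\le1\), i.e.\ \(y_1(t)\ge y^\star(t)\).

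Next apply the same envelope calculus to the sole-user problem. Its value function \(u^\star(t,q)\) has optimal payment \(\min\{q,y^\star(t)\}\), and \(u^\star(1,q)=0\). Hence
\[
 \frac{u^\star(V,q)}{V}=\int_1^V\frac{\min\{q,y^\star(t)\}}{t^2}\,\dif t .
\]
It follows that \(\operatorname{Util}(V)\ge u^\star(V,q)\). I expect this step to be the main obstacle: it relies on correctly combining the priority lemma and the pool clause at every intermediate report \(t\), and on the integral identity for the benchmark.

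\emph{Conclusion.} The two bounds must coincide, so the strict case of the upper bound is excluded and \(x_2=0\). Then \(y_2\le x_2\) gives \(y_2=0\), so \(x_1=F(y_1)\) and \(\operatorname{Util}^{\rm pool}(V,y_1)=u^\star(V,q)\) with \(y_1\le q\). Since \(\operatorname{Util}^{\rm pool}(V,\cdot)\) is strictly concave, its maximizer on \([0,q]\) is unique, giving \(y_1=\min\{q,y^\star(V)\}\). The interchanged positions follow by the identical argument, because nothing above uses the order of the users.
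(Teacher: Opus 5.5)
Your proof is correct and follows the paper's argument essentially step for step. You squeeze user~1's utility between the same two bounds: the lower bound from the normalized-utility identity, with priority and the pool condition giving \(y_1(t)\ge\min\{q,y^\star(t)\}\) at each intermediate report; and the upper bound from the spot-value user's individual rationality and \(F'\le1\), closed by strict concavity. The only differences are cosmetic: you use \(\operatorname{Util}(1)\ge0\) instead of computing it to be zero, and you state the strictness of the upper bound directly as ``\(x_2>0\) implies strict'' rather than deducing \(y_2=0\) and then \(x_2=0\).
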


\begin{proof}
Fix the second report \((1,Q)\), vary the first user's reported value
\(V\ge1\), and write its outcome as \((x_1(V),y_1(V))\) and its reported
utility as
\(\operatorname{Util}_1(V):=Vx_1(V)-y_1(V)\).  At \(V=1\), both values equal
the spot value.  Individual rationality and exact settlement imply
\(F(Y)=x_1+x_2\ge Y\).  Since \(F(Y)<Y\) for \(Y>0\), both users receive
zero and \(\operatorname{Util}_1(1)=0\).

Now let \(V>1\).  If the first user does not exhaust its budget,
\cref{lem:overreport-priority} gives \(x_2=y_2=0\), and
\eqref{eq:overreport-pool-condition} gives
\(VF'(y_1(V))\le1\), or equivalently
\(y_1(V)\ge y^\star(V)\).  If it does exhaust its budget, \(y_1(V)=q\).
Consequently
\[
 y_1(V)\ge\min\{q,y^\star(V)\}.
\]
The normalized-utility identity in
\eqref{eq:overreport-value-envelope} now gives
\begin{align*}
 \frac{\operatorname{Util}_1(V)}{V}
 &\ge\int_1^V\frac{\min\{q,y^\star(t)\}}{t^2}\,\dif t\\
 &=\max_{0\le z\le q}\left\{F(z)-\frac{z}{V}\right\}.
\end{align*}
For the equality, the maximizer on the right is
\(z=\min\{q,y^\star(V)\}\); differentiating both sides with respect to
\(V\), starting from their common value zero at \(V=1\), gives the same
integrand.

Conversely, individual rationality for the spot-value user gives
\(x_2\ge y_2\).  Exact settlement and \(F'\le1\) therefore imply
\begin{align*}
 \frac{\operatorname{Util}_1(V)}{V}
 &=F(y_1+y_2)-x_2-\frac{y_1}{V}\\
 &\le F(y_1+y_2)-y_2-\frac{y_1}{V}\\
 &\le F(y_1)-\frac{y_1}{V}\\
 &\le\max_{0\le z\le q}\left\{F(z)-\frac{z}{V}\right\}.
\end{align*}
The second inequality is strict when \(y_2>0\).  Equality with the lower
bound forces \(y_2=0\), then \(x_2=0\), and strict concavity of \(F\)
forces \(y_1=\min\{q,y^\star(V)\}\).  Interchanging the fixed user
positions gives the same proof; no symmetry assumption is needed.
\end{proof}

Consequently, no eligible report can buy \(X\) below the spot price: a user
with the spot value, which obtains zero utility by
\cref{lem:overreport-spot-zero}, could otherwise profit by imitating that
report.

\begin{corollary}[Spot-price payment bound]
\label{cor:overreport-spot-payment-bound}
Every profile of two eligible users satisfies
\[
 x_1\le y_1
 \qquad\text{and}\qquad
 x_2\le y_2.
\]
In particular, a user that pays zero receives zero.
\end{corollary}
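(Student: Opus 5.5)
The plan is to argue by contradiction through an imitation deviation, using \cref{lem:overreport-spot-zero} to pin down the utility of a spot-value user. Fix any profile of two eligible users, say with reports \((v_1,q_1)\) and \((v_2,q_2)\), and suppose that \(x_1>y_1\). Since an eligible report has \(v_1\ge1\), individual rationality forces \(y_1\le v_1x_1\) but does not by itself exclude this case. We then introduce a hypothetical true type \((1,Q)\) for user~1, with \(Q\) chosen at least \(y_1\); for instance \(Q:=\max\{q_1,y_1\}\), or simply \(Q:=q_1\), since \(y_1\le q_1\) always holds. The other report \((v_2,q_2)\) is held fixed.

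The key steps are as follows. First, by \cref{lem:overreport-spot-zero} applied with the positions interchanged, the truthful report \((1,Q)\) against \((v_2,q_2)\) yields \((0,0)\) for user~1, so its truthful utility is \(0\). Second, the deviation to the single eligible report \((v_1,q_1)\) reproduces the original profile, so by determinism it yields \((x_1,y_1)\). Because \(y_1\le q_1\le Q\), the realized payment is affordable, and the deviation therefore lies in the enlarged strategy space of \cref{thm:budget-overreporting-impossibility}; when \(Q=q_1\) it is not even an overreport. Third, the true utility of this deviation is \(1\cdot x_1-y_1>0\), which contradicts incentive compatibility. Hence \(x_1\le y_1\), and the same argument applied with the roles of the users swapped gives \(x_2\le y_2\). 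The final clause follows at once: if \(y_i=0\), then \(0\le x_i\le0\).

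The only delicate point is checking that every use of \cref{lem:overreport-spot-zero} is legitimate. That lemma holds for arbitrary budgets \(q,Q\ge0\) and any eligible \(V\ge1\), in either position, and here it is applied with user~2's reported value \(v_2\ge1\) playing the role of \(V\), so no issue arises. We must also confirm that the utility of the deviation is evaluated as finite. The catastrophic case requires the payment to exceed the true budget, and this is excluded because \(y_1\le Q\). No other subtlety is expected.
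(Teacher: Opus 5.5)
Your proposal is correct and is essentially the paper's own proof: hold the other user's report fixed, use \cref{lem:overreport-spot-zero} (in the appropriate position) to give a spot-value true type with the same budget zero utility, and observe that imitating the actual eligible value report at the unchanged budget yields \(x_1-y_1\), so incentive compatibility forces \(x_1\le y_1\). The contradiction framing and the discussion of \(Q=\max\{q_1,y_1\}\) are unnecessary, since \(Q=q_1\) makes this a pure value misreport.
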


\begin{proof}
Hold user~2's eligible report fixed.  If user~1 had the spot value \(1\)
and its current budget, \cref{lem:overreport-spot-zero} would give it zero
utility.  It can instead submit its current eligible value report without
changing its budget.  Incentive compatibility for the spot-value true type
therefore gives \(0\ge x_1-y_1\).  Interchanging the two user positions
gives \(0\ge x_2-y_2\).
\end{proof}

\subsection{Two Large-Budget Comparisons}

We now establish two complementary facts.  In the first, a large declared
budget belongs to the lower-valued user.  In the second, it belongs to the
higher-valued user.  Together they show that a declared budget can control
the allocation even when most of that budget is never paid.

\Cref{lem:overreport-lower-value} below considers the former case.  It shows
that a lower-valued user with a large budget sets the price for a
budget-constrained higher-valued user, much as a second price would: the
higher-valued user spends its entire budget at a unit price equal to the
lower value.  The lower-valued user takes the remaining supply at a strictly
positive profit, and its payment stays bounded no matter how large its
declared budget is.

\begin{lemma}[Large budget on the lower-valued user]
\label{lem:overreport-lower-value}
Fix \(V_1>1\), \(0<Q_1<\bar y(V_1)\), and
\(Q^{\rm large}>\bar y(V_1)\), and put
\[
 V_{\rm cut}:=\frac{Q_1}{F(Q_1)}=1+Q_1,
\]
so that \(1<V_{\rm cut}<V_1\).
For every \(V_2\in(V_{\rm cut},V_1)\), the profile
\(((V_1,Q_1),(V_2,Q^{\rm large}))\) satisfies
\begin{equation}
 y_1=Q_1,\qquad x_1=\frac{Q_1}{V_2},\qquad
 y_2\le\bar y(V_2)-Q_1,\qquad
 V_2x_2-y_2>0.
 \label{eq:overreport-lower-value-outcome}
\end{equation}
\end{lemma}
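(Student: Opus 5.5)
The plan is to pin down user~1's outcome through a fixed-budget Myerson argument applied to user~1, using user~2's huge declared budget only through the priority lemma. Everything rests on one observation: user~2 never exhausts its declared budget. Let user~2 report value \(t\ge1\) and budget \(Q^{\rm large}\) against user~1's report \((s,Q_1)\) with \(s\le V_1\). By \eqref{eq:overreport-total-payment-bound}, \(y_2\le Y\le\bar y(\max\{s,t\})\). For \(t\le V_1\) this is at most \(\bar y(V_1)<Q^{\rm large}\), so \(y_2<Q^{\rm large}\) on the whole relevant range.

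\emph{Step 1 (user~1 gets nothing below \(t\)).} Fix user~2's report \((t,Q^{\rm large})\) with \(t\in(V_{\rm cut},V_1]\), and vary user~1's reported value \(s\) at budget \(Q_1\). For \(1\le s<t\), user~2 has the higher value and unspent budget, so \cref{lem:overreport-priority} (with the roles swapped) gives \(x_1(s)=y_1(s)=0\).

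\emph{Step 2 (user~1 exhausts \(Q_1\) above \(t\)).} Now take \(s\in(t,V_1]\). Suppose \(y_1(s)<Q_1\). Then \cref{lem:overreport-priority} gives \(x_2=y_2=0\), so exact settlement forces \(x_1=F(y_1)\).
\begin{itemize}
\item If \(y_1=0\), then user~1 has unspent budget at \(Y=0\), and \eqref{eq:overreport-pool-condition} would require \(sF'(0)=s\le1\), a contradiction.
\item If \(y_1>0\), then user~1 alone buys from the pool at average price \(y_1/F(y_1)=1+y_1<1+Q_1=V_{\rm cut}<t\). In the original profile (\(t=V_2\), \(s=V_1\)) this says user~1 pays an average price below \(V_2\). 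That contradicts the lower bound \(y_1\ge V_2x_1\) derived in Step~3, which does not use this case.
\end{itemize}
Hence \(y_1(s)=Q_1\) for all \(s\in(t,V_1]\). Since \(y_1\) is constant there, the sandwich inequalities in the proof of \cref{lem:overreport-value-comparisons} force \(x_1\) to be constant on \((t,V_1]\) as well.

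\emph{Step 3 (pin down the price).} Apply the envelope identity \eqref{eq:overreport-value-envelope} to user~1 with \(t=V_2\). Since \(x_1=0\) below \(V_2\) and \(x_1\) is constant on \((V_2,V_1]\),
\[
 \operatorname{Util}_1(V_1)=\int_1^{V_1}x_1(s)\,\dif s=(V_1-V_2)\,x_1(V_1).
\]
Therefore \(y_1=V_1x_1-(V_1-V_2)x_1=V_2x_1\). Combined with \(y_1=Q_1\), this yields \(x_1=Q_1/V_2\).

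To avoid circularity, I will first derive \(y_1\ge V_2x_1\) directly from the envelope identity: it holds because \(x_1\) vanishes below \(V_2\) and is nondecreasing. Only then will I use this inequality to rule out the unexhausted case in Step~2. This is the main subtlety of the proof.

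\emph{Step 4 (bound on \(y_2\)).} Individual rationality gives \(x_2\ge y_2/V_2\). Exact settlement then gives
\[
 F(Y)=x_1+x_2\ge\frac{Q_1+y_2}{V_2}=\frac{Y}{V_2},
\]
so \(Y\le\bar y(V_2)\), which is \(y_2\le\bar y(V_2)-Q_1\).

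\emph{Step 5 (strict profit for user~2).} Here I apply \cref{lem:overreport-value-comparisons} to user~2, whose budget stays \(Q^{\rm large}\). Steps~1--3 hold for every \(t\in(V_{\rm cut},V_1)\), giving \(x_1(t)=Q_1/t\) and \(Y(t)\ge Q_1\). Hence
\[
 x_2(t)=F(Y(t))-\frac{Q_1}{t}\ge F(Q_1)-\frac{Q_1}{t}=\frac{Q_1}{V_{\rm cut}}-\frac{Q_1}{t}>0 .
\]
So \(\operatorname{Util}_2(V_2)\ge\int_{V_{\rm cut}}^{V_2}x_2(t)\,\dif t>0\), which proves the last claim in \eqref{eq:overreport-lower-value-outcome}.
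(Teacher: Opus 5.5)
Your proof is correct and follows the same skeleton as the paper: priority below $V_2$, exhaustion above $V_2$, constant allocation priced at $V_2$, the IR bound on $y_2$, then strict profit for user~2. Two sub-arguments, however, take a different route.

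\emph{Exhaustion.} The paper argues by contradiction on a right neighborhood of $V_2$. There an unexhausted user~1 trades alone with the pool, so its utility is $\operatorname{Util}^{\rm pool}(v,y_1(v))$ with $y_1(v)\ge y^\star(v)$. Continuity of $\operatorname{Util}_1$ at $V_2$, where it vanishes, then forces $\operatorname{Util}^{\rm pool}(V_2,y_{\rm lim})=0$ for some $y_{\rm lim}\in[y^\star(V_2),Q_1)$, which is impossible. You instead argue pointwise, using two ingredients:
\begin{enumerate}
\item the threshold-price floor $y_1(s)\ge t\,x_1(s)$, which follows from $\operatorname{Util}_1(1)=0$ and $\operatorname{Util}_1(s)=\int_t^s x_1\le(s-t)x_1(s)$;
\item the explicit solo pool price $y_1/F(y_1)=1+y_1<V_{\rm cut}<t$, with the pool condition at $Y=0$ covering the case $y_1=0$.
\end{enumerate}
This avoids limits and $y^\star$, and it shows transparently why $V_2>V_{\rm cut}$ is needed. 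The paper's version reuses the $\operatorname{Util}^{\rm pool}$ and $y_{\rm lim}$ machinery that it needs anyway for \cref{lem:overreport-higher-value}. You phrase the contradiction only for the original profile ($t=V_2$, $s=V_1$). The floor holds verbatim at every $s\in(t,V_1]$ and every $t\in(V_{\rm cut},V_1)$, and you need it in that generality, both for constancy of $x_1$ and for Step~5. So state it that way.

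\emph{Pinning the price.} Integrating the envelope identity to get $x_1=Q_1/V_2$ is equivalent to the paper's limit $v\downarrow V_2$.

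\emph{User~2's strict profit.} The paper supposes $\operatorname{Util}_2(V_2)=0$, deduces zero utility on $(V_{\rm cut},V_2)$, and exhibits a profitable downward value report. Your route is more direct: you bound $x_2(t)\ge F(Q_1)-Q_1/t>0$ and integrate through the envelope identity. It is shorter and gives an explicit positive lower bound on $\operatorname{Util}_2(V_2)$. Both versions rely equally on having determined user~1's outcome for every $t\in(V_{\rm cut},V_1)$.
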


\begin{proof}
Fix \(V_2\), vary the first user's value \(v\in[1,V_1]\) while holding its
budget at \(Q_1\), and keep the second report
\((V_2,Q^{\rm large})\) fixed.  Write
\((x_1(v),y_1(v))\) for the first user's outcome and
\(\operatorname{Util}_1(v):=vx_1(v)-y_1(v)\).  By
\eqref{eq:overreport-total-payment-bound}, total payment never exceeds
\(\bar y(V_1)<Q^{\rm large}\), so the second user always has unspent budget.

When \(v<V_2\), internal-trade priority gives
\((x_1(v),y_1(v))=(0,0)\); continuity therefore gives
\(\operatorname{Util}_1(V_2)=0\).
Suppose \(y_1(v_0)<Q_1\) for some \(v_0>V_2\).  Payment monotonicity then
gives \(y_1(v)<Q_1\) throughout a right neighborhood of \(V_2\).  The first
user is the higher-valued user there and also has unspent budget, so priority
gives the second user zero.  Exact settlement and the pool condition imply
\[
 x_1(v)=F(y_1(v)),
 \qquad
 y_1(v)\ge y^\star(v),
 \qquad
 \operatorname{Util}_1(v)
 =\operatorname{Util}^{\rm pool}\bigl(v,y_1(v)\bigr).
\]
The monotone limit
\(y_{\rm lim}:=\lim_{v\downarrow V_2}y_1(v)\) consequently
satisfies
\[
 y^\star(V_2)\le y_{\rm lim}<Q_1<\bar y(V_2).
\]
Continuity of \(\operatorname{Util}_1\) would give
\(\operatorname{Util}^{\rm pool}(V_2,y_{\rm lim})
=\operatorname{Util}_1(V_2)=0\), contradicting the fact that
\(\operatorname{Util}^{\rm pool}(V_2,\cdot)\) is strictly positive between
its two zeros.
Hence \(y_1(v)=Q_1\) for every \(v\in(V_2,V_1]\).

With payment fixed at \(Q_1\), incentive compatibility forces allocation to
be constant on this interval, say \(x_1(v)=x_{\rm const}\).  Taking
\(v\downarrow V_2\) in
\(\operatorname{Util}_1(v)=v x_{\rm const}-Q_1\) gives
\(V_2x_{\rm const}-Q_1=\operatorname{Util}_1(V_2)=0\), so
\(x_{\rm const}=Q_1/V_2\).  This proves the first two claims in
\eqref{eq:overreport-lower-value-outcome}.  Individual rationality for
user~2 and exact settlement give
\[
 F(Q_1+y_2)=\frac {Q_1}{V_2}+x_2\ge\frac{Q_1+y_2}{V_2},
\]
and therefore \(Q_1+y_2\le\bar y(V_2)\).

It remains to show that user~2 has strictly positive utility.  Keep the
first report \((V_1,Q_1)\) fixed, vary user~2's reported value \(v\) while
holding its declared budget at \(Q^{\rm large}\), and write its reported utility
\(vx_2-y_2\) as \(\operatorname{Util}_2(v)\).  If
\(\operatorname{Util}_2(V_2)=0\), nonnegativity and
monotonicity give \(\operatorname{Util}_2(v)=0\) for every
\(v\in(V_{\rm cut},V_2)\).  Choose such a \(v\).  The conclusions already
proved give \(y_1=Q_1\) and \(x_1=Q_1/v\).  Zero utility for user~2 and exact
settlement then give
\[
 F(Q_1+y_2)=\frac{Q_1+y_2}{v},
 \qquad
 y_2=\bar y(v)-Q_1>0.
\]
A true type \((V_2,Q^{\rm large})\) could instead report
\((v,Q^{\rm large})\) and obtain
\((V_2-v)x_2>0\), contradicting incentive compatibility.  Thus
\(\operatorname{Util}_2(V_2)>0\).
\end{proof}

\Cref{lem:overreport-higher-value} below considers the mirror case, in which
the higher-valued user has a large budget.  It shows that this user
effectively trades alone against the pool, while the lower-valued user
receives nothing.  The only thing the mechanism can do to the higher-valued
user is impose a minimum purchase, and incentive compatibility keeps the
resulting overpayment small when the lower-valued user's budget is small.

\begin{lemma}[Large budget on the higher-valued user]
\label{lem:overreport-higher-value}
Fix \(V_1>V_2>1\), \(Q_2>0\), and
\(Q^{\rm large}>\bar y(V_1)\).  For
\(v\in[V_2,V_1]\), consider the profile
\(((v,Q^{\rm large}),(V_2,Q_2))\), and write the first user's outcome as
\((x_1(v),y_1(v))\).  For \(v>V_2\), user~2 receives zero and
\[
 x_1(v)=F(y_1(v)),
 \qquad
 y_1(v)\ge y^\star(v).
\]
The limit \(y_{\rm lim}:=\lim_{v\downarrow V_2}y_1(v)\) exists and satisfies
\begin{equation}
 y_{\rm lim}\ge y^\star(V_2),
 \qquad
 \operatorname{Util}^{\rm pool}(V_2,y_{\rm lim})
 \ge \operatorname{Util}^{\rm pool}\bigl(V_2,y^\star(V_2)\bigr)-V_2Q_2,
 \label{eq:overreport-limit-bound}
\end{equation}
and, for every \(v\in(V_2,V_1]\),
\begin{equation}
 y_1(v)=\max\{y_{\rm lim},y^\star(v)\}.
 \label{eq:overreport-payment-shape}
\end{equation}
\end{lemma}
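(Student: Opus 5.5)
The plan is to handle $v>V_2$ first by reducing user~1 to a sole trader against the pool, then read off the payment shape and the limit bound from the fixed-budget Myerson identities of \cref{lem:overreport-value-comparisons}.

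\textbf{Step 1: user~1 trades alone against the pool for $v>V_2$.} By \eqref{eq:overreport-total-payment-bound}, total payment is at most $\bar y(v)\le\bar y(V_1)<Q^{\rm large}$. So user~1 always has unspent budget. Since $v>V_2$, internal-trade priority (\cref{lem:overreport-priority}) gives user~2 zero. Exact settlement then gives $x_1(v)=F(y_1(v))$, and the pool condition \eqref{eq:overreport-pool-condition} gives $vF'(y_1(v))\le1$, i.e.\ $y_1(v)\ge y^\star(v)$. Payment is nondecreasing in the reported value by \cref{lem:overreport-value-comparisons}, so $y_{\rm lim}$ exists. Continuity of $y^\star$ then gives $y_{\rm lim}\ge y^\star(V_2)$. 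The same monotonicity gives $y_1(v)\ge y_{\rm lim}$ for every $v>V_2$.

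\textbf{Step 2: the payment shape \eqref{eq:overreport-payment-shape}, by a squeeze.} Put $G(v):=\max_{y\ge y_{\rm lim}}\{F(y)-y/v\}$. By concavity of $F$, the maximizer is $\max\{y_{\rm lim},y^\star(v)\}$, so $G'(v)=\max\{y_{\rm lim},y^\star(v)\}/v^2$. Since $y^\star(V_2)\le y_{\rm lim}$, we also have $G(V_2)=F(y_{\rm lim})-y_{\rm lim}/V_2$.
\begin{itemize}
\item Upper bound: $\operatorname{Util}_1(v)/v=F(y_1(v))-y_1(v)/v\le G(v)$, because $y_1(v)\ge y_{\rm lim}$.
\item Lower bound: continuity of $\operatorname{Util}_1$ gives $\operatorname{Util}_1(V_2)/V_2=\lim_{v\downarrow V_2}\bigl(F(y_1(v))-y_1(v)/v\bigr)=G(V_2)$. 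The normalized-utility identity in \eqref{eq:overreport-value-envelope}, together with $y_1(t)\ge\max\{y_{\rm lim},y^\star(t)\}$, yields $\operatorname{Util}_1(v)/v\ge G(V_2)+\int_{V_2}^v G'(t)\,\dif t=G(v)$.
\end{itemize}
Hence equality holds throughout. Strict concavity of $F$ makes the maximizer unique, which forces $y_1(v)=\max\{y_{\rm lim},y^\star(v)\}$.

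\textbf{Step 3: the limit bound \eqref{eq:overreport-limit-bound}; this is the delicate step.} The first half of Step 2 gives $\operatorname{Util}_1(V_2)=\operatorname{Util}^{\rm pool}(V_2,y_{\rm lim})$. It therefore suffices to lower-bound $\operatorname{Util}_1(V_2)$, and I would do this through the envelope from the spot value. Use $\operatorname{Util}_1(1)\ge0$ and the normalized-utility identity on $[1,V_2]$. For $t<V_2$ I cannot use priority, because user~2 is the higher-valued user and may have exhausted its budget. Instead, user~1 still has unspent budget, so the pool condition gives $Y(t)\ge y^\star(t)$. Since $y_2\le Q_2$, this yields $y_1(t)\ge y^\star(t)-Q_2$. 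Integrating,
\[
\frac{\operatorname{Util}_1(V_2)}{V_2}\ge\int_1^{V_2}\frac{y^\star(t)}{t^2}\,\dif t-Q_2\Bigl(1-\frac1{V_2}\Bigr)
\ge\frac{\operatorname{Util}^{\rm pool}(V_2,y^\star(V_2))}{V_2}-Q_2 .
\]
Here the identity $\int_1^{V_2}y^\star(t)t^{-2}\,\dif t=\max_z\{F(z)-z/V_2\}$ is the one already verified in the proof of \cref{lem:overreport-spot-zero}, with $q=\infty$. Multiplying by $V_2$ gives \eqref{eq:overreport-limit-bound}. The point I expect to need care is justifying the pointwise bound for $t\in[1,V_2)$, namely that the pool condition applies to user~1 at every such $t$.
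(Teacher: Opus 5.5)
Your proof is correct, but Steps~2 and~3 take a different route from the paper's. For the limit bound \eqref{eq:overreport-limit-bound}, the paper uses one discrete deviation. The true type $(V_2,Q^{\rm large})$ underreports to $(v_{\rm aux},y^\star(V_2))$, with $v_{\rm aux}$ so large that the pool clause forces payment of the whole $y^\star(V_2)$. The spot-price payment bound (\cref{cor:overreport-spot-payment-bound}, which rests on \cref{lem:overreport-spot-zero}) then caps user~2's allocation at $Q_2$. You instead integrate the normalized identity of \cref{lem:overreport-value-comparisons} from the spot value $1$ up to $V_2$. The point you flagged is fine: for $t<V_2$, the total-payment bound \eqref{eq:overreport-total-payment-bound} with $V_{\max}=V_2$ gives $Y\le\bar y(V_2)<\bar y(V_1)<Q^{\rm large}$. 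So user~1 has unspent budget, and the pool clause applies at every such $t$. Your route uses only fixed-budget value comparisons and avoids both the budget underreport and the spot-zero machinery. Before your final loosening it also gives the sharper loss $(V_2-1)Q_2$ in place of $V_2Q_2$.

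For \eqref{eq:overreport-payment-shape}, the paper first shows that $y_1$ is continuous on $(V_2,V_1]$. It then uses an intermediate-value argument to exhibit a report $v'$ that a true value $v$ would strictly prefer. Your squeeze of $\operatorname{Util}_1(v)/v$ between the feasibility bound $G(v)$ and the envelope integral needs no continuity of $y_1$, and gets uniqueness directly from strict concavity of $F$. The paper's argument is more elementary in that each step is an explicit deviation. Yours is shorter and rests entirely on the integral identity the paper has already established.
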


\begin{proof}
The total-payment bound gives
\(Y\le\bar y(V_1)<Q^{\rm large}\), so the first user
never exhausts its declared budget.  For \(v>V_2\), internal-trade priority
therefore gives user~2 zero.  Settlement and the pool condition yield the
two displayed assertions.  Monotonicity and boundedness of \(y_1\) give
the limit \(y_{\rm lim}\ge y^\star(V_2)\).  If
\(\operatorname{Util}_1(v):=vx_1(v)-y_1(v)\), continuity gives
\[
 \operatorname{Util}_1(V_2)
 =\lim_{v\downarrow V_2}
   \operatorname{Util}^{\rm pool}\bigl(v,y_1(v)\bigr)
 =\operatorname{Util}^{\rm pool}(V_2,y_{\rm lim}).
\]

To lower-bound this utility, choose a finite eligible value
\(v_{\rm aux}\) so large that
\[
 v_{\rm aux}F'\bigl(y^\star(V_2)+Q_2\bigr)>1.
\]
A true type \((V_2,Q^{\rm large})\) can deviate to the report
\((v_{\rm aux},y^\star(V_2))\).  In the resulting profile
\(((v_{\rm aux},y^\star(V_2)),(V_2,Q_2))\), the first user
must pay all \(y^\star(V_2)\): otherwise it would have unspent budget and
\eqref{eq:overreport-pool-condition}, together with
\(Y<y^\star(V_2)+Q_2\), would imply
\[
 1\ge v_{\rm aux}F'(Y)
 \ge v_{\rm aux}F'\bigl(y^\star(V_2)+Q_2\bigr)>1.
\]
By \cref{cor:overreport-spot-payment-bound}, user~2 receives at most
\(x_2\le y_2\le Q_2\).  Exact settlement therefore gives the deviating
user at least \(F(y^\star(V_2))-Q_2\) units of \(X\).  The deviation is affordable
because it pays \(y^\star(V_2)<Q^{\rm large}\), and incentive compatibility yields
\begin{align*}
 \operatorname{Util}^{\rm pool}(V_2,y_{\rm lim})
 &=\operatorname{Util}_1(V_2)\\
 &\ge V_2\bigl(F(y^\star(V_2))-Q_2\bigr)-y^\star(V_2)\\
 &=\operatorname{Util}^{\rm pool}\bigl(V_2,y^\star(V_2)\bigr)-V_2Q_2,
\end{align*}
which proves \eqref{eq:overreport-limit-bound}.  Notice that
\(v_{\rm aux}\) is finite; the proof does not use an infinite value report.

It remains to identify the payment path.  Payment monotonicity already gives
\(y_1(v)\ge y_{\rm lim}\), and the pool condition gives
\(y_1(v)\ge y^\star(v)\).  The function \(y_1\) is continuous on
\((V_2,V_1]\): its one-sided limits exist by monotonicity, and continuity
of
\(\operatorname{Util}_1(v)
=\operatorname{Util}^{\rm pool}\bigl(v,y_1(v)\bigr)\), together with
strict decrease of
\(\operatorname{Util}^{\rm pool}(v,\cdot)\) on
\([y^\star(v),\infty)\), forces those limits to agree.

If \(y_1(v)>\max\{y_{\rm lim},y^\star(v)\}\) at some \(v\), continuity and the
definition of \(y_{\rm lim}\) provide \(v'\in(V_2,v)\) such that
\[
 \max\{y_{\rm lim},y^\star(v)\}<y_1(v')<y_1(v).
\]
Both reports give the first user the entire pool output.  Because
\(\operatorname{Util}^{\rm pool}(v,\cdot)\) strictly decreases above
\(y^\star(v)\), a true value \(v\)
would strictly prefer the report \(v'\), contradicting incentive
compatibility.  This proves \eqref{eq:overreport-payment-shape}.
\end{proof}

\subsection{The Contradiction}

The proof constructs a truthful profile of two eligible users in which each
user has an affordable budget overreport, that is, one whose realized payment
does not exceed its true budget.  By \cref{lem:overreport-higher-value}, the
first user's overreport guarantees it utility \(1\), which is the entire
surplus that the batch can generate.  By \cref{lem:overreport-lower-value},
the second user's overreport guarantees it strictly positive utility.
Incentive compatibility requires the truthful outcome to give each user at
least what its overreport guarantees, and these two guarantees together
exceed the available surplus.

\begin{proof}[Proof of \cref{thm:budget-overreporting-impossibility}]
In normalized units, consider the truthful profile
\[
 (V_1,Q_1)=(4,1),
 \qquad
 (V_2,Q_2)=\left(\frac{100}{49},\frac{2}{49}\right),
\]
and let \(Q^{\rm large}=4\) be the larger budget used in the two deviations below.
Both values are strictly eligible.  Write
\(\operatorname{Util}_1^{\rm tr}:=V_1x_1^{\rm tr}-y_1^{\rm tr}\) and
\(\operatorname{Util}_2^{\rm tr}:=V_2x_2^{\rm tr}-y_2^{\rm tr}\).

\paragraph{The first user can safely overreport its budget.}
Apply \cref{lem:overreport-higher-value}; its hypotheses hold because
\(V_1>V_2>1\) and
\(Q^{\rm large}=4>\bar y(V_1)=3\).  We have
\[
 y^\star(V_2)=\frac37,
 \qquad
 \operatorname{Util}^{\rm pool}\bigl(V_2,y^\star(V_2)\bigr)=\frac9{49},
 \qquad
 \operatorname{Util}^{\rm pool}(V_2,1)=\frac1{49}.
\]
Consequently,
\[
 \operatorname{Util}^{\rm pool}(V_2,y_{\rm lim})
 \ge \frac9{49}-\frac{100}{49}\cdot\frac2{49}
 =\frac{241}{2401}
 >\frac1{49}=\operatorname{Util}^{\rm pool}(V_2,1).
\]
Since \(y_{\rm lim}\ge y^\star(V_2)\) and
\(\operatorname{Util}^{\rm pool}(V_2,\cdot)\) strictly decreases after its
maximizer, this implies \(y_{\rm lim}<1\).  Equation
\eqref{eq:overreport-payment-shape} at \(v=V_1=4\) therefore gives
\[
 y_1(V_1)=\max\{y_{\rm lim},y^\star(4)\}=1.
\]
Thus, if user~1 reports
\((V_1,Q^{\rm large})=(4,4)\), it actually pays exactly its
true budget \(Q_1=1\), receives \(F(1)=1/2\), and obtains utility \(1\).
Incentive compatibility requires
\begin{equation}
 \operatorname{Util}_1^{\rm tr}\ge1.
 \label{eq:overreport-first-utility}
\end{equation}

\paragraph{The second user's truthful utility must be zero.}
Let \(Y^{\rm tr}=y_1^{\rm tr}+y_2^{\rm tr}\).  Exact settlement and
\(V_1>V_2\) give
\begin{align*}
 \operatorname{Util}_1^{\rm tr}+\operatorname{Util}_2^{\rm tr}
 &=V_1F(Y^{\rm tr})-Y^{\rm tr}
   -(V_1-V_2)x_2^{\rm tr}\\
 &\le\operatorname{Util}^{\rm pool}(V_1,Y^{\rm tr})
 \le\operatorname{Util}^{\rm pool}\bigl(V_1,y^\star(V_1)\bigr)=1.
\end{align*}
Together with \eqref{eq:overreport-first-utility} and individual
rationality, this forces \(\operatorname{Util}_2^{\rm tr}=0\).

\paragraph{The second user can also safely overreport its budget.}
Now let user~2 report \((V_2,Q^{\rm large})\), keeping user~1 at
\((V_1,Q_1)\).
The hypotheses of \cref{lem:overreport-lower-value} hold because
\[
 V_{\rm cut}=\frac{Q_1}{F(Q_1)}=2
 <V_2<V_1,
 \qquad
 Q^{\rm large}=4>\bar y(V_1)=3.
\]
The lemma gives user~2 strictly positive utility, while its realized payment
satisfies
\[
 y_2\le\bar y(V_2)-Q_1
 =\frac{51}{49}-1
 =\frac2{49}=Q_2.
\]
This deviation is therefore affordable under its true budget and strictly
improves upon its truthful utility of zero, a contradiction.

\end{proof}

Only the two deviations in the final contradiction overreport a budget.
Every preparatory incentive comparison keeps the budget fixed or reports a
smaller one.  Both overreports keep the value truthful and realize a payment
within the true budget.  This pinpoints the obstruction: a declared budget
can influence priority and allocation even when the mechanism never collects
the overstated portion.

Finally, the theorem also rules out a two-sided mechanism with the same four
properties.  Restrict such a mechanism to the two sell-\(Y\) profiles used
above.  Exact settlement becomes
\(x_1+x_2=F(y_1+y_2)\), both weak-Pareto clauses remain applicable, and
every deviation in the proof is still an available single sell-\(Y\)
deviation.  Thus the restricted mechanism would contradict
\cref{thm:budget-overreporting-impossibility}.  The aggregate true-budget
cap in \cref{def:two-sided-strategy} excludes exactly these two fatal
deviations.

\section*{Acknowledgments}
This work began in 2023, and the final writeup  
is a collaboration between human and artificial intelligence.
Human intelligence was solely responsible for all the conceptual
contributions, and for devising the main AMM adaptive clinching mechanism.
AI proved our conjecture regarding Sybil-robust incentive compatibility. 

Elaine Shi acknowledges Hubert Chan and 
Alireza Farhadi for helpful technical discussions during an earlier phase of the project.  
Elaine Shi is also grateful to Leo Fan and his family for their kindness, generosity, 
and the logistical support that made 
this cross-institutional collaboration possible.

\bibliographystyle{alpha}
\bibliography{refs}

\appendix

\section{Fixed and Online Supply as Constant Targets}
\label{app:constant-targets}

The target-based clinching mechanism of
\cref{subsec:abstract-target-mechanism} separates the clinching mechanics
from the source of supply.  This appendix shows that the two classical
supply models are instances of it: the ordinary 
adaptive clinching auction~\cite{adaptiveclinch,budgetelicitation} for a fixed
divisible supply, 
and adaptive clinching with exogenous online
supply, due to Goel, Mirrokni, and Paes Leme~\cite{clinchonline}.  In
both cases the target ignores the auction state \((X,Y)\) and the clock
price: it is the constant function equal to the available supply.
The
admissibility requirements of \cref{def:admissible-target}
then hold trivially.

The main content of the appendix is a rule-by-rule correspondence showing
that, with a constant target, the rules of \cref{sec:mechanism} reduce to
those of the classical auction.  The only genuine boundary
convention is that the abstract clock starts at a positive price
\(p_\star\), whereas the classical clock starts at zero.  A constant target
also attains its capacity, but the zero-trade completion convention makes
that boundary identical to the classical clock after all supply has been
allocated and creates no exception in the incentive theorem.

\subsection{Fixed Divisible Supply}
\label{app:fixed-supply}

\paragraph{The classical auction.}
We recall the adaptive clinching auction in the notation of
\cref{sec:mechanism}.  A divisible supply \(s>0\) of a single good is offered
to finitely many identities; \cite{budgetelicitation} normalizes \(s=1\), whereas
\cite{clinchonline} keeps \(s\) general.  Identity \(i\) reports a value
\(v_i\ge0\) per unit and a hard budget \(q_i\ge0\); these are \(v_i\) and
\(B_i\) in \cite{budgetelicitation,clinchonline}.  A price clock \(p\) ascends from
zero.  At clock price \(p\), identity \(i\) has cumulative allocation
\(x_i\), cumulative payment \(y_i\), and effective remaining budget \(b_i\),
which equals \(q_i-y_i\) while the identity is in the auction and is set to
zero when it exits.  The unallocated supply is \(S=s-\sum_ix_i\), the demand
of identity \(i\) is \(b_i/p\), and the sets \({\sf Active}\),
\({\sf Exit}\), and \({\sf Clinch}\) are those of
\eqref{eq:active-exit-sets} and \eqref{eq:clinch-set}.  The auction maintains
the supply invariant \eqref{eq:supply-invariant}.  Every trade is charged at
the current clock price.  A clinching identity trades at the rates
\begin{equation}
 \frac{\dif x_i}{\dif p}=\frac Sp,
 \qquad
 \frac{\dif y_i}{\dif p}=S,
 \qquad
 \frac{\dif b_i}{\dif p}=-S,
 \qquad i\in{\sf Clinch},
 \label{eq:classical-clinching-rates}
\end{equation}
and no other identity trades
\cite[Figure~1, Step~(II)]{budgetelicitation},
\cite[Definition~4.2(ii)]{clinchonline}.  When the clock reaches the
value of one or more identities, the stopping condition
\eqref{eq:stopping-test} is tested.  If it holds, every active identity
receives its full demand \(b_i/p\), the exiting identities receive the rest
of the supply, and the auction ends \cite[Figure~1, Step~(I)]{budgetelicitation}.
Otherwise the exiting budgets are cancelled, and each active identity \(i\)
clinches
\begin{equation}
 \Bigl(S-\sum_{k\in{\sf Active}\setminus\{i\}}\frac{b_k}{p}\Bigr)_+
 \label{eq:classical-exit-clinch}
\end{equation}
units at the exit price \cite[Lemma~3.1]{budgetelicitation},
\cite[Definition~4.2(iv)]{clinchonline}.

\begin{definition}[Constant target]
\label{def:constant-target}
Fix a supply \(s>0\) and an initial price \(p_\star>0\).  The
\emph{constant target} is
\begin{equation*}
 T^{\rm fix}_s(p,X,Y):=s
 \qquad\text{for every }(p,X,Y)\in\mathcal D^{\rm fix}_s,
\end{equation*}
on the state domain
\begin{equation*}
 \mathcal D^{\rm fix}_s
 :=\{(p,X,Y):p\ge p_\star,\ 0\le X\le s,\ Y\ge0\},
\end{equation*}
Let \(\sigma^{\rm fix}\) denote the trivial environment description, and
define the constant-target specification
\begin{equation*}
 \mathfrak T^{\rm fix}_s
 :=(p_\star,\sigma^{\rm fix},T^{\rm fix}_s,
    \mathcal D^{\rm fix}_s,s).
\end{equation*}
Its residual supply \(S=s-X\) is the unallocated supply of the classical
auction.  For the unit supply of \cite{budgetelicitation}, the target is the
constant \(1\).
\end{definition}

\begin{proposition}[Admissibility of the constant-target specification]
\label{prop:constant-target-admissible}
For every \(s>0\) and \(p_\star>0\), \(\mathfrak T^{\rm fix}_s\) is an
admissible abstract target-based mechanism.  Its frozen-state clock rate vanishes,
\(\lambda\equiv0\), and its only supply lump is the initial release
\(T^{\rm fix}_s(p_\star,0,0)=s\).  Unlike
the AMM capacity, the capacity \(s\) is attainable:
\(T^{\rm fix}_s=X\) precisely when \(X=s\).
\end{proposition}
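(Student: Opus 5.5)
The plan is to verify the clauses of \cref{def:admissible-target} one at a time for the constant map \(T^{\rm fix}_s\equiv s\) on \(\mathcal D^{\rm fix}_s\). Each clause follows because the target ignores \(p\), \(X\) and \(Y\). Two points need slight care: the domain conditions, and reading the ``initial release'' as the supply lump at \(p_\star\).

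\textbf{Domain and feasibility.}
First, \((p_\star,0,0)\in\mathcal D^{\rm fix}_s\), and the domain is upward closed in price, since its price constraint is only \(p\ge p_\star\).
Feasibility \(X\le T\le\overline X_\sigma\) reads \(X\le s\le s\), which is exactly the domain constraint \(0\le X\le s\).
Partition independence holds because the target depends on no report at all.

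\textbf{Fixed-price invariance.}
If \(0\le\delta\le S=s-X\), then \(X+\delta\le s\) and \(Y+p\delta\ge0\). So the updated state stays in the domain, and the target is still \(s\), giving \eqref{eq:abstract-fixed-price-invariance}.

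\textbf{Mark-to-market order.}
Both targets in \eqref{eq:target-mark-to-market-order} equal \(s\), so the implication holds with equality, including the equality clause.

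\textbf{Clock regularity.}
For frozen \((X,Y)\), the map \(p\mapsto s\) is constant on \([p_\star,\infty)\). It is therefore nondecreasing, right-continuous, of bounded variation and absolutely continuous, with no jumps for \(p>p_\star\). Hence \(\lambda\equiv0\).

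\textbf{The only lump.}
The only supply release not produced by clock motion is the initial target value \(T^{\rm fix}_s(p_\star,0,0)=s\). The mechanism processes this as the initial lump at \(p_\star\), in the same way as a positive \(T^{\rm amm}(p_0,0,0)\) in \cref{sec:mechanism}. No later upward jump exists, because the target never changes.

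\textbf{Eventual supply release.}
The hypothesis \(T=X<\overline X_\sigma\) would require \(s=X<s\), which is impossible. So the condition \eqref{eq:no-subcapacity-absorption} holds vacuously.
This same computation gives the final claim: \(T^{\rm fix}_s=X\) if and only if \(X=s\). In particular, the capacity \(s\) is attained exactly at capacity-saturated states, unlike the AMM capacity \(x_0\).

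No step is a real obstacle. The only point I would check carefully is that the domain is closed under all transitions the mechanism performs, not only fixed-price purchases. Continuous clinching moves along the fixed-price direction with \(\dot X=c\gamma/p=cS/p\) and keeps \(X\le s\), since it stops when \(S=0\). Restoration purchases are bounded by \(S\) by \cref{lem:event-restoration}, and \(Y\) only increases. Strictly, though, admissibility only requires the fixed-price clause, which is already verified above.
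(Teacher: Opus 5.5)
Your proposal is correct and follows the paper's proof essentially clause by clause. Each requirement of \cref{def:admissible-target} holds trivially because the target is constant, the domain contains $(p_\star,0,0)$ and is upward closed in price, eventual supply release is vacuous since $T^{\rm fix}_s=X$ forces $X=s$, and the initial target is processed as the lump at $p_\star$; your closing check of domain closure under continuous clinching and restoration is not needed, as you note, but it is harmless.
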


\begin{proof}
We verify the five requirements of \cref{def:admissible-target} in order.

\begin{itemize}[leftmargin=6mm,itemsep=1pt]
\item 
\emph{Feasibility and partition independence.}  On the domain,
\(X\le s=T^{\rm fix}_s=\overline X\).  A constant depends on the bidders
through nothing at all; in particular it depends neither on their names nor
on the partition of a budget among identities.
\item 
\emph{Fixed-price invariance.}  If \(0\le\delta\le S=s-X\), then
\(X+\delta\le s\) and \(Y+p\delta\ge0\), so the updated state lies in
\(\mathcal D^{\rm fix}_s\).  The target is unchanged because it is constant.
\item 
\emph{Target decreases with mark-to-market gain.}  Both targets in
\eqref{eq:target-mark-to-market-order} equal \(s\).  The implication therefore
holds, with equality, whatever the two gains are.
\item 
\emph{Clock regularity.}  For fixed \((X,Y)\), the map \(p\mapsto s\) is
constant.  It is nondecreasing, right-continuous, of bounded variation, free
of jumps, and absolutely continuous, and \eqref{eq:target-clock-rate} gives
\(\lambda=0\).
\item 
\emph{Eventual supply release.}  The target equals \(X\) only if
\(X=s=\overline X\).  Thus there is no state satisfying the premise
\(T^{\rm fix}_s=X<\overline X\) of
\eqref{eq:no-subcapacity-absorption}.
\end{itemize}

Finally, \(\mathcal D^{\rm fix}_s\) contains the initial state
\((p_\star,0,0)\) and is closed under raising \(p\) with \((X,Y)\) held
fixed.  The positive initial target is released at \(p_\star\) as the initial
lump, as prescribed after \eqref{eq:initial-state}.
\end{proof}

\begin{proposition}[The constant-target mechanism is adaptive clinching]
\label{prop:constant-target-is-aca}
The target-based clinching mechanism associated with
\(\mathfrak T^{\rm fix}_s\) is the adaptive clinching auction for supply
\(s\) with its clock started at \(p_\star\).  Equivalently, apply the target
substitution \eqref{eq:abstract-target-substitution} with
\((p_\star,T_\sigma,\overline X_\sigma)
=(p_\star,T^{\rm fix}_s,s)\).  More precisely:
\begin{enumerate}
\item \emph{Start.}  Identities with \(v_i<p_\star\) are ineligible.  The
whole supply is released at \(p_\star\) as the initial lump, so \(S=s\), and
the event rules in items~3 and~4 are applied at \(p_\star\), with the
identities having \(v_i=p_\star\) as the exiting identities.
\item \emph{Continuous evolution.}  The derived rate
\eqref{eq:derived-clinching-rate} is \(\gamma=S\).  Clinching identities
trade at the classical rates \eqref{eq:classical-clinching-rates}, no other
identity trades, and, with \(c=|{\sf Clinch}|\),
\begin{equation}
 \frac{\dif S}{\dif p}=-\frac{cS}{p},
 \qquad
 \frac{\dif h}{\dif p}=-S.
 \label{eq:constant-target-aggregate-dynamics}
\end{equation}
In particular, \(S\) remains positive between events.
\item \emph{Nonstopping exits.}  If the stopping test fails, the one-shot
restoration purchase \eqref{eq:one-shot-restoration} of an active identity
\(i\) is the classical clinch \eqref{eq:classical-exit-clinch}.
\item \emph{Stopping.}  The stopping test \eqref{eq:stopping-test} and the
full-demand purchases of the active identities are Step~(I) of
\cite[Figure~1]{budgetelicitation}, and the proportional rule
\eqref{eq:proportional-exit-allocation} specifies how the remainder \(L\) of
\eqref{eq:exit-remnant} is divided among the exiting identities.  At every
stopping price above \(p_\star\), some identity exits and
\begin{equation}
 0\le L\le\frac1p\Bigl(B_E-\max_{i\in{\sf Exit}}b_i\Bigr).
 \label{eq:constant-target-remnant-bound}
\end{equation}
Thus \(L=0\) unless at least two identities exit simultaneously, the exiting
identities can afford all of \(L\), and the stopping allocation sells the
entire residual supply.
\item \emph{Capacity completion and termination.}  After the initial release
the target has no lump.  A nonstopping event saturates capacity exactly when
it leaves a single active identity, which buys the entire residual supply at
that price.  The allocation and payment are then complete, and the clock
follows the zero-trade tail of \cref{lem:capacity-zero-trade-tail} until that
identity exits.  Every run that does not enter such a tail ends with a
stopping allocation.  In particular, before capacity saturation at least two
identities are active whenever a run continues.
\end{enumerate}
\end{proposition}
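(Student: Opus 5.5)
The plan is to verify the five items in order by specializing the general rules of \cref{sec:mechanism}, after the substitution \eqref{eq:abstract-target-substitution}, to the constant target. \Cref{prop:constant-target-admissible} supplies the inputs: $\lambda\equiv0$, the only lump is the initial release, and $T=X$ exactly at capacity. With these, \cref{lem:derived-continuous-rate,lem:event-restoration,lem:allocation-and-persistence,lem:capacity-zero-trade-tail} apply without change.

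\emph{Items 1 and 2.} Item~1 is immediate from the initial-state rule after \eqref{eq:initial-state} with $T(p_\star,0,0)=s$. For item~2, substitute $\lambda=0$ into \eqref{eq:derived-clinching-rate} to get $\gamma=S$. Then \eqref{eq:derived-allocation-payment-ode}--\eqref{eq:abstract-aggregate-dynamics} give exactly \eqref{eq:classical-clinching-rates} and \eqref{eq:constant-target-aggregate-dynamics}. Solving the linear ODE gives $S(p)=S(p_1)(p_1/p)^c$ on an event-free interval, so $S$ stays positive between events whenever it is positive just after an event.

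\emph{Item 3.} When the stopping test fails, \cref{lem:event-restoration} gives $h^\star=B-pS$, where $B$ is evaluated after cancellation and hence $B=B_A$. Then
\[
(b_i-h^\star)_+/p=\Bigl(S-\sum_{k\in{\sf Active}\setminus\{i\}}b_k/p\Bigr)_+,
\]
which is \eqref{eq:classical-exit-clinch}. Since each purchase leaves $h^\star$ unchanged (\cref{lem:fixed-price-accounting}), computing all clinches from the pre-purchase state agrees with the classical simultaneous clinch.

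\emph{Item 4.} Above $p_\star$ there is no lump, and the water level is continuous and positive, so a stop requires an exit. The stopping test is equivalent to $B_E\ge h$ by \cref{lem:event-restoration}. Therefore
\[
L=S-B_A/p=(B-B_A-h)/p=(B_E-h)/p .
\]
It remains to show $h\ge\max_{\mathsf{Exit}}b_i$. Each exiting identity was active just before $p$, so the water-level invariant \eqref{eq:water-invariant} gives $b_i\le h$. This yields \eqref{eq:constant-target-remnant-bound}. In particular, with a single exiter $L\le 0$, hence $L=0$. Also $L\le B_E/p$, so the proportional rule \eqref{eq:proportional-exit-allocation} distributes all of $L$, and the run ends with the residual supply fully sold.

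\emph{Item 5.} This is the main obstacle: characterizing exactly when saturation occurs. First, a nonstopping event saturates capacity exactly when it leaves one active identity. Saturation means $S=0$ after the restoration purchases. If two or more identities remain active after restoration, each has $b_i\le h^\star$ with $h^\star=B_A-pS$. If $S$ were $0$, then $h^\star=B_A$, and the budgets of the other active identities, which are positive because no positive budget is exhausted (\cref{lem:allocation-and-persistence}), would force some $b_i<h^\star$. That contradicts conservation, since $S=0$ forces every active identity to be at the water level. Conversely, with a single active identity $i$, $h^\star=b_i-pS$, so its purchase is $S$ and clears the supply. After saturation, \cref{lem:capacity-zero-trade-tail} gives a zero-trade tail until that identity exits.

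Between events $S>0$ by item~2, so saturation can only occur at events. Runs that never saturate must end through an exit. Exhaustion of all budgets would require $h=0$ while $S>0$, which is impossible under the invariant unless the stopping condition already triggered. I expect this accounting of $h$ versus $B_A$ to need the most care, especially in excluding simultaneous-exit edge cases.
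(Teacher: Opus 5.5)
Your treatment of items 1--4 is essentially the paper's proof. In item 4, your identity \(L=(B_E-h)/p\) combined with \(b_i\le h\) for each exiting (previously active) identity is the same inequality the paper obtains from the supply invariant \(S\le D_{-i}\), \(i\in{\sf Exit}\). Your assertion that \(h>0\) at a stopping price, which you need in order to force an exit, deserves a line of justification. The paper avoids needing it by splitting on whether some identity is still active.

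The gap is in item 5, in the direction ``a nonstopping event that saturates capacity leaves a single active identity.'' You correctly derive that if two or more identities remain active and the post-restoration residual is zero, then \(h^\star\) equals the total post-restoration active budget, so every active budget is strictly below \(h^\star\). The contradiction you then invoke, that ``\(S=0\) forces every active identity to be at the water level,'' is false. With \(S=0\) the water level equals \(B\), and a budget can equal it only if it is the sole positive budget. So this is the very conclusion being proved, and no conservation law yields it.

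What actually closes the argument is the following. The residual supply was strictly positive before the event: by item~2, since the run is not yet saturated, or \(S=s\) at \(p_\star\). Reaching zero therefore requires at least one restoration purchase, and every restoration buyer ends with budget exactly \(h^\star\). That one identity contradicts the strict inequality.

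The paper makes this quantitative. Let \(m\) be the number of active identities with \(b_i>h^\star\). The identity \(pS=\sum_{k\in{\sf Active}}b_k-h^\star\) gives the post-restoration residual \(\frac1p\bigl(\sum_{i\in{\sf Active}:\,b_i\le h^\star}b_i+(m-1)h^\star\bigr)\). Since \(h^\star>0\) and active budgets are positive, this vanishes iff \(m\ge1\) and \({\sf Active}\) is a singleton, while \(m=0\) leaves \(S\) unchanged.

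Two further points are needed to finish item 5. To conclude that every non-tail run ends with a stopping allocation, note that the termination rule ``no active identity remains'' always coincides with a passing stopping test: with \({\sf Active}=\varnothing\), the test \(0\le S\) holds trivially. The closing claim, that at least two identities are active before saturation, then follows because the active set changes only at events.
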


\begin{proof}
Item~1 is the initialization rule stated after \eqref{eq:initial-state},
together with the eligibility rule that accompanies
\eqref{eq:abstract-target-substitution}.

For item~2, \cref{prop:constant-target-admissible} gives \(\lambda=0\), so
\cref{lem:derived-continuous-rate} yields \(\gamma=S+p\lambda=S\).
Equations \eqref{eq:derived-allocation-payment-ode} and
\eqref{eq:derived-budget-ode} then become
\eqref{eq:classical-clinching-rates}, and
\eqref{eq:abstract-aggregate-dynamics} becomes
\eqref{eq:constant-target-aggregate-dynamics}.  On an event-free interval
beginning at \(p_a\), the first equation integrates to
\(S(p)=S(p_a)(p_a/p)^c\), which is positive whenever \(S(p_a)\) is.

For item~3, after the cancellation \eqref{eq:cancel-exiting-budgets} the
aggregate budget is \(B=\sum_{k\in{\sf Active}}b_k\).  Hence
\(h^\star=\sum_{k\in{\sf Active}}b_k-pS\) and
\begin{equation}
 \frac{b_i-h^\star}{p}
 =S-\sum_{k\in{\sf Active}\setminus\{i\}}\frac{b_k}{p}.
 \label{eq:restoration-is-classical-clinch}
\end{equation}
The purchase \eqref{eq:one-shot-restoration} is made exactly when this
quantity is positive, which is \eqref{eq:classical-exit-clinch}.

For item~4, the test and the purchases
\(\operatorname{Buy}(i,b_i/p;p)\) are literally Step~(I) of
\cite[Figure~1]{budgetelicitation}.  Let \(p>p_\star\) be a stopping price.  Some
identity exits at \(p\).  Indeed, by \cref{lem:event-restoration} the
stopping test is equivalent to \(B_E\ge h\).  If an identity \(i\) is still
active at \(p\), then \(h\ge b_i>0\) by \eqref{eq:water-invariant}, so
\(B_E>0\); if none is, the identities that were active immediately before
\(p\) all exit at \(p\).  Every \(i\in{\sf Exit}\) was active immediately
before \(p\), and a constant target has no lump at \(p\).  The supply
invariant therefore gives
\[
 S\le D_{-i}
 =\sum_{k\in{\sf Active}}\frac{b_k}{p}
  +\sum_{k\in{\sf Exit}\setminus\{i\}}\frac{b_k}{p}.
\]
Subtracting the active demand yields \(L\le(B_E-b_i)/p\) for every
\(i\in{\sf Exit}\), which is
\eqref{eq:constant-target-remnant-bound}; and \(L\ge0\) is the stopping test.
Consequently \(L\le B_E/p\), the proportional rule distributes exactly \(L\),
and the total stopping allocation is
\(\sum_{k\in{\sf Active}}b_k/p+L=S\).

For item~5, a lump is an upward jump of the target, and a constant has none
after its initial release.  By item~2, \(S\) does not vanish between events,
so \(S=0\) can only be produced by the fixed-price purchases of an event.  A
stopping event ends the run.  At a nonstopping event, let \(m\) be the number
of active identities with \(b_i>h^\star\).  If \(m=0\), nothing is bought and
\(S\) stays positive.  If \(m\ge1\), then \(pS=\sum_{k\in{\sf Active}}b_k-h^\star\)
shows that the restoration purchases leave the residual supply
\[
 S-\sum_{i\in{\sf Active}}\frac{(b_i-h^\star)_+}{p}
 =\frac1p\Bigl(\sum_{i\in{\sf Active}:\,b_i\le h^\star}b_i
   +(m-1)h^\star\Bigr).
\]
Since \(h^\star>0\) by \cref{lem:event-restoration} and active budgets are
positive, this vanishes if and only if \({\sf Active}\) is a singleton.
Conversely, a single active identity \(i\) at a nonstopping event has
\(b_i>b_i-pS=h^\star\) and buys \(S\) by
\eqref{eq:restoration-is-classical-clinch}.  Then \(S=0\) and \(X=s\), so the
capacity is saturated and \cref{lem:capacity-zero-trade-tail} applies.  An
event that leaves no active identity passes the stopping test trivially.
Hence, outside the zero-trade tail, at least two identities are active
whenever a run continues.
\end{proof}

The zero-trade completion in item~5 is exactly the classical convention: once
the supply has been exhausted, the clock keeps rising with \(S=0\) until the
surviving identity reaches its value, and nothing further is traded.  Apart
from the start price, the only difference between the two descriptions is the
treatment of identities that exit simultaneously.
\cite{budgetelicitation,clinchonline} process them one at a time in a fixed order.  At
a nonstopping event this produces the same state as the simultaneous
cancellation and one-shot restoration, as noted after
\eqref{eq:one-shot-restoration}.  At a stopping event it divides \(L\) among
the tied exiting identities in an order-dependent way, where the mechanism
of \cref{sec:mechanism} uses the proportional rule.  By item~4 of
\cref{prop:constant-target-is-aca}, the two agree whenever no two identities
exit at the stopping price.

The remaining difference is the start of the clock.  The classical clock
starts at zero, where every positive budget has infinite demand, whereas
\cref{subsec:abstract-target-mechanism} fixes \(p_\star>0\).  By the
\emph{zero-start auction} we mean the auction of items~2--5 above with the
clock started at zero and identities with \(v_i=0\) removed at the outset.
If a single identity of positive value and budget remains, no other identity
has demand, and \eqref{eq:classical-exit-clinch} lets it clinch the whole
supply at price zero.

\begin{lemma}[Starting the clock at zero]
\label{lem:zero-start}
Consider a profile in which the set \(P\) of identities with \(v_i>0\) and
\(q_i>0\) has at least two elements, and put
\begin{equation}
 p^\dagger
 :=\min\Bigl\{\min_{i\in P}v_i,\;
   \frac1s\Bigl(\sum_{i\in P}q_i-\max_{i\in P}q_i\Bigr)\Bigr\}>0.
 \label{eq:first-event-price}
\end{equation}
For every \(p_\star\in(0,p^\dagger)\), the zero-start auction and the
target-based mechanism for \((p_\star,T^{\rm fix}_s,s)\) produce the same
outcome.
\end{lemma}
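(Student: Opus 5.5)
The plan is to show that on the interval \((0,p^\dagger)\) the zero-start auction has no event, no exit and no restoration purchase. It then arrives at \(p_\star\) in exactly the state that the target-based mechanism creates when it releases the initial lump at \(p_\star\). From \(p_\star\) on, both runs follow the rules of items~2--5 of \cref{prop:constant-target-is-aca}, so they coincide.

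\textbf{Behaviour of the zero-start run below \(p^\dagger\).} Identities outside \(P\) are inert: zero-value identities are removed at the outset, and zero-budget identities have no demand and never trade. At prices \(p\in(0,p^\dagger)\), every \(i\in P\) has \(v_i\ge p^\dagger>p\), so no valuation exit occurs, and the constant target has no lump. I then check that nobody clinches. As long as nobody has traded, \(b_i=q_i\) and \(S=s\), and the clinching condition \(S=D_{-i}\) reads \(ps=\sum_{j\ne i}q_j\). Because \(p<p^\dagger\le\frac1s(\sum_P q_j-\max_P q_j)\), we get \(ps<\sum_{j\ne i}q_j\) for every \(i\), so the supply invariant holds strictly. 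This step carries the main care: no one trades before some clinching condition becomes tight, and tightness is excluded on the whole interval.

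The price \(p=0\) must also be handled. There every demand is infinite, the invariant is trivially satisfied, and \eqref{eq:classical-exit-clinch} gives zero, because the demand of others (at least one other member of \(P\) has positive budget) exceeds \(s\). So nothing is traded at zero. A continuity or induction argument over \([0,p_\star]\) then shows the zero-start state at \(p_\star\) is \(x_i=y_i=0\), \(b_i=q_i\) for all \(i\in P\), with \(S=s\) and no pending exit. I expect this to reduce to showing that the first price at which \(\max_i(b_i-h)\) could reach zero is at least \(p^\dagger\), using \(h=B-ps\) with \(B=\sum_P q_j\) while no trade occurs.

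\textbf{The target-based run at \(p_\star\).} Next I compare this with the target-based mechanism for \((p_\star,T^{\rm fix}_s,s)\). By item~1 of \cref{prop:constant-target-is-aca}, every identity with \(v_i<p_\star\) is ineligible. Since \(p_\star<p^\dagger\), these are exactly the zero-value identities, which are outside \(P\) or harmless zero-budget ones with positive value. The whole supply \(s\) is released as the initial lump. No identity in \(P\) exits at \(p_\star\), and the stopping test \eqref{eq:stopping-test} reads \(\sum_P q_j/p_\star\le s\). This test fails because \(\sum_P q_j\ge\sum_P q_j-\max q_j>p_\star s\). The restoration rule then sets \(h^\star=\sum_P q_j-p_\star s\), which exceeds every \(q_i\) by the same inequality, so no purchase is made.

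Hence both runs sit at the identical state at \(p_\star\): \(S=s\), all budgets full, and the same active set (with \({\sf Clinch}=\varnothing\)). Both then evolve by the same primitive rules, so the outcomes agree. Zero-budget identities with positive value may be included or excluded without effect in either run.
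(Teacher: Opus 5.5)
Your proposal is correct and follows essentially the same route as the paper's proof. Below \(p^\dagger\) no member of \(P\) exits, and the supply invariant is strict because \(ps<\sum_{j\in P}q_j-\max_{j\in P}q_j\le\sum_{j\ne i}q_j\), so the zero-start run reaches \(p_\star\) untraded with \(S=s\). At \(p_\star\) the target-based run fails the stopping test and has \(h^\star=\sum_{i\in P}q_i-p_\star s>\max_{i\in P}q_i\), so it makes no restoration purchase, and from there both runs follow the same rules. One small wording fix: the identities ineligible at \(p_\star\) are all those with \(v_i<p_\star\), not exactly the zero-value ones; outside \(P\) this means zero value or zero budget, which is harmless either way, as you note.
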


\begin{proof}
The price \(p^\dagger\) is the first event of the zero-start auction
\cite[proof of Theorem~D.2]{clinchonline}.  Indeed, no identity of \(P\)
exits below \(\min_{i\in P}v_i\), and as long as nothing has been traded,
\[
 D_{-i}=\sum_{j\in P\setminus\{i\}}\frac{q_j}{p}
 \ge\frac1p\Bigl(\sum_{j\in P}q_j-\max_{j\in P}q_j\Bigr)>s=S
 \qquad\text{for }p<p^\dagger,
\]
so the supply invariant is strict and nobody clinches.  At price
\(p_\star<p^\dagger\) the state of the zero-start auction is therefore
\(x_i=y_i=0\), \(b_i=q_i\) for \(i\in P\), and \(S=s\).

This is also the state of the target-based mechanism after its initial
event.  Identities outside \(P\) are ineligible or have no budget.  No
identity of \(P\) has \(v_i=p_\star\), so none exits.  The stopping test
fails because \(\sum_{i\in P}q_i>p_\star s\).  After the initial release,
\(h^\star=\sum_{i\in P}q_i-p_\star s>\max_{i\in P}q_i\), so no restoration
purchase is made.  From \(p_\star\) on, both auctions follow the rules of
\cref{prop:constant-target-is-aca}.
\end{proof}

Thus \(p_\star\) acts as a reserve price, and the classical auction is
recovered for all sufficiently small \(p_\star\): profile by profile, and
simultaneously on any finite family of profiles, such as a merged profile
and a Sybil profile.  The start price is visible only on a profile with a
single identity of positive value and budget.  That identity receives
\(\min\{s,q_i/p_\star\}\) units at the unit price \(p_\star\), which tends to
the whole supply at zero payment as \(p_\star\downarrow0\).

\subsection{Exogenous Online Supply}
\label{app:online-supply}

In the online supply model of
\cite[Definition~3.1 and Example~3.2]{clinchonline}, the supply of the
good is revealed over steps \(t=0,1,\dots,m\) as a nondecreasing sequence of
cumulative supplies
\begin{equation}
 0=s_0\le s_1\le\cdots\le s_m.
 \label{eq:online-supply-path}
\end{equation}
The sequence is exogenous: it may be deterministic, random, or adversarial,
but it does not depend on the reports, and at step \(t\) the auctioneer knows
only \(s_0,\dots,s_t\).  The identities, their reports, and their budgets are
fixed before the first step.  At every step the auctioneer must announce a
cumulative allocation and a cumulative payment for each identity, and it may
only augment those of the previous step.  \emph{Online adaptive clinching}
announces, at step \(t\), the outcome \((x(s_t),y(s_t))\) of the adaptive
clinching auction of \cref{app:fixed-supply} run to completion with supply
\(s_t\).  Between steps \(t-1\) and \(t\) it therefore allocates
\(x(s_t)-x(s_{t-1})\) and charges \(y(s_t)-y(s_{t-1})\).

Let the environment description be the supply revealed so far, and let the target be
its last entry:
\begin{equation*}
\begin{gathered}
 \sigma_t:=(s_0,\dots,s_t),
 \qquad
 T_{\sigma_t}(p,X,Y):=s_t,
 \qquad
 \overline X_{\sigma_t}:=s_t,
 \qquad
 \mathcal D_{\sigma_t}:=\mathcal D^{\rm fix}_{s_t},\\
 \mathfrak T_t
 :=(p_\star,\sigma_t,T_{\sigma_t},
    \mathcal D_{\sigma_t},\overline X_{\sigma_t}).
\end{gathered}
\end{equation*}

\begin{proposition}[Online supply as a family of constant targets]
\label{prop:online-supply-target}
Fix a realized supply path \eqref{eq:online-supply-path} and a step \(t\)
with \(s_t>0\).  Then \(\mathfrak T_t\) is an admissible abstract
target-based mechanism, and the cumulative outcome of online adaptive clinching
after step \(t\), with every clock started at \(p_\star\), is the outcome of
its target-based clinching mechanism.  The description \(\sigma_t\) is common
to all bid profiles.  If \(s_t=0\), no supply
has arrived and the cumulative outcome is zero.
\end{proposition}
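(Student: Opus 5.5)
The plan is to reduce the statement to the constant-target results already established in \cref{app:fixed-supply}, with \(s=s_t\). First I would verify admissibility of \(\mathfrak T_t\). By construction, \(\mathfrak T_t\) coincides with \(\mathfrak T^{\rm fix}_{s_t}\) except for its environment component \(\sigma_t\). The target, domain and capacity are exactly \(T^{\rm fix}_{s_t}\), \(\mathcal D^{\rm fix}_{s_t}\) and \(s_t\), and the target never consults \(\sigma_t\) beyond its last entry. Hence \cref{prop:constant-target-admissible} (which needs \(s_t>0\)) gives all five requirements of \cref{def:admissible-target}, together with the domain conditions.

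The one thing I must check separately is that \(\sigma_t\) is a legitimate environment description in the sense of \cref{def:abstract-target-based-mechanism}. It must be fixed independently of the bids and common to every profile being compared. This holds because the online supply path is exogenous by \cite[Definition~3.1]{clinchonline}: the realized path, and hence its prefix \(\sigma_t\), does not depend on the reports. I would stress that \(\sigma_t\) is a \emph{realized} prefix and is fixed before the comparison. Randomness or adversarial choice of the path is therefore harmless, because the comparison is made pathwise.

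Next I would identify outcomes. By the definition of online adaptive clinching, the cumulative allocation and payment after step \(t\) are the telescoping sum
\[
 \sum_{\tau\le t}\bigl(x(s_\tau)-x(s_{\tau-1})\bigr)=x(s_t),
\]
and similarly \(y(s_t)\). Here \(x(0)=y(0)=0\), which follows from \(s_0=0\). Thus the cumulative outcome equals the outcome of the fixed-supply adaptive clinching auction with supply \(s_t\), with the clock started at \(p_\star\). \Cref{prop:constant-target-is-aca} identifies that auction with the target-based clinching mechanism of \(\mathfrak T^{\rm fix}_{s_t}\), and therefore of \(\mathfrak T_t\). The case \(s_t=0\) is immediate: the monotone path forces \(s_\tau=0\) for all \(\tau\le t\), so every announced increment is zero.

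The step I expect to be the main obstacle is reconciling the clock start with the intended reading of ``online adaptive clinching''. The paper's statement explicitly says ``with every clock started at \(p_\star\)'', so I would interpret each per-step auction as the \(p_\star\)-start auction and invoke \cref{prop:constant-target-is-aca} directly. I would then remark that, for zero-start clocks, \cref{lem:zero-start} recovers the classical outcome for all sufficiently small \(p_\star\). This works simultaneously on the finitely many profiles being compared and over the finitely many steps \(\tau\le t\), by taking \(p_\star\) below every relevant \(p^\dagger\).

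I would also note a subtlety about monotonicity. The telescoping only requires that \(x(s_\tau)\) and \(y(s_\tau)\) be well defined for each \(\tau\). The nondecreasing property guaranteed by \cite{clinchonline}, which makes the augmentations legal, is therefore not needed for the outcome identification itself.
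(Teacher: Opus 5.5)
Your proposal is correct and follows essentially the paper's own proof. Both reduce admissibility and the outcome identification to \cref{prop:constant-target-admissible,prop:constant-target-is-aca} with \(s=s_t\), invoke \cref{lem:zero-start} only to relate the \(p_\star\)-start clock to the classical zero start, and obtain that \(\sigma_t\) is common to all bid profiles from exogeneity of the realized supply path; the paper also stresses that the step index is not the clock, so \(\lambda_{\sigma_t}=0\) with no lump, which your observation that the target is the constant \(s_t\) already covers, and your telescoping step is harmless but unnecessary, since online adaptive clinching announces \((x(s_t),y(s_t))\) at step \(t\) by definition.
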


\begin{proof}
Since \(T_{\sigma_t}=T^{\rm fix}_{s_t}\), the first two claims are
\cref{prop:constant-target-admissible,prop:constant-target-is-aca}, with
\cref{lem:zero-start} relating the start price \(p_\star\) to a clock started
at zero.  During the run that settles step \(t\), the fixed environment
description is \(\sigma_t\): the step index is not the clock.  Clock
regularity therefore holds with
\(\lambda_{\sigma_t}=0\) and no target lump.  Finally,
\(\sigma_t\) is determined by the supply path alone.  Because the path is
exogenous, every bid profile faces the same \(\sigma_t\), and hence the same
target and the same capacity.  This is what \cref{def:admissible-target}
requires of the capacity, and what
\cref{subsec:one-sided-sybil-proof} requires when it lets the merged run and
the Sybil run share the same environment description.
\end{proof}

\begin{remark}[Two time axes]
\label{rem:two-time-axes}
The online step \(t\) and the clock price \(p\) are different objects.  At
step \(t\) the clock is run afresh from \(p_\star\) against the constant
target \(s_t\); the auction does not pause one ascending clock and resume it
when more supply arrives.  Accordingly, online adaptive clinching is a family
of target-based mechanisms indexed by \(\sigma_t\), one for every prefix of
the supply path, and not a single run with a growing target.  A statement
about one prefix, such as a single-run lemma of
\cref{subsec:single-run-structure} or a comparison between two bid profiles
that face the same \(\sigma_t\), is a statement about one admissible abstract
target-based mechanism.
Whether the increments \(x(s_t)-x(s_{t-1})\) and \(y(s_t)-y(s_{t-1})\) are
nonnegative, so that the family can be implemented online, is a different
kind of statement: it compares the targets \(T_{\sigma_{t-1}}\) and
\(T_{\sigma_t}\) on one fixed bid profile.  This is the supply-monotonicity
theorem \cite[Theorem~4.1]{clinchonline}.  It is neither used nor implied
by the results of \cref{sec:one-sided-proofs}.  On a profile that has at
least two identities of positive value and budget and no tied values, and
with \(p_\star\) below the threshold \eqref{eq:first-event-price} for the
largest supply \(s_m\), which is the smallest threshold along the path, every
run is exactly the auction of \cite{clinchonline}, and that theorem
applies verbatim.
\end{remark}

\begin{remark}[A different notion: supply released along the clock]
\label{rem:clock-indexed-supply}
\Cref{def:admissible-target} also allows the fixed environment description
to contain a schedule indexed by the clock.  The simplest example is an
exogenous release schedule: a
nondecreasing, right-continuous function
\(s\colon[p_\star,\infty)\to[0,\infty)\), not identically zero, whose
continuous part is locally absolutely continuous, together with
\begin{gather*}
 \sigma:=s(\cdot),
 \qquad
 T_\sigma(p,X,Y):=s(p),
 \qquad
 \overline X_\sigma:=\sup_{p}s(p),\\
 \mathcal D_\sigma
 :=\{(p,X,Y):p\ge p_\star,\ 0\le X\le s(p),\ Y\ge0\}.
\end{gather*}
The first three requirements hold because the target ignores \((X,Y)\), and
the domain is closed under raising the clock because \(s\) is
nondecreasing.  Clock regularity is the assumption on \(s\), with
\(\lambda_\sigma=s'(p)\) almost everywhere and with supply lumps at the jumps
of \(s\).  If \(s(p)=X<\sup_qs(q)\), then some \(p'>p\) satisfies
\(s(p')>X\), which verifies \eqref{eq:no-subcapacity-absorption}.  Here a
single clock is never restarted: supply that
arrives at clock price \(p\) is sold at prices at least \(p\), clinchers trade
at the rate \(\gamma=S+ps'(p)\) of \eqref{eq:derived-clinching-rate}, and the
mechanism waits while \(s(p)=X\).  This variant exercises the clauses of
\cref{def:admissible-target} that concern clock-dependent environment data,
target lumps, and waiting on \(T_\sigma=X\), none of which is
used by the AMM target or by a constant target.  It is not the model of
\cite{clinchonline}, in which every prefix is settled by a fresh run, and
we mention it only to illustrate the definition.
\end{remark}

\subsection{Incentives for Fixed and Online Supply}
\label{app:constant-target-incentives}

The general incentive theorem now applies directly at an attained capacity:
capacity saturation begins a zero-trade tail rather than ending the execution
with unused budgets.  Thus no separate terminal comparison is needed.

\begin{corollary}[Fixed- and online-supply incentives]
\label{cor:constant-and-online-target-sybil}
Let \(s>0\) and \(p_\star>0\).  In the target-based mechanism for the
constant target \(T^{\rm fix}_s\), truthful participation as the merged
identity \((V,Q)\), where \(V\ge p_\star\), weakly dominates every
same-direction Sybil strategy \eqref{eq:one-sided-sybil-deviation}.  Hence,
at every step \(t\) with \(s_t>0\) of an exogenous online supply path, the
truthful merged identity weakly dominates every such strategy in cumulative
utility.  The same conclusions hold for the classical zero-start auctions,
with the simultaneous proportional rule at tied stopping exits.
\end{corollary}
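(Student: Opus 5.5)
The first claim will follow by applying \cref{thm:abstract-one-sided-sybil} to the admissible specification \(\mathfrak T^{\rm fix}_s\). \Cref{prop:constant-target-admissible} verifies all five clauses of \cref{def:admissible-target}. The theorem's proof never requires the capacity to be unattained. Where the merged run could saturate capacity below \(V\), it is handled by the zero-trade completion convention and \cref{lem:capacity-zero-trade-tail}, so saturation is not a terminal event. Consequently the final step of that proof (merged run ending at \(p_1<V\), hence \(Y_G^{\mathcal M}=Q\)) applies unchanged. I will check this explicitly: at saturation the merged identity either keeps its budget until its valuation exit at \(V\), giving \(p_1=V\), or has already exhausted it. Capping overreports (\cref{lem:cap-one-sided-overreports}) and the mark-to-market identity need no AMM structure.

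For the online statement, fix the realized supply path and a step \(t\) with \(s_t>0\). By \cref{prop:online-supply-target}, the cumulative outcome after step \(t\) is the outcome of the target-based mechanism for \(\mathfrak T_t\), whose target is the constant \(s_t\). The environment description \(\sigma_t\) is exogenous, so the merged and Sybil profiles face the same target and capacity, as required in \cref{subsec:one-sided-sybil-proof}. Cumulative utility is \(VX_G-Y_G\) evaluated on that cumulative outcome, and the first claim with \(s=s_t\) gives the domination. No supply monotonicity across steps is needed, because each prefix is a separate run (\cref{rem:two-time-axes}).

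For the classical zero-start auctions, I will reduce to a positive start price via \cref{lem:zero-start}. Given the merged and Sybil profiles, choose \(p_\star\) smaller than \(p^\dagger\) of \eqref{eq:first-event-price} for both profiles, smaller than \(V\), and, in the online case, computed with the largest supply \(s_m\) over all steps used. Then both profiles' zero-start outcomes coincide with the \(p_\star\)-start outcomes, and the tie rule agrees because both use the proportional rule.

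The main obstacle is the degenerate profiles where \(P\) has fewer than two elements, since \cref{lem:zero-start} does not apply there. If the merged profile has \(|P|\le1\), the bidder either alone obtains the whole supply at price zero, which is unbeatable by any split because payment is at least zero and allocation at most \(s\), or has zero budget or value. In that case every Sybil identity has zero aggregate budget and buys nothing at positive price. A Sybil profile with \(|P|=1\) while the merged one has \(|P|\ge2\) is impossible, because splitting can only add identities with positive budget. The reverse case, merged \(|P|\ge2\) and Sybil \(|P|=1\), arises only when outsiders are absent, which puts us back in the first case. I will handle these cases directly, and also check that \(V=0\) makes both utilities trivially zero.
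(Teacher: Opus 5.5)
Your proofs of the fixed-supply and online claims, and your reduction of the zero-start auction via \cref{lem:zero-start} (with $p_\star$ below $V$ and below both thresholds, computed at $s_m$), follow the paper's argument. The gap is in the degenerate cases.

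You assert that a Sybil profile with $|P|=1$ cannot accompany a merged profile with $|P|\ge2$, ``because splitting can only add identities with positive budget.'' This is false. The deviation \eqref{eq:one-sided-sybil-deviation} only caps $\sum_{i\in G}q_i\le Q$. So the bidder may report every $q_i=0$, or every $v_i=0$, and then no Sybil identity lies in $P$. Suppose exactly one outsider has positive value and budget, and the bidder has $V>0$ and $Q>0$. Then the merged profile has $|P|=2$ (the bidder and that outsider), while such a Sybil profile has $|P|=1$. \Cref{lem:zero-start} therefore cannot be applied to the Sybil run. Your next sentence does not rescue this case, whatever its intended reading. This configuration contains an outsider in $P$, so it is not your merged-$|P|\le1$ case. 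As written, this case is dismissed rather than proved.

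The repair is short, and it is exactly the paper's second singleton case. Suppose the Sybil group contains no identity of positive value and budget. In the zero-start auction, its zero-value identities are removed at the outset and its zero-budget identities are never active. Hence it trades nothing and has true utility $0$. The truthful merged utility is nonnegative by individual rationality, since every unit is bought at a clock price at most $V$. With this case added, the analysis is exhaustive: your merged-$|P|\le1$ case, this case, and the lemma when both profiles have $|P|\ge2$.

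A smaller slip: for $V=0$ the Sybil utility is $-Y_G^{\mathcal S}\le0$, not necessarily zero. The required inequality still holds, so this does not affect the argument.
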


\begin{proof}
The constant-target specification is admissible by
\cref{prop:constant-target-admissible}, so the fixed-supply conclusion is an
immediate application of \cref{thm:abstract-one-sided-sybil}.  At online
step \(t\), \cref{prop:online-supply-target} identifies the cumulative
outcome with the same mechanism for the constant target
\(T_{\sigma_t}=T^{\rm fix}_{s_t}\).  The supply prefix \(\sigma_t\) is
exogenous and therefore common to the merged and Sybil runs, so the theorem
applies pointwise to that prefix.

For the zero-start conclusions, assume \(V>0\) and \(Q>0\), since otherwise
no strategy yields positive utility.  If the merged profile and the Sybil
profile each contain at least two identities of positive value and budget,
choose \(p_\star\) below \(V\) and below both thresholds
\eqref{eq:first-event-price}; on a finite online path it may be chosen below
the thresholds for the largest supply, and hence for every prefix.  By
\cref{lem:zero-start}, the two positive-start executions equal their
zero-start counterparts.  In the remaining singleton cases the comparison
is immediate: either no outsider has positive value and budget, and the
merged identity obtains the whole supply at zero payment, or the Sybil group
has no such identity and obtains nothing.
\end{proof}

\end{document}